\newlist{Enumerate}{enumerate}{2} % For use with end of chapter exercises.
\setlist[Enumerate,1]{
  label=\upshape(\roman*\upshape),
  ref = \upshape(\roman*\upshape),
  leftmargin=*,
  widest=99,
  align=left,
  itemsep = 2pt,
  labelindent= 6pt,
  resume
}
\setlist[Enumerate,2]{
  label=\upshape(\alph*\upshape),
  ref = \upshape(\alph*\upshape),
  leftmargin=*,
  widest=99,
  align=left,
  itemsep = 2pt
}
\newcommand{\vphi}{{\varphi}}           %%%%%%%%%%%%%%%%%%%%%%%%%%%%%%%%%%%
\newcommand{\la}{\langle} \newcommand{\ra}{\rangle}
\newcommand{\cB}{\mathcal{B}}
\newcommand{\cC}{\mathcal{C}}
\newcommand{\cD}{\mathcal{D}}
\newcommand{\cE}{\mathcal{E}}
\newcommand{\cF}{\mathcal{F}}
\newcommand{\cG}{\mathcal{G}}
\newcommand{\cH}{\mathcal{H}}
\newcommand{\cK}{\mathcal{K}}
\newcommand{\cM}{\mathcal{M}}         %                      %
\newcommand{\cP}{\mathcal{P}}         %%%%%%%%%%%%%%%%%%%%%%%%
\newcommand{\cQ}{\mathcal{Q}}
\newcommand{\cR}{\mathcal{R}}
\newcommand{\cU}{\mathcal{U}}
\newcommand{\cV}{\mathcal{V}}
\newcommand{\RR}{\mathbb{R}}            %%%%%%%%%%%%%%%%%%%%%%%%%%%
\newcommand{\NN}{\mathbb{N}}            % Blackboard Bold Letters %
\newcommand{\CC}{\mathbb{C}}            %                         %
\newcommand{\ZZ}{\mathbb{Z}}
\newcommand{\bu}{{\bar{u}}}
\newcommand{\bv}{{\bar{v}}}
\newcommand{\bz}{{\bar{z}}}
\newcommand{\bP}{{\overline P}}        %                         %
\newcommand{\bT}{{\overline T}}
\newcommand{\bJ}{{\overline J}}
\newcommand{\udelta}{{\underline{\delta}}}
\newcommand{\hL}{\widehat{L}}
\newcommand{\hH}{\widehat{H}}            %%%%%%%%%%%%%%%%%%%%%%%%%%%
\newcommand{\hg}{\hat{g}}
\newcommand{\tA}{\widetilde{A}}
\newcommand{\tC}{\widetilde{C}}
\newcommand{\tE}{\widetilde{E}}
\newcommand{\tF}{\widetilde{F}}
\newcommand{\tG}{\widetilde{G}}
\newcommand{\tK}{\widetilde{K}}
\newcommand{\tN}{{\widetilde{N}}}
\newcommand{\tH}{\widetilde{H}}
\newcommand{\tL}{\widetilde{L}}
\newcommand{\tW}{\widetilde{W}}
\newcommand{\ta}{\tilde{a}}             %%%%%%%%%%%%%%%%%%%%%%%%%%%%%
\newcommand{\tf}{\tilde{f}}
\newcommand{\tg}{\tilde{g}}
\newcommand{\tih}{\tilde{h}}            % \th is used by the system.
\newcommand{\tcH}{\widetilde{\cH}}
\newcommand{\tcF}{{\widetilde{\cF}}}
\newcommand{\tcC}{\widetilde{\cC}}
\newcommand{\tphi}{\tilde{\phi}}        %Greek with tilde
\newcommand{\tpsi}{\tilde{\psi}}
\newcommand{\tvarphi}{\tilde{\varphi}}
\newcommand{\trho}{\widetilde{\rho}}
\newcommand{\tgothM}{\widetilde{\mathfrak{M}}}
\newcommand{\tgothW}{\widetilde{\mathfrak{W}}}
\newcommand{\gothh}{\mathfrak{h}}
\newcommand{\gothM}{\mathfrak{M}}
\newcommand{\gothW}{\mathfrak{W}}
\newcommand{\ri}{\mathrm{i}}
\newcommand{\Ran}{\mathop{\mathrm{Ran}}}     % Miscelleanous Math Symbols %
\newcommand{\huelle}{\mathop{\mathrm{span}}} %%%%%%%%%%%%%%%%%%%%%%%%%%%%%%
\newcommand{\Tr}{\mathop{\mathrm{Tr}}}
\newcommand{\diag}{\mathop{\mathrm{diag}}}
\newcommand{\supp}{\mathop{\mathrm{supp}}}
\newcommand{\dprime}{{\prime\prime}}
\newcommand{\loc}{\mathrm{loc}}
\newcommand{\ad}{\mathrm{ad}}
\renewcommand{\thesection}%            %%%%%%%%%%%%%%%%%%%%%%%%%%%%%%%%%%%%
{\arabic{section}}                     % To begin a section, write \secct %
\renewcommand{\theequation}%           % instead of \section. The section %
{\thesection.\arabic{equation}}        % number will then be displayed    %
\numberwithin{equation}{section}
\newtheorem{theorem}{Theorem}[section]
\newtheorem{lemma}[theorem]{Lemma}
\newtheorem{corollary}[theorem]{Corollary}
\newtheorem{proposition}[theorem]{Proposition}
\theoremstyle{definition}
\newtheorem{definition}[theorem]{Definition}
\theoremstyle{remark}
\newtheorem{remark}[theorem]{Remark}
\newtheorem{remarks}[theorem]{Remarks}
\newtheorem{problem}{Problem}[section]
\newcommand{\slim}{\mathrm{s}-\lim}
\newcommand{\wlim}{\mathrm{w}-\lim}
\newcommand{\pp}{\mathrm{pp}}
\newcommand{\sico}{\mathrm{sc}}
\newcommand{\im}{\mathrm{Im}}
\newcommand{\re}{\mathrm{Re}}
\def\bbbone{{\mathchoice {\mathrm{1\mskip-4mu l}} {\mathrm{1\mskip-4mu l}}
{\mathrm{1\mskip-4.5mu l}} {\mathrm{1\mskip-5mu l}}}}
\newcommand{\one}{\bbbone}
\newcommand{\Mat}{\mathrm{M}}
\newcommand{\e}{\mathrm{e}}
\newcommand{\D}{\mathrm{d}}
\newcommand{\set}[2]{\{#1\, | \, #2\} }
\newcommand{\bigset}[2]{\bigl\{#1\, \big| \, #2\bigr\} }
\newcommand{\ph}{\mathrm{ph}}
\newcommand{\cGamma}{\check{\Gamma}}
\newcommand{\romL}{\mathrm{L}}
\newcommand{\Mo}{\mathrm{Mo}}
\newcommand{\tgothh}{{\tilde{\gothh}}}
\newcommand{\roml}{\mathrm{l}}
\newcommand{\romr}{\mathrm{r}}
\newcommand{\roms}{\mathrm{s}}
\newcommand{\romp}{\mathrm{p}}
\newcommand{\rommax}{\mathrm{max}}
\newcommand{\rommin}{\mathrm{min}}
\newcommand{\conj}{\mathbf{c}}
\newcommand{\Conj}{\mathbf{C}}
\newcommand{\coup}{G}
\newcommand{\tcoup}{\widetilde{\coup}}
\newcommand{\hcoup}{\widehat{\coup}}
\newcommand{\vacuum}{\vert 0\ra}
\newcommand{\tvacuum}{\la 0\vert}
\newcommand{\dvacuum}{\vert 0\otimes 0\ra}
\newcommand{\tdvacuum}{\la 0\otimes 0\vert}
\newcommand{\JakPil}{Jak\v{s}i{\'c}-Pillet\xspace}
\newcommand{\HGCond}[1]{\textup{\textbf{(H\coup#1)}}} 
\newcommand{\LGCond}[1]{\textup{\textbf{(L\coup#1)}}} 
\newcommand{\LGCondp}[1]{\textup{\textbf{(L\coup#1')}}} 
\newcommand{\MCond}[1]{\textup{\textbf{(M#1)}}}
\newcommand{\Div}{\mathop{\mathrm{div}}}
\newcommand{\fin}{\mathrm{fin}}
\newcommand{\sym}{{\mathrm{sym}}}
\newcommand{\hcH}{\widehat{\cH}}
\newcommand{\ext}{\mathrm{x}}
\newcommand{\remarkQED}{$\hfill\lozenge$}
\newcommand{\AW}{\mathrm{AW}}
\newcommand{\PF}{\mathrm{PF}}
\newcommand{\shuf}{\mathbf{s}}
\newcommand{\myquote}[1]{``#1"}
\newcommand{\coupbl}{\coup_{\beta,\roml}}
\newcommand{\coupbr}{\coup_{\beta,\romr}}
\setlist{itemsep=0pt}
\begin{document}

\bibliographystyle{amsplain}

\setcounter{page}{0}

\title{Fully Coupled Pauli-Fierz Systems at Zero and Positive Temperature\\  --------------- \\ 
Summer school on Non-equilibrium Statistical Mechanics \\ ---------------\\  Montreal 2011 \\ Seventh Version\footnote{{Apart from typos, recent changes are as follows. Seventh version: Appendix~\ref{App-GeomLoc} added, the use of geometric localization fleshed out and inconsistent use of 
$\delta_0$
corrected (both in Subsect.~\ref{Subsec-EstZero}). Sixth version: Appendix~\ref{App-CommCalc} made more self-contained. Fifth version: Some nonsensical resolvent bounds in Subsect.~\ref{Subsec-AprioriRes} corrected. Fourth version: Section~\ref{Sec-LAP} with the Limiting Absorption Principle added.}}
}

\author{Jacob Schach M\o ller\\
Department of Mathematics\\
Aarhus University\\
Denmark}

\date{\today}

\maketitle

\thispagestyle{empty}

\newpage

\tableofcontents

\thispagestyle{empty}

\newpage

\section{Introduction}

The purpose of these notes is to give a fairly narrow but thorough
introduction to the spectral analysis of Hamiltonians and standard 
Liouvilleans describing finite dimensional small systems linearly coupled
to a scalar massless field or reservoir. The Hamiltonians describe the
system at zero temperature, and the standard Liouvillean implements 
unitarily the dynamics of the system at positive temperature.

We focus our attention on results valid at arbitrary 
coupling strength and whose proofs are purely operator theoretic,
i.e. for the standard Liouvillean does not make use of the
underlying modular structure. For the standard Liouvillean this means that 
important structure results that does not seem to have a purely
operator theoretic proof will only be reviewed. 

 In the rest of the introduction we will assume some familiarity 
with quantum dynamical systems, standard representations of von Neumann algebras, and
in particular the bosonic Weyl-algebra and its positive temperature
standard form, given in terms of Araki-Woods fields. 
We refer the reader to the three review papers
\cite{Attal2006,Merkli2006,Pillet2006} written by Attal, Merkli, and Pillet
in connection with an earlier summer school held in Grenoble in 2003. They
combine to give an excellent introduction to the subject.
As soon as the problems we aim for have been cast in spectral terms at
the end of the introduction, these topics will not play a central role anymore.

%***********************************************
\subsection{The Small Quantum System}\label{Subsec-smallSystem}
%***********************************************

Our small quantum system lives in the finite dimensional Hilbert space
$\cK = \CC^\nu$. The associated observable algebra is taken to be
$\gothM_\romp = \cB(\cK)=\Mat_\nu(\CC)$, the $\nu^2$-dimensional space of $\nu\times \nu$
matrices. The subscript $\romp$ signals that the algebra belongs to
the particle system.

As a Hamiltonian we take a diagonal matrix
\[
K = \diag\{E_1,E_2,\dotsc,E_\nu\},
\]
where the eigenvalues 
$E_1\leq E_1\leq \cdots \leq E_\nu$ are real numbers. 
It suffices for $K$ to be merely self-adjoint, but we can obviously
always choose a basis in which $K$ has the form considered here.

States $\omega$ on the algebra $\gothM_\romp$ are identified with density 
matrices $\rho$, i.e. self-adjoint matrices with $0\leq \rho\leq 1$
and $\Tr(\rho)=1$. The associated state is
\[
\omega(T) = \Tr(\rho T).
\]
A state is called a vector state if $\rho$ is a rank one projection,
i.e. $\rho = \vert u\ra\la u\vert$ and hence $\omega(T) = \la u,T u\ra$.
Here $u$ is a normalized vector.

The Hamiltonian $K$ generates a dynamics on the algebra as well as on
the states $\omega$ as follows: $T\to T_t = \e^{\ri t K} T\e^{-\ri t K}$ and
$\omega\to \omega_t$ defined by $\omega_t(T) = \omega(T_t)$.

The set of states invariant under the dynamics
can be identified with density matrices of the form
\[
\sum \alpha_j P_j,
\]
where $0\leq \alpha_j\leq 1$, $P_j$ are orthogonal projections onto
(subspaces of) eigenspaces of $K$, and $\sum \alpha_j\Tr(P_j) = 1$.

At inverse temperature $\beta$ the so-called Gibbs, hence $\beta$-KMS,
invariant state on $\gothM_\romp$ is
\[
\omega^\romp_\beta(T) = \Tr(\rho_\beta T), \qquad \rho_\beta =
\frac{\e^{-\beta K}}{\Tr(\e^{-\beta K})}.
\]
At zero temperature, or $\beta = +\infty$, this becomes
$\rho_\infty = \nu_0^{-1} P_0$, where $P_0$ is the orthogonal projection
onto the span of ground states, and $\nu_0 = \Tr(P_0)$ is the multiplicity of the
ground state. If the ground state is non-degenerate, $\omega^\romp_\infty$ is a
vector state.

As a last item we wish to discuss the GNS representation
of the algebra $\gothM_\romp$ with respect to the Gibbs state $\omega^\romp_\beta$,
which in fact coincides with its standard representation. 

Typically it is introduced as left multiplication on the algebra
itself, but through the identification $|u\ra\la v|\to u\otimes\bv \in
\cK\otimes\cK$ we pass straight to a representation in terms of
operators on $\cK\otimes\cK$. 

We have the left representation $\pi^\romp_\roml$, which serves as the GNS representation, and
the (conjugate linear) right representation  $\pi^\romp_\romr$ 
defined by
\[
\pi^\romp_\roml(T) = T\otimes\one_\cK \quad \textup{and} \quad
\pi^\romp_\romr(T) = \one_\cK \otimes \bT.
\]
They are intertwined by the conjugate linear modular conjugation 
$J_\romp\colon \cK\otimes \cK\to \cK\otimes\cK$ defined by
\begin{equation}\label{Jp}
J_\romp(u\otimes v) = \bv\otimes \bu,
\end{equation}
and extended by linearity to $\cK\otimes\cK$.
That is, $\pi_\romr^\romp(T) = J_\romp \pi^\romp_\roml(T) J_\romp$. 
The state $\omega_\beta^\romp$ goes into a vector state with respect to
 the GNS vector
\[
\Omega_\beta^\romp = \sum_{j=1}^\nu\frac{\e^{-\beta E_j/2}}{\sqrt{\Tr(\e^{-\beta K})}}\, e_j\otimes e_j.
\]
That is,
\[
\omega_\beta^\romp(T) = \la \Omega_\beta^\romp, \pi^\romp_\roml(T) \Omega_\beta^\romp\ra.
\]

Associated with
the representation is the standard self-dual cone
\[
\cP^\romp_\beta = \bigset{\pi^\romp_\roml(T) \pi^\romp_\romr(T) \Omega_\beta^\romp}{T\in
  \gothM_\romp} = \bigset{T\otimes\bT \,\Omega_\beta^\romp}{T\in\gothM_\romp},
\]
which is invariant under $J_\romp$.

There are a priori many ways to lift the dynamics on the algebra
$\gothM_\romp$ to a unitarily implemented dynamics on 
its image $\pi^\romp_\roml(\gothM_\romp)$ inside
$\cB(\cK\otimes\cK)$. The simplest choice is to use $K\otimes\one_\cK$ as generator.
However, there are two other natural choices available to us, one is the
$\Omega_\beta^\romp$-Liouvillean, fixed by requiring 
\begin{equation}\label{Fixing-Lp1}
L\Omega_\beta^\romp = 0.
\end{equation}
The other being the standard Liouvillean,
selected uniquely by the requirement
\begin{equation}\label{Fixing-Lp2}
 \e^{\ri t L}\cP^\romp_\beta\subseteq\cP^\romp_\beta.
\end{equation}
For faithful states -- as is the case here -- 
the two Liouvilleans coincide and we get:
\begin{equation}\label{Lp}
L_\romp = K\otimes\one_\cK- \one_\cK\otimes K.
\end{equation}
This is the unique choice satisfying either \eqref{Fixing-Lp1} or \eqref{Fixing-Lp2}, and
unitarily implementing the dynamics
\[
\pi^\romp_\roml(\e^{\ri t K}T\e^{-\ri t K}) = \e^{\ri t L_\romp} \pi^\romp_\roml(T) \e^{-\ri t L_\romp}.
\]
The GNS vector $\Omega^\romp_\beta$ is a $\beta$-KMS vector for the dynamics 
$\tau_\romp^t(T) = \e^{\ri t L_\romp} T \e^{-\ri t L_\romp}$ on $\pi^\romp_\roml(\gothM)$.
Note that
\[
\sigma(L_\romp) =\sigma_\pp(L_\romp) = \sigma(K)-\sigma(K) = \bigset{E_i-E_j}{1\leq i,j\leq n},
\]
with the obvious degeneracies. In particular $0\in\sigma(L_\romp)$ is at least $\nu$-fold degenerate.

\subsection{The Reservoir}

We begin by introducing the Fock representation of
Weyl operators on 
the bosonic Fock space $\cF =
\Gamma(\gothh)$, build over the one-particle space $\gothh =
L^2(\RR^3)$. We refer the reader to Appendix~\ref{SecondQuant}
for the notation and the basic constructions pertaining to second quantization that we use here.
Note that
$\cF = \oplus_{n=0}^\infty \cF^{(n)}$, where $\cF^{(n)} =\gothh^{\otimes_\mathrm{s} n}=
 L^2_{\sym}(\RR^{3n})$, square integrable
functions symmetric under interchange of the $n$ variables.
Recall the convention that $\gothh^{\otimes_{\mathrm{s}}0} = \CC$ and that
$\otimes_{\mathrm{s}}n$ denotes the $n$-fold symmetric tensor product.

We recall furthermore from Appendix~\ref{SegalFields} that the Segal fields $\phi(f)$, for $f\in\gothh$, are the self-adjoint operators given
by $\phi(f) = 2^{-1/2}(a^*(f) + a(f))$,
where $a(f)$ and $a^*(f)$ are the bosonic annihilation and creation operators.

The Weyl operators in the Fock representation are given by the expression
\[
W(f) = \e^{\ri \phi(f)},
\]
where $f$ runs over the one-particle space $\gothh$. We define
$\gothW$ to be the complex $*$-algebra generated by the Weyl operators,
pertaining to functions $f$ from.
\begin{equation}\label{RestrictedStates}
 \gothh_0 = \bigset{f\in\gothh}{|k|^{-\frac12}f\in\gothh}.
\end{equation}
Observe that this amounts to taking the linear span of the Weyl operators:
\begin{equation}\label{Weyl-Algebra}
\gothW = \huelle\bigset{W(f)}{f\in\gothh_0}.
\end{equation}
That $\gothW$ coincides with the $*$-algebra generated by the Weyl operators follows from the Weyl
relations
\begin{equation}\label{Weyl-Relations}
W(f)^* = W(-f) \quad \textup{and} \quad W(f)W(g) = \e^{\ri\im \la f,g\ra}W(f+g).
\end{equation}
We remark that one should really pass on to the norm closure to get a $C^*$-algebra,
but since we will eventually pass on to double commutants, i.e. von Neumann envelopes,
we bypass the $C^*$-setting. This simplifies the construction of states and representations below.
We will however still discuss the GNS representation of $\gothW$ as if it was a $C^*$-algebra.
See \cite{Merkli2006,Pillet2006} for how to extend states and representations to
the intermediate $C^*$-algebra setting.

As a dynamics on the Weyl algebra, we take in these notes the second
quantized massless dispersion relation $\RR^3\ni k\to |k|$. That is,
we take as generator the self-adjoint non-negative operator
\[
H_\ph = \D\Gamma(|k|),
\] 
with a single (up to a constant multiple) non-degenerate  bound state $\vacuum = (1,0,0,\dotsc)$,
the vacuum vector. This is in fact the ground state of the reservoir
Hamiltonian. That the Weyl algebra is invariant under the Heisenberg
dynamics follows from the computation
\begin{equation}\label{Evol-Of-WeylOps}
\e^{\ri t H_\ph}W(f) \e^{-\ri t H_\ph} = W(\e^{\ri t|k|} f),
\end{equation}
together with linearity, cf.~\eqref{Weyl-Algebra}.

The associated $\beta$-KMS states are defined on Weyl operators, and
extended by linearity to $\gothW$, by the relation
\begin{equation}\label{WeylInKMS}
\omega_\beta^\cR(W(f)) = \e^{-\|\sqrt{1+2\rho_\beta}f\|^2/4},
\end{equation}
where
\begin{equation}\label{Intro-Planck}
\rho_\beta(k) = \frac1{\e^{\beta |k|}-1}
\end{equation}
is Planck's thermal density for black body radiation. Note that $\rho_\beta(k)\sim (\beta |k|)^{-1}$ at $k=0$, 
which is the reason for considering only Weyl operators for $f\in\gothh_0$, cf.~\eqref{RestrictedStates}.
It follows from the computation \eqref{Evol-Of-WeylOps} that $\omega_\beta^\cR$ are invariant states.
The superscript $\cR$ indicates that the state acts on the reservoir.
The zero temperature state is the vector state
\begin{equation}\label{WeylInVacuum}
\omega_\infty^\cR(W(f)) =  \tvacuum W(f) \vacuum = \e^{-\|f\|^2/4}.
\end{equation}

To construct the GNS representation of the Weyl algebra, with
respect to the $\beta$-KMS state $\omega_\beta^\cR$,  we introduce so-called left (and right) 
Araki-Woods fields associated with the thermal density $\rho_\beta$ from \eqref{Intro-Planck}. 
The left fields are used to construct the GNS representation, 
whereas the right fields are kept for later use when we identify 
the standard form of the GNS representation. 

The left and right Araki-Woods annihilation and
creation operators are acting in $\cF\otimes\cF$, and defined for $f\in \gothh_0$ by
\begin{equation}\label{Intro-lr-AW-Fields}
\begin{aligned}
& a_{\beta,\roml}^\AW(f) =  a_\roml(\sqrt{1+\rho_\beta}\,f) + a^*_\romr(\sqrt{\rho_\beta}\,\bar{f})\\
& a_{\beta,\romr}^\AW(f) =  a_\romr(\sqrt{1+\rho_\beta}\,\bar{f}) + a^*_\roml(\sqrt{\rho_\beta}\,f).
\end{aligned}
\end{equation}
Here, for $g\in\gothh$,
\begin{equation}\label{Intro-lr-AW-Fields0}
a_\roml(g) = a(g)\otimes\one_\cF, \quad a_\romr(g) = \one_\cF\otimes \,a(g),
\end{equation}
and likewise for their adjoints $a^*_\roml(g)$ and
$a^*_\romr(g)$. One can think of $g\to a_{\roml}^\#(g)$ and $g\to a_\romr^\#(\bar{g})$
as left and right zero-temperature Araki-Woods
annihilation and creation operators. Here $a^\#$ denotes either $a$
or $a^*$. 
The Araki-Woods operators form two (equivalent up to complex conjugation)  
non-Fock representations of the canonical commutation relations.
They give rise to smeared Araki-Woods fields $\phi^\AW_{\beta,\roml}(f)$ and
$\phi^\AW_{\beta,\romr}(f)$ and hence Weyl operators $W^\AW_{\beta,\roml}(f)$ and
$W^\AW_{\beta,\romr}(f)$. We denote by 
\[
\tgothW_{\beta,\roml} = \huelle\bigset{W^\AW_{\beta,\roml}(f)}{f\in\gothh_0}
\quad\textup{and}\quad
\tgothW_{\beta,\romr} = \huelle\bigset{W^\AW_{\beta,\romr}(f)}{f\in\gothh_0},
\]
the left and right Araki-Woods Weyl algebras, as
complex $*$-subalgebras of $\cB(\cF\otimes\cF)$.

The left and right Araki-Woods algebras form representations of the Weyl
algebra by the prescriptions
\[
\pi_{\beta,\roml}^\AW(W(f))  = W^\AW_{\beta,\roml}(f) \quad
\textup{and} 
\quad \pi_{\beta,\romr}^\AW(W(f)) = W^\AW_{\beta,\romr}(f),
\]
and extended by linearity to $\gothW$. The $\beta$-KMS state $\omega_\beta^\cR$, 
cf.~\eqref{WeylInKMS}, goes into
the $\beta$-independent vector state 
\[
\forall W\in\gothW:\quad \omega_\beta^\cR(W) = \tdvacuum \pi^\AW_{\beta,\roml}(W) \dvacuum
\]
associated with the GNS vector $\dvacuum:= \vacuum\otimes\vacuum$.
This follows from the observation that 
\[
\begin{aligned}
W^\AW_{\beta,\roml}(f) & = W(\sqrt{1+\rho_\beta}f)\otimes W(\sqrt{\rho_\beta}\bar{f})\\
 W^\AW_{\beta,\romr}(f) & = W(\sqrt{\rho_\beta}\bar{f})\otimes W(\sqrt{1+\rho_\beta}f)
\end{aligned}
\]
 together with \eqref{WeylInKMS} and \eqref{WeylInVacuum}. Note that
 it follows from the Weyl relation \eqref{Weyl-Relations} that the
 left and right Araki-Woods Weyl operators (at the same inverse
 temperature) commute, i.e. 
 \begin{equation}\label{ForFaith}
 \tgothW_{\beta,\roml} \subseteq
 (\tgothW_{\beta,\romr})' \quad\textup{and}\quad \tgothW_{\beta,\romr} \subseteq
 (\tgothW_{\beta,\roml})'.
 \end{equation}

 The dynamics on the Weyl-algebra in the GNS representation can
again be implemented in several ways by a unitary group generated by a self-adjoint
operator $L_\cR$. The requirement
\[
L_\cR\dvacuum = 0 
\]
selects the unique generator 
\[
L_\cR = H_\ph \otimes\one_\cF - \one_\cF \otimes H_\ph,
\]
which we call the  reservoir $\dvacuum$-Liouvillean. It unitarily implements the
dynamics in the GNS representation
\[
\pi^\AW_{\beta,\roml}(\e^{\ri t H_\ph} A \e^{-\ri t H_\ph}) = \e^{\ri t
  L_\cR}\pi^\AW_{\beta,\roml}(A)\e^{-\ri t L_\cR}.
\]
The GNS vector $\dvacuum$ is a $\beta$-KMS vector for the dynamics on $\tgothW_{\beta,\roml}$.
Note that $\sigma(L_\cR) = \RR$ is purely absolutely continuous, except for a
non-degenerate eigenvalue at $0$, with the $\beta$-KMS vector $\dvacuum$ as eigenvector.

When we later start to perturb the free Pauli-Fierz dynamics, we will
be forced to leave the complex algebra -- as well as the $C^*$-algebra -- setting and pass on to the
enveloping Araki-Woods von Neumann algebras
\[
\gothW_{\beta,\roml} = \bigl(\tgothW_{\beta,\roml}\bigr)''
\quad\textup{and}\quad 
\gothW_{\beta,\romr} = \bigl(\tgothW_{\beta,\romr}\bigr)''.
\]
The GNS vector $\dvacuum$ becomes a cyclic faithful vector
for the left algebra $\gothW_{\beta,\roml}$, and hence the algebra
is in its standard representation. That the extension is faithful follows from 
$\dvacuum$ being cyclic for the commutant, cf.~\eqref{ForFaith}.

The modular conjugation on the reservoir $J_\cR$ is simply given by
\[
J_\cR(\varphi\otimes\psi) = \Gamma(\conj)\psi\otimes\Gamma(\conj)\varphi,
\]
and extended to $\cF\otimes\cF$ by linearity and continuity. 
Here $\Gamma(\conj)$ is the lifting of complex  conjugation $\conj(f) = \bar{f}$
on $\gothh$ to $\cF$ using Segal's second quantization functor
$\Gamma$. The modular conjugation intertwines the left and right representations
\[
\forall A\in\gothW:\qquad \pi^\AW_{\beta,\romr}(A) = J_\cR \pi^\AW_{\beta,\roml}(A) J_\cR
\]
and we can identify the right algebra with the commutant of 
the left algebra and vice versa: 
\[
\bigl(\gothW_{\beta,\roml}\bigr)' = J_\cR\gothW_{\beta,\roml} J_\cR = \gothW_{\beta,\romr}.
\]

The dynamics $W\to \tau_\cR^t(W) = \e^{\ri t L_\cR} W \e^{-\ri t L_\cR}$  extends by continuity
from $\tgothW_{\beta,\roml}$ to the von Neumann envelope $\gothW_{\beta,\roml}$,
and -- being unitarily implemented -- is a $\sigma$-weakly continuous group of automorphisms.
The GNS vector $\dvacuum$ remains a $\beta$-KMS vector also for the extended dynamics on $\gothW_{\beta,\roml}$.

The extended dynamics on $\gothW_{\beta,\roml}$ is now (also) unitarily implemented
by the standard Liouvillean, which due to faithfulness of the GNS state $\dvacuum$ 
is identical to the $\dvacuum$-Liouvillean $L_\cR$. Recall that the standard Liouvillean is fixed
by the requirement that it keeps the
self-dual ($J_\cR$-invariant) standard cone $\cP^\cR_\beta$ invariant, where
\begin{align*}
\cP^\cR_\beta & = \overline{\bigset{B J_\cR B \dvacuum
  }{B\in\gothW_{\beta,\roml}}}.
\end{align*}

We remark that the reader should think of the invariant state $\dvacuum$ as 
a background thermal photon cloud, with momentum distribution given by 
Planck's law \eqref{Intro-Planck}. The left Fock space accounts for photons, 
and the right Fock space are holes. 
The left creation operators $a^{\AW*}_{\beta,\roml}$ add photons to the
thermal background, while  the left annihilation operators
$a_{\beta,\roml}^\AW$ add holes. The vacuum state -- the thermal background -- is not annihilated
by any Araki-Woods annihilation or creation operator.
In this picture, the standard Liouvillean $L_\cR$ can be interpreted as a Hamiltonian.
Adding electrons cost energy, while adding holes release energy.

%*********************************************************
\subsection{Non-interacting Pauli-Fierz Systems}
%*********************************************************

 Tensoring the small quantum system with the reservoir yields
the non-interacting Pauli-Fierz system. The Hilbert space for the full
system is
\[
\cH = \cK\otimes\cF
\]
and the free Hamiltonian is
\[
H_0 = K\otimes \one_\cF + \one_\cK\otimes H_\ph.
\]
As the algebra of observables we take the (algebraic) tensor product
\[
\gothM = \gothM_\romp\otimes\gothW \subseteq \cB(\cH).
\]
The free Heisenberg dynamics generated by $H_0$ clearly preserves $\gothM$.

Note that $\sigma(H_0) = [E_1,\infty)$ and $\sigma_\pp(H_0) = \sigma(K)$, hence;
$H_0$'s eigenvalues are all embedded in a half-axis of continuous spectrum. 
The ground state energy is $E_1$ with eigenvector $e_1\otimes\vacuum$.

We construct $\beta$-KMS states on $\gothM$ by tensoring the relevant
states on the constituent systems:
\[
\omega_\beta(T\otimes W) = \omega_\beta^\romp(T)\omega_\beta^\cR(W), 
\]
where $T\in\gothM_\romp=\cB(\cK)$ and $W\in\gothW$.

We proceed to construct left and right representations at positive temperature
by tensoring the two individual representations. Again, the left representation is the GNS representation, whereas the right is for extending the GNS representation to standard form at a later stage.
We want representations on the Hilbert space
\[
\cH^\romL = \cH\otimes\cH = \cK\otimes\cF\otimes\cK\otimes\cF,
\] 
but obviously it is often more natural to construct operators on
\[
\cH^\romL_\shuf = \cK\otimes\cK\otimes\cF\otimes\cF.
\]
For this purpose we introduce a unitary transformation shuffling the tensor components:
\begin{equation}\label{Shuffle}
\cH^\romL \ni u\otimes\varphi\otimes v\otimes \psi \to \shuf(u\otimes\varphi\otimes v\otimes \psi) = u\otimes v\otimes \varphi\otimes\psi \in\cH^\romL_\shuf,
\end{equation}
and extended by linearity and continuity.

We define
\begin{equation}\label{Pi-PF}
\begin{aligned}
\pi_{\beta,\roml}^\PF(T\otimes W(f)) & = 
\shuf^*\bigl(T\otimes\one_\cK\otimes
W^\AW_{\beta,\roml}(f)\bigr)\shuf,\\
\pi_{\beta,\romr}^\PF(T\otimes W(f)) & = 
\shuf^*\bigl(\one_\cK\otimes\bT \otimes
W^\AW_{\beta,\romr}(f)\bigr)\shuf,
\end{aligned}
\end{equation}
and extend by linearity to $\gothM$. Write
\[
\tgothM_{\beta,\roml} = \pi_{\beta,\roml}^\PF(\gothM) \quad
\textup{and} \quad
\tgothM_{\beta,\romr} = \pi_{\beta,\romr}^\PF(\gothM)
\]
for the image algebras which are complex $*$-subalgebras of $\cB(\cH^\romL)$.

The $\beta$-KMS state $\omega_\beta$ is represented by the  GNS vector state
\[
\omega_\beta(A) =   \la \Omega_\beta^\PF, \pi^\PF_{\beta,\roml}(A)\Omega_\beta^\PF\ra, 
\]
where the GNS vector is
\[
\Omega_\beta^\PF = \shuf^*\bigl(\Omega_\beta^\romp\otimes \dvacuum\bigr).
\]

Finally we can single out the free $\Omega_\beta^\PF$-Liouvillean as the sum of the
particle and reservoir Liouvilleans:
\[
L_0 = H_0\otimes \one_\cH - \one_\cH\otimes H_0 = 
\shuf^*\bigl(L_\romp\otimes\one_{\cF\otimes\cF} + \one_{\cK\otimes\cK}\otimes L_\cR\bigr)\shuf.
\]
Again the property $L_0 \Omega_\beta^\PF = 0$  determines the
choice of generator uniquely.

As for the spectrum of $L_0$, it is the sum of the spectra of $L_\romp$ and $L_\cR$,
that is
\[
\sigma(L_0) = \RR \quad\textup{and}\quad \sigma_\pp(L_0) = \sigma(L_\romp) = \sigma(K)-\sigma(K).
\]
In particular, $0$ is an at least $\nu$-fold degenerate embedded eigenvalue. 

We end this subsection by passing on to the von Neumann envelopes
\[
\gothM_{\beta,\roml} = \bigl(\tgothM_{\beta,\roml}\bigr)''
 \quad\textup{and}\quad 
 \gothM_{\beta,\romr} = \bigl(\tgothM_{\beta,\romr}\bigr)''.
\]
As for the reservoir, $\Omega^\PF_\beta$ is also
a cyclic and faithful state on $\gothM_{\beta,\roml}$
and hence; $\gothM_{\beta,\roml}$ is in its standard representation.
We identify the modular conjugation to be
\begin{equation}\label{Intro-J}
J = \shuf^*\bigl( J_\romp\otimes J_\cR\bigr)\shuf,
\end{equation}
intertwining the left and right algebras
\[
\bigl(\gothM_{\beta,\roml}\bigr)' = J\gothM_{\beta,\roml} J = \gothM_{\beta,\romr}.
\]

As for the reservoir, the dynamics  $A\to \tau^t_0(A) = \e^{\ri t L_0} A \e^{-\ri t L_0}$ 
extends by continuity from $\tgothM_{\beta,\roml}$ to the von Neumann envelope $\gothM_{\beta,\roml}$,
and -- being unitarily implemented -- is a $\sigma$-weakly continuous group of automorphisms.
The GNS vector $\Omega^\PF_\beta$ remains a $\beta$-KMS vector also for the extended dynamics
on the von Neumann algebra $\gothM_{\beta,\roml}$.

The extended dynamics is (also) unitarily implemented by the standard Liouvillean, which is fixed uniquely
by the demand that $\e^{\ri t L}$ preserves the standard cone
\begin{align}\label{PFCone}
\cP_\beta^\PF = 
\overline{\bigset{AJA\Omega^\PF_\beta}{A\in\gothM_{\beta,\roml}}}.
\end{align}
Again, the standard Liouvillean coincides with the $\Omega^\PF_\beta$-Liouvillean $L_0$.

Note the intertwining property $L_0J = - JL_0$, which is consistent with the 
symmetric structure of the spectrum of $L_0$. 

%*****************************************************
\subsection{Interacting Pauli-Fierz Systems}\label{Subsect-IntPF}
%*****************************************************

 In order to obtain an interacting system we add a perturbation to the
 free Hamiltonian, which couples the particle system and the
 reservoir. In this contribution we consider couplings linear in the
 field operators. 

 The object carrying the coupling is a function $\coup\in
 L^2(\RR^3;\cB(\cK))$, an $\Mat_\nu(\CC)$-valued square integrable
 function. The perturbation is then of the form
\[
\phi^\PF(\coup) := \frac1{\sqrt{2}}\int_{\RR^3} \bigl\{\coup(k)\otimes a^*(k) +
\coup(k)^*\otimes a(k)\bigr\}\, \D k,
\]
acting in $\cH = \cK\otimes\cF$. This can be thought of as a Segal field and indeed; if $\nu=1$, it is a Segal field.

The interacting Pauli-Fierz Hamiltonian is given by
\[
 H = H_0 + \phi^\PF(\coup) = K\otimes\one_\cF + \one_\cK\otimes H_\ph + \phi^\PF(\coup).
\]
Discussions of self-adjointness is postponed to the next chapter. We call $H$
the \emph{Pauli-Fierz Hamiltonian}.

At this point we meet a fundamental issue arising when studying
interacting dynamics on Weyl-algebras. 
The Heisenberg evolution is no longer going to preserve the algebra of
observables $\gothM$. 
By the Trotter product formula \cite[Thm.~VIII.31]{ReedSimonI1980} one sees that the double commutant $\gothM''$ 
(the closure of $\gothM$ in the weak operator topology) is invariant, 
but since the Weyl-algebra acts irreducibly on Fock space we find that $\gothM'' = \cB(\cH)$. 
We remark that recently Buchholz and Grundling \cite{BuchholzGrundling2008} introduced an intermediary
resolvent algebra, as an alternative to the Weyl-algebra, which admits non-trivial interacting quantum dynamical systems.

On the positive temperature side, we can perturb the free dynamics $\tau^t_0$ 
on the algebra $\gothM_{\beta,\roml}$. Being a $\sigma$-weakly continuous
group of automorphisms on $\gothM_{\beta,\roml}$ it has a generator $\delta$ which is
a closed operator on $\gothM_{\beta,\roml}$ as a Banach space. Using Araki-Dyson expansions
one can add to $\delta$ perturbations of the form $[A,\cdot]$, for self-adjoint $A\in\gothM_{\beta,\roml}$,
and construct a perturbed dynamics with $\delta + [A,\cdot]$ as generator.
Araki derived the form of the standard Liouvillean unitarily implementing the 
perturbed dynamics to be $L_0 + A - JAJ$.

Derezi{\'n}ski, Jak\v{s}i{\'c} and Pillet \cite{DerezinskiJaksicPillet2003}
extended the analysis of Araki to self-adjoint operators affiliated
with $\gothM_{\beta,\roml}$. The precise construction of the
perturbation we consider can be found in Subsect.~\ref{Subsec-Liovillean}.  Here we just
heuristically explain its structure. Formally, we take the zero
temperature perturbation $\phi^\PF(\coup)$ in its positive temperature
form ``$\pi^\PF_{\beta,\roml}(\phi^\PF(\coup))$''. For bosons 
this is not meaningful, but for fermions where field operators are
bounded this can be taken literally. 
One can try and circumvent this by constructing a strongly continuous
one-parameter group of unitaries ``$s\to\pi^\PF_{\beta,\roml}\bigl(\e^{\ri
  \phi^\PF(s\coup)}\bigr)$''. In general, however, the operators
$\e^{\ri t \phi^\PF(\coup))}$ may not be in the (norm closure) of
$\gothM$. An exception to this is the spin-boson model, or more
generally $\coup$'s of the form $\coup(k) = \coup_0 g(k)$, with
$\coup_0$ a self-adjoint matrix. 

For now we simply write $\phi_{\beta,\roml}^\PF(\coup)$ for the
formal positive temperature Pauli-Fierz field obtained by formally applying the
left representation $\pi^\PF_{\beta,\roml}$ to $\phi^\PF(\coup)$. In
Subsect.~\ref{Subsec-ExisNonexis}, we argue that $\phi_{\beta,\roml}^\PF(\coup)$ constructed
this way is indeed affiliated with $\gothM_{\beta,\roml}$.

One can now argue that
\[
L_\beta = L_0 + \phi_{\beta,\roml}^\PF(\coup) - J\phi_{\beta,\roml}^\PF(\coup)J
\]
is essentially self-adjoint on the intersection of the domains of the three unbounded summands.
This perturbed Liouvillean generates a $\sigma$-weakly continuous
group of automorphisms on $\gothM_{\beta,\roml}$, and by design it is already 
its own standard Liouvillean. It is a fact immediate from Trotter's
product formula that $\e^{\ri t L_\beta}$
preserves the standard  cone \eqref{PFCone}. We call $L_\beta$ the \emph{standard Pauli-Fierz Liouvillean}.

We remark at this stage that it is a result of Derezi{\'n}ski, Jak\v{s}i{\'c} and Pillet \cite{DerezinskiJaksicPillet2003}, that the interacting Pauli-Fierz dynamics
$A\to \tau_\coup^t(A) = \e^{\ri t L_\beta}A\e^{-\ri t L_\beta}$ -- under suitable assumptions -- also admits a normalized faithful $\beta$-KMS vector  $\Omega^\PF_{\beta,\coup}$. A result which goes back to Araki for bounded perturbations from $\gothM_{\beta,\roml}$. 

%*****************************************************
\subsection{(Open) Problems I}
%*****************************************************

In this subsection we would like to present some of the questions
one is interested in regarding  Pauli-Fierz systems at zero and positive temperature.
Our focus is on fully coupled Pauli-Fierz systems, that is we largely ignore questions
relevant primarily for the weak coupling regime, which is much better understood than the fully coupled regime.
Some problems are open, some will be either resolved or discussed in the following sections.

\vskip3mm

\noindent\textbf{Zero Temperature:}

\begin{itemize}
\item Under what conditions on $\coup$ does the Hamiltonian admit a ground state, that is;
when is $\Sigma = \inf\sigma(H)$ an eigenvalue. When is it non-degenerate?
\item What can one say about the general structure of point spectrum?
\item What can one say about general regularity properties of eigenstates?
\item Under what conditions do the excited states  vanish due to energy being dispersed by the radiation field?
\item Under what conditions is the underlying continuous spectrum absolutely continuous?
\item Can one prove asymptotic completeness, i.e. show that the canonical wave operators $W_\pm$ are unitary?
\end{itemize}

\noindent\textbf{Positive Temperature:}

\begin{itemize}
\item Under what conditions does the interacting Pauli-Fierz dynamics
$\tau^t_\coup$ on $\gothM_{\beta,\roml}$ admit a $\beta$-KMS state? 
\item Under what conditions is the $\beta$-KMS state the only invariant normal state?
\item Under what conditions is the interacting Pauli-Fierz dynamics \emph{ergodic}:
\[
\forall \Psi\in\cP^\PF_\beta, A\in\gothM_{\beta,\roml}:\quad
\lim_{T\to\infty}\frac1{T}\int_0^T \la\Psi,\tau^t_\coup(A)\Psi\ra \,\D t = 
\la\Omega^\PF_{\beta,\coup},A\Omega^\PF_{\beta,\coup}\ra.
\]
Recall that the normal states on $\gothM_{\beta,\roml}$ are in one-one correspondence with vector states,
pertaining to unit vectors from the standard cone $\cP^\PF_\beta$.
\item Under what conditions is the interacting Pauli-Fierz dynamics \emph{mixing}:
\[
\forall \Psi\in\cP^\PF_\beta, A\in\gothM_{\beta,\roml}:\quad
\lim_{t\to\infty} \la\Psi,\tau^t_\coup(A)\Psi\ra  = 
\la\Omega^\PF_{\beta,\coup},A\Omega^\PF_{\beta,\coup}\ra.
\]
The mixing property is sometimes referred to as \myquote{Return to Equilibrium}.
\item Can one prove an asymptotic completeness property, i.e.
if the $\beta$-KMS state is the only invariant state, is there a unitary \myquote{wave operator}
intertwining the interacting Pauli-Fierz dynamical system
and the Araki-Woods dynamical system for the reservoir.
\end{itemize}

 We make some clarifying remarks pertaining to the questions above.

In order to ensure uniqueness of an existing interacting ground state for $H$, one is typically forced
to assume a priori that $K$'s ground state is non-degenerate, that is $E_1< E_2$. Otherwise one must rely on
the coupling to induce a splitting of the ground state energy, which is difficult to control away from weak coupling. The typical tool here for the Hamiltonian is Perron-Frobenius methods.
We will discuss a possible mechanism to invoke similar arguments for the Liouvillean.

The key tool employed here to study the structure of point spectrum  
and to prove absence of singular continuous spectrum for $H$ and $L_\beta$, is the positive commutator method originally due to Eric Mourre  \cite{Mourre1981}. 
 The deepest results established here will be derived using this technique.
Note that eigenvalues of $H$ and $L_\beta$ are embedded in the continuous spectrum. Pauli-Fierz Hamiltonians and standard Liouvilleans are, however,
too singular to permit an application of standard implementations of the Mourre method \cite{AmreinMonvelGeorgescu1996}. Instead we rely on the singular Mourre theory developed in \cite{GeorgescuGerardMoeller2004a,MoellerSkibsted2004,Skibsted1998}. 

 Suppose one can establish that $H$ has a non-degenerate ground state, no excited states, and purely
absolutely continuous spectrum above the ground state. Then it is a direct consequence of the spectral theorem and the Riemann-Lebesgue lemma that 
\[
\forall \psi,\varphi\in\cH:\quad  \lim_{t\to\infty} \la\varphi,\e^{-\ri t (H-\Sigma)}\psi\ra =
\la\varphi,\Omega_0\ra \la\Omega_0,\psi\ra,
\]
where $\Omega_0$ is a normalized ground state and $\Sigma$ the corresponding ground state energy. 
This is sometimes called \myquote{Approach to the Ground State}
and is the zero-temperature analogue of mixing. 

 By regularity properties of bound states we refer here primarily to
 number bounds, 
which have different interpretations at zero and positive temperature. At zero temperature, bounds such as
 $\la\psi,N^k\psi\ra<\infty$ give control over the infrared catastrophe, in that it controls the number
 of soft photons a bound state carries. Here $N=\one_\cK\otimes\,\D\Gamma(\one_\gothh)$ is the number operator and $\psi$ is a bound state.
At positive temperature one has both a photon
counter  $N\otimes\one_\cH$ and a hole counter $\one_\cH\otimes N$ .
Subtracting one from the other yields a total photon counter, a \myquote{charge}
operator.
A deformed thermal photon cloud may be very far from the Planck distribution while having finite (total) photon
number. A better measure for controlling the photon/hole content of a bound state
is the sum of photons \emph{and} holes $N\otimes \one_\cH+\one_\cH\otimes N$, which is what we understand by the number operator at positive temperature.
Other relevant regularity
questions pertain to the study of momentum content, in particular in the infrared region, 
of bound states. This amounts to studying regularity of $k\to a(k)\psi$.

 Asymptotic completeness at zero temperature expresses that states in the absolutely continuous subspace
correspond exactly to scattering processes with incoming photons entering the interaction region from spatial infinity,
exciting/relaxing the atom and escaping again to spatial infinity leaving the atom in an altered state.
To express this more concisely set $\cV = \cH_\pp$, the closure of the linear span of all eigenstates
of $H$, which may be a one-dimensional subspace of $\cH$. The spaces of incoming and outgoing states are identical and equal
$\cH_{\pm} = \cV\otimes\cF$. 

We define a scattering identification operator $I \colon\cH_{\pm}\to\cH$ by
\[
I (\psi\otimes a^*(f)^n\vacuum) = a^*(f)^n\psi,
\]
and extended (formally) by linearity to $\cH_{\pm}$.
The wave operators, as maps $W_\pm\colon\cH_\pm \to \cH$, are now given by
\[
W_\pm  = \slim_{t\to\infty} \e^{\ri t H} I \e^{-\ri t H_\pm},
\]
where $H_\pm = H_{|\cV}\otimes\one_\cF + \one_\cV\otimes H_\ph$ is the free dynamics on the incoming and outgoing states.
Asymptotic completeness amounts to establishing existence and unitarity of 
wave operators. The scattering operator $S =  W_+^* W_-\colon \cH_-\to\cH _+$ is then a unitary map
from incoming to outgoing states. Wave operators are known to exist (under reasonable assumptions),
but unitarity has only very recently been established by Faupin and
Sigal \cite{FaupinSigal2012b}, for the
particular case of the spin-boson model and in the weak coupling
regime. Faupin and Sigal made use of a crucial number estimate due to De~Roeck and Kupiainen \cite{DeRoeckKuppiainen2012}. Six months after the appearance of \cite{FaupinSigal2012b}, 
De~Roeck, Griesemer and Kupiainen  announced a different proof \cite{DeRoeckGriesemerKuppiainen2013}. The general problem remains open. We will not be discussing
asymptotic completeness further, 
but the techniques we develop have in other contexts been essential to establishing asymptotic completeness.
For further material we refer the reader to \cite{FaupinSigal2013,FaupinSigal2013b,Gerard2002}.

Return to equilibrium, that is; mixing, or its weaker form, ergodicity, expresses the following
physical situation. When the Pauli-Fierz system is uncoupled, the atomic levels are not 
mixed, and hence the system is really a sum of one-dimensional systems. In particular, the reservoir
can not drive the small system towards the equilibrium vector state given by $\Omega^\PF_\beta$.
However, if the coupling does mix the atomic energy levels, then one expects that 
any normal state on $\gothM_{\beta,\roml}$ should evolve towards the 
interacting $\beta$-KMS vector state given by $\Omega^\PF_{\beta,\coup}$. 
Note that the final inverse temperature is enforced on the small system by the reservoir.

As for asymptotic completeness at positive temperature, this is a completely
unexplored question. As far as the author is aware, there is not even
a concise strategy proposed to address this problem.

We have now arrived at the starting point, substance wise, for the positive temperature
side of these notes.
The following quantum analogue of the classical Koopman theorem \cite{ReedSimonI1980}, 
expresses ergodicity and mixing in terms of spectral properties of generators.
The translation of  \myquote{Return to Equilibrium} into a spectral problem is often referred to
as quantum Koopmanism.

\begin{theorem} We have the following at positive temperature $\beta\in(0,\infty)$:
\begin{Enumerate}
\item The interacting Pauli-Fierz dynamics at positive temperature is
ergodic if and only if $\sigma_\pp(L_\beta)=\{0\}$ and $0$ is a simple eigenvalue.
(The corresponding eigenvector being the $\beta$-KMS vector.)
\item The interacting Pauli-Fierz dynamics at positive temperature is
mixing if and only if $\wlim_{t\to\infty} \e^{\ri t L_\beta} = \vert \Omega^\PF_\beta\ra\la\Omega^\PF_\beta\vert$.
\item If the spectrum of $L_\beta$ is purely absolutely continuous, except for a simple eigenvalue at $0$,
then the interacting Pauli-Fierz dynamics is mixing. 
\end{Enumerate}
\end{theorem}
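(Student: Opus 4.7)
The plan is to reduce all three parts to spectral properties of the unitary group $e^{\ri t L_\beta}$ via one master identity. The inputs are $L_\beta \Omega = 0$, where $\Omega := \Omega^\PF_{\beta,\coup}$ is the interacting KMS vector, and the commutation of $\gothM_{\beta,\roml}$ with $\gothM_{\beta,\romr}$. For $B \in \gothM_{\beta,\romr}$, writing $\Psi = B\Omega$ and $\omega(X) = \la\Omega, X\Omega\ra$, I compute
\begin{equation*}
\la\Psi, \tau^t_\coup(A)\Psi\ra
= \la e^{-\ri t L_\beta}B\Omega,\, A\, e^{-\ri t L_\beta}B\Omega\ra
= \la\Omega,\, A\,\tau^{-t}_\coup(B^*B)\,\Omega\ra
= \la A^*\Omega,\, e^{-\ri t L_\beta}B^*B\Omega\ra,
\end{equation*}
the middle identity using $e^{-\ri t L_\beta}B\Omega = \tau^{-t}_\coup(B)\Omega$ and that $\tau^{-t}_\coup(B)^* A \tau^{-t}_\coup(B) = A\,\tau^{-t}_\coup(B^*B)$ by the commutation of $\gothM_{\beta,\romr}$ with $\gothM_{\beta,\roml}$, and the last using $e^{\ri t L_\beta}\Omega = \Omega$. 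This master identity makes direct contact with the spectral theory of $L_\beta$ on vectors of the form $B^*B\Omega$.

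For (i), I apply von Neumann's mean ergodic theorem: $\frac{1}{T}\int_0^T e^{-\ri t L_\beta}\,\D t \to P$ strongly, where $P$ is the spectral projection onto $\ker L_\beta$. If $0$ is a simple eigenvalue then $P = \vert\Omega\ra\la\Omega\vert$, and the master identity gives
\begin{equation*}
\lim_{T\to\infty}\frac{1}{T}\int_0^T \la\Psi, \tau^t_\coup(A)\Psi\ra\,\D t
= \la A^*\Omega, \Omega\ra\,\la\Omega, B^*B\Omega\ra
= \omega(A)\,\|\Psi\|^2
\end{equation*}
for every $\Psi = B\Omega$. The time averages $\frac{1}{T}\int_0^T \tau^t_\coup(A)\,\D t$ are uniformly bounded by $\|A\|$, so strong convergence on the dense subspace $\gothM_{\beta,\romr}\Omega$ (dense because $\Omega$ is cyclic for the commutant) extends by a $3\eps$ argument to all of $\cH^\romL$, and in particular to unit vectors in $\cP^\PF_\beta$. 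Conversely, ergodicity combined with the master identity forces $P(B^*B\Omega) = \omega(B^*B)\Omega$ for every $B\in\gothM_{\beta,\romr}$ (strip off the outer $A^*\Omega$ using cyclicity of $\Omega$ for $\gothM_{\beta,\roml}$); polarising $B^*B$ over $\gothM_{\beta,\romr}$ and invoking density then forces $P$ to be the rank-one projection onto $\CC\Omega$, so $0$ is a simple eigenvalue of $L_\beta$. The stronger claim $\sigma_\pp(L_\beta) = \{0\}$ requires an additional argument: any eigenvector $\Phi$ of $L_\beta$ with eigenvalue $\lambda$ yields a $\tau^t_\coup$-invariant normal state $\omega_\Phi$, which by ergodicity coincides with $\omega$; using the Tomita--Takesaki structure to identify $\Phi$ as the image of $\Omega$ under a partial isometry in $\gothM_{\beta,\romr}$, combined with $J L_\beta J = -L_\beta$, one forces $\lambda = 0$.

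Statement (ii) is the same story without time averaging: the master identity shows that mixing is equivalent to $\la A^*\Omega, e^{-\ri t L_\beta}B^*B\Omega\ra \to \omega(A)\,\omega(B^*B)$ for all $A \in \gothM_{\beta,\roml}$ and $B \in \gothM_{\beta,\romr}$, which on the dense sets $\{A^*\Omega\}$ and $\{B^*B\Omega\}$, together with the unitary bound on $e^{\ri t L_\beta}$, is equivalent to $\wlim_{t\to\infty} e^{\ri t L_\beta} = \vert\Omega\ra\la\Omega\vert$. Finally, (iii) follows from (ii) via the Riemann--Lebesgue lemma: the hypothesis forces the spectral measure of $L_\beta$ at any $\varphi \perp \Omega$ to be purely absolutely continuous, so $\la\psi, e^{\ri t L_\beta}\varphi\ra \to 0$ as $|t|\to\infty$ for every $\psi \in \cH^\romL$; decomposing an arbitrary $\varphi$ as $\la\Omega, \varphi\ra\Omega + \varphi^\perp$ and using $e^{\ri t L_\beta}\Omega = \Omega$ yields $e^{\ri t L_\beta} \to \vert\Omega\ra\la\Omega\vert$ weakly. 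The main obstacle is the precise forward direction of (i) read as $\sigma_\pp(L_\beta) = \{0\}$, where ruling out nonzero eigenvalues depends on the standard-form identification alluded to above rather than on the master identity alone.
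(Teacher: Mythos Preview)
The paper states this theorem without proof, as a quantum analogue of classical Koopmanism (citing \cite{ReedSimonI1980}); it is background material setting up the spectral problems studied later. So there is no proof in the paper to compare against.

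Your argument is the standard one and is essentially correct. The master identity is right and is the key reduction; parts (ii) and (iii) are clean. For (i), the ``if'' direction is fine: your computation gives the ergodic limit for all $\Psi=B\Omega$ with $B$ in the commutant, and density plus the uniform bound $\|\bar A_T\|\le\|A\|$ pushes it to all of $\cH^\romL$, in particular to unit cone vectors.

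The one place to be careful is the ``only if'' direction of (i). Your converse argument for simplicity of $0$ uses the master identity at $\Psi=B\Omega$, but ergodicity is only assumed for $\Psi\in\cP^\PF_\beta$, and $B\Omega$ need not lie in the cone; since you are dealing with a quadratic form, polarisation over a cone (rather than a subspace) does not immediately extend the hypothesis. The cleaner route is the one you sketch afterwards: for $\Phi\in\ker L_\beta$ pass to the cone representative of $\omega_\Phi$, use ergodicity there to get $\omega_\Phi=\omega$, and then invoke standard-form theory (write $\Phi=V\Omega$ with $V$ an isometry in the commutant, use that $\Omega$ is separating to get $\tau$-invariance of $V$, and finish with the modular structure). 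You correctly flag that both this step and the further claim $\sigma_\pp(L_\beta)=\{0\}$ rest on the Tomita--Takesaki/standard-form machinery rather than on the master identity alone; in the paper's logic the latter claim is exactly Theorem~\ref{Thm-Jadczyk}. Finally, note that the vector in (ii) should be the interacting $\beta$-KMS vector $\Omega^\PF_{\beta,\coup}$, which is what you use.
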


We have now set the stage for embarking on a systematic spectral analysis of Pauli-Fierz
Hamiltonians and standard Pauli-Fierz Liouvilleans.

\noindent\textbf{Acknowledgments:} The author would first and foremost like to thank
Laurent Bruneau, Vojkan Jak\v{s}i{\'c} and Claude-Alain Pillet for the invitation to lecture
at the summer school on Non-equilibrium Statistical Mechanics, held at CRM in Montreal in the summer of 2011.
These notes are a direct result of their invitation. Furthermore, I would like thank Jan Derezi\'nski and Vojkan Jak\v{s}i{\'c}
for valuable comments regarding the contents of the notes,
part of which were typed during stays at CRM in Montreal and Dokuz Eyl{\"u}l University, Izmir. 
The author would like to thank these two institutions for hospitality.

\newpage

%*********************************************************
\section{Construction and Properties of Operators}
%*********************************************************

 In this section we construct the Pauli-Fierz Hamiltonian
 and its positive temperature counterpart, the standard Pauli-Fierz Liouvillean.
 Furthermore, we establish some of their basic properties. We will throughout these notes
 make heavy use of the $C^1(A)$ commutator calculus. For the convenience of the reader 
 we have, in the form of Appendix A, included a condensed presentation 
 of the elements of the calculus that we rely on.

%*********************************************************
\subsection{The Pauli-Fierz Hamiltonian}
%*********************************************************

Recall from Subsect.~\ref{Subsec-smallSystem} that as a small quantum system we took a finite dimensional
Hilbert space $\cK=\CC^\nu$ with Hamiltonian $K\in \Mat_\nu(\CC)$,
a self-adjoint $\nu\times \nu$ matrix $K^* = K$. In fact we chose
$K$ to be diagonal with its real eigenvalues sitting on the diagonal.

The dispersion relation for the field is the massless relativistic relation
$k\to |k|$ considered as a multiplication operator on $\gothh = L^2(\RR^3)$.
This gives rise to the second quantized free field energy $H_\ph = \D\Gamma(|k|)$,
as a self-adjoint operator on the bosonic Fock-space $\cF = \Gamma(\gothh) = \oplus_{\ell=0}^\infty \gothh^{\otimes_s \ell}$.
We write $\vacuum = (1,0,0,\dots)$ for the vacuum state in $\cF$.
Our inner products will always be conjugate linear in the first variable, and linear in the second.

We define a class of admissible coupling operators/functions
\[
\coup\in \cB(\cK;\cK\otimes\gothh)= L^2\bigl(\RR^3;\Mat_\nu(\CC)\bigr).
\]
That the two spaces above can be identified can be seen as follows: 
If $\coup\colon \RR^3\to \Mat_\nu(\CC)$ is square integrable
one can define a bounded operator $B_\coup\in\cB(\cK;\cK\otimes \gothh)$ by 
\[
(B_\coup v)(k) = \coup(k)v, 
\]
where we identified $\cK\otimes \gothh$ isometrically with $L^2(\RR^3;\CC^\nu)$.
Then 
\[
\|B_\coup\|^2 = \sup_{|v|\leq 1} \|B_\coup v\|^2 = \sup_{|v|\leq 1}\int_{\RR^3} |\coup(k)v|^2\,\D k \leq \|\coup\|^2
\]
and the linear map $\coup\to B_\coup$ is a contraction, but it is not an isometry.
To see that it is surjective with a bounded inverse, let $B\in \cB(\cK;\cK\otimes \gothh)$ and define
the candidate for an inverse
$\coup$ by $\coup_{ij}(k) = \la (B e_i)(k),e_j\ra$, 
where $e_1,\dots,e_\nu$ is the standard basis for $\CC^\nu$.
Then 
\[
\sum_{j=1}^\nu\int_{\RR^3}|\coup_{ij}(k)|^2\D k =\int_{\RR^3}|(Be_i)(k)|^2\D k =\|Be_i\|_{\cK\otimes\gothh}^2\leq \|B\|^2. 
\] 
Hence
\[
\|\coup\|^2 = \int_{\RR^3} \|\coup(k)\|^2 \D k \leq  \sum_{1\leq i,j\leq \nu}\int_{\RR^3} |\coup_{ij}(k)|^2 \D k\leq \nu \|B\|^2.
\]
From now on we will identify couplings $\coup$ with elements of $L^2(\RR^3;M_\nu(\CC))$, and norms of couplings will be $L^2$-norms.
We remark that the identification of coupling operators as
$\cB(\cK)$-valued functions above is particular to finite
dimensional small systems, cf.~\cite[Remark~5.1]{FaupinMoellerSkibsted2011a}.
Let $\mu>0$ be arbitrary, but fixed.
For the coupling $\coup$ we assume the existence of a constant
$C>0$ such that
\[
\mathrm{\mathbf{(H{\coup}n)}}\quad \begin{aligned}
&\forall k\in\RR^3, |k|\leq 1, \ \textup{and} \ |\alpha|\leq n: 
\qquad \|\partial_k^\alpha G(k)\|\leq C |k|^{n-\frac32+\mu -|\alpha|+\frac{\delta_{n,0}}{2}}\\ 
&\forall k\in\RR^3, |k|\geq 1, \ \textup{and} \ |\alpha|\leq n: 
\qquad \|\partial_k^\alpha G(k)\|\leq C |k|^{-\frac32 -\mu}. 
\end{aligned}
\]
The derivatives are distributional derivatives. We will make use of the condition
\HGCond{n} on $\coup$ with $n=0,1,2$. Note that \HGCond{n+1} implies \HGCond{n}.

The above
conditions reflect that $|k|^{|\alpha|-n}\partial_k^\alpha \coup$ is slightly better than
square integrable near zero, and  the  $\partial_k^\alpha G$'s are slightly better than
square integrable at infinity. For our commutator estimates in
Sect.~\ref{Sec-CommEst} it will not suffice to demand just square
integrability.  We remark that there is nothing special
about three dimensions or the dispersion $|k|$. For some results we
could deal with infinite dimensional small system $\cK$, and more
singular $\coup$'s.
 The above special case however captures the
essentials, and permits us to formulate simple - yet pertinent -
conditions that can be used for all our results at zero
temperature.

We now define the free and coupled Hamiltonians as
\begin{equation*}
H_0   = K\otimes\one_\cF + \one_\cK\otimes H_\ph
\quad \textup{and}\quad
H = H_0 + \phi^\PF(\coup),
\end{equation*}
where
\begin{equation}\label{PF-Coupling}
\phi^\PF(\coup) = \frac1{\sqrt{2}}\int_{\RR^3} \bigl\{ \coup(k)^* a(k) + \coup(k) a^*(k)\bigr\} \,\D k.
\end{equation}
In the following we will drop the superscript $\PF$ to simplify notation. The reader should be able to 
tell from the argument when $\phi$ denotes a Segal field and when it is of the coupling type \eqref{PF-Coupling}.
We remark that $H_0$ is self-adjoint on $\cD(H_0) = \cD(\one_\cK\otimes H_\ph)$ and that
\begin{equation}\label{HCore}
\cC = \cK\otimes \Gamma_\fin(C_0^\infty(\RR^3))
\end{equation}
is a core for $H_0$.  Furthermore, as for Segal fields, by Nelson's analytic vector theorem $\phi(\coup)$
is essentially self-adjoint on $\cC$ as well. In fact Segal fields are a special case, corresponding to $\nu=1$.
 See \cite{Merkli2006} for a proof. 
The notation $\Gamma_\fin(V)$ with $V\subset \gothh$ a subspace, denotes the
algebraic direct sum of $V^{\otimes_\roms n}$ with tensor products of proper
subspaces of Hilbert spaces always being algebraic, whereas
tensor products of Hilbert spaces always denote Hilbert space tensor
products, i.e., completion of algebraic tensor products.

We will as usual use the notation $N$ for the number operator $\D\Gamma(\one_\gothh)$
as an operator on $\cF$, and we will recycle the same notation on $\cH$ 
instead of the more cumbersome $\one_\cK\otimes N$.

Note the easy to verify bounds for field operators

\begin{lemma}\label{Lemma-Field-H-Bounds} Let $\coup\in L^2(\RR^3;\Mat_\nu(\CC))$ and $\psi\in\cC$.
\begin{Enumerate}
\item\label{Item-field-H-N-Bounds} The following bounds hold true for all $\sigma>0$
\begin{align}\label{PhiNumberBound-Norm}
\|\phi(\coup)\psi\| & \leq \sqrt{2}\|\coup\|\|\sqrt{N}\psi\| + \frac1{\sqrt{2}}\|\coup\|\|\psi\| \\ 
\label{PhiNumberBound}
 |\la\psi,\phi(\coup)\psi\ra| & \leq \sigma \la\psi,N\psi\ra
 + (2\sigma)^{-1}\|\coup\|^2 \|\psi\|^2.
\end{align}
\item\label{Item-field-H-E-Bounds} If furthermore $|k|^{-1/2}\coup\in  L^2(\RR^3;\Mat_\nu(\CC))$, then for all $\sigma>0$
\begin{align}\label{PhiEnergyBound-Norm}
\|\phi(\coup)\psi\| & \leq \sqrt{2}\||k|^{-\frac12}\coup\|\|(\one_\cK\otimes\sqrt{H_\ph})\psi\| + \frac1{\sqrt{2}}\|\coup\|\|\psi\|\\  
\label{PhiEnergyBound}
|\la\psi,\phi(\coup)\psi\ra| & \leq \sigma \la\psi,\one_\cK\otimes H_\ph \psi\ra
+ (2\sigma)^{-1}\|\coup/\sqrt{|k|}\|^2\|\psi\|^2.
\end{align}
\end{Enumerate}
\end{lemma}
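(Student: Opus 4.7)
The plan is to decompose $\phi(\coup) = \frac{1}{\sqrt 2}(a^*(\coup) + a(\coup^*))$, where $a^*(\coup) = \int \coup(k)\otimes a^*(k)\,\D k$ and $a(\coup^*) = \int \coup(k)^*\otimes a(k)\,\D k$ as densely defined operators on $\cC$, and then reduce everything to the standard operator-valued annihilation/creation bounds, which for $\psi \in \cC$ amount to
\[
\|a(\coup^*)\psi\| \leq \|\coup\|\|\sqrt{N}\psi\|, \qquad \|a^*(\coup)\psi\|^2 = \|a(\coup^*)\psi\|^2 + \la\psi,(\textstyle\int \coup(k)^*\coup(k)\,\D k)\otimes\one_\cF\,\psi\ra.
\]
The first inequality is the standard Cauchy–Schwarz bound applied $n$-sector by $n$-sector to the formula $a(\coup^*)\psi_n(k_1,\dots,k_{n-1}) = \sqrt{n}\int \coup(k)^*\psi_n(k,k_1,\dots,k_{n-1})\,\D k$, together with the operator norm estimate $\|\coup(k)^*\psi_n(k,\dots)\|_\cK \leq \|\coup(k)\|\,|\psi_n(k,\dots)|$. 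The second identity is the matrix-valued canonical commutation relation, and the operator norm of $\int\coup(k)^*\coup(k)\,\D k$ is bounded by $\|\coup\|^2$, so $\|a^*(\coup)\psi\|^2 \leq \|a(\coup^*)\psi\|^2 + \|\coup\|^2\|\psi\|^2$, which by $\sqrt{a^2+b^2}\leq a+b$ yields $\|a^*(\coup)\psi\|\leq\|\coup\|\|\sqrt{N}\psi\|+\|\coup\|\|\psi\|$. Adding the two bounds and dividing by $\sqrt 2$ gives \eqref{PhiNumberBound-Norm}.

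For the quadratic form bound \eqref{PhiNumberBound} I would first write
\[
\la\psi,\phi(\coup)\psi\ra = \tfrac{1}{\sqrt 2}\bigl(\la\psi,a^*(\coup)\psi\ra + \la\psi,a(\coup^*)\psi\ra\bigr) = \sqrt 2\,\rRe\la\psi,a(\coup^*)\psi\ra,
\]
since $a^*(\coup)$ and $a(\coup^*)$ are mutual formal adjoints on $\cC$. Cauchy–Schwarz together with the first annihilation bound yields $|\la\psi,\phi(\coup)\psi\ra|\leq \sqrt 2\,\|\coup\|\|\psi\|\|\sqrt N\psi\|$, and the elementary inequality $\sqrt 2\,ab\leq \sigma a^2 + (2\sigma)^{-1}b^2$ (with $a=\|\sqrt N\psi\|$, $b=\|\coup\|\|\psi\|$) finishes \eqref{PhiNumberBound}.

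For part (ii) the only new ingredient is the energy-weighted annihilation bound
\[
\|a(\coup^*)\psi\| \leq \||k|^{-1/2}\coup\|\,\|(\one_\cK\otimes\sqrt{H_\ph})\psi\|,
\]
obtained by writing $\coup(k)^* = |k|^{1/2}\cdot(|k|^{-1/2}\coup(k)^*)$ inside the defining integral and applying Cauchy–Schwarz in $k$, then summing over sectors using that $n\int|k|\,|\psi_n(k,k_1,\dots)|^2\,\D^n k = \la\psi_n,\one_\cK\otimes\D\Gamma(|k|)\psi_n\ra$ by symmetry. Combined with the commutator identity above (to pass from $a(\coup^*)$ to $a^*(\coup)$, gaining only a $\|\coup\|\|\psi\|$ term), this yields \eqref{PhiEnergyBound-Norm}, and then the same $\sqrt 2\,ab$ AM–GM step as before -- now with $a = \|(\one_\cK\otimes\sqrt{H_\ph})\psi\|$ and $b = \||k|^{-1/2}\coup\|\|\psi\|$ -- gives \eqref{PhiEnergyBound}. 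No step is genuinely hard; the only care required is bookkeeping on the operator norm of $\int\coup^*\coup\,\D k$ in the matrix-valued setting and keeping the factor $\sqrt 2$ from the decomposition of $\phi$ consistent throughout.
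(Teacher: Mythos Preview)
Your proof is correct and follows exactly the standard approach the paper has in mind; indeed the paper does not give a proof of this lemma at all, merely noting that these are ``easy to verify bounds for field operators''. Your decomposition into annihilation and creation parts, the Cauchy--Schwarz bound on $a(\coup^*)$, the CCR identity for $\|a^*(\coup)\psi\|^2$, and the AM--GM step for the form bounds are precisely the intended computations.
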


The bounds in \ref{Item-field-H-N-Bounds} extend by continuity to $\psi \in\cD(\sqrt{N})$,
whereas the bounds in \ref{Item-field-H-E-Bounds} extend by continuity to $\psi\in\cD(\one_\cK\otimes\sqrt{H_\ph})$.
 We note that the bound \eqref{PhiNumberBound-Norm} 
 implies that $\cD(\sqrt{N})\subseteq\cD(\phi(\coup))$, just as for Segal fields.

Suppose \HGCond{0}. By \eqref{PhiEnergyBound-Norm} and Kato-Rellich's theorem,
\cite[Thm.~X.12]{ReedSimonII1975},  $H$ is essentially self-adjoint on $\cC$, bounded from below
and $\cD(H)=\cD(H_0)$. In particular, the domain of $H$ does not depend on $\coup$.

%The bound \eqref{PhiEnergyBound} in particular implies that $H$ is bounded from below.
We furthermore observe that if we equip the space of $\coup$'s satisfying \HGCond{0}
with the norm $\|\coup\|_0^2 = \int_{\RR^3} (1+|k|^{-1})\|\coup(k)\|^2\D k$, then
 the resolvent map $(z,\coup)\to (H-z)^{-1}$ is norm continuous. Here $\im z\neq 0$.
We introduce notation for the bottom of $H$'s spectrum
\begin{equation}\label{BottomOfSpec}
\Sigma = \inf\sigma(H) > -\infty.
\end{equation}
The spectrum of $H$ is in fact a half-line starting at $\Sigma$ as we now proceed to prove, using an argument 
from \cite{GeorgescuGerardMoeller2004b}. We pass via two useful results on the way,
the first of which involves the Mourre class of operators introduced in Appendix~\ref{App-MourreClass}. 

\begin{lemma}\label{Lemma-H-is-C1N} Assume \HGCond{0}. Then $H$ is of
  class  $C^1_\Mo(N)$ and
the operator representing the commutator form is $[H,N]^\circ = \ri\phi(\ri
  \coup)$.
\end{lemma}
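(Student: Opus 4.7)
The plan is to verify the defining conditions of the Mourre class $C^1_{\Mo}(N)$ recalled in Appendix~\ref{App-MourreClass} by a direct commutator computation on the core $\cC$ from \eqref{HCore}. Typically such a criterion requires (i) a suitable invariant core on which both $H$ and $N$ (and its smoothed-out functional calculus, or the relevant regularization) act, and (ii) that the commutator form $[H,N]$ on that core extends to an operator that is bounded relative to $H$ (or to $H_0$, or to $\sqrt{N}$).

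For (i), I would observe that $\cC = \cK \otimes \Gamma_\fin(C_0^\infty(\RR^3))$ is contained in $\cD(N^k)$ for all $k$ and is mapped into itself by $N$, since $N$ is diagonal in the $n$-particle decomposition of $\cF$ and preserves $C_0^\infty(\RR^3)^{\otimes_\roms n}$. Moreover $\cC$ is already known to be a core for $H_0$, and by the Kato--Rellich argument following Lemma~\ref{Lemma-Field-H-Bounds} also for $H$ under \HGCond{0}. For (ii), I would compute $[H,N]$ as a quadratic form on $\cC$. The free part contributes nothing: $K\otimes\one_\cF$ commutes with $\one_\cK\otimes N$, and $H_\ph = \D\Gamma(|k|)$ commutes with $N = \D\Gamma(\one_\gothh)$ since both are second quantizations of operators on $\gothh$ that commute (both are multiplications in momentum representation), so $[H_0,N]=0$ on $\cC$.

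The interaction piece is the whole content. Using the CCR in the distributional form $[a(k),N] = a(k)$ and $[a^*(k),N] = -a^*(k)$, a direct calculation on $\cC$ yields
\[
[\phi(\coup),N] \;=\; \frac{1}{\sqrt 2}\int_{\RR^3}\bigl\{\coup(k)^* a(k) - \coup(k) a^*(k)\bigr\}\,\D k
\;=\; \ri\,\phi(\ri\coup),
\]
the last equality being the definition \eqref{PF-Coupling} with $\coup$ replaced by $\ri\coup$. Since $\ri\coup$ has the same pointwise norm as $\coup$ (so \HGCond{0} still holds and in particular $\ri\coup\in L^2(\RR^3;\Mat_\nu(\CC))$ with $|k|^{-1/2}\ri\coup\in L^2$), Lemma~\ref{Lemma-Field-H-Bounds} tells us that $\phi(\ri\coup)$ is defined on $\cD(\sqrt{N})\supset\cC$ and, via \eqref{PhiEnergyBound-Norm}, is relatively bounded with respect to $\one_\cK\otimes\sqrt{H_\ph}$, hence relatively bounded with respect to $H$.

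The final step is to assemble these facts into the $C^1_\Mo(N)$ membership as formulated in Appendix~\ref{App-MourreClass}: the commutator form on the core $\cC$ extends to a closed operator $[H,N]^\circ = \ri\phi(\ri\coup)$ whose domain contains $\cD(H)$, which is precisely what the Mourre class demands. The only mildly subtle point, and the main thing worth being careful about, is to confirm that the form identity on $\cC$ genuinely identifies the closure of the commutator with $\ri\phi(\ri\coup)$ despite $N$ being unbounded; this is handled by invoking the criterion in the appendix, since $\cC$ is simultaneously a core for $H$, $N$, and $\phi(\ri\coup)$, which removes any ambiguity in the form extension.
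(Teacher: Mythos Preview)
Your proof is correct and follows essentially the same route as the paper: compute $[H,N]=\ri\phi(\ri\coup)$ on the core $\cC$, use Lemma~\ref{Lemma-Field-H-Bounds} to see the commutator is $H$-bounded, and invoke the criterion of Proposition~\ref{Prop-MourreEquiv}. The only point you leave implicit is condition~\ref{Item-bstability} there (that $\e^{\ri tN}$ $b$-preserves $\cD(H)$), which the paper dispatches by observing $\cD(H)=\cD(H_0)$ and $N$ commutes with $H_0$; your ``invariant core'' remark gestures at this but does not quite substitute for it, and the paper also makes explicit that $\cC$ is dense in $\cD(H)\cap\cD(N)$ for the intersection topology (via essential self-adjointness of $H_0+N$ on $\cC$) so that the form identity genuinely extends---exactly the subtlety you flag at the end.
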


\begin{proof} We aim to use Proposition~\ref{Prop-MourreEquiv} to establish the lemma.
Note first that the property  \ref{Item-bstability} in 
Proposition~\ref{Prop-MourreEquiv}~\ref{Item-MourreC1} trivially holds true, since $\cD(H)=\cD(H_0)$
and $N$ commutes with $H_0$. 

Secondly, since $H_0 + N$ is essentially self-adjoint on $\cC$ -- being a direct sum of multiplication operators --
we conclude that $\cC$ is dense in $\cD(H)\cap\cD(N) = \cD(H_0 + N)$ with respect to the intersection topology.
The core $\cC$ was introduced in \eqref{HCore}.

We can now compute in the sense of forms on $\cC$
\[
[H,N] = [\phi(\coup),N] = \ri\phi(\ri \coup).
\]
Since $\phi(\ri\coup)$ is $N^{1/2}$-bounded the above form identity now extends
to the intersection domain $\cD(H)\cap\cD(N)$.
By \eqref{PhiEnergyBound-Norm} (applied with $\coup$ replaced by $\ri\coup$) we thus find that property \ref{Item-MourreCommBound} in Proposition~\ref{Prop-MourreEquiv}~\ref{Item-MourreC1} is also satisfied and hence, $H$ is of class $C^1_\Mo(N)$. 
\end{proof}

 The second ingredient is a version of the so-called pull through formula

\begin{proposition}\label{PullthroughH} Suppose \HGCond{0}. For any $z\in \CC\backslash [\Sigma,\infty)$ and
  $\psi\in \cD(\sqrt{N})$ we have as an $L^2(\RR^3;\cH)$-identity
\[
a(k)(H -z)^{-1}\psi = (H+|k| -z)^{-1} a(k)\psi -
\frac1{\sqrt{2}}(H+|k| - z)^{-1}(\coup(k)\otimes \one_\cF)\psi.
\]
\end{proposition}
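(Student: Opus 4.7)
The strategy is a standard pull-through argument driven by the pointwise (in $k$) commutator identity
\[
[a(k), H] \;=\; |k|\, a(k) \;+\; \tfrac{1}{\sqrt{2}}\, G(k)\otimes \one_\cF,
\]
which splits as $[a(k), H_\ph] = |k|\, a(k)$ (from the CCR $[a(k), a^*(k')] = \delta(k-k')$ applied to $H_\ph = \D\Gamma(|k|)$) together with $[a(k), \phi(G)] = \tfrac{1}{\sqrt{2}}\, G(k)\otimes\one_\cF$ (directly from \eqref{PF-Coupling} and the CCR). Rearranging gives the operator identity $a(k)(H-z) = (H+|k|-z)\, a(k) + \tfrac{1}{\sqrt{2}}(G(k)\otimes\one_\cF)$ pointwise in $k$. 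Applying this to $(H-z)^{-1}\psi$ and inverting $(H+|k|-z)$ — which is bounded uniformly in $k\geq 0$ because $z\notin [\Sigma,\infty)$ — yields, at least formally, the stated identity.

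To make this rigorous I would first work on the core $\cC = \cK\otimes\Gamma_\fin(C_0^\infty(\RR^3))$, where $a(k)\varphi$ is a genuine $\cH$-valued function of $k$ lying in $L^2(\RR^3;\cH)$, and where $H_0$, $\phi(G)$ and $N$ all map $\cC$ into $\cD(\sqrt N)$. On $\cC$ the commutator computation above is literal, so the desired identity holds pointwise in $k$ whenever $(H-z)^{-1}\psi$ can be approximated in $\cC$. Then I would extend to all $\psi\in\cD(\sqrt N)$ by a density/continuity argument, using on the left the isometry $\int_{\RR^3}\|a(k)\chi\|^2\,\D k = \|\sqrt N \chi\|^2$ combined with the Kato-type bound $\|\sqrt N (H-z)^{-1}\psi\|\lesssim \|\sqrt N \psi\|+\|\psi\|$, and on the right the uniform resolvent bound $\|(H+|k|-z)^{-1}\|\leq |\im z|^{-1}$ together with $G \in L^2(\RR^3;\Mat_\nu(\CC))$ to control the $G(k)$-term in $L^2(\RR^3;\cH)$.

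The main technical obstacle is precisely the boundedness of $(H-z)^{-1}$ on $\cD(\sqrt N)$: the domains $\cD(H)=\cD(H_0)$ and $\cD(N)$ are incomparable, because soft photons contribute to $N$ while being essentially invisible to $H_\ph$, and Kato-Rellich alone does not close the loop. I would circumvent this by exploiting the $C^1_\Mo(N)$ structure already recorded in Lemma~\ref{Lemma-H-is-C1N}, with commutator form $[H,N]^\circ = \ri\phi(\ri G)$. Combined with \eqref{PhiNumberBound-Norm}, a standard commutator-resolvent identity of the form $[\sqrt{N+1}, (H-z)^{-1}] = (H-z)^{-1}[H,\sqrt{N+1}]\,(H-z)^{-1}$ (justified via the functional calculus representation of $\sqrt{N+1}$) yields the required boundedness of $\sqrt{N+1}\,(H-z)^{-1}(N+1)^{-1/2}$, after which the density argument closes and the pull-through identity passes to $\cD(\sqrt N)$ as an equality in $L^2(\RR^3;\cH)$.
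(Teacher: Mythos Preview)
Your computation of the commutator on $\cC$ and your identification of the $C^1_\Mo(N)$ property (Lemma~\ref{Lemma-H-is-C1N}) as the mechanism ensuring that $(H-z)^{-1}$ preserves $\cD(\sqrt N)$ are both correct, and these are precisely the ingredients the paper uses (cf.\ Remark~\ref{RemOnHbeingC1N}). The difference lies in the extension step, and here your sketch has a gap.

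You derive the identity for $\tpsi\in\cC$, which after inverting $(H+|k|-z)$ reads
\[
a(k)\tpsi = (H+|k|-z)^{-1}a(k)(H-z)\tpsi - \tfrac{1}{\sqrt2}(H+|k|-z)^{-1}(G(k)\otimes\one_\cF)\tpsi.
\]
Substituting $\tpsi=(H-z)^{-1}\psi$ would require $\tpsi\in\cC$, which fails. Your proposed remedy is to extend by continuity in $\psi$, and the continuity estimates you list are correct; but the identity, read as a statement about $\psi$, is so far established only for $\psi\in(H-z)\cC$. To close the argument you would need $(H-z)\cC$ to be dense in $\cD(\sqrt N)$ with respect to its graph topology. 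That is plausible but not automatic: the graph norm requires control of $\sqrt N\,(H-z)\tpsi_n$ along an approximating sequence $\tpsi_n\in\cC$, and neither the $\cD(H)$-topology (in which $\cC$ is a core) nor the $\cD(H)\cap\cD(N)$-intersection topology delivers this directly. Your proposal does not address this point.

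The paper sidesteps the difficulty entirely by passing to a \emph{weak} formulation: pairing the identity on $\cC$ against $a^*(h)\varphi$ with $h\in L^2(\RR^3)$ and $\varphi\in\cC$ moves the annihilation operator onto the test side, so the resulting scalar identity involves only $\langle a^*(h)\varphi,\tpsi\rangle$, $\langle\tvarphi,(H-z)\tpsi\rangle$ and a $G$-term, all of which are continuous in the $\cD(H)$-topology alone. Since $\cC$ is a core for $H$, the weak identity extends immediately to every $\tpsi\in\cD(H)$; one then inserts $\tpsi=(H-z)^{-1}\psi$ and recovers the strong $L^2(\RR^3;\cH)$-identity by density of $L^2(\RR^3)\otimes\cC$ in $L^2(\RR^3;\cH)$. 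This weak-form trick is the piece missing from your argument; once you add it, your proof and the paper's coincide.
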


\begin{remark}\label{RemOnHbeingC1N} The fact that 
  $H$ is of class $C^1_\Mo(N)$, and hence in particular of class $C^1(N)$, 
  implies that $\cD(N)$,
  and by interpolation $\cD(\sqrt{N})$, is preserved
  by resolvents of $H$. See Lemma~\ref{Lemma-Prop-SA-C1}~\ref{Item-Cheap-DomInv}. 
  Hence both sides of the pull through formula
  define elements of $L^2(\RR^3;\cH)$. 
\end{remark}

\begin{proof} Let $\tpsi\in\cC$ and compute
\[
a(k)(H-z)\tpsi = (H+|k| -z)a(k)\tpsi + \frac1{\sqrt{2}}(\coup(k)\otimes\one_\cF)\tpsi
\]
as an $L^2(\RR^3;\cH)$-identity, where the only possibly irregular
contribution is $\coup$ near zero. Since $z-|k|\in\rho(H)$ - the
resolvent set for $H$ - we
obtain the $L^2(\RR^3;\cH)$-identity
\[
a(k)\tpsi = (H+|k|-z)^{-1}a(k)(H-z)\tpsi - \frac1{\sqrt{2}}(H+|k|-z)^{-1}(\coup(k)\otimes\one_\cF)\tpsi.
\]
Let $h\in L^2(\RR^3)$ and $\varphi\in\cC$. Then
\[
\big\la a^*(h)\varphi,\tpsi\big\ra = \big\la
\tvarphi,(H-z)\tpsi\big\ra - \int_{\RR^3} \frac{\overline{h(k)}}{\sqrt{2}}\big\la
\varphi,(H+|k|-z)^{-1}(\coup(k)\otimes\one_\cF)\tpsi\big\ra\,\D k,
\]
where
\[
\tvarphi = \int_{\RR^3} h(k) a^*(k)(H+|k|-z)^{-1}\varphi \,\D k\in\cH.
\]
From this expression, and $H$ being essentially self-adjoint on $\cC$,
we observe that the above identity remains true for $\tpsi\in \cD(H)$.
Inserting $\tpsi = (H-z)^{-1}\psi$, where $\psi\in\cD(\sqrt{N})$
yields the proposition. Here we used that $L^2(\RR^3)\otimes \cC$
(algebraic tensor product) is dense in $L^2(\RR^3;\cH)$.
\end{proof}

For stronger versions of the pull through formula see \cite{BruneauDerezinski2004,Gerard2000}.
We are now ready to show that the spectrum is a half-axis. The
argument goes back to \cite{GeorgescuGerardMoeller2004b}, cf. also \cite{BruneauDerezinski2004}.

\begin{theorem}\label{HVZH} Suppose \HGCond{0}. Then $\sigma(H) = [\Sigma,\infty)$.
\end{theorem}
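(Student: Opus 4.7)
The inclusion $\sigma(H)\subseteq[\Sigma,\infty)$ is immediate from \eqref{BottomOfSpec}. Since $\Sigma\in\sigma(H)$, the reverse inclusion reduces to establishing the implication
\[
\lambda\in\sigma(H),\ \omega\geq 0 \ \Longrightarrow\ \lambda+\omega\in\sigma(H).
\]
The plan is to ``add a photon of energy $\omega$'' to an approximate eigenvector of $H$ at $\lambda$. The case $\omega=0$ is trivial, so I fix $\omega>0$ and some $k_0\in\RR^3$ with $|k_0|=\omega$. First I would establish the commutator identity
\[
\bigl[H,\,\one_\cK\otimes a^*(f)\bigr] = \one_\cK\otimes a^*(|k|f)+ \tfrac{1}{\sqrt{2}}\widetilde{B}(f),\qquad \widetilde{B}(f):= \Bigl(\int_{\RR^3}\coup(k)^* f(k)\,\D k\Bigr)\otimes \one_\cF,
\]
on the core $\cC$: here $K$ commutes with $\one_\cK\otimes a^*(f)$, $[H_\ph,a^*(f)]=a^*(|k|f)$, and the annihilation part of $\phi(\coup)$ produces the bounded operator $\widetilde{B}(f)$ with $\|\widetilde{B}(f)\|\leq \|\coup\|\,\|f\|$. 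By Lemma~\ref{Lemma-Field-H-Bounds}\ref{Item-field-H-E-Bounds} the identity extends to vectors in $\cD(\sqrt{H_\ph})$ whenever $|k|^{-1/2}f\in\gothh$.

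Next I would pick a Weyl sequence $(\psi_n)$ for $H$ at $\lambda$ lying in $\Ran E_{[\lambda-1,\lambda+1]}(H)$; this is possible because $\lambda\in\sigma(H)$. Applying \eqref{PhiEnergyBound} with $\sigma=\tfrac12$ yields $\one_\cK\otimes H_\ph\leq 2H+C$ as a form inequality, so $\|(\one_\cK\otimes\sqrt{H_\ph})\psi\|\leq C_\lambda\|\psi\|$ for every $\psi\in\Ran E_{[\lambda-1,\lambda+1]}(H)$; since $(H-\lambda)\psi_n$ remains in the same spectral subspace, both $\psi_n$ and $(H-\lambda)\psi_n$ are uniformly $\sqrt{H_\ph}$-bounded. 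Now pick $\eta\in C_0^\infty(\RR^3)$ with $\|\eta\|=1$ and $\supp\eta\subset B(0,R)$, and set $f_m(k):= m^{3/2}\eta(m(k-k_0))$; then $\|f_m\|=1$ and $\supp f_m\subset B(k_0,R/m)\subset\{|k|\geq\omega/2\}$ for $m$ large.

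Combining the commutator identity with the $\sqrt{H_\ph}$-bound of Lemma~\ref{Lemma-Field-H-Bounds}\ref{Item-field-H-E-Bounds} yields
\[
(H-\lambda-\omega)(\one_\cK\otimes a^*(f_m))\psi_n = (\one_\cK\otimes a^*(f_m))(H-\lambda)\psi_n + (\one_\cK\otimes a^*((|k|-\omega)f_m))\psi_n + \tfrac{1}{\sqrt{2}}\widetilde{B}(f_m)\psi_n,
\]
whose three summands I would control as follows: (i) the first is bounded by a constant times $\|\sqrt{H_\ph}(H-\lambda)\psi_n\|+\|(H-\lambda)\psi_n\|$, which tends to $0$ as $n\to\infty$; (ii) the second has norm $O(1/m)$ uniformly in $n$, since $|k|\in(\omega-R/m,\omega+R/m)$ on $\supp f_m$; (iii) the third satisfies $\|\widetilde{B}(f_m)\|\leq \|\coup\|_{L^2(B(k_0,R/m))}$, which tends to $0$ as $m\to\infty$ by absolute continuity of the integral. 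Since $\|(\one_\cK\otimes a^*(f_m))\psi_n\|\geq\|f_m\|\,\|\psi_n\|=1$ by the canonical commutation relations, a diagonal extraction along $n_k,m_k\to\infty$ produces a normalized Weyl sequence for $H$ at $\lambda+\omega$. The main subtlety is step (iii): making the coupling remainder small, a step that uses only $\coup\in L^2$ and is why no regularity beyond \HGCond{0} is required here.
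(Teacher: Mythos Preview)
Your proof is correct and follows a genuinely different route from the paper's. Both arguments rest on the same physical idea---create a photon of energy $\omega$ on top of an approximate eigenstate by acting with $a^*(f)$ for $f$ concentrated near $|k|=\omega$---but the implementations differ.

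The paper works with resolvents: it takes an approximate \emph{ground} state $\psi\in\cC$, forms $\psi_n=a^*(h_n)\psi$, and uses the pull-through formula (Proposition~\ref{PullthroughH}) together with $H\in C^1_\Mo(N)$ (Lemma~\ref{Lemma-H-is-C1N}) to show that $\langle\varphi,((H-\Sigma+1)^{-1}-(\lambda+1)^{-1})\psi_n\rangle\to 0$ uniformly in $\|\varphi\|$. You instead run a direct Weyl-sequence argument: you take a spectrally localized Weyl sequence at an \emph{arbitrary} $\lambda\in\sigma(H)$, compute the commutator $[H,a^*(f_m)]$ explicitly, and control the three resulting terms using only the $\sqrt{H_\ph}$-boundedness of $a^*(f)$ for $f$ supported away from $k=0$. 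Your approach is more elementary in that it bypasses both Lemma~\ref{Lemma-H-is-C1N} and Proposition~\ref{PullthroughH}; the spectral localization $\psi_n\in\Ran E_{[\lambda-1,\lambda+1]}(H)$ is what makes the $\sqrt{H_\ph}$-control of $(H-\lambda)\psi_n$ work without appealing to $N$. The paper's route, on the other hand, is written so that it transfers almost verbatim to the Liouvillean (Theorem~\ref{HVZL}), where domain issues make the resolvent formulation and the pull-through machinery more convenient.
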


\begin{proof}
It suffices to show that $\sigma((H-\Sigma+1)^{-1}) \supset (0,1]$.
To see this, let $\lambda>0$, $\epsilon>0$
and choose $\tpsi\in \one[H\leq \Sigma+\epsilon/2]\cH$ to be
normalized. Since $\cC$ is dense in $\cD(H)$, we can pick a normalized
$\psi\in\cC$ such that $\|(H-\Sigma)(\tpsi-\psi)\|\leq \epsilon/2$
and hence we must have $\|(H-\Sigma)\psi\|\leq \epsilon$.

Choose a function $h\in C_0^\infty(\RR)$ real-valued with $\|h\| = 1$ and $\supp h\subseteq [-1,1]$.
Put $h_n(k) =  n^{3/2}h(n(|k|-\lambda))$. Form
$\psi_n = a^*(h_n)\psi$ and compute for $\vphi\in\cD(\sqrt{N})$ using the
pull through formula Proposition~\ref{PullthroughH}
\begin{align*}
& \bigl\la\vphi,\bigl((H-\Sigma + 1)^{-1}-(\lambda + 1)^{-1}\bigr)\psi_n \bigr\ra\\
& = \int_{\RR^3} h_n(k)\bigl\la a(k)\bigl((H-\Sigma + 1)^{-1}-(\lambda + 1)^{-1}\bigr) \vphi,\psi\bigr\ra\,\D k\\
& = \int_{\RR^3} h_n(k)\bigl\la \bigl((H+|k|-\Sigma + 1)^{-1}-(\lambda + 1)^{-1}\bigr) a(k)\vphi,\psi\bigr\ra\,\D k\\
&\quad 
- \int_{\RR^3} \frac{h_n(k)}{\sqrt{2}} \bigl\la \coup(k)\otimes\one_\cF\vphi, (H+|k|-\Sigma + 1)^{-1}\psi\bigr\ra\, \D k.
\end{align*}
Since $h_n$ goes to zero weakly in $L^2(\RR^3)$, the last term is $o(1)\|\vphi\|$
in the limit of large $n$. To deal with the first term on the right-hand side we estimate
using the support properties of $h_n$ and the choice of $\psi$:
\begin{align*}
&\bigl| h_n(k)\bigl\la \bigl((H+|k|-\Sigma + 1)^{-1}-(\lambda + 1)^{-1}\bigr) a(k)\vphi,\psi\bigr\ra\bigr|\\
&\leq \frac{|h_n(k)|}{\sqrt{|k|}} \big\| (H_\ph+|k|+1)^{-1}\sqrt{|k|}a(k)\vphi\big\|\\
&\quad \times
\bigl\|(H_\ph+|k|+1)(H+|k|-\Sigma + 1)^{-1}\bigl((H-\Sigma) + (|k|-\lambda)\bigr)\psi\bigr\|\\
&\leq C\Bigl(\epsilon+\frac1{n}\Bigr)\frac{|h_n(k)|}{\sqrt{|k|}}\bigl\| \sqrt{|k|}a(k)(H_\ph+1)^{-1}\vphi\bigr\|.
\end{align*}
Noting that $\|h_n/\sqrt{|k|}\|\leq (\lambda-1/n)^{-1/2}\|h_n\| = (\lambda-1/n)^{-1/2}$ we conclude from Cauchy-Schwartz that
\[
\bigl|\bigl\la\vphi,\bigl((H-\Sigma + 1)^{-1}-(\lambda + 1)^{-1}\bigr)\psi_n \bigr\ra\bigr|\leq C(\epsilon + o(1))\|\vphi\|,
\]
where $o(1)$ refers to the large $n$ limit. It now remains to prove that $\|a^*(h_n)\psi\|$ is bounded away from zero. But this follows from the computation 
\[
\|a^*(h_n)\psi\|^2 = \|\psi\|^2 + \|a(h_n)\psi\|^2.
\]
Recall that when $h_n$ goes to zero weakly, we have $a(h_n)\psi\to 0$ in norm, whenever 
$\psi\in\cD(\sqrt{N})$. 
\end{proof}

We end this subsection introducing some extra structure that will be used in the next subsection.

We define a conjugate linear involution operator $\Conj$ on $\cH$ as follows.
It is a tensor product of two conjugate linear involutions, one on $\cK$ and
one on $\cF$. On $\cK$ we simply take coordinate wise complex conjugation $(\conj v)_ j = \bar{v}_j$, and on
$\cF$ we take second quantized complex conjugation $\Gamma(\conj)$, acting
on an $n$-particle state by complex conjugation, or equivalently
described by the intertwining $\Gamma(\conj) a^\#(g) \Gamma(\conj) =
a^\#(\conj g)$. In conclusion $\Conj = \conj\otimes\Gamma(\conj)$. Note that
$\la \Conj\psi,\varphi\ra = \la \Conj\varphi,\psi\ra$.

With this choice of conjugation we can define $H^\conj = \Conj H \Conj = H_0
+ \phi(\overline{\coup})$. Note that $H_0^\conj = \Conj H_0\Conj = H_0$. Clearly,
the spectrum, pure point spectrum and absolutely/singular continuous
spectrum of the two operators coincide. Eigenvectors are related by
$\psi^\conj=\Conj\psi$, where $H\psi = \lambda \psi$ and $H^\conj\psi^\conj = \lambda \psi^\conj$.
Finally we observe that the spectral resolutions $E$ and $E^\conj$ of the operators
$H$ and $H^\conj$ are related by $E^\conj_\psi = E_{\Conj\psi}$.

%***************************************************
\subsection{The Standard Pauli-Fierz Liouvillean}\label{Subsec-Liovillean}
%***************************************************

 The Liouvillean, at inverse temperature $\beta>0$,  is a self-adjoint operator
on the doubled Hilbert space $\cH^\romL := \cH\otimes\cH$.
The zero temperature Liouvillean, corresponding to $\beta=\infty$, is given by
 \[
 L_\infty = H\otimes \one_\cH - \one_\cH\otimes H^\conj,
 \]
which is essentially self-adjoint on algebraic tensor products $\cD\otimes \cD$, where
$\cD\subseteq \cH$ is a core for $H$. See \cite[Thm.~VIII.33]{ReedSimonI1980}. As a choice of core we take 
\begin{equation}\label{LCore}
\cC^\romL = \cC\otimes\cC,
\end{equation}
where $\cC$ was defined in \eqref{HCore}.
Observe that $L_\infty$ is unbounded from below and indeed $\sigma(L_\infty)=\RR$.

We furthermore write
\[
L_0 = H_0\otimes\one_\cH- \one_\cH\otimes H_0
\]
for the uncoupled Liouvillean. Recall that $H_0^\conj = H_0$. With this notation, at least formally,
the zero temperature ($\beta=\infty)$ Liouvillean can be written as
the operator sum
$L_\infty = L_0 + \phi(\coup)\otimes\one_\cH - \one_\cH\otimes
\phi(\overline{\coup})$. 

We will need stronger conditions than \HGCond{n} on the coupling $\coup$
when dealing with the Liouvillean. Let $n\in\NN_0$. We assume that $\coup$ admits $n$ distributional derivatives in $L_{\loc}^1(\RR^3;M_\nu(\CC))$ 
and the existence of a constant $C>0$ such that
\[
\mathrm{\mathbf{(L{\coup}n)}}\quad 
\begin{aligned}
&\forall k\in\RR^3, |k|\leq 1, \quad \textup{and} \quad |\alpha|\leq n: 
\qquad \|\partial_k^\alpha G(k)\|\leq C |k|^{n-1+\mu -|\alpha|}\\ 
&\forall k\in\RR^3, |k|\geq 1, \quad \textup{and} \quad |\alpha|\leq n: 
\qquad \|\partial_k^\alpha G(k)\|\leq C |k|^{-\frac32 -\delta_{\alpha,0}-\mu}. 
\end{aligned}
\]
We will make use of the condition
\LGCond{n} on $\coup$ with $n=0,1,2$. Note that \LGCond{n+1} implies \LGCond{n} and  \LGCond{n} implies \HGCond{n}.
As for the Hamiltonian, there is nothing particular about dimension
$3$. The difference between \HGCond{n} and \LGCond{n} comes from
having to absorb an infrared singularity from the Planck density \eqref{Intro-Planck},
which mix  the left (photons) and right (holes)
field  components  at positive temperature, $\beta<\infty$. 
One could use a different density, modifying \LGCond{n} accordingly. See also Remark~\ref{Rem-SpinBoson}.

We now give an explicit construction of the positive temperature
perturbation denoted by $\phi^\PF_{\beta,\roml}(G)$
and formally introduced in Subsect.~\ref{Subsect-IntPF}.

 For $G_\romL\in L^2(\RR^3;\cB(\cK\otimes\cK))$ we extend the definition
of the (zero temperature) left and right annihilation and creation operators to read 
\begin{equation}\label{ZeroTemplrCCR}
\begin{aligned}
a_\roml(\coup_\romL) &= \shuf^* \Bigl(\int_{\RR^3}
\coup_\romL(k)^*\otimes a(k)\otimes\one_\cF \, \D k \Bigr)\shuf\\
a_\romr(\coup_\romL) &= \shuf^* \Bigl(\int_{\RR^3} \coup_\romL(k)^*
\otimes\one_\cF\otimes\, a(k) \, \D k \Bigr)\shuf.
\end{aligned}
\end{equation}
Here $\shuf$ is the unitary shuffle defined in \eqref{Shuffle}. The creation operators 
$a^*_{\roml/\romr}(\coup_\romL)$, adjoints of the annihilation operators, are represented by similar formulas.

 For use as $\coup_\romL$ we define
 \begin{equation}\label{GlandGr}
 \coup_\roml(k) = \coup(k)\otimes\one_\cK \quad 
\textup{and} \quad  
\coup_\romr(k) = \one_\cK\otimes\, \overline{\coup(k)}.
\end{equation}
With this definition we have $J_\romp \coup_\roml(k) = \coup_\romr(k)
J_\romp$, where $J_\romp$ is the modular conjugation on the particle
system \eqref{Jp}.

 Recalling the form of the Araki-Woods
annihilation and creation operators \eqref{Intro-lr-AW-Fields}, 
we can now define positive temperature left and right annihilation and creation operators
\begin{equation}\label{PF-AW-CCR}
\begin{aligned}
a_{\beta,\roml}(\coup) &= a_{\roml}(\sqrt{1+\rho_\beta}\, \coup_\roml) + a^*_{\romr}(\sqrt{\rho_\beta}\,\coup_\roml^*),\\
a_{\beta,\romr}(\coup) &= a_{\romr}(\sqrt{1+\rho_\beta}\, \coup_\romr) + a^*_{\roml}(\sqrt{\rho_\beta}\,\coup_\romr^*)
\end{aligned}
\end{equation}
and creation operators as their adjoints. Note that as a function of $\coup$, the right annihilation operator 
is linear. This fits the interpretation that the right annihilation
operator annihilates a hole, i.e. creates a photon.
At zero temperature, $\beta=\infty$, this reduces to
$a_{\infty,\roml}^\#(\coup) = a_\roml^\#(\coup_\roml)$ and
$a^\#_{\infty,\romr}(\coup) = a^\#_\romr(\coup_\romr)$.

The self-adjoint operator formally corresponding to $\phi(\coup)$ at positive
temperature under the representation $\pi^\PF_{\beta,\roml}$,
cf.~\eqref{Pi-PF}, 
can now be explicitly
written as
\begin{equation}\label{phiLAWForm}
\phi^\PF_{\beta,\roml}(\coup) = \frac1{\sqrt{2}}\bigl(a_{\beta,\roml}^*(\coup) + a_{\beta,\roml}(\coup) \bigr).
\end{equation}
We proceed to identify a suitable $\coup_\romL$ 
useful for reformulating $\phi^\PF_{\beta,\roml}(\coup)$ in terms of
left and right fields $\phi_\roml(\coup_\romL)$ and $\phi_\romr(\coup_\romL)$.
Here
\[
\phi_{\roml/\romr}(\coup_\romL) = \frac1{\sqrt{2}}\bigl(a^*_{\roml/\romr}(\coup_\romL) +  a_{\roml/\romr}(\coup_\romL) \bigr)
\]
are defined in terms of
left and right annihilation and creation operators \eqref{ZeroTemplrCCR}.

All annihilation and creation operators, having densely defined adjoints, are closable on $\cC^\romL$.
We observe that the left sets of annihilation and creation operators form
representations of CCR, non-Fock at positive temperature. As for the right representations,
they are also representations of CCR, but with the roles of annihilation and creation operators reversed.
The left and right operators, at the same inverse temperature, commute.
 As for the field operators, they are all -- by the usual analytic vector argument -- essentially self-adjoint
 on $\cC^\romL$. See \cite{Merkli2006}. Left and right fields, at the same inverse temperature, commute.

  At finite inverse temperature $\beta$ we introduce $\coupbl,\coupbr\in L^2(\RR^3;\cB(\cK\otimes\cK))$
  by the prescription
\begin{equation}\label{GlandGrbeta}
\coupbl = \sqrt{1+\rho_\beta}\,\coup_\roml -
   \sqrt{\rho_\beta}\,\coup_\romr^* \quad \textup{and} \quad 
 \coupbr = \sqrt{1+\rho_\beta}\,\coup_\romr-\sqrt{\rho_\beta}\,\coup_\roml^*.
\end{equation}
Note that $\coup_\roml$ and $\coup_\romr$ are the zero temperature
limits of $\coupbl$ and $\coupbr$,
and $J_\romp \coupbl(k) = \coupbr(k) J_\romp$.
 The interaction at
positive temperature is $W_\beta(\coup)$ where
\begin{equation}\label{TotalInt}
 W_\beta(\coup) :=  \phi^\PF_{\beta,\roml}(\coup)-\phi^\PF_{\beta,\romr}(\coup) = \phi_\roml(\coupbl)
- \phi_\romr(\coupbr).
\end{equation}
The two expressions can easily be seen to coincide on $\cC^\romL$, a common domain of essential self-adjointness.

The positive temperature Liouvillean is thus densely defined, a priori
on $\cC^\romL$, as
the operator sum
\[
L_\beta = L_0 + W_\beta(\coup) = L_\infty + I_\beta(\coup),
\]
where 
\begin{equation}\label{RelativeInt}
I_\beta(\coup)  = \phi_\roml(\coupbl-\coup_\roml) - \phi_\romr(\coupbr-\coup_\romr).
\end{equation}
That the Liouvilleans $L_\beta$, for $0 < \beta < \infty$, are essentially self-adjoint
on $\cC^\romL$ was proved in \cite[Lemma~3.2]{JaksicPillet1996a},
cf. also \cite{BachFroehlichSigal2000,DerezinskiJaksic2001,Merkli2001}, using Nelson's commutator
theorem \cite[Thm.~X.37]{ReedSimonII1975}.
This requires that $\coup$ can absorb a power of the dispersion $|k|$,
which is the source of the $\delta_{\alpha,0}$ term in the ultraviolet
part of condition \LGCond{n}. We warn the reader that the 
domain of $L_\beta$ may depend on both $\beta$ and $\coup$,
an issue that complicates the analysis of the operator. Proposition~\ref{Prop-BasicLReg}~\ref{Item-NLDomainInv}
below remedies this issue somewhat.

We write $N^\romL = N\otimes \one_\cF + \one_\cF \otimes N$ for the
number operator on $\cF\otimes\cF$, and as for $N$ we use the same
notation to denote $\one_{\cK}\otimes N\otimes\one_\cK\otimes N$.

We recall the modular conjugation $J$ from \eqref{Intro-J}, which we here express
in terms of the conjugation $\Conj$ from the end of the last subsection:
\begin{equation}\label{ModConj}
J = (\Conj\otimes \Conj )\,\cE,
\end{equation}
where $\cE$ is the exchange operator defined on simple tensors by
$\cE(\psi\otimes\varphi) = \varphi \otimes\psi$. Here $\psi,\varphi\in\cH$.
Clearly $JL_\infty J = - L_\infty$. Indeed, the identity holds on $\cC^\romL$
and extends to $\cD(L_\infty)$ since $\cC^\romL$ is an operator core for $L_\infty$.

Computing as an identity first on $\cC^\romL$ we find
\[
J  \phi_\roml(\coupbl) J  = \phi_\romr(\coupbr)
\quad \textup{and}\quad J  \phi^\PF_{\beta,\roml}(G) J  = \phi^\PF_{\beta,\romr}(G),
\]
and hence
\[
J L_\beta J = - L_\beta.
\]
As above one should first verify the identities on $\cC^\romL$ and extend by continuity to $\cD(L_\beta)$.
Consequently, we observe that the spectrum and pure point spectrum of $L_\beta$
is reflection symmetric around $0$. Furthermore the spectral resolution 
$E^\beta$ associated with $L_\beta$ satisfies $E^\beta(B) = JE^\beta(-B)J$ and hence the absolutely and singular continuous spectra
of $L_\beta$ are also reflection symmetric.

\begin{lemma} Let $\coup\in L^2(\RR^3;\Mat_\nu(\CC))$ and  $\psi\in\cC^\romL$.
\begin{align}\label{WbetaBoundNorm}
\|W_\beta(\coup)\psi\| & \leq \bigl(\|\coup\| + \frac{2}{\sqrt{\beta}}\||k|^{-\frac12}\coup\|\bigr) \bigl(2^{\frac32}\|\psi\| + 2^\frac12 \|\sqrt{N^\romL}\psi\|\bigr),\\
\label{WbetaBoundForm}
|\la \psi,W_\beta(\coup)\psi\ra| &  \leq \sigma\la\psi,N^\romL\psi\ra + \sigma^{-1} \bigl(\|\coup\| + \frac{2}{\sqrt{\beta}}\||k|^{-\frac12}\coup\|\bigr)^2\|\psi\|^2.
\end{align}
\end{lemma}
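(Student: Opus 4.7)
The strategy is to reduce to bounds already established for Segal-type fields in Lemma~\ref{Lemma-Field-H-Bounds}. Recall from \eqref{TotalInt} that
\[
W_\beta(\coup) = \phi_\roml(\coupbl) - \phi_\romr(\coupbr),
\]
so by the triangle inequality it suffices to control each of the left and right Pauli-Fierz fields separately and then sum. The left/right fields on $\cF\otimes\cF$ defined via \eqref{ZeroTemplrCCR} are Segal-type, and the proof of Lemma~\ref{Lemma-Field-H-Bounds} carries over verbatim, yielding for any $\coup_\romL\in L^2(\RR^3;\cB(\cK\otimes\cK))$ and $\psi\in\cC^\romL$ the bounds
\[
\|\phi_{\roml/\romr}(\coup_\romL)\psi\| \le \sqrt{2}\|\coup_\romL\|\|\sqrt{N^\romL}\psi\| + \tfrac{1}{\sqrt{2}}\|\coup_\romL\|\|\psi\|,
\]
\[
|\la\psi,\phi_{\roml/\romr}(\coup_\romL)\psi\ra| \le \tfrac{\sigma}{2}\la\psi,N^\romL\psi\ra + \sigma^{-1}\|\coup_\romL\|^2\|\psi\|^2,
\]
since the partial number operators $N_\roml = N\otimes\one_\cF$ and $N_\romr = \one_\cF\otimes N$ (after shuffle) are both dominated by $N^\romL$.

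The second step is to estimate the $L^2$-norms of $\coupbl$ and $\coupbr$. From \eqref{GlandGrbeta}, and since $\|\coup_\roml(k)\| = \|\coup_\romr(k)\| = \|\coup(k)\|$ (tensoring with $\one_\cK$ is isometric), we get pointwise
\[
\|\coupbl(k)\|,\,\|\coupbr(k)\| \le \bigl(\sqrt{1+\rho_\beta(k)}+\sqrt{\rho_\beta(k)}\bigr)\|\coup(k)\| \le \|\coup(k)\| + 2\sqrt{\rho_\beta(k)}\|\coup(k)\|.
\]
Integrating, using the elementary inequality $\rho_\beta(k) \le (\beta|k|)^{-1}$ obtained from $\e^{\beta|k|}-1\ge \beta|k|$, and the triangle inequality on $L^2$-norms:
\[
\|\coupbl\|,\,\|\coupbr\| \le \|\coup\| + \tfrac{2}{\sqrt{\beta}}\||k|^{-\frac12}\coup\| =: A_\beta(\coup).
\]

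Now the two bounds are immediate. For \eqref{WbetaBoundNorm}, summing the norm bounds above for $\phi_\roml(\coupbl)$ and $\phi_\romr(\coupbr)$ and inserting $\|\coupbl\|,\|\coupbr\|\le A_\beta(\coup)$ gives
\[
\|W_\beta(\coup)\psi\| \le 2\sqrt{2}\,A_\beta(\coup)\|\sqrt{N^\romL}\psi\| + \sqrt{2}\,A_\beta(\coup)\|\psi\|,
\]
which is of the stated form once one collects factors (with $2^{1/2}$ in front of the number term and $2^{3/2}$ in front of $\|\psi\|$ on the right-hand side of the bound to be proved, which dominates what we need). For the form bound \eqref{WbetaBoundForm}, applying the form inequality to each summand with the same $\sigma$ and using $\|\coupbl\|^2+\|\coupbr\|^2 \le 2A_\beta(\coup)^2$ yields
\[
|\la\psi,W_\beta(\coup)\psi\ra| \le \sigma\la\psi,N^\romL\psi\ra + \sigma^{-1}A_\beta(\coup)^2\|\psi\|^2,
\]
as desired.

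The only non-routine point is the first step: transplanting the Fock-space field bounds to the tensor-doubled setting $\cF\otimes\cF$ via \eqref{ZeroTemplrCCR}. This is not really an obstacle because the shuffle $\shuf$ is unitary and the number operator $N^\romL$ is simply the sum of the two factor number operators, so the standard CCR estimates apply on each factor and are preserved under $\shuf$. Everything else is pointwise matrix-norm bookkeeping and the infrared estimate $\rho_\beta(k)\le(\beta|k|)^{-1}$.
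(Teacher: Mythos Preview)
Your approach is exactly the paper's: write $W_\beta(\coup)=\phi_\roml(\coupbl)-\phi_\romr(\coupbr)$, apply the $\sqrt{N}$-bounds of Lemma~\ref{Lemma-Field-H-Bounds}~\ref{Item-field-H-N-Bounds} to each piece, and bound $\|\coup_{\beta,\roml/\romr}\|$ via $\sqrt{1+\rho_\beta}\le 1+\sqrt{\rho_\beta}$ and $\rho_\beta(k)\le(\beta|k|)^{-1}$. One small caveat: your claim that the stated constants ``dominate what we need'' is not quite right---the bound you actually derive has $2^{3/2}$ on the $\|\sqrt{N^\romL}\psi\|$ term and $2^{1/2}$ on $\|\psi\|$, which is the reverse of the printed statement and not comparable to it; this is almost certainly a typo in the displayed inequality rather than a flaw in your argument.
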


\begin{proof} Use the representation of $W_\beta(\coup)$ as the difference of two zero temperature fields,
cf.~\eqref{TotalInt}, together with Lemma~\ref{Lemma-Field-H-Bounds}~\ref{Item-field-H-N-Bounds} and the bounds
\begin{align*}
\|\coup_{\beta,\roml/\romr}\| & \leq \|\sqrt{1+\rho_\beta}\coup\| + \|\sqrt{\rho_\beta}\coup\|\\
& \leq \|\coup\| + 2 \|\sqrt{\rho_\beta}\coup\|\\
& \leq \|\coup\| + \frac{2}{\sqrt{\beta}}\||k|^{-\frac12}\coup\|.
\end{align*}
The last inequality follows from the estimate
\[
|k|\rho_\beta = \frac{|k|}{\e^{\beta|k|}-1} \leq \beta^{-1}.
\]
\end{proof}

We end the subsection with a proposition that permits us to work
effectively with standard Liouvilleans, despite domain problems.
Its proof follows closely arguments from
\cite{FaupinMoellerSkibsted2011a}, establishing similar statements
for technically related operators. The proposition 
involves the $C^1_\Mo(A)$ class introduced in Appendix~\ref{App-MourreClass}.

\begin{proposition}\label{Prop-BasicLReg} Suppose \LGCond{0}. The following holds
\begin{Enumerate}
\item\label{Item-NC1L} $N^\romL\in C^1_\Mo(L_\beta)$ and the operator
  $[N^\romL,L_\beta]^\circ$ extends from $\cD(N^\romL)$ 
by continuity to an element of $\cB(\cD(\sqrt{N^\romL});\cH^\romL)$.
\item\label{Item-NLDomainInv} $\cD(N^\romL)\cap \cD(L_\beta)$ does not depend on $\beta$, nor on $\coup$.
\item\label{Item-IntersectCore} $\cC^\romL$ is dense in $\cD(N^\romL)\cap \cD(L_\beta)$ with respect to the intersection topology.
\end{Enumerate}
\end{proposition}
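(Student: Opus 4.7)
The plan is to prove the three statements in the order (i), (iii), (ii), avoiding apparent circular reasoning by first computing the commutator $[N^\romL,L_\beta]$ directly on the core $\cC^\romL$ and invoking $C^1_\Mo$-machinery only after establishing the intersection-domain density (iii).

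For the commutator computation, I rely on $L_0$ commuting strongly with $N^\romL$: both act diagonally in the occupation-number basis on $\cF\otimes\cF$ because $L_0$ is built from $K$ and $H_\ph=\D\Gamma(|k|)$, each of which commutes with $N=\D\Gamma(\one)$. On $\cC^\romL$ one then obtains
\[
[N^\romL, L_\beta] = [N^\romL, W_\beta(\coup)] = -\ri\bigl(\phi_\roml(\ri\coupbl) - \phi_\romr(\ri\coupbr)\bigr)
\]
from the standard CCR identities $[N^\romL, a^\#_{\roml/\romr}(f)] = \pm a^\#_{\roml/\romr}(f)$. Applying the Segal-field bound to each right-hand summand individually extends this commutator by continuity to a bounded map $\cD(\sqrt{N^\romL})\to \cH^\romL$, with norm controlled as in \eqref{WbetaBoundNorm} after replacing $\coup$ by $\ri\coup$. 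This ends the commutator analysis; the $C^1_\Mo$ conclusion will be drawn only once (iii) is in hand.

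For (iii), I would regularize by $\psi_n = \chi_n(N^\romL)\psi$ where $\chi\in C_0^\infty(\RR)$ satisfies $\chi(0)=1$ and $\chi_n(x)=\chi(x/n)$, for arbitrary $\psi\in\cD(N^\romL)\cap\cD(L_\beta)$. The convergences $\psi_n\to\psi$ and $N^\romL\psi_n\to N^\romL\psi$ are automatic. For the crucial $L_\beta\psi_n \to L_\beta\psi$, I invoke the Helffer--Sj\"ostrand representation
\[
\chi_n(N^\romL) = \frac{1}{2\pi\ri}\int_\CC \bar\partial\tchi_n(z)\,(z - N^\romL)^{-1}\,\D z\wedge \D\bar z,
\]
with $\tchi_n$ an almost-analytic extension, and compute on $\cC^\romL$
\[
[L_\beta,\chi_n(N^\romL)] = \frac{1}{2\pi\ri}\int_\CC \bar\partial\tchi_n(z)\,(z - N^\romL)^{-1}\,[L_\beta,N^\romL]^\circ\,(z - N^\romL)^{-1}\,\D z\wedge \D\bar z.
\]
Using $\|\sqrt{N^\romL}(z-N^\romL)^{-1}\eta\|\le \|\sqrt{N^\romL}\eta\|/|\im z|$ together with the bound on $[L_\beta,N^\romL]^\circ$ from the previous paragraph and the almost-analyticity of $\tchi_n$, routine estimates give $\|[L_\beta,\chi_n(N^\romL)]\psi\| = O(n^{-1})\bigl(\|\psi\| + \|\sqrt{N^\romL}\psi\|\bigr)$. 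This simultaneously shows that $\chi_n(N^\romL)$ maps $\cD(L_\beta)\cap\cD(\sqrt{N^\romL})$ into $\cD(L_\beta)$ and yields $L_\beta\psi_n = \chi_n(N^\romL)L_\beta\psi + [L_\beta,\chi_n(N^\romL)]\psi \to L_\beta\psi$. Thus $\psi$ is approximated in intersection topology by $\psi_n \in \bigcap_k \cD((N^\romL)^k)\cap\cD(L_\beta)$; a further standard mollification in the one-particle momentum variables on these finite-total-number vectors yields approximation by elements of $\cC^\romL$, completing (iii).

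Finally, (ii) follows by a closedness argument. Given $\psi\in\cD(N^\romL)\cap\cD(L_\beta)$ and $\psi_n\in\cC^\romL$ approximating it in intersection topology by (iii), the bound \eqref{WbetaBoundNorm} gives $W_\beta(\coup)\psi_n\to W_\beta(\coup)\psi$, so $L_0\psi_n = L_\beta\psi_n - W_\beta(\coup)\psi_n$ converges and closedness of $L_0$ forces $\psi\in\cD(L_0)$. The reverse inclusion $\cD(N^\romL)\cap\cD(L_0)\subseteq \cD(N^\romL)\cap\cD(L_\beta)$ uses that $L_0$ and $N^\romL$ strongly commute (so $\cC^\romL$ is a joint core in the intersection topology by standard spectral-theorem arguments), together again with \eqref{WbetaBoundNorm}. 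Since $\cD(N^\romL)\cap\cD(L_0)$ is manifestly independent of both $\beta$ and $\coup$, this proves (ii). To close out (i), I feed the density (iii) and the commutator bound into Proposition~\ref{Prop-MourreEquiv}; the remaining $b$-stability of $\cD(N^\romL)$ under $\e^{\ri s L_\beta}$ follows from the commutator bound via a Duhamel--Gronwall estimate. The main technical hurdle is the Helffer--Sj\"ostrand commutator estimate of the third paragraph, which is the lever that breaks the apparent circular dependency of (i) on (iii).
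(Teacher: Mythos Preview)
Your overall strategy can be made to work, but there is a real gap in the last step of (iii), and the paper's route is structurally different and avoids the difficulty you created for yourself.

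\textbf{The gap.} After the $\chi_n(N^\romL)$-regularization you have $\psi_n\in\cD(L_\beta)$ with finitely many particles, and you write ``a further standard mollification in the one-particle momentum variables \dots\ yields approximation by elements of $\cC^\romL$''. This step implicitly assumes you can control $L_\beta$ on mollified versions of $\psi_n$, which in practice means controlling $L_0\psi_n$ separately from $W_\beta(\coup)\psi_n$ --- i.e.\ knowing $\psi_n\in\cD(L_0)$, which is precisely (ii). A correct fix that stays within your framework: take $\varphi_j\in\cC^\romL$ approximating $\psi_n$ in the $L_\beta$-graph norm (core property), then apply a smooth particle-number cutoff $\tilde\chi(N^\romL)$ that equals $1$ on the support of $\chi_n$. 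Your own Helffer--Sj\"ostrand argument (once one observes that the commutator $B_n$ is actually bounded on $\cH^\romL$, not merely on $\cD(\sqrt{N^\romL})$, since one resolvent $(z-N^\romL)^{-1}$ already lands in $\cD(\sqrt{N^\romL})$) shows $\tilde\chi(N^\romL)\varphi_j\to\psi_n$ in the $L_\beta$-graph norm; the $N^\romL$-convergence is then automatic from the uniform particle bound. Momentum mollification is neither needed nor justified at this point.

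\textbf{Comparison with the paper.} The paper does not reorder: it proves (i) first and self-containedly, by computing $[(N^\romL+1)^{-1},L_\beta]$ on $\cC^\romL$ and observing that the result is bounded, so that density of the \emph{core} $\cC^\romL$ in $\cD(L_\beta)$ alone (not in any intersection domain) yields $N^\romL\in C^1(L_\beta)$; the $C^1_\Mo$ upgrade then needs only that $\cC^\romL$ is a core for $N^\romL$. This dissolves the apparent circularity you worked around. For (ii) the paper uses a quite different mechanism: it builds $T_1=L_0+W_\beta(\coup)+\ri(N^\romL+1)$ on $\cD(L_0)\cap\cD(N^\romL)$ and $T_2=L_\beta+\ri(N^\romL+1)$ on $\cD(L_\beta)\cap\cD(N^\romL)$ via Proposition~\ref{Prop-Skibsted}, shows $T_1\subset T_2$, and invokes Hille--Yosida to find a common point in $\rho(T_1)\cap\rho(T_2)$, forcing $T_1=T_2$ and hence equality of domains. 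Your closedness argument for (ii) is a legitimate alternative once (iii) is in hand; the paper's route has the advantage of not depending on (iii). For (iii) the paper then reduces, via (ii), to approximating in $\cD(L_0)\cap\cD(N^\romL)$, where $L_0$ and $N^\romL$ commute and the momentum-cutoff argument \emph{is} standard.
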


\begin{proof} To establish \ref{Item-NC1L}, we argue as in the
  verification of \cite[Cond.~2.1~(2), cf.~Sect.~5.5]{FaupinMoellerSkibsted2011a}.

First observe that $N^\romL$ and $L_0$ commute, such that we can compute as a form on the core $\cC^\romL$
\[
\big[(N^\romL+1)^{-1},L_\beta\big] = (N^\romL+1)^{-1}W_\beta(\coup)  - W_\beta(\coup)(N^\romL+1)^{-1}.
\] 
The right-hand side extends to a bounded operator, and since
$\cC^\romL$ is dense in $\cD(L_\beta)$, 
the form $[(N^\romL+1)^{-1},L_\beta]$ defined on $\cD(L_\beta)$
extends by continuity to a bounded form on $\cH^\romL$, coinciding
with the closure of the right-hand side as a form on $\cC^\romL$. 
Hence $N^\romL\in C^1(L_\beta)$, cf.~Definition~\ref{Def-C1A}.

Having established that $N^\romL$ is of class $C^1(L_\beta)$, we know
that $[N^\romL,L_\beta]$ 
extends from the intersection domain $\cD(N^\romL)\cap \cD(L_\beta)$
to a bounded form on $\cD(N^\romL)$. Hence, to compute this form, it suffices to compute it on a core of $N^\romL$.
Compute as a form on $\cC^\romL$
\begin{equation}\label{CommLandN}
\ri[N^\romL,L_\beta] = W_\beta(\ri \coup),
\end{equation}
which due to \eqref{WbetaBoundNorm}
extends from $\cC^\romL$ to $\cB(\cD(\sqrt{N^\romL});\cH^\romL)$. This proves \ref{Item-NC1L}.

As for \ref{Item-NLDomainInv}, we follow the proof of \cite[Lemma~5.15]{FaupinMoellerSkibsted2011a}. 
Let $T_0= L_0+\ri (N^\romL+1)$. Since $L_0$ and $N^\romL$ commute we clearly have
$\cD(T_0) = \cD(L_0)\cap \cD(N^\romL) =: \cD_0$.

We now construct $L_\beta+ \ri(N^\romL+1)$ in two different ways.
First define $\hL= L_0 + W_\beta(\coup)$ as a symmetric operator on $\cD_0$.
Then $T_0 + W_\beta(\coup)  = \hL+\ri(N^\romL+1) =: T_1$ is by  \cite[Corollary to
Thm.~X.48]{ReedSimonII1975} a closed operator on $\cD_0$.
Here we used \eqref{WbetaBoundNorm} again.
Conversely, we can use Proposition~\ref{Prop-Skibsted} 
to construct $T_2^\pm = L_\beta \pm \ri (N^\romL+1)$ 
as closed operators
on $\cD_\beta := \cD(L_\beta)\cap \cD(N^\romL)$ with $T_2^{+*} = T_2^{-}$.
Since $\cC^\romL\subset \cD_0$, we find that $\hL$ is a symmetric extension of
$L_{\beta_{|\cC^\romL}}$. 
Hence, $\cC^\romL$ being a core for $L_\beta$, 
we find that $\hL\subset L_\beta$. This implies that $T_1\subset T_2^+ =: T_2$. Since $T_2^\pm$ 
are both accretive, we find that $T_2$ generates a contraction semigroup.
It follows from the Hille-Yosida theorem, cf. \cite[Thm.~X.47a]{ReedSimonII1975},
that $\rho(T_1)\cap \rho(T_2)\neq \emptyset$. Let $z\in \rho(T_1)\cap \rho(T_2)$. Then
$(T_1-z)^{-1}\cH\subseteq \cD(T_2)$ and $(T_2-z)(T_1-z)^{-1} = \one_{\cH^\romL}$. Consequently,
$(T_1-z)^{-1} = (T_2-z)^{-1}$ and we conclude $\cD_0=\cD_\beta$. This proves~\ref{Item-NLDomainInv}.

Finally we turn to \ref{Item-IntersectCore}. From what was just proved, together with the closed graph theorem, we conclude that it
suffices to prove that $\cC^\romL$ is dense in $\cD_0$ with respect to the norm
\[
\|\psi\|_0 = \|N^\romL\psi\|_{\cH^\romL} + \|L_0 \psi\|_{\cH^\romL} + \|\psi\|_{\cH^\romL}.
\]
Since $L_0$ and $N^\romL$ commute, it suffices to show that one can
approximate $\psi\in\cD_0$ with 
$\psi = \one[N^\romL\leq n]\psi$, for some $n$.
Similarly, since $L_0$ and $N^\romL$ commute with $\Gamma_R := \one_\cK\otimes
\Gamma(\one[|k|\leq R])\otimes\one_\cK\otimes \Gamma(\one[|k|\leq R]) $, it suffices to approximate 
states $\psi$, non-zero in finitely many particle sectors, and 
satisfying $\Gamma_R\psi = \psi$, for some $R>0$. 

 Let $\{\vphi_n\}_{n\in\NN}\subset \cC^\romL$ be a sequence
with
$\|\psi-\vphi_n\|_{\cH^\romL}\to 0$ for $n\to\infty$. Let $\chi\in
C_0^\infty(\RR^3)$ satisfy $0\leq \chi\leq 1$,  
$\chi(k)=1$ for $|k|\leq R$, and $\chi(k) = 0$ for $|k|\geq R+1$.  
Then $\Gamma_\chi := \one_\cK\otimes\Gamma(\chi)\otimes\one_\cK\otimes
\Gamma(\chi)$ preserves $\cC^\romL$ and $\|\psi -
\Gamma_\chi\vphi_n\|_{\cH^\romL}\to 0$ for $n\to\infty$.

Now that both $\psi$ and $\Gamma_\chi \vphi_n$ only have finitely many
non-zero components, all supported inside a box of side length $R+1$,
one can easily verify that 
\[
\|\psi-\Gamma_\chi\vphi_n\|_0 \to 0 , \quad
\textup{for} \quad n\to\infty.
\]
This completes the proof.
\end{proof}

We remark that $L_\beta$ is presumably \emph{not} of class
$C^1(N^\romL)$, cf. Lemma~\ref{Lemma-H-is-C1N}.

%************************************************
\subsection{\JakPil Gluing}
%************************************************

We proceed to discuss a unitarily equivalent form of the Liouvillean
obtained by the so-called \JakPil gluing procedure, cf.~\cite{DerezinskiJaksic2001,JaksicPillet1996a}.

But first we pass to polar coordinates on the Hamiltonian level.
Define a unitary transform $T_\roml\colon \gothh \to \tgothh_\roml:= L^2([0,\infty))\otimes L^2(S^2)$
by the prescription 
\[
(T_\roml f)(\omega,\Theta) = \omega f(\omega\Theta).
\]
Denote by $\tcF_\roml = \Gamma(\tgothh_\roml)$ the Fock space in polar coordinates.
The subscript $\roml$ is for later use and refers to the left component
in the tensor product $\cH^\romL=\cH\otimes\cH$.  The twiddle indicates an object represented in polar coordinates
for the Hamiltonian, and after gluing for the Liouvillean.

The coupling in polar coordinates becomes 
\[
\tcoup(\omega,\Theta) := \omega\coup(\omega\Theta)
\]
and the Hamiltonian takes the form
\[
\tH = \bigl(\one_\cK\otimes \,\Gamma(T_\roml)\bigr) H \bigl(\one_\cK\otimes\,\Gamma(T_\roml)^*\bigr) = K\otimes\one_{\tcF_\roml}
+\one_\cK\otimes\,\D\Gamma(\omega) + \phi(\tcoup),
\]
a priori as an identity on $\cK\otimes\Gamma_\fin(C_0^\infty([0,\infty)\otimes
C^\infty(S^2))$ 
and extended to $\cD(\tH)=\cD(\one_\cK\otimes\,\D\Gamma(\omega))$ by continuity.

To deal with the standard Liouvillean we similarly need a map $T_\romr \colon\gothh\to \tgothh_\romr :=
L^2((-\infty,0])\otimes L^2(S^2)$,
defined by $(T_\romr f)(\omega,\Theta) = (T_\roml f)(-\omega,\Theta)$. Put $\tcF_\romr = \Gamma(\tgothh_\romr)$.
This sets up a unitary transformation
\[
T\colon \gothh\oplus\gothh \to \tgothh :=  L^2(\RR)\otimes L^2(S^2),
\]
by the construction
\[
(T(f,g))(\omega,\Theta) = \one[\omega\geq 0](T_\roml f)(\omega,\Theta)
+ \one[\omega\leq 0](T_\romr g)(\omega,\Theta).
\]

 Using the canonical identification
$I \colon \Gamma(\gothh\oplus\gothh) \to \cF\otimes\cF$, cf. \eqref{CanonId}, we construct a unitary map
\[
\cU \colon \cH^\romL \to \tcH^\romL := \cK\otimes\cK\otimes\tcF,
\]
where $\tcF = \Gamma(L^2(\RR)\otimes L^2(S^2))$. The map $\cU$ is
defined on simple tensors by
\[
\cU(u\otimes \eta\otimes v \otimes \xi) = u\otimes v \otimes
\Gamma(T)I^*(\eta\otimes \xi)
\]
and extended to $\cH^\romL$ by linearity and continuity.
Here $u,v\in\cK$ and $\xi,\eta\in\cF$.
 As an alternative core we take
\begin{equation}\label{tCL}
\tcC^{\romL} = \cK\otimes\cK \otimes \Gamma_\fin\bigl(C_0^\infty(\RR)\otimes C^\infty(S^2)\bigr).
\end{equation}

In the new coordinate system, we can write the interaction $W_\beta(\coup)$ as a field operator as follows.
First, the zero temperature interaction is
\begin{equation}\label{tGinfty}
\tG_\infty := \one[|\omega\geq
0]\tG_\roml(\omega,\Theta) - \one[\omega\leq 0]\tcoup_\romr(-\omega,\Theta),
\end{equation}
where $\tcoup_{\roml/\romr}(\omega,\Theta) = \omega
\coup_{\roml/\romr}(\omega\Theta)$, 
cf.~\eqref{GlandGr}. With this construction we have the identity
$\cU (\phi_\roml(\coup_\roml) - \phi_\romr(\coup_\romr)) \cU^*  = \phi(\tcoup_\infty)$. The computation is easily done on
$\tcC^\romL$ and extended by continuity to $\cD(\phi(\tcoup))$. At finite temperature, the interaction reads
\begin{equation}\label{tGbeta1}
\tcoup_\beta(\omega,\Theta) := \sqrt{1+\trho_\beta}\,\tcoup_\infty + \sqrt{\trho_\beta}\,\tcoup_{\infty,\cR}^*,
\end{equation}
where $\tcoup_{\infty,\cR}(\omega,\Theta) = \tcoup_\infty(-\omega,\Theta)$ is the
reflected glued coupling, and 
\[
\trho_\beta(\omega,\Theta) =
\rho_\beta(\omega\Theta)= \frac{1}{\e^{\beta|\omega|}-1}.
\]
Recalling \eqref{GlandGrbeta}, we observe that we similarly have 
\[
\cU W_\beta(\coup)\cU ^* = \cU (\phi_\roml(\coupbl) - \phi_\romr(\coupbr)) \cU^*  = \phi(\tcoup_\beta).
\]

As observed in \cite{JaksicPillet1996a} we have the following alternative representation of
$\tcoup_\beta$
\begin{equation}\label{tGbeta2}
\tcoup_\beta = \left(\frac{\omega}{1-\e^{-\beta\omega}}\right)^{\frac12} \hcoup_\roml - 
\left(\frac{\omega}{\e^{\beta\omega}-1}\right)^{\frac12} \hcoup_\romr,
\end{equation}
where
\begin{equation}\label{tGbeta3}
\begin{aligned}
& \hcoup_\roml(\omega,\Theta) = \bigl(\one[\omega\geq 0]\sqrt{\omega}\,\coup(\omega\Theta) + \one[\omega\leq 0]\sqrt{-\omega}\,\coup(-\omega\Theta)^*\bigr)\otimes\one_\cK\\
& \hcoup_\romr(\omega,\Theta) = \one_\cK\otimes \bigl(\one[\omega\geq 0]\sqrt{\omega}\,\overline{\coup(\omega\Theta)} + \one[\omega\leq 0]\sqrt{-\omega}\,\overline{\coup(-\omega\Theta)}^*\bigr).
\end{aligned}
\end{equation}
This form of the interaction mirrors the Araki-Woods representation \eqref{PF-AW-CCR}.

\begin{remark}\label{Rem-DerIsL2}
The representation \eqref{tGbeta1} allows us to easily observe that under the assumption \LGCond{n},
the ultraviolet part of $\tG_\beta$ and its first $n$ derivatives are square integrable,
whereas \eqref{tGbeta2} allows us to conclude the same for the infrared region.
This sets up an application of Lemma~\ref{Lemma-Field-H-Bounds}~\ref{Item-field-H-N-Bounds},
which holds true for any one-particle space $\gothh$, not just $L^2(\RR^3)$.
\remarkQED\end{remark}

We can now write down the standard Liouvillean in the new coordinate system as
\[
\tL_\beta = L_\romp\otimes \one_{\tcF} +\one_{\cK\otimes\cK}\otimes \tH_{\ph} + \phi(\tcoup_\beta),
\]
with $\tH_\ph = \D\Gamma(\omega)$. 
Note that $\omega$ denotes both a real number and multiplication by the identity function in $L^2(\RR)$.
Here $L_\romp$ is the standard Liouvillean 
for the small quantum system, cf.~\eqref{Lp}.

Again, by Nelson's commutator theorem, $\tL_\beta$ is essentially self-adjoint on $\tcC^\romL$.
We observe that $L_\beta$ and $\tL_\beta$ are unitarily equivalent through $\cU$.
As an identity on $\tcC^\romL$ we have $\tL_\beta = \tL_\infty +
\phi(\tcoup_\beta-\tcoup_\infty)$,
with $\tL_\infty = \tL_0 + \phi(\tcoup_\infty)$ and $\tL_0 =
L_\romp\otimes\one_{\tcF} + \one_{\cK\otimes\cK}\otimes\tH_{\ph}$. These operators are
also essentially self-adjoint on $\tcC^\romL $ and their closures are unitarily
equivalent with the appropriate untwiddled objects.

In the glued coordinate system we write $\tN = \cU N^\romL \cU^* =
\D\Gamma(\one_\tgothh)$, where the second quantization is here performed in $\tcF$.

The statements \ref{Item-NC1L-t} and \ref{Item-NLDomainInv-t} in the
following corollary to Proposition~\ref{Prop-BasicLReg} are an immediate consequence of
Proposition~\ref{Prop-BasicLReg}~\ref{Item-NC1L} and~\ref{Item-NLDomainInv}.
The item  \ref{Item-IntersectCore-t} however is not, but it can be proved by
an argument identical to the one employed at the end of the proof above.

\begin{corollary}\label{Cor-BasicLReg} Suppose \LGCond{0}. The following holds
\begin{Enumerate}
\item\label{Item-NC1L-t} $\tN\in C^1_\Mo(\tL_\beta)$ and the operator
  $[\tN,\tL_\beta]^\circ$ extends from $\cD(\tN)$ 
by continuity to an element of $\cB(\cD(\sqrt{\tN});\tcH^\romL)$.
\item\label{Item-NLDomainInv-t} $\cD(\tN)\cap \cD(\tL_\beta)$ does not depend on $\beta$, nor on $\coup$.
\item\label{Item-IntersectCore-t} $\tcC^\romL$ is dense in $\cD(\tN)\cap \cD(\tL_\beta)$ with respect to the intersection topology.
\end{Enumerate}
\end{corollary}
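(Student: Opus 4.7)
Parts \ref{Item-NC1L-t} and \ref{Item-NLDomainInv-t} should follow by transporting Proposition~\ref{Prop-BasicLReg}~\ref{Item-NC1L} and~\ref{Item-NLDomainInv} through the unitary conjugation by $\cU$. Since $\cU$ is independent of both $\beta$ and $\coup$ and satisfies $\tL_\beta = \cU L_\beta \cU^*$ and $\tN = \cU N^\romL \cU^*$, we get $\cD(\tN)\cap\cD(\tL_\beta) = \cU\bigl(\cD(N^\romL)\cap\cD(L_\beta)\bigr)$, which is $\beta,\coup$-independent by Proposition~\ref{Prop-BasicLReg}~\ref{Item-NLDomainInv}. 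The $C^1_\Mo(\cdot)$ class and its associated commutator operator are covariant under unitary conjugation, so $[\tN,\tL_\beta]^\circ = \cU\,[N^\romL,L_\beta]^\circ\,\cU^*$ inherits the bound as an element of $\cB(\cD(\sqrt{\tN});\tcH^\romL)$ from the corresponding bound on $\cD(\sqrt{N^\romL})$ given by Proposition~\ref{Prop-BasicLReg}~\ref{Item-NC1L}.

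For \ref{Item-IntersectCore-t} the plan is to mirror the last paragraphs of the proof of Proposition~\ref{Prop-BasicLReg}. First, combining \ref{Item-NLDomainInv-t} with the closed graph theorem applied to the identity between the Banach space $\cD(\tN)\cap\cD(\tL_\beta)$ in the graph norm of $\tL_\beta + \ri(\tN+1)$ and in the graph norm of $\tL_0 + \ri(\tN+1)$ respectively, these two norms are equivalent. It therefore suffices to approximate any $\psi\in \cD(\tN)\cap \cD(\tL_0)$ by elements of $\tcC^\romL$ in the norm $\|\psi\|_0 := \|\tN\psi\| + \|\tL_0\psi\| + \|\psi\|$. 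Using that $\tN$ and $\tL_0$ commute, the spectral cutoff $\one[\tN\leq n]$ commutes with both operators and approximates the identity in $\|\cdot\|_0$; this reduces the task to states $\psi$ non-zero in only finitely many particle sectors.

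For the last step, note that the one-particle space $\tgothh = L^2(\RR)\otimes L^2(S^2)$ has only the radial coordinate $\omega$ noncompact. Pick $\chi\in C_0^\infty(\RR)$ with $0\leq\chi\leq 1$ and $\chi = 1$ on $[-R,R]$, and set $\Gamma_\chi := \one_{\cK\otimes\cK}\otimes \Gamma(\chi)$, with $\chi$ viewed as multiplication by $\chi(\omega)$ on $\tgothh$; this operator commutes with both $\tN$ and $\tL_0 = L_\romp\otimes\one_{\tcF} + \one_{\cK\otimes\cK}\otimes\D\Gamma(\omega)$, and it preserves $\tcC^\romL$. For a state $\psi$ already supported in finitely many particle sectors and in $\{|\omega|\leq R\}$ in each one-particle variable, any $\tcH^\romL$-approximating sequence $\{\vphi_n\}\subset\tcC^\romL$ gets mollified to $\Gamma_\chi\vphi_n\in\tcC^\romL$, still converging to $\psi$ in $\tcH^\romL$-norm and jointly supported in finitely many particle sectors inside a fixed compact region of $\omega$-space. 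On this joint subspace $\tN$ and $\tL_0$ act as bounded operators, so $\tcH^\romL$-convergence automatically upgrades to $\|\cdot\|_0$-convergence. The only (mild) new ingredient compared to the untwiddled argument is choosing the cutoff to respect the product factorization of $\tgothh$; this is where the adaptation is needed, but it is routine.
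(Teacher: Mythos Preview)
Your proof is correct and follows the same route as the paper: items~\ref{Item-NC1L-t} and~\ref{Item-NLDomainInv-t} are transported from Proposition~\ref{Prop-BasicLReg} by unitary conjugation with $\cU$, while~\ref{Item-IntersectCore-t} is redone by mimicking the end of the proof of Proposition~\ref{Prop-BasicLReg} in the glued coordinates (the paper explicitly says this is necessary, since $\cU$ does not carry $\cC^\romL$ to $\tcC^\romL$). The only small omission is that you jump from ``$\psi$ in finitely many particle sectors'' to ``$\psi$ in finitely many particle sectors \emph{and} compactly supported in $\omega$'' without stating the intermediate reduction via $\one_{\cK\otimes\cK}\otimes\Gamma(\one[|\omega|\leq R])$; but since this projection commutes with both $\tN$ and $\tL_0$, the step is routine and the paper treats it with equal brevity.
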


We remark that it is a consequence of Proposition~\ref{Prop-BasicLReg}
and the above corollary that, supposing \LGCond{0}, the resolvents
of $L_\beta$ and $\tL_\beta$ are strongly continuous in $\beta\in (0,\infty]$ and $\coup$, using the norm
$\|\coup\|_0' = \|(1+|k|^{-1/2})\coup\| $.
Indeed, it suffices to prove strong convergence on $\cD(\tN)$
where we compute
\[
(\tL_\beta(\coup)-z)^{-1} - (\tL_{\beta'}(G')- z)^{-1} = 
(\tL_{\beta'}(\coup')-z)^{-1}\phi(\tcoup_{\beta}-\tcoup'_{\beta'})(\tL_{\beta}(G)- z)^{-1}.
\]
Here we used Proposition~\ref{Prop-DomInv}~\ref{Item-ResInv-MoPlus}, 
cf.~Corollary~\ref{Cor-BasicLReg}~\ref{Item-NC1L-t}, which ensures that 
$(\tL_{\beta}(G)- z)^{-1}\colon \cD(\tN)\to \cD(\tN)$.
The result now follows by observing that
\[
\lim_{(\beta',\coup')\to (\beta,\coup)}\big\|\tcoup_{\beta}-\tcoup'_{\beta'}\big\| = 0. 
\]
Norm continuity, even in $\beta$, of the resolvent is probably false but an argument is lacking.
This means that while $\sigma(L_\infty) = \RR$, cf. \cite[Thm.~VIII.33]{ReedSimonI1980},
we cannot a priori exclude that the spectrum of $L_\beta$ could
collapse for $\beta<\infty$ to become a proper subset of $\RR$,
cf. the discussion around \cite[Thm.~VIII.24]{ReedSimonI1980}.

Instead we proceed as for the Hamiltonian, via a pull through formula:

\begin{proposition}\label{PullthroughL} Suppose \LGCond{0}. For any $\beta>0$, $z\in \CC\backslash \RR$ and
  $\psi\in \cD(\sqrt{\tN})$, we have as an $L^2(\RR\times S^2;\tcH^\romL)$-identity
\[
a(\omega,\Theta)(\tL_\beta -z)^{-1}\psi = (\tL_\beta+\omega -z)^{-1} a(\omega,\Theta)\psi -
(\tL_\beta+\omega - z)^{-1}(\tcoup_\beta(\omega,\Theta)\otimes \one_{\tcF})\psi.
\]
\end{proposition}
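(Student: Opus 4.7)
The plan is to adapt the proof of Proposition~\ref{PullthroughH} to the Liouvillean setting, using the analogous canonical commutation structure after the Jak\v{s}i{\'c}-Pillet gluing. The single important new ingredient is the extra care required because $\cD(\tL_\beta)$ depends on both $\beta$ and $\coup$, which is what Corollary~\ref{Cor-BasicLReg} is designed to handle.

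First, I would work on the core $\tcC^\romL$. Here $a(\omega,\Theta)$ commutes with $L_\romp\otimes\one_{\tcF}$ (the latter acts only on the $\cK\otimes\cK$ factor), and the standard CCR computation gives $[a(\omega,\Theta),\tH_\ph] = \omega\, a(\omega,\Theta)$ and $[a(\omega,\Theta),\phi(\tcoup_\beta)] = \tfrac{1}{\sqrt{2}}\tcoup_\beta(\omega,\Theta)\otimes\one_{\tcF}$ as $L^2(\RR\times S^2;\tcH^\romL)$-valued identities on $\tcC^\romL$. Together this yields, for $\tpsi\in\tcC^\romL$,
\begin{equation*}
a(\omega,\Theta)(\tL_\beta-z)\tpsi = (\tL_\beta+\omega - z)a(\omega,\Theta)\tpsi + \tfrac{1}{\sqrt{2}}\bigl(\tcoup_\beta(\omega,\Theta)\otimes\one_{\tcF}\bigr)\tpsi,
\end{equation*}
all terms being square integrable in $(\omega,\Theta)$ thanks to Remark~\ref{Rem-DerIsL2} and the infrared/ultraviolet bounds encoded in \LGCond{0}. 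Since $z\in\CC\setminus\RR$ and $\sigma(\tL_\beta)\subseteq\RR$, the shifted resolvent $(\tL_\beta+\omega-z)^{-1}$ exists uniformly in $\omega\in\RR$, and we can pull it to the left to obtain
\begin{equation*}
a(\omega,\Theta)\tpsi = (\tL_\beta+\omega-z)^{-1}a(\omega,\Theta)(\tL_\beta-z)\tpsi - \tfrac{1}{\sqrt{2}}(\tL_\beta+\omega-z)^{-1}\bigl(\tcoup_\beta(\omega,\Theta)\otimes\one_{\tcF}\bigr)\tpsi.
\end{equation*}

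Next, as in the proof of Proposition~\ref{PullthroughH}, I would reformulate this weakly: pair against $a^*(h)\varphi$ with $h\in C_0^\infty(\RR\times S^2)$ and $\varphi\in\tcC^\romL$ (taking adjoints of the smeared annihilation operators). This recasts the identity in a form involving $(\tL_\beta-z)\tpsi$, which then extends by continuity from $\tpsi\in\tcC^\romL$ to all $\tpsi\in\cD(\tL_\beta)$ thanks to the essential self-adjointness of $\tL_\beta$ on $\tcC^\romL$. Finally, substitute $\tpsi = (\tL_\beta-z)^{-1}\psi$ for $\psi\in\cD(\sqrt{\tN})$. For this to be legitimate we need the resolvent to preserve $\cD(\sqrt{\tN})$, which is precisely what Corollary~\ref{Cor-BasicLReg}~\ref{Item-NC1L-t} (combined with interpolation, cf.\ Lemma~\ref{Lemma-Prop-SA-C1}~\ref{Item-Cheap-DomInv}) provides. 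The density of $L^2(\RR\times S^2)\otimes\tcC^\romL$ in $L^2(\RR\times S^2;\tcH^\romL)$ then closes the argument.

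The main obstacle is the one already flagged around Proposition~\ref{Prop-BasicLReg}: because $\cD(\tL_\beta)$ is not known explicitly and varies with $\beta$ and $\coup$, one cannot freely commute $a(\omega,\Theta)$ past $\tL_\beta$ on the full domain. Everything must be routed through the core $\tcC^\romL$ and then extended, and the extension from $\tcC^\romL$ to $\cD(\tL_\beta)\cap\cD(\sqrt{\tN})$ is what makes Corollary~\ref{Cor-BasicLReg}~\ref{Item-NC1L-t}--\ref{Item-IntersectCore-t} indispensable. Modulo this, the argument is a direct transcription of the Hamiltonian pull-through, with $H$ replaced by $\tL_\beta$, $|k|$ by $\omega$ (now signed, but harmless since $z\notin\RR$), and $\coup$ by $\tcoup_\beta$.
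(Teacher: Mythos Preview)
Your proposal is correct and matches the paper's approach exactly: the paper omits the proof, stating it is verbatim that of Proposition~\ref{PullthroughH} with Corollary~\ref{Cor-BasicLReg} in hand. One small correction on the citation: since here it is $\tN$ that is of class $C^1_\Mo(\tL_\beta)$ (the roles are reversed compared to the Hamiltonian case, where $H$ is of class $C^1_\Mo(N)$), the result that yields $(\tL_\beta-z)^{-1}\colon\cD(\tN)\to\cD(\tN)$ is Proposition~\ref{Prop-DomInv}~\ref{Item-ResInv-MoPlus}, not Lemma~\ref{Lemma-Prop-SA-C1}~\ref{Item-Cheap-DomInv}; interpolation then gives preservation of $\cD(\sqrt{\tN})$ as you say.
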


We omit the proof of Proposition~\ref{PullthroughL}
since it is verbatim the same as for
Proposition~\ref{PullthroughH}, keeping in mind
Corollary~\ref{Cor-BasicLReg}. In particular, cf.~Proposition~\ref{Prop-DomInv}~\ref{Item-ResInv-MoPlus}, the consequence that resolvents of
$\tL_\beta$ preserves $\cD(\tN)$, hence also $\cD(\sqrt{\tN})$, and that $\tcC^\romL$ is an
operator core for $\tL_\beta$. We are now almost ready to prove the following
theorem, which seems to be new and establishes a widely expected result as a fact.

\begin{theorem}\label{HVZL} Suppose \LGCond{0}. For any $\beta>0$, we have
  $\sigma(L_\beta)=\sigma(\tL_\beta) = \RR$.
\end{theorem}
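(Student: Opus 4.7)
The plan is as follows. Since $L_\beta$ and $\tL_\beta$ are unitarily equivalent via $\cU$, it suffices to prove $\sigma(\tL_\beta) = \RR$. The spectrum of $\tL_\beta$ is nonempty and, by $J\tL_\beta J = -\tL_\beta$, reflection-symmetric about $0$. I plan to adapt the strategy used in the proof of Theorem~\ref{HVZH} and show the stronger translation invariance: if $\lambda_0 \in \sigma(\tL_\beta)$ then $\lambda_0 + \mu \in \sigma(\tL_\beta)$ for every $\mu \in \RR\setminus\{0\}$. Since the spectrum is nonempty and closed, this forces $\sigma(\tL_\beta) = \RR$. The essential new ingredient compared with the Hamiltonian case is that, in the glued one-particle space $\tgothh = L^2(\RR)\otimes L^2(S^2)$, the variable $\omega$ ranges over all of $\RR$, so the shift $\mu$ may be of either sign and we fill the whole real line rather than just a half-axis.

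Fix $\lambda_0 \in \sigma(\tL_\beta)$, $\mu \in \RR\setminus\{0\}$ and $\eps > 0$. By Corollary~\ref{Cor-BasicLReg}(iii), pick a normalized $\psi \in \tcC^\romL$ with $\|(\tL_\beta - \lambda_0)\psi\| \leq \eps$. Choose real $h \in C_0^\infty(\RR)$ with $\|h\|=1$ and $\supp h \subseteq [-\tfrac12,\tfrac12]$, and a normalized $\chi \in C^\infty(S^2)$; set
\[
h_n(\omega,\Theta) := n^{1/2} h\bigl(n(\omega-\mu)\bigr)\chi(\Theta),
\]
so $\|h_n\|=1$ and $h_n \rightharpoonup 0$ weakly in $\tgothh$. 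For $n$ large, $\supp h_n \subseteq \{|\omega|\geq |\mu|/2\}$, keeping us away from the infrared singularity. Put $\psi_n = a^*(h_n)\psi$; the CCR give $\|\psi_n\|^2 = \|\psi\|^2 + \|a(h_n)\psi\|^2 \geq 1$. The goal is to establish
\[
\bigl|\bigl\la\vphi,\bigl((\tL_\beta-\ri)^{-1}-(\lambda_0+\mu-\ri)^{-1}\bigr)\psi_n\bigr\ra\bigr| \leq C(\eps + o_n(1))\|\vphi\|
\]
for $\vphi \in \tcC^\romL$, with $C$ independent of $n$ and $\vphi$. By density and duality this bounds $\|((\tL_\beta-\ri)^{-1}-(\lambda_0+\mu-\ri)^{-1})\psi_n\|$ by $C(\eps + o_n(1))$ while $\|\psi_n\|\geq 1$, placing $(\lambda_0+\mu-\ri)^{-1}$ in the approximate point spectrum of $(\tL_\beta-\ri)^{-1}$; the spectral mapping theorem then yields $\lambda_0+\mu\in\sigma(\tL_\beta)$.

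To prove the estimate I would dualize (using $h_n$ real) and apply the pull-through formula Proposition~\ref{PullthroughL} to $a(\omega,\Theta)(\tL_\beta+\ri)^{-1}\vphi$, splitting the expression into two integrals weighted by $h_n$: a source term in $\tcoup_\beta(\omega,\Theta)\otimes\one_{\tcF}$, and a resolvent-difference term in $[(\tL_\beta+\omega+\ri)^{-1}-(\lambda_0+\mu+\ri)^{-1}]a(\omega,\Theta)\vphi$. The source term is $o_n(1)\|\vphi\|$ via weak convergence $h_n \rightharpoonup 0$ together with the $L^2$-envelope supplied by $\tcoup_\beta\in L^2$ (Remark~\ref{Rem-DerIsL2}). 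For the resolvent difference, rewrite
\[
(\tL_\beta+\omega+\ri)^{-1}-(\lambda_0+\mu+\ri)^{-1} = -(\tL_\beta+\omega+\ri)^{-1}\bigl[(\tL_\beta-\lambda_0)+(\omega-\mu)\bigr](\lambda_0+\mu+\ri)^{-1}
\]
and move the bracket onto $\psi$: this produces the $\eps$ contribution from $\|(\tL_\beta-\lambda_0)\psi\|\leq\eps$ and the $1/n$ contribution from $|\omega-\mu|\leq 1/(2n)$ on $\supp h_n$. The main obstacle, exactly as in the proof of Theorem~\ref{HVZH}, is absorbing an infrared factor $|h_n|/\sqrt{|\omega|}$ paired with the square-integrable $\|\sqrt{|\omega|}\,a(\omega,\Theta)(\tH_\ph+1)^{-1}\vphi\|$; this is harmless here because $\supp h_n$ avoids $\omega=0$, so $\|h_n/\sqrt{|\omega|}\|\leq (|\mu|/2)^{-1/2}$ uniformly in $n$. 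Throughout, the operator manipulations are justified by Corollary~\ref{Cor-BasicLReg}: resolvents of $\tL_\beta$ preserve $\cD(\tN)$ and $\tcC^\romL$ is a joint core for $\tL_\beta$ and $\tN$.
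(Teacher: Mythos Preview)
Your strategy of propagating the spectrum by translations is sound in spirit, but the technical implementation breaks down precisely where the Liouvillean differs from the Hamiltonian. You write that the main obstacle is handled by pairing $|h_n|/\sqrt{|\omega|}$ with $\|\sqrt{|\omega|}\,a(\omega,\Theta)(\tH_\ph+1)^{-1}\vphi\|$. But $\tH_\ph = \D\Gamma(\omega)$ is \emph{not} bounded below after gluing ($\omega$ runs over all of $\RR$), so $(\tH_\ph+1)^{-1}$ does not exist. The manoeuvre from Theorem~\ref{HVZH}, where $(H_\ph+|k|+1)^{-1}$ was freely available because $H_\ph\geq 0$, is therefore unavailable here, and one must control $a(\omega,\Theta)\vphi$ in $L^2(\RR\times S^2;\tcH^\romL)$ by the number operator instead: insert $(\tN+2)^{1/2}(\tN+2)^{-1/2}$, using that $\int\|(\tN+2)^{-1/2}a(\omega,\Theta)\vphi\|^2\,\D\omega\,\D\Theta\leq\|\vphi\|^2$.

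This repair, however, exposes a second and more serious gap. Once you commute $(\tN+2)^{1/2}$ through $(\tL_\beta+\omega+\ri)^{-1}$ via Proposition~\ref{Prop-DomInv}~\ref{Item-ResInv-MoPlus}, the quantity you must make small is $\|(\tN+1)^{1/2}(\tL_\beta-\lambda_0)\psi\|$, \emph{not} merely $\|(\tL_\beta-\lambda_0)\psi\|$. For a generic spectral point $\lambda_0$ you only get Weyl vectors in the graph norm of $\tL_\beta$, with no a priori $\tN$-control on the defect $(\tL_\beta-\lambda_0)\psi$; choosing $\psi\in\tcC^\romL$ does not help, since approximating in the $\tL_\beta$-graph norm says nothing about $\|(\tN+1)^{1/2}(\tL_\beta-\lambda_0)\psi\|$. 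The paper circumvents this by invoking Theorem~\ref{Thm-betaKMS}: the $\beta$-KMS vector $\Omega^\PF_{\beta,\coup}$ is a \emph{genuine} eigenvector at $0$, and the regularization $\psi_m = m(\tN+m)^{-1}\Omega^\PF_{\beta,\coup}$ is engineered precisely so that $\|(\tN+1)^{1/2}\tL_\beta\psi_m\|\to 0$ (cf.~\eqref{StrongHVZBound}). It is this number-weighted smallness---not available from an arbitrary Weyl sequence---that drives the pull-through estimate.
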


We postpone the proof of the above theorem to Subsect.~\ref{Subsec-ExisNonexis}, where a 
missing ingredient will be introduced.
The two HVZ-type theorems, Theorems~\ref{HVZH} and~\ref{HVZL}, will
play no role in the notes apart from clarifying the general spectral picture.

\begin{remark}\label{Rem-SpinBoson} Our results on the standard Liouvillean
will mostly be proved in the \JakPil glued coordinates.
Since only $\omega$-derivatives will play a role, this allows us to 
formulate slightly weaker assumptions using $\tcoup_\beta$ instead of
$\coup$. This improvement is in general largely irrelevant, hence the present
formulation with \LGCond{n}. 

More importantly, for couplings $\coup$ on a special form, one can
-- due to the representation \eqref{tGbeta2} -- allow for interactions 
at positive temperature far more singular than what is permitted by
\LGCond{n}. To make this precise, assume $\coup$ takes the form
\begin{equation}\label{Ohmic-Form}
\coup(k) =  |k|^{-\frac12} g(k)\coup_0,
\end{equation}
where $\coup_0\in \Mat_\nu(\CC)$ is self-adjoint $\coup_0^*=\coup_0$, and $g\colon\RR^3\to
\RR$. Define
\[
\hg(\omega,\Theta) = \one[\omega\geq 0]g(\omega\Theta) +
\one[\omega\leq 0]g(-\omega\Theta).
\]
Then we can represent $\hcoup_\roml = \hg \coup_0$ and $\hcoup_\romr =
\hg \overline{\coup}_0$, cf. \eqref{tGbeta3}. Hence, we see that differentiability of
$\tcoup_\beta$ is governed by that of $\hg$. For the spin-boson model,
$g$ is a form factor (or ultraviolet cutoff), e.g. constant near $0$ or
perhaps of the form $\e^{-k^2/\Lambda^2}$ to take some popular
choices. Here $\hg$ will be constant across the singularity at
$\omega=0$, or equal to $\e^{-\omega^2/\Lambda^2}$ for the other
choice.

For models of the form \eqref{Ohmic-Form} we can reformulate replacements for
\LGCond{n}. Let $n\in\NN_0$. 
There exists $\hg\colon\RR\times S^2\to\CC$ and $\coup_0\in
\Mat_\nu(\CC)$  self-adjoint, such that $\hg$ admits $n$ distributional $\omega$
derivatives in $L_\loc^1(\RR\times S^2)$ and such that
\[
\mathrm{\mathbf{(L{\coup}n')}}\quad \begin{aligned}
& \coup(\omega\Theta) = |\omega|^{-\frac12}\,\hg(\omega,\Theta)\, \coup_0\\
&\forall \omega\Theta\in\RR^3, |\omega|\leq 1, \quad \textup{and} \quad j\leq n: 
\quad |\partial_\omega^j \hg(\omega,\Theta)|\leq C |\omega|^{n-1+\mu -j}\\ 
&\forall \omega\Theta\in\RR^3, |\omega|\geq 1, \quad \textup{and} \quad j\leq n: 
\quad |\partial_\omega^j \hg(\omega,\Theta)|\leq C |\omega|^{-1 -\delta_{j,0}-\mu}. 
\end{aligned}
\]
This type of condition was used in \cite{DerezinskiJaksic2003,FroehlichMerkli2004b}.

Finally, we remark that it was observed and utilized in \cite{FroehlichMerkli2004b},
that the \JakPil gluing is not canonical in that one can glue the
two reservoirs together at $\omega=0$, twisting one of them with a
phase. This allows one to consider $\hg$ as complex valued 
and then pick the gluing phase such that $\hg(\omega\Theta)$ and $\overline{\hg(-\omega
  \Theta})$ fit together seamlessly across $\omega=0$. In fact, one
can in this way also allow for singular behavior of the form $|k|^{1/2}$
at zero, and not just $|k|^{-1/2}$. This would require an extra twist by the
angle $\pi$ corresponding to a sign change across zero. 
\remarkQED\end{remark}

%*************************************************
\subsection{Multiple Reservoirs}
%*************************************************

We have made a choice, in the name of concreteness, to focus on finite
dimensional quantum systems coupled to a massless scalar field (in
three dimensions) and their thermal Liouvilleans. 

Our methods, and indeed theorems, however have validity beyond this
particular choice. We single out here the case of multiple reservoirs at
possibly different inverse temperatures $\vec{\beta}=(\beta_1,\dots,\beta_q)$. 

The easiest way to observe that the results of these notes carry over
to the case of multiple reservoirs is to replace $\gothh = L^2(\RR^3)$ by
$L^2(\RR^3\times\{1,\dots,q\})\sim \gothh^q$, $q$ being the number of
reservoirs. The dispersion becomes $\omega(k,j) =  |k|$ (or
$|k|\one_{\CC^q}$). Given $q$ couplings $\coup_1,\dotsc,\coup_q$, all
satisfying the same sets of conditions, one can construct a coupling 
for the multi-reservoir system by setting $\coup(k,j) = \coup_j(k)$. 
 
As for the standard Liouvillean, one 
%can form a density 
%$\rho_{\vec{\beta}} \colon \RR^3\times\{1,\dots,q\}\to \RR$ by setting
%$\rho_{\vec{\beta}}(k,j) = \rho_{\beta_j}(k)$. Then
should replace $\coup_{\beta,\roml/\romr}$ from \eqref{GlandGrbeta} by the functions
$\coup_{\vec{\beta},\roml/\romr}(\cdot,j)= \coup_{\beta_j,\roml/\romr}(\cdot,j)$. Similarly for $\tcoup_{\vec{\beta}}$.

Weak-coupling as well as high and low-temperature results remain valid
if all coupling, respectively temperatures, are taken into the same regime.

Only one type of result here does not extend to the case of multiple
reservoirs, and that is the existence/non-existence results for
eigenvalues of $L_\beta$ discussed in Subsect.~\ref{Subsec-ExisNonexis}, which
make critical use of the modular structure of the thermal Liouvilleans.
In fact, if two inverse temperatures are distinct, at weak coupling and
under a suitable non-triviality condition on $\coup$ one has
$\sigma_\pp(L_{\vec{\beta}})=\emptyset$, cf. \cite[Thm.~7.17]{DerezinskiJaksic2003}.
This reflects the fact that an atom coupled to multiple reservoirs at different temperatures
does not have an invariant state. Instead one should look for non-equilibrium steady states,
describing heat transport from the warm to the cold reservoir through the atomic system.
See \cite{DerezinskiJaksic2003,JaksicPillet2002b,JaksicPillet2002a,MerkliMuckSigal2007,MerkliMuckSigal2007b}.
Three of these papers consider also the so-called $C$-Liouvillean, which is not self-adjoint, and seems
to be a more natural object when considering non-equilibrium steady states.

One could also replace the thermal density $\rho_\beta$ by other
densities, and a number of our results remain valid. However, the reader doing that would have to
reformulate the condition \LGCond{n} where the $1/|k|$ singularity
of $\rho_\beta$ is built in. As for our low-temperature result, these would have to 
be translated into \myquote{small} density statements.
The reader can consult
\cite{DerezinskiJaksic2003,DerezinskiJaksicPillet2003} for discussions
of other models. 

Finally we remark that essentially what we exploit is
the \JakPil glued representation of the standard Liouvillean and,
presumably, one could rephrase everything in this abstract setup.
See also \cite[Sect.~8]{DerezinskiJaksic2003}.

%**************************************
\subsection{Open Problems II}
%**************************************

There are not that many serious problems pertaining to the material from this section.
We did mention two related conjectures regarding the standard Liouvillean, while not in itself of great interest, 
resolving them would serve to clarify the picture: 

\begin{problem} Clarify to what extend the domain of the standard Liouvillean
$L_\beta$ is $\beta$ and $\coup$ dependent.
\end{problem}

\begin{problem} Verify that, as conjectured, the resolvents of the Liouvillean are \emph{not} norm continuous
in $\beta$ and $\coup$.
\end{problem}

As a final topic, we discuss the ultraviolet singularity of the models. For e.g. the spin-boson model (with the ohmic coupling), the coupling $\coup$ goes as $1/\sqrt{|k|}$ for large momenta,
which is more singular than what we can deal with. It is well-known that the Nelson (and the polaron) model
is renormalizable, but this is due to a regularizing effect stemming from the small system, in that
the Laplacian allows for control of the ultraviolet contributions \cite{Ammari2000,LiebThomas1997,Nelson1964}. Indeed, we do not expect
that the spin-boson model has a meaningful ultraviolet limit and it should not be a relevant question since it is a model describing low energy/momentum phenomena only. Having said that, it would still be undesirable if
the choice of (a reasonable) cutoff would influence whether or not the Liouvillean 
has a unique invariant state or admits non-zero eigenvalues,
and if it does have non-zero eigenvalues, will the point spectrum, being related to energy differences, have an ultraviolet limit. 
This is an underlying, and largely unexplored, issue that will not play a role in these notes.

\newpage

%************************************************
\section{Bound States}
%************************************************

In this section we study the basic properties of bound states.
The key is the following formal computation
\[
\la\psi,\ri [H,A]\psi\ra = 0,
\]
whenever $\psi$ is a bound state for $H$ and $A$ is some auxiliary operator.
Choosing $A$ such that the commutator $\ri[H,A]$ contains a positive 
operator $N$ and a remainder controllable either by $H$ or some fractional power of $N$,
will imply - at least formally - that $\psi$ is in the form domain of $N$. 
In our case, the operator $N$ will be the number operator $N$ (or $N^\romL$ for the Liouvillean).

It turns out to be a surprisingly delicate question to
establish such a bound rigorously for the standard Liouvillean, but for the 
Hamiltonian it is fairly straightforward. 
The first argument of this type is for the Hamiltonian and is due to Skibsted \cite{Skibsted1998},
and for the Liouvillean it goes back to Fr{\"o}hlich and Merkli
\cite{FroehlichMerkli2004a}, cf. also \cite{FroehlichMerkliSigal2004}.
The result we present here for the Liouvillean improves on the theorem
of Fr{\"o}hlich and Merkli.

As a consequence of such number bounds, we will be able to establish virial theorems
for the Hamiltonian and the Liouvillean.

\subsection{Number Bounds at Zero Temperature}\label{Subsec-NumberBoundH}

As for $A$, we make the choice 
\[
A=\D\Gamma(a)
\]
with
\begin{equation}\label{aRadTrans}
a = \frac{\ri}2\left\{\frac{k}{|k|}\cdot \nabla_k + \nabla_k\cdot\frac{k}{|k|}\right\},
\end{equation}
the generator of radial translations. Note that $a$ should be viewed as the closure of
$a$ restricted to $C_0^\infty(\RR^3\backslash\{0\})$ and that $a$ is a maximally symmetric operator, 
but not self-adjoint. Since $H$ is not of class $C^1(A)$, we cannot directly make sense out of the formal computation above.

Instead, we introduce a family of regularized conjugate operators
$A_n = \D\Gamma(a_n)$ with
\[
a_n = \frac{\ri}2\left\{\frac{k}{\sqrt{|k|^2+n^{-1}}}\cdot \nabla_k + \nabla_k\cdot\frac{k}{{\sqrt{|k|^2+n^{-1}}}}\right\}.
\]
The $a_n$'s, constructed as closures from $C_0^\infty(\RR^3)$, are self-adjoint
and $H\in C^1(A_n)$ for all $n$, provided \HGCond{1} is assumed. This construction goes back to Skibsted \cite{Skibsted1998} 
and was used also in \cite{GeorgescuGerardMoeller2004b}.

Let $\psi$ be a bound state for $H$, i.e. $H\psi = E\psi$ for some $E\in\RR$.
It is now a consequence of the standard Virial Theorem, cf. Theorem~\ref{Thm-Virial}, that 
$\la\psi,\ri[H,A_n]^\circ\psi\ra =0$.
Computing the commutator, we find
\[
\ri[H,A_n]^\circ = \D\Gamma\Bigl(\frac{|k|}{\sqrt{|k|^2 + n^{-1}}}\Bigr) - \phi(\ri a_n \coup).
\]
Note that assuming \HGCond{1}, we have $a_n \coup\in L^2(\RR^3;\Mat_\nu(\CC))$.
From the estimate \eqref{PhiEnergyBound}, applied with $|k|/\sqrt{|k|^2 + n^{-1}}$ in place of $|k|$,
we get
\[
\ri[H,A_n]^\circ \geq \frac12 \D\Gamma\Bigl(\frac{|k|}{\sqrt{|k|^2 + n^{-1}}}\Bigr)
- C \big\|(|k|^2 + n^{-1})^{\frac14}|k|^{-\frac12}a_n \coup\big\|^2.
\]
To check for the finiteness and uniform boundedness of the norm on the right-hand side,
we write
$a_n = \frac{k}{\sqrt{|k|^2+n^{-1}}}\cdot \ri\nabla_k + \frac{\ri}2\Div(k/ \sqrt{k^2+n^{-1}})$
and estimate
\begin{equation}\label{BoundOn-ancoup}
\big\|(|k|^2 + n^{-1})^{\frac14}|k|^{-\frac12}a_n \coup\big\|\leq \big\|\nabla G\big\| + C \big\|G/|k|\big\|,
\end{equation}
for some $n$-independent constant $C$.
Since $n\to |k|/\sqrt{|k|^2 + n^{-1}}$ is monotonously increasing towards $1$ we conclude from Lebesgue's
theorem on monotone convergence the following:

\begin{theorem}\label{Thm-NumberBoundH}
 Suppose \HGCond{1}. There exists a $C>0$, such that for any normalized bound state $\psi\in\cH$ of $H$, we have
$\psi\in\cD(\sqrt{N})$ and
\[
\big\|\sqrt{N} \psi\big\| \leq C \big(\|\nabla G\| +  \|G/|k|\|\big).
\]
\end{theorem}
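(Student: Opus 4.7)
The plan is to apply the standard Virial Theorem to the regularized conjugate operators $A_n = \D\Gamma(a_n)$, extract a uniform-in-$n$ number bound from the resulting form identity, and then let $n\to\infty$ by monotone convergence. All the ingredients have been assembled in the discussion preceding the statement; what remains is to turn them into a rigorous chain of inequalities.

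First, under \HGCond{1}, I would verify that $H\in C^1(A_n)$ for each fixed $n$ so that Theorem~\ref{Thm-Virial} applies. Since $a_n$ is bounded as an operator from $\cD(|k|)$ to $\gothh$ and preserves regularity, $A_n$ preserves the core $\cC$ and $\cD(H)=\cD(H_0)$, and the commutator computed on $\cC$,
\[
\ri[H,A_n] = \D\Gamma\Bigl(\frac{|k|}{\sqrt{|k|^2+n^{-1}}}\Bigr) - \phi(\ri a_n \coup),
\]
extends by Lemma~\ref{Lemma-Field-H-Bounds}~\ref{Item-field-H-E-Bounds} (applied to $\ri a_n\coup\in L^2(\RR^3;\Mat_\nu(\CC))$, which is finite thanks to \HGCond{1}) to a form bounded by $H_0+1$. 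This membership in $C^1(A_n)$ gives the identity $\la\psi,\ri[H,A_n]^\circ\psi\ra = 0$ for any normalized eigenstate $\psi$ of $H$.

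Second, I would use \eqref{PhiEnergyBound}, with $|k|$ replaced by the modified dispersion $|k|/\sqrt{|k|^2+n^{-1}}$ and $\coup$ replaced by $\ri a_n\coup$, choosing $\sigma = 1/2$ to absorb half of the positive commutator contribution. This yields
\[
0 = \la\psi,\ri[H,A_n]^\circ\psi\ra \geq \tfrac{1}{2}\bigl\la\psi,\D\Gamma\bigl(|k|/\sqrt{|k|^2+n^{-1}}\bigr)\psi\bigr\ra - \bigl\|(|k|^2+n^{-1})^{\frac14}|k|^{-\frac12}a_n\coup\bigr\|^2,
\]
so the first term on the right is uniformly controlled by the square of the quantity in \eqref{BoundOn-ancoup}, namely $C(\|\nabla\coup\| + \|\coup/|k|\|)^2$. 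The verification of \eqref{BoundOn-ancoup} is the routine computation of writing $a_n = (k/\sqrt{|k|^2+n^{-1}})\cdot\ri\nabla_k + \tfrac{\ri}{2}\Div(k/\sqrt{|k|^2+n^{-1}})$ and noting that the divergence factor is bounded uniformly in $n$ by $C/|k|$ (for $|k|\le 1$) and absorbs against the prefactor otherwise.

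Finally, since the sequence of non-negative one-particle multiplication operators $|k|/\sqrt{|k|^2+n^{-1}}$ increases monotonically to $\one_\gothh$, the second quantizations $\D\Gamma(|k|/\sqrt{|k|^2+n^{-1}})$ increase monotonically (as quadratic forms on $\Gamma_\fin(\gothh)$) to $N$. Applying the monotone convergence theorem for quadratic forms to the uniform bound just established yields $\psi\in\cD(\sqrt{N})$ together with $\|\sqrt{N}\psi\|^2 \leq 2C(\|\nabla\coup\|+\|\coup/|k|\|)^2$, proving the theorem. The main obstacle in carrying this out cleanly is the initial regularity check $H\in C^1(A_n)$, which is essentially the reason $a_n$ was introduced in place of $a$: the regularization ensures $a_n\coup\in L^2$ under \HGCond{1} even though the unregularized $a\coup$ need not be.
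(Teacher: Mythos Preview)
Your argument is correct and follows the paper's proof essentially line for line: regularize, apply the virial theorem to $A_n$, use \eqref{PhiEnergyBound} with the modified dispersion and $\sigma=1/2$, bound the error via \eqref{BoundOn-ancoup}, and pass to the limit by monotone convergence. One small correction to your closing remark: under \HGCond{1} the unregularized $a\coup$ \emph{is} in $L^2$ (indeed $\|a\coup\|\leq \|\nabla\coup\|+C\|\coup/|k|\|$); the actual obstruction to $H\in C^1(A)$ is that the positive part of the commutator, $\D\Gamma(1)=N$, is not $H$-form-bounded for a massless dispersion, whereas $\D\Gamma(|k|/\sqrt{|k|^2+n^{-1}})\leq \sqrt{n}\,H_\ph$ is.
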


Since $N$ commutes with the conjugation $\Conj$, 
we observe that the same theorem holds for bound states of $H^\conj$.

That the constant $C$ in the theorem above can be chosen uniformly in $E$, is a consequence
of $\cK$ being finite dimensional. For e.g. the confined Nelson model, this would be false
since one will need a resolvent of $K$ to bound the relevant $a\coup$. This is however a mute point, since we in
Subsect.~\ref{Subsec-EstZero} will prove that $H$ does not have high energy bound states!

In fact, if one assumes in addition \HGCond{2}, one can do better and get
$\psi\in\cD(N)$ using \cite{FaupinMoellerSkibsted2011a}. This is however a much deeper result
and will not play a role in these notes.

%******************************************************************************
\subsection{Number Bounds at Positive Temperature}\label{Subsec-NumberBoundL}
%******************************************************************************

 For the standard Liouvillean $L_\infty$ at zero temperature, we observe that since eigenstates are of the form
$\psi\otimes\varphi$, with $\psi,\varphi$ eigenstates of $H$ and $H^\conj$ respectively, they are --
due to Theorem~\ref{Thm-NumberBoundH} -- automatically in the domain of $\cD(\sqrt{N^\romL})$. 
Hence, eigenstates of $\tL_\infty$ are in the
domain of $\sqrt{\tN}$. The situation at positive temperature is a good
deal more subtle.

We begin with a key technical lemma, which enables us to compute commutators.
The proof follows closely a similar argument from \cite[Proof of
Cond.~2.1~(3)]{FaupinMoellerSkibsted2011a}.

Before stating the lemma, we need some notation.
Let $m\in C^\infty(\RR)$ be real-valued and  bounded with bounded
derivatives. Put 
\begin{equation}\label{ModTrans}
\ta_m = \frac{\ri}2\left\{m \frac{\D}{\D\omega} + 
\frac{\D}{\D\omega} m \right\}\otimes\one_{L^2(S^2)}
\end{equation}
and 
\begin{equation}\label{dGammaModTrans}
\tA_m =\one_{\cK\otimes\cK}\otimes\, \D\Gamma(\ta_m).
\end{equation}
We leave it to the reader to argue that 
$\ta_m$ and $\tA_m$ are essentially self-adjoint on
$C_0^\infty(\RR)\otimes C^\infty(L^2(S^2))$ and $\tcC^\romL$, respectively.
Recall from \eqref{tCL} the form of the core $\tcC^\romL$.

\begin{lemma}\label{ComputeFirstComm} Suppose \LGCond{1}. Then
\[
\bigl\la \psi, \ri [\tL_\beta,(\tA_m-z)^{-1}] \vphi \bigr\ra = - \bigl\la \psi,(\tA_m-z)^{-1}
\tL_\beta'(\tA_m-z)^{-1} \vphi\bigr\ra,
\]
for all $\psi,\vphi\in \cD(\sqrt{\tN})\cap \cD(\tL_\beta)$ and $z\in
\CC$ with $\im z\neq 0$. Here 
\[
\tL_{\beta,m}' = \one_{\cK\otimes\cK}\otimes \,\D\Gamma(m) - \phi(\ri \ta_m \tcoup_\beta)
\]
defined as a form on $\cD(\sqrt{\tN})$.
\end{lemma}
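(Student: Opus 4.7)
The plan is to first prove the primitive identity $\ri[\tL_\beta,\tA_m]=\tL_{\beta,m}'$ as a sesquilinear form on the core $\tcC^\romL$, and then lift it to the resolvent level via the formal identity $[\tL_\beta,R_z]=-R_z[\tL_\beta,\tA_m]R_z$, where $R_z:=(\tA_m-z)^{-1}$. The crucial structural point that makes everything work is that $\tA_m=\one_{\cK\otimes\cK}\otimes\D\Gamma(\ta_m)$ and $\tN=\one_{\cK\otimes\cK}\otimes\D\Gamma(\one_\tgothh)$ strongly commute, being second quantizations of commuting one-particle operators. Consequently $R_z$ preserves $\cD(\tN^\alpha)$ for every $\alpha\ge 0$, which is exactly what renders $-R_z\tL_{\beta,m}'R_z$ a bona fide sesquilinear form on $\cD(\sqrt{\tN})$, once one knows that $\tL_{\beta,m}'$ is (which follows from Lemma~\ref{Lemma-Field-H-Bounds} together with boundedness of $m$).

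On $\tcC^\romL$ the commutator computation is routine. Split $\tL_\beta=L_\romp\otimes\one_{\tcF}+\one_{\cK\otimes\cK}\otimes\tH_\ph+\phi(\tcoup_\beta)$. The first summand commutes with $\tA_m$; the distributional identity $[\omega,\ta_m]=-\ri m$ combined with the standard $\D\Gamma$-commutator rule gives $\ri[\one_{\cK\otimes\cK}\otimes\tH_\ph,\tA_m]=\one_{\cK\otimes\cK}\otimes\D\Gamma(m)$; and the CCR identities $[\D\Gamma(\ta_m),a^\#(f)]=\pm a^\#(\ta_m f)$ yield $\ri[\phi(\tcoup_\beta),\tA_m]=-\phi(\ri\ta_m\tcoup_\beta)$. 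Hypothesis \LGCond{1}, through Remark~\ref{Rem-DerIsL2}, ensures $\ta_m\tcoup_\beta\in L^2$ so that this last Segal field is well defined. Summing yields $\ri[\tL_\beta,\tA_m]=\tL_{\beta,m}'$ on $\tcC^\romL$. The resolvent version $\ri[\tL_\beta,R_z]=-R_z\tL_{\beta,m}'R_z$ between $\tcC^\romL$ vectors then follows by using $R_z(\tA_m-z)=(\tA_m-z)R_z=\one$ on $\cD(\tA_m)\supset\tcC^\romL$.

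Both sides of the asserted identity are continuous sesquilinear forms in $(\psi,\vphi)$ with respect to the graph norm $\|\cdot\|+\|\sqrt{\tN}\cdot\|+\|\tL_\beta\cdot\|$: the left by boundedness of $R_z$ and self-adjointness of $\tL_\beta$; the right by commutation of $R_z$ with $\sqrt{\tN}$ and the $\sqrt{\tN}$-form boundedness of $\tL_{\beta,m}'$. I would then extend by density, approximating arbitrary $\psi\in\cD(\sqrt{\tN})\cap\cD(\tL_\beta)$ by $\tcC^\romL$ vectors in two stages. First, set $\psi_n:=J_n(\tN)\psi$ with $J_n(x):=\chi(x/n)$ for a fixed $\chi\in C_0^\infty(\RR)$ identically $1$ near the origin; an explicit CCR computation shows $[\phi(\tcoup_\beta),J_n(\tN)]$ is a bounded map $\cD(\sqrt{\tN})\to\tcH^\romL$ of norm $O(1/n)$, so that $\psi_n\in\cD(\tN)\cap\cD(\tL_\beta)$ and $\psi_n\to\psi$ in the above graph norm. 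Second, Corollary~\ref{Cor-BasicLReg}~\ref{Item-IntersectCore-t} approximates each $\psi_n$ by $\tcC^\romL$ vectors in the finer intersection topology of $\cD(\tN)\cap\cD(\tL_\beta)$. A diagonal argument concludes the density step.

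The main technical obstacle is the first of these two stages: one cannot invoke a standard $C^1(\tN)$-calculus to deduce that $J_n(\tN)$ preserves $\cD(\tL_\beta)$, since $\tL_\beta$ is presumably \emph{not} of class $C^1(\tN)$, as explicitly warned in the paragraph following Corollary~\ref{Cor-BasicLReg}. One must instead work through the asymmetric class $\tN\in C^1_\Mo(\tL_\beta)$ and the commutator bound $[\tN,\tL_\beta]^\circ\in\cB(\cD(\sqrt{\tN});\tcH^\romL)$ from Corollary~\ref{Cor-BasicLReg}~\ref{Item-NC1L-t}, following the template of \cite{FaupinMoellerSkibsted2011a} that the lemma's statement explicitly invokes.
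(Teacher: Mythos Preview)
Your overall architecture is sound, but the step you call ``routine'' is exactly where the difficulty lies, and it is not justified as written. You claim that the resolvent identity $\ri[\tL_\beta,R_z]=-R_z\tL_{\beta,m}'R_z$ between $\tcC^\romL$ vectors ``follows by using $R_z(\tA_m-z)=(\tA_m-z)R_z=\one$ on $\cD(\tA_m)$''. Unwinding this algebraic identity into a rigorous form statement requires either that $R_z$ map $\tcC^\romL$ into $\cD(\tL_\beta)\cap\cD(\tA_m)$, or that the primitive commutator identity already hold on a domain stable under $R_z$. Neither is automatic: $\tL_\beta$ contains $\one_{\cK\otimes\cK}\otimes\D\Gamma(\omega)$ with $\omega$ unbounded, and there is no reason a priori why $(\tA_m-z)^{-1}$ should preserve $\cD(\D\Gamma(\omega))$. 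Indeed $\D\Gamma(\omega)$ is \emph{not} of class $C^1(\tA_m)$ globally, since the putative commutator $\D\Gamma(m)$, while $\tN$-bounded, is not $\D\Gamma(\omega)$-bounded for a massless dispersion. So the single-line unwinding fails.

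The paper's proof repairs this by first splitting $\tL_\beta=\tL_0+\phi(\tcoup_\beta)$ on $\cD(\tN)\cap\cD(\tL_\beta)$ and treating the two pieces by different mechanisms. For the field piece the argument is the one you sketch: everything is $\sqrt{\tN}$-bounded and $R_z$ commutes with $\tN$, so the identity is verified on $\tcC^\romL$ and extended by density. For the free piece $\tL_0$ one exploits that both $\tL_0$ and $\tA_m$ preserve particle sectors; on each $n$-particle sector $\D\Gamma^{(n)}(\omega)$ \emph{is} of class $C^1_\Mo(\D\Gamma^{(n)}(\ta_m))$ with bounded commutator $\D\Gamma^{(n)}(m)$, so Lemma~\ref{Lemma1-Prop-Mourre-C1} gives domain preservation by $R_z^{(n)}$ for $|\im z|\geq\sigma_n$, whence the resolvent identity there, which one then extends to all $\im z\neq 0$ by unique continuation. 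This sector-by-sector argument plus analytic continuation is the missing ingredient.

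A secondary remark: your two-stage density argument is more laboured than necessary. The paper reduces from $\cD(\sqrt{\tN})\cap\cD(\tL_\beta)$ to $\cD(\tN)\cap\cD(\tL_\beta)$ in one line by invoking Lemma~\ref{Lemma3-Prop-Mourre-C1}~\ref{Item-InterDomDense} (which is the abstract form of exactly the $I_n(\tN)$ regularisation you propose), and then works directly on $\cD(\tN)\cap\cD(\tL_\beta)$, where the operator-sum decomposition $\tL_\beta=\tL_0+\phi(\tcoup_\beta)$ is available by Corollary~\ref{Cor-BasicLReg}~\ref{Item-NLDomainInv-t}. No diagonal argument is needed.
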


\begin{remark} We first observe that the expression in the lemma makes
  sense. Since $\tA_m$ and $\tN$ commute, $(\tA_m-z)^{-1}$ preserves the
domain of $\sqrt{\tN}$. By boundedness of $m$ and $m'$, together with Remark~\ref{Rem-DerIsL2}, we see that
$\tL_{\beta,m}'$ is well-defined as a form on $\cD(\sqrt{\tN})$.
\end{remark}

\begin{proof} By Corollary~\ref{Cor-BasicLReg} and 
Lemma~\ref{Lemma3-Prop-Mourre-C1}~\ref{Item-InterDomDense} it suffices to establish the desired
form-identity on $\cD(\tN)\cap\cD(\tL_\beta)$.

Recall from Corollary~\ref{Cor-BasicLReg} that 
$\cD(\tN)\cap\cD(\tL_\beta) =  \cD(\tN)\cap \cD(\tL_0)$. On this
domain $\tL_\beta$ can be
written as the operator sum $\tL_0+\phi(\tcoup_\beta)$. Hence it
suffices to prove that
\begin{equation}\label{NoughToProve}
\begin{aligned}
\big\la \psi, \ri [\tL_0,(\tA_m-z)^{-1}] \vphi \big\ra & = - \big\la \psi,(\tA_m-z)^{-1}
\one_{\cK\otimes\cK}\otimes \D\Gamma(m)(\tA_m-z)^{-1}\vphi \big\ra,\\
\big\la \psi, \ri [\phi(\tcoup_\beta),(\tA_m-z)^{-1}] \vphi\big\ra 
& = \big\la \psi,(\tA_m-z)^{-1}
\phi(\ri \ta_m \tcoup_\beta)(\tA_m-z)^{-1}\vphi \big\ra,
\end{aligned}
\end{equation}
for all $\psi,\vphi\in  \cD(\tN)\cap\cD(\tL_0)$ and $z\in \CC$ with
$\im z\neq 0$.
 The second identity in \eqref{NoughToProve} can easily be verified for
$\psi,\vphi\in\tcC^\romL$ from which it extends by density since
$\phi(\tcoup_\beta)$ and $\phi(\ri \ta_m \tcoup_\beta)$ are
$\sqrt{\tN}$-bounded, cf. Remark~\ref{Rem-DerIsL2}.

As for the first identity in \eqref{NoughToProve}, one should first
observe that all objects preserve particle sectors, i.e. sectors with
$\tN = n$ for some $n$. Hence, it suffices to establish the identity for
$\psi,\vphi$ being $n$-particle states. Observe that
$\D\Gamma^{(n)}(\omega)$ is of class $C^1_\Mo(\D\Gamma^{(n)}(\ta_m))$,
indeed; $\ri [\D\Gamma^{(n)}(\omega),\D\Gamma^{(n)}(\ta_m)]^\circ =
\D\Gamma^{(n)}(m)$ is a bounded operator on $\cF^{(n)}$, the
$n$-particle sector. 
Hence, the identity 
\begin{align*}
& \big\la \psi,\one_{\cK\otimes\cK}\otimes \ri [\D\Gamma^{(n)}(\omega),(\D\Gamma^{(n)}(\ta_m)-z)^{-1}] \vphi
\big\ra \\
\qquad &  = - \big\la \psi,\one_{\cK\otimes\cK}\otimes (\D\Gamma^{(n)}(\ta_m)-z)^{-1}
\D\Gamma^{(n)}(m)(\D\Gamma^{(n)}(\ta_m)-z)^{-1} \vphi\big\ra
\end{align*}
holds for $z$ with $|\im z| \geq
\sigma_{n}$ for some $\sigma_{n}$ chosen such that
$(\D\Gamma^{(n)}(\ta_m)-z)^{-1}$ preserves $\cD(\D\Gamma^{(n)}(\omega))$
inside the $n$-particle sector, cf. Lemma~\ref{Lemma1-Prop-Mourre-C1}. 
By the unique continuation theorem,
the identity then holds for all $z$ with $\im z\neq 0$.
\end{proof}

\begin{lemma}\label{LprimeC1} Suppose $\inf m(\omega)>0$. Then $\tL_{\beta,m}'$ is of class $C^1_\Mo(\tA_m)$,
with commutator $ [\tL_{\beta,m}',\tA_m]^\circ = \ri\phi(\ta_m^2\tcoup_\beta)$.
\end{lemma}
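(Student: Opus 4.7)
The plan is to verify the abstract characterization of the $C^1_\Mo(\tA_m)$ class from Appendix~\ref{App-MourreClass}, following closely the template already used in the proof of Lemma~\ref{ComputeFirstComm}. As a preliminary step I would check that $\tL_{\beta,m}'$ is a well-defined symmetric form on $\cD(\sqrt{\tN})$: boundedness of $m$ gives $|\la\psi,\D\Gamma(m)\psi\ra| \leq \|m\|_\infty\|\sqrt{\tN}\psi\|^2$, while $\phi(\ri\ta_m\tcoup_\beta)$ is $\sqrt{\tN}$-bounded by Lemma~\ref{Lemma-Field-H-Bounds}~\ref{Item-field-H-N-Bounds}, provided $\ta_m\tcoup_\beta\in L^2(\RR\times S^2;\cB(\cK\otimes\cK))$. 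The latter requires a single $\omega$-derivative of $\tcoup_\beta$ in $L^2$, which is furnished by \LGCond{1} via the argument of Remark~\ref{Rem-DerIsL2} (boundedness of $m$ and $m'$ does not spoil integrability).

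Next I would check the stability property entering the definition of $C^1_\Mo(\tA_m)$. The hypothesis $\inf m>0$ is used to make sense of $\tA_m$ as a self-adjoint operator on a suitable closure of $\tcC^\romL$, so that resolvents $(\tA_m-z)^{-1}$ are at our disposal. Since $\tA_m = \one_{\cK\otimes\cK}\otimes\D\Gamma(\ta_m)$ commutes with $\tN$, these resolvents preserve $\cD(\sqrt{\tN})$, which is the form domain of $\tL_{\beta,m}'$ identified above. This is what makes both sides of the asserted identity $[\tL_{\beta,m}',\tA_m]^\circ=\ri\phi(\ta_m^2\tcoup_\beta)$ meaningful.

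To compute the commutator I would then repeat the mechanism of Lemma~\ref{ComputeFirstComm}: sandwich between two resolvents of $\tA_m$, reduce to the core $\tcC^\romL$ (using Corollary~\ref{Cor-BasicLReg}~\ref{Item-IntersectCore-t} together with Lemma~\ref{Lemma3-Prop-Mourre-C1}~\ref{Item-InterDomDense}), and carry out the commutator calculus on that core. The key identity is $[\D\Gamma(b),\phi(f)] = -\ri\phi(\ri b f)$ for symmetric $b$, applied with $b=\ta_m$ and $f=\ri\ta_m\tcoup_\beta$; it produces the field contribution $\ri\phi(\ta_m^2\tcoup_\beta)$ on $\tcC^\romL$. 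Under \LGCond{2} (so that two $\omega$-derivatives of $\tcoup_\beta$ are in $L^2$ by Remark~\ref{Rem-DerIsL2}), this form is $\sqrt{\tN}$-bounded and extends by density from $\tcC^\romL$ to $\cD(\sqrt{\tN})$, matching the closure $[\tL_{\beta,m}',\tA_m]^\circ$ in the statement.

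The main obstacle, as in Lemma~\ref{ComputeFirstComm}, will be domain bookkeeping: justifying that the naive computation on the core $\tcC^\romL$ really represents the closed form on $\cD(\sqrt{\tN})$, and verifying that the $\D\Gamma(m)$-summand of $\tL_{\beta,m}'$ does not leave residual contributions once the commutator is passed through the resolvent sandwich and the closure is taken. In particular, the $\D\Gamma(m)$-part commutes with $\tA_m$ up to terms of order $\tN$ that are controlled after being squeezed between $(\tA_m-z)^{-1}$ resolvents, and it is here that the positivity hypothesis $\inf m>0$ plays its role through the uniform control it gives on $\tA_m$-resolvents acting on $\tN$-sectors. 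Everything else is the same template employed repeatedly in Subsect.~\ref{Subsec-Liovillean}.
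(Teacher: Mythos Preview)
Your proposal misidentifies the role of the hypothesis $\inf m(\omega)>0$, and this sends you down an unnecessarily indirect route.

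The operator $\tA_m$ is essentially self-adjoint on $\tcC^\romL$ for any real bounded $m$ with bounded derivatives; positivity of $m$ plays no role there. What $\inf m>0$ actually buys is that $\D\Gamma(m)$ and $\tN$ are mutually relatively bounded, so that $\tL_{\beta,m}'=\D\Gamma(m)-\phi(\ri\ta_m\tcoup_\beta)$ is a \emph{self-adjoint operator with domain} $\cD(\tN)$, not merely a form on $\cD(\sqrt{\tN})$. This is exactly the self-adjointness needed to speak of $\tL_{\beta,m}'$ being of class $C^1_\Mo(\tA_m)$, cf.~Definition~\ref{Def-Mo-C1A}. Your treatment of $\tL_{\beta,m}'$ as a form never gets to this point.

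Once the domain is identified, the paper's argument bypasses the resolvent-sandwich mechanism of Lemma~\ref{ComputeFirstComm} entirely: one simply checks the two conditions of Proposition~\ref{Prop-MourreEquiv}~\ref{Item-MourreC1}. Condition~\ref{Item-bstability} ($b$-stability) is immediate because $\tA_m$ commutes with $\tN$, so $\e^{\ri t\tA_m}$ preserves $\cD(\tN)=\cD(\tL_{\beta,m}')$. For condition~\ref{Item-MourreCommBound}, since $\tcC^\romL$ is a core for both commuting operators $\tN$ and $\tA_m$, it is dense in the intersection $\cD(\tN)\cap\cD(\tA_m)$, and the commutator can be computed there by a direct one-line calculation.

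Your final paragraph correctly notices that $\D\Gamma(m)$ need not commute with $\tA_m$: on $\tcC^\romL$ one gets $[\D\Gamma(m),\tA_m]=-\ri\D\Gamma(mm')$, which is $\tN$-bounded (hence $\tL_{\beta,m}'$-bounded). This causes no obstruction to the $C^1_\Mo$ property, but it does \emph{not} disappear through any ``resolvent-sandwich'' argument as you suggest; that explanation is incorrect. In the lemma's only application (proof of Theorem~\ref{NumberBoundL}) one takes $m\equiv1$, so $m'=0$ and the commutator reduces to the displayed expression $\ri\phi(\ta_m^2\tcoup_\beta)$.
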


\begin{proof} We aim to use Proposition~\ref{Prop-MourreEquiv} to establish that $\tL_{\beta,m}'$ is of class $C^1_\Mo(\tA_m)$.
Note that $\tN$ and $\tA_m$ commute, and by Remark~\ref{Rem-DerIsL2} both $\phi(\ri\ta_m\tcoup_\beta)$ and
$\phi(\ta_m^2\tcoup_\beta)$ are $\sqrt{\tN}$-bounded. 
The first of the two criteria, cf. Proposition~\ref{Prop-MourreEquiv}~\ref{Item-MourreC1}~\ref{Item-bstability},
follows from the equality $\cD(\tL_{\beta,m}')= \cD(\tN)$.
Since $\tcC^\romL$ is a core for both the commuting operators $\tN$ and $\tA_m$,
it is dense in the intersection domain $\cD(\tN)\cap\cD(\tA_m)$. 
Hence, to verify \ref{Item-MourreCommBound} and the desired form of the commutator,
one simply has to verify the commutator identity in the sense of forms on $\tcC^\romL$. But this is straightforward.
\end{proof}

We are now ready to state and prove our improvement
of the Fr{\"o}hlich-Merkli number bound. For comparison, we require
one less commutator reflected in an improvement by one power of $|k|$
in the infrared behavior of $\coup$. It is however still one
commutator  more than what was needed
for the Hamiltonian. It is unclear if this is just a technical issue.

\begin{theorem}\label{NumberBoundL} Suppose \LGCond{2}. Let $\psi$ be an eigenstate of $L_\beta$. Then
  $\psi\in\cD(\sqrt{N^\romL})$.
\end{theorem}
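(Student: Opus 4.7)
The plan is to adapt the Skibsted-type virial argument from Theorem~\ref{Thm-NumberBoundH} to the Liouvillean setting, working throughout in the Jak\v{s}i\'c--Pillet glued picture. By unitary equivalence via $\cU$, it suffices to show that the eigenstate $\tilde\psi := \cU\psi$ of $\tL_\beta$, with eigenvalue $\lambda\in\RR$, lies in $\cD(\sqrt{\tN})$. As regularized conjugate operators I take $\tA_n := \tA_{m_n}$ with $m_n(\omega) = F(\omega/n)$, where $F\in C_0^\infty(\RR)$ is chosen even, nonnegative, with $F(0)=1$, $0\leq F\leq 1$, monotonically nonincreasing in $|\omega|$, and vanishing quadratically at the endpoints of $\supp F$. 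Then $m_n\uparrow 1$ monotonically, $\|m_n^{(j)}\|_\infty = O(n^{-j})$, $\supp(m_n')\subset \supp(m_n)$, and $|m_n'|^2/m_n$ is bounded; moreover $\tA_n$ commutes with $\tN$ because both are second quantizations on the reservoir factor.

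The formal goal is the virial identity $\la\tilde\psi,\tL'_{\beta,m_n}\tilde\psi\ra=0$, i.e., using the form of $\tL'_{\beta,m_n}$ from Lemma~\ref{ComputeFirstComm},
\[
\la\tilde\psi,\one_{\cK\otimes\cK}\otimes\D\Gamma(m_n)\tilde\psi\ra = \la\tilde\psi,\phi(\ri\ta_{m_n}\tcoup_\beta)\tilde\psi\ra.
\]
Decomposing $\ri\ta_{m_n}\tcoup_\beta = -m_n\partial_\omega\tcoup_\beta - \tfrac12 m_n'\tcoup_\beta$, both terms are supported in $\supp m_n$. Using the weighted estimate $\|a(f)u\|^2\leq \|f/\sqrt{m_n}\|^2\la u,\D\Gamma(m_n)u\ra$, valid whenever $\supp f\subset \{m_n>0\}$, together with $\|\sqrt{m_n}\,\partial_\omega\tcoup_\beta\|\leq \|\partial_\omega\tcoup_\beta\|$ and $\|m_n'\tcoup_\beta/\sqrt{m_n}\| = O(n^{-1})$, one obtains
\[
\bigl|\la\tilde\psi,\phi(\ri\ta_{m_n}\tcoup_\beta)\tilde\psi\ra\bigr|\leq \sigma\la\tilde\psi,\D\Gamma(m_n)\tilde\psi\ra + C_\sigma\|\tilde\psi\|^2,
\]
with $C_\sigma$ uniform in $n$. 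Choosing $\sigma = 1/2$ and absorbing, monotone convergence $\D\Gamma(m_n)\uparrow\tN$ as $n\to\infty$ then gives $\tilde\psi\in\cD(\sqrt{\tN})$.

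The crux is to establish the virial identity rigorously, since $\tilde\psi$ is not a priori in $\cD(\sqrt{\tN})$ and Lemma~\ref{ComputeFirstComm} does not apply to $\tilde\psi$ directly. My approach is to approximate $\tilde\psi$ by $\chi_R\tilde\psi$ with $\chi_R := \chi(\tN/R)$, $\chi\in C_0^\infty(\RR)$ and $\chi\equiv 1$ near $0$. Corollary~\ref{Cor-BasicLReg}\ref{Item-NC1L-t} ($\tN\in C^1_\Mo(\tL_\beta)$) ensures that $\chi_R$ preserves $\cD(\tL_\beta)$, so $\chi_R\tilde\psi\in\cD(\tN)\cap\cD(\tL_\beta)$, and Lemma~\ref{ComputeFirstComm} applies. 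Applying it at $\psi=\varphi=\chi_R\tilde\psi$ and passing through a Helffer--Sj\"ostrand representation of a real cutoff $g_\eta(\tA_n)$ approximating the identity, combined with the decomposition $\tL_\beta\chi_R\tilde\psi = \lambda\chi_R\tilde\psi + [\tL_\beta,\chi_R]\tilde\psi$ (the $\lambda$-contribution being real and dropping out of the imaginary part by self-adjointness of $g_\eta(\tA_n)$), recovers the virial identity on $\chi_R\tilde\psi$ up to an error bounded by $\|[\tL_\beta,\chi_R]\tilde\psi\|$. A Helffer--Sj\"ostrand expansion for $\chi$ combined with $\tN\in C^1_\Mo(\tL_\beta)$ (commutator $W_\beta(\ri\coup)$ being $\sqrt{\tN}$-bounded) yields $\|[\tL_\beta,\chi_R]\tilde\psi\|\to 0$ as $R\to\infty$. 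Passing first $n\to\infty$ and then $R\to\infty$ completes the argument.

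The main obstacle is this last step: extracting the na\"\i ve virial identity on $\chi_R\tilde\psi$ from the resolvent-level identity of Lemma~\ref{ComputeFirstComm} requires uniform control of higher-order commutator terms appearing in the Helffer--Sj\"ostrand expansion, and the two scales $\eta$ (spectral parameter of $\tA_n$) and $R$ (cutoff in $\tN$) must be balanced so the error bound remains controlled simultaneously in both limits. This is where \LGCond{2}, rather than \LGCond{1}, enters: Lemma~\ref{LprimeC1} shows that the iterated commutator $[\tL'_{\beta,m_n},\tA_n]^\circ = \ri\phi(\ta_{m_n}^2\tcoup_\beta)$ is $\sqrt{\tN}$-bounded precisely when $\tcoup_\beta$ admits two $\omega$-derivatives in $L^2$, and this bound is what allows the Helffer--Sj\"ostrand remainder to be controlled uniformly in $\eta$.
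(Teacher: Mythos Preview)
Your overall strategy is close in spirit to the paper's, but there is a genuine gap in the limit argument that you yourself flag but do not resolve. After applying Lemma~\ref{ComputeFirstComm} at $\chi_R\tilde\psi$ and unfolding the left-hand side via $\tL_\beta\chi_R\tilde\psi=\lambda\chi_R\tilde\psi+[\tL_\beta,\chi_R]\tilde\psi$, the error is not bounded by $C\|[\tL_\beta,\chi_R]\tilde\psi\|$ alone: it is paired with $g_\eta(\tA_{m_n})\chi_R\tilde\psi$ (or, after removing the cutoff, with $\tA_{m_n}\chi_R\tilde\psi$). Since $\chi_R\tilde\psi$ has no a~priori regularity in $\tA_{m_n}$, the only available bound is $O(\eta\,\|[\tL_\beta,\chi_R]\tilde\psi\|)=O(\eta R^{-1/2})$, which diverges as $\eta\to\infty$. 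Hence your stated sequential order ``first $n\to\infty$, then $R\to\infty$'' cannot be carried out as written: removing $g_\eta$ at fixed $R$ is exactly where the argument breaks.

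The paper handles precisely this obstruction by \emph{not} separating the limits. It keeps $m\equiv1$ fixed (so $\tL_\beta'=\tN-\phi(\ri\ta\tcoup_\beta)$), regularizes by $I_n(\tN)$ in the number direction and by $g_m(\tA)$ in the conjugate direction, and after the second-commutator step (this is where \LGCond{2} enters, via Lemma~\ref{LprimeC1}) obtains
\[
\bigl\langle\psi_n,\,h_m(\tA)\,\tN\,h_m(\tA)\,\psi_n\bigr\rangle\;\le\;C\Bigl(\frac{m}{\sqrt n}+\frac{\sqrt n}{m}\Bigr),\qquad h_m=\sqrt{g_m'}.
\]
The two scales are then \emph{coupled}, $n=m^2$, and the conclusion follows by monotone convergence in the \emph{joint} spectral measure of the commuting pair $(\tN,\tA)$. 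Your extra symbol regularization $m_n$ in $\ta_{m_n}$ is not only unnecessary but counterproductive here: $\D\Gamma(m_n)$ and $\tA_{m_n}$ do not commute, so the clean joint-spectral monotone-convergence step is unavailable in your setup. If you drop the $m_n$ and follow the coupled-limit route with $\tA=\tA_1$, your argument becomes essentially the paper's.
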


\begin{proof} Let $\psi\in\cD(\tL_\beta)$ be an eigenstate for
  $\tL_\beta$.
It suffices to prove that $\psi\in \cD(\tN^{1/2})$. We can assume
  without loss of generality that the eigenvalue is zero,
  i.e. $\tL_\beta\psi = 0$.

Denote by $\ta = \ta_{1}$ the generator of translations, cf. \eqref{ModTrans}.
Similarly, we abbreviate  $\tA = \tA_{1} = 
\one_{\cK\otimes\cK}\otimes \,\D\Gamma(\ta)$. Note that $\tA$ commutes with $\tN$. 

Abbreviate $I_n(\tN) = n(\tN+1)^{-1}$ as in Lemma~\ref{Lemma-In}.
Since $\tN$, by Corollary~\ref{Cor-BasicLReg}, is of class $C^1_\Mo(\tL_\beta)$, we have
$I_n(\tN)\psi\in\cD(\tL_\beta)\cap \cD(\tN)$, for all $n\geq 1$.
See Lemma~\ref{Lemma-Prop-SA-C1}~\ref{Item-Cheap-DomInv}. 

Put $\psi_n = I_n(\tN) \psi$. With the choice $m=1$ we have
\[
L_{\beta,1}' = \tN - \phi(\ri \ta\tcoup_\beta),
\]
as a self-adjoint operator with domain $\cD(\tN)$.
We abbreviate $L_\beta' = L_{\beta,1}'$.
We can thus compute using Lemma~\ref{ComputeFirstComm} for $m\in\NN$ and  $z\in\CC$, with $\im z\neq 0$,
\begin{align*}
& \big\la\psi_n,\ri [\tL_\beta,(\tA/m-z)^{-1}]\psi_n \big\ra \\
& \qquad  = - \frac1{m}\big\la \psi_n,
(A/m-z)^{-1} \tL_\beta'(\tA/m-z)^{-1}\psi_n\big\ra\\
&\qquad  =   -\frac1{m}\big\la \psi_n,
 \tL_\beta'(\tA/m-z)^{-2}\psi_n\big\ra\\ 
&\qquad \quad - \frac{\ri}{m^2} \big\la \psi_n, (\tA/m-z)^{-1}\phi(\ta^2\tcoup_\beta)(\tA/m-z)^{-2} \psi_n\big\ra.
\end{align*}
In the last equality we used Lemma~\ref{LprimeC1}. 
On the other hand, we can undo the commutator on the left-hand side and
commute $\tL_\beta$ through $I_n(\tN)$ to get
\begin{align*}
& \big\la\psi_n,\ri [\tL_\beta,(\tA/m-z)^{-1}]\psi_n\big\ra = \\
&  - \big\la \psi_n,
\big\{ \phi(\ri\tcoup_\beta)(\tN +n)^{-1} (\tA/m-z)^{-1} + (\tA/m-z)^{-1}  (\tN+n)^{-1} \phi(\ri\tcoup_\beta)\big\} \psi_n\big\ra.
\end{align*} 
Here we used Corollary~\ref{Cor-BasicLReg} and a twiddled version of \eqref{CommLandN}.

Let $g\in C^\infty(\RR)$ be identical to $t$ for $|t|\leq 1$, monotonously
increasing and constant outside a ball of radius $2$. Suppose in
addition that $\sqrt{g'}$ is smooth. We will furthermore require that
\begin{equation}\label{Monotonegprime}
\forall t\in \RR: \quad t g^\dprime(t) \leq 0.
\end{equation}
 Let $\tg$ denote an almost
analytic extension of $g$, cf.~\cite{Moeller2000}. Abbreviating $g_m(t) = m g(t/m)$, we get
\begin{align*}
& - \big\la \psi_n,
\big\{\phi(\ri\tcoup_\beta) (\tN +n)^{-1} g_m(\tA) + g_m(\tA) (\tN+n)^{-1} \phi(\ri\tcoup_\beta)\big\} \psi_n\big\ra \\
& \quad  =  
\big\la\psi_n, \tL_\beta' g_m'(\tA)\psi_n\big\ra\\
&\quad \quad -\frac{\ri}{m\pi}\int_\CC \bar{\partial} \tg(z) \big\la \psi_n, (\tA/m-z)^{-1}
 \phi( \ta^2\tcoup_\beta) (\tA/m-z)^{-2}\psi_n\big\ra \D z.
\end{align*}
We estimate the left-hand side  to be $O(m/\sqrt{n})$ and the second term on the
right-hand side is $O(\sqrt{n}/m)$, cf. Remark~\ref{Rem-DerIsL2}.  Hence, we arrive at the estimate
\begin{equation}\label{LNumberBoundA}
\bigl|\big\la\psi_n, \tL_\beta' g_m'(\tA)\psi_n\big\ra\bigr| \leq
C\Bigl(\frac{m}{\sqrt{n}} + \frac{\sqrt{n}}{m} \Bigr),
\end{equation}
for some $C>0$. Put $h(t) = \sqrt{g'(t)}$. Then $h\in C_0^\infty(\RR)$
and defining $h_m(t) = h(t/m)$ we find $h_m(\tA)^2 = g_m'(\tA)$.
Let $\tih$ be an almost analytic extension of $h$. Then
\begin{align*}
\big\la\psi_n, \tL_\beta' g_m'(\tA)\psi_n\big\ra & = \big\la \psi_n, h_m(\tA) \tL_\beta'
h_m(\tA)\psi_n\big\ra + \big\la\psi_n,[\tL_\beta',h_m(\tA)]h_m(\tA)\psi_n\big\ra.
% + h_m(\tA)[h_m(\tA),\tL_\beta']\big\}.
\end{align*}
%The second term is in fact a double commutator but we do not exploit
%this. 
Observe that $[h_m(\tA),\tL_\beta'] (\tN + 1)^{-1/2} =
[h_m(\tA), \phi(\ri \ta \tcoup_\beta)](\tN+1)^{-1/2}$ is of
the order $1/m$, and therefore 
\begin{equation}\label{LNumberBoundB}
\big\la\psi_n, \tL_\beta' g_m'(\tA)\psi_n\big\ra = \big\la \psi_n, h_m(\tA) \tL_\beta'
h_m(\tA)\psi_n\big\ra + O\Bigl(\frac{\sqrt{n}}{m}\Bigr).
\end{equation}
Finally, using that $\tL_\beta'\geq \tN/2 - C'$, for some $C'>0$,
we get from \eqref{LNumberBoundA} and \eqref{LNumberBoundB} that
\[
\big\la\psi_n, h_m(\tA) \tN h_m(\tA)\psi_n\big\ra \leq
C\Bigl(\frac{m}{\sqrt{n}} + \frac{\sqrt{n}}{m}\Bigr),
\]
for some $C>0$ and all $n,m\geq 1$. 

We now pick $n = m^2$, such that we obtain the bound
\[
\big\la\psi, h_m(\tA) I_{m^2}(\tN)^2\tN h_m(\tA)\psi\big\ra \leq 2C,
\]
uniformly in $m$.
 
Let $E^{(\tN,\tA)}$ be the joint spectral resolution on $\NN_0\times\RR$,
induced by the two commuting operators $\tN$ and $\tA$. 
Then
\[
\big\la\psi, h_m(\tA) I_{m^2}(\tN)^2\tN h_m(\tA)\psi\big\ra  = \int_{\NN_0\times \RR}
h_m(t)^2 \frac{n m^4}{(n + m^2)^2}\, \D E^{(\tN,\tA)}_\psi(n,t).
\]
Since $h_m(t)^2 \frac{n m^4}{(n + m^2)^2}\to n$ monotonously, as $m\to\infty$,
we conclude using the monotone convergence theorem  
that $\int_{\NN_0\times\RR} n \,\D E^{(\tN,\tA)}_\psi(n,t) <\infty$. 
Here we used \eqref{Monotonegprime} to ensure that $m\to h_m(t)$ is
monotonously increasing towards $1$.  
Being a joint spectral resolution, we have 
\[
\int_{\NN_0\times\RR} n \,\D E^{(\tN,\tA)}_\psi(n,t) = \int_{\NN_0} n \,
\D
E^{\tN}_\psi(n),
\]
where $E^\tN$ is the spectral resolution for $\tN$. Hence
$\la \psi, \tN\psi\ra <\infty$ and we are done.
\end{proof}

%*******************************************
\subsection{Virial Theorems}
%*******************************************

Having established the number bounds, we can now formulate and prove
two virial theorems.

Let $m\in C^1(\RR)$ be real-valued and bounded, with bounded derivative as in the previous subsection.
Given such a function $m$, we can construct a maximally symmetric operator by the prescription
\[
a_m = \frac{\ri}2\left\{\frac{m(|k|)k}{|k|}\cdot \nabla_k + \nabla_k\cdot \frac{m(|k|)k}{|k|}\right\},
\]
extended by continuity from its core $C_0^\infty(\RR^3\backslash\{0\})$. 
This gives rise to a maximally symmetric operator $A_m =
\D\Gamma(a_m)$ on $\cH$.  We write, supposing \HGCond{1},
\begin{equation}\label{Hmprime}
H'_m = \D\Gamma(m(|k|)) - \phi(\ri a_m\coup),
\end{equation}
as a form on $\cD(\sqrt{N})$, cf. \eqref{PhiNumberBound}. If $\inf m(\omega)>0$, then $H_m'$ is self-adjoint on
$\cD(N)$.  The form $H_m'$ formally equals the commutator $\ri [H,A_m]$ and we have

\begin{theorem}\label{Thm-VirialH} Suppose \HGCond{1}. Let $\psi\in\cH$ be a bound state for the Hamiltonian $H$.
Then $\la \psi, H'_m\psi\ra= 0$.
\end{theorem}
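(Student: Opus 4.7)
The plan is to mimic the regularization strategy from Subsect.~\ref{Subsec-NumberBoundH}: introduce a self-adjoint approximant of $A_m$, apply the standard virial theorem (Theorem~\ref{Thm-Virial}) at each regularization level, and then pass to the limit using the number bound from Theorem~\ref{Thm-NumberBoundH}. Note that $A_m$ itself is only maximally symmetric, so the abstract virial theorem cannot be applied directly.

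First I would define, for each $n \geq 1$,
\[
a_{m,n} = \frac{\ri}{2}\left\{\frac{m(|k|) k}{\sqrt{|k|^2 + n^{-1}}}\cdot \nabla_k + \nabla_k\cdot \frac{m(|k|)k}{\sqrt{|k|^2+n^{-1}}}\right\},
\]
constructed as the closure from $C_0^\infty(\RR^3)$, which is self-adjoint since the vector field $m(|k|)k/\sqrt{|k|^2+n^{-1}}$ is smooth and bounded on $\RR^3$. Setting $A_{m,n} = \D\Gamma(a_{m,n})$, the same argument used in Subsect.~\ref{Subsec-NumberBoundH} to show $H \in C^1(A_n)$ shows, under \HGCond{1}, that $H \in C^1(A_{m,n})$, with
\[
\ri [H, A_{m,n}]^\circ = \D\Gamma\!\left(\frac{m(|k|)|k|}{\sqrt{|k|^2+n^{-1}}}\right) - \phi(\ri a_{m,n}\coup).
\]
Here the field term is well-defined because $a_{m,n}\coup \in L^2(\RR^3;\Mat_\nu(\CC))$: the estimate \eqref{BoundOn-ancoup} extends to $a_{m,n}\coup$ thanks to the boundedness of $m$ and $m'$, using $a_{m,n} = \frac{m(|k|)k}{\sqrt{|k|^2+n^{-1}}}\cdot \ri \nabla_k + \frac{\ri}{2}\Div(m(|k|)k/\sqrt{|k|^2+n^{-1}})$.

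Applying the abstract virial theorem, Theorem~\ref{Thm-Virial}, to the eigenstate $\psi$ of $H$ yields, for every $n$,
\[
\bigl\la \psi, \D\Gamma\!\bigl(m(|k|)|k|/\sqrt{|k|^2+n^{-1}}\bigr)\psi \bigr\ra = \bigl\la \psi, \phi(\ri a_{m,n}\coup)\psi\bigr\ra.
\]
The final step is to pass to the limit $n\to\infty$ on both sides. By Theorem~\ref{Thm-NumberBoundH} we have $\psi \in \cD(\sqrt{N})$, so both sides are controlled using Lemma~\ref{Lemma-Field-H-Bounds}\ref{Item-field-H-N-Bounds}. For the left-hand side, the bounded multipliers $m(|k|)|k|/\sqrt{|k|^2+n^{-1}}$ converge pointwise to $m(|k|)$ and are uniformly bounded by $\|m\|_\infty$; dominated convergence, first within each $n$-particle sector and then over sectors using $\psi\in\cD(\sqrt N)\subset\cD(N^{1/2})$, gives convergence to $\la\psi, \D\Gamma(m(|k|))\psi\ra$. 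For the right-hand side, I would apply \eqref{PhiNumberBound-Norm} to $a_{m,n}\coup - a_m\coup$ and show that this difference converges to $0$ in $L^2(\RR^3;\Mat_\nu(\CC))$; the pointwise convergence $a_{m,n}\coup \to a_m\coup$ on $\RR^3\setminus\{0\}$ is immediate, and a uniform integrable majorant is furnished by \HGCond{1}, via the same calculation that bounds \eqref{BoundOn-ancoup}. This yields $\la \psi, H'_m\psi\ra = 0$ as claimed.

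The main technical point is the $L^2$-convergence $a_{m,n}\coup \to a_m\coup$, which is where \HGCond{1} is genuinely used: one needs $\nabla \coup$ to be square-integrable with a weight that compensates the $1/|k|$ from $\Div(k/|k|)$ near the origin, which is precisely what \HGCond{1} supplies. The rest of the argument is a clean imitation of the regularization used for the number bound, so no additional hypotheses beyond \HGCond{1} are needed.
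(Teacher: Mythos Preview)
Your proposal is correct and follows essentially the same approach as the paper: regularize by multiplying $m$ with the factor $|k|/\sqrt{|k|^2+n^{-1}}$, apply the standard virial theorem (Theorem~\ref{Thm-Virial}) for each $n$, and pass to the limit via dominated convergence using the number bound Theorem~\ref{Thm-NumberBoundH} and the uniform $L^2$-control \eqref{BoundOn-ancoup}. Your write-up is in fact more detailed than the paper's, which simply records the regularized commutator and invokes the same ingredients.
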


\begin{proof}
Note that the expectation value is meaningful due to the number bound in Theorem~\ref{Thm-NumberBoundH}.

Replace $m$ by a regularizing function $m_n(r) = m(r)r/\sqrt{r^2+n^{-1}}$, as in the proof of the number bound in Subsect.~\ref{Subsec-NumberBoundH}. Then the associated $a_{m_n}$ is self-adjoint and so is $A_{m_n} = \D\Gamma(a_{m_n})$. Furthermore,  $H$ is of class $C^1(A_{m_n})$ for all $n$. We can compute the commutator
$\ri [H,A_{m_n}]^\circ = \D\Gamma(m_n(|k|)) - \phi(\ri a_{m_n}\coup)$.
By the usual virial theorem, cf.~Theorem~\ref{Thm-Virial}, together with \eqref{BoundOn-ancoup}, Theorem~\ref{Thm-NumberBoundH} and
Lebesgue's dominated convergence theorem, we conclude the proof.
\end{proof}

To deal with the standard Liouvillean, we use the observables $\ta_m$ and $\tA_m$ from \eqref{ModTrans}
and~\eqref{dGammaModTrans}.
Write, supposing now \LGCond{1},
\begin{equation}\label{tLbetamprime}
\tL'_{\beta,m} = \one_{\cK\otimes\cK}\otimes\,\D\Gamma(m(\omega))  - \phi(\ri \ta_m\tcoup_\beta),
\end{equation}
which by Remark~\ref{Rem-DerIsL2} is a well-defined form on $\cD(\sqrt{\tN})$. Again, if $\inf m(\omega)>0$, then
$\tL_{\beta,m}'$ is self-adjoint on $\cD(\tN)$. Define 
\begin{equation}\label{Lbetamprime}
L'_{\beta,m} = \cU^* \tL'_{\beta,m}\,  \cU ,
\end{equation}
which is a well-defined form on $\cD(\sqrt{N^\romL})$. Under the
\LGCond{1} assumption, we can compute
\[
L_{\beta,m}' = \one_{\cK\otimes\cK}\otimes\,\big(\D\Gamma(m(|k|)\otimes \one_\cF +
\one_\cF\otimes\,\D\Gamma(m(|k|)\big) -
\phi_\roml(\ri a_m \coupbl) + \phi_\romr(\ri a_m\coupbr).
\]
We warn the reader that if one follows Remark~\ref{Rem-SpinBoson} and
imposes an \LGCondp{1} assumption instead of \LGCond{1}, then
$a_m\coup_{\beta,\roml/\romr}$ may not be well-defined. 

\begin{theorem}\label{Thm-VirialL} Suppose \LGCond{2}. Let
  $\psi\in\cH^\romL$ be a bound state 
for the standard Liouvillean $L_\beta$, at inverse temperature
$0<\beta\leq \infty$.
Then $\la \psi, L'_{\beta,m}\psi\ra= 0$.
\end{theorem}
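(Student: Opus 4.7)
The strategy parallels that of Theorem~\ref{Thm-VirialH}, but is more delicate because $\tL_\beta$ is \emph{not} of class $C^1(\tA_m)$ in the standard sense: the candidate commutator $\tL'_{\beta,m}$ contains the unbounded piece $\one_{\cK\otimes\cK}\otimes\D\Gamma(m)$. The plan is to pass to the \JakPil glued picture, regularize $\tpsi$ via a cutoff in $\tN$, and extract the virial identity by combining the Helffer-Sj\"ostrand calculus with Lemmas~\ref{ComputeFirstComm} and~\ref{LprimeC1}. First I would set $\tpsi := \cU\psi$, so $\tL_\beta\tpsi = \lambda\tpsi$ for some real $\lambda$. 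Theorem~\ref{NumberBoundL}, which uses \LGCond{2}, gives $\tpsi \in \cD(\sqrt{\tN})$, so the form $\la \tpsi, \tL'_{\beta,m}\tpsi\ra = \la \psi, L'_{\beta,m}\psi\ra$ is well-defined. Approximate $\tpsi$ by $\psi_n := I_n(\tN)\tpsi = n(n+\tN)^{-1}\tpsi$; since $\tN\in C^1_\Mo(\tL_\beta)$ by Corollary~\ref{Cor-BasicLReg}, $I_n(\tN)$ preserves $\cD(\tL_\beta)$, so $\psi_n\in\cD(\tL_\beta)\cap\cD(\tN)$ and $\psi_n\to\tpsi$ in $\cD(\sqrt{\tN})$. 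Moreover, using $[\tN,\tL_\beta]^\circ\in\cB(\cD(\sqrt{\tN});\tcH^\romL)$ together with the operator bound $\|(\tN+1)^{1/2}(\tN+n)^{-1}\| = \cO(n^{-1/2})$, one obtains
\[
\|(\tL_\beta-\lambda)\psi_n\| = \|[\tL_\beta,I_n(\tN)]\tpsi\| = \cO(n^{-1/2}).
\]

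Choose $\chi\in C_0^\infty(\RR)$ with $\chi\equiv 1$ near $0$, and set $g_k(t) := t\chi(t/k)$; then $\|g_k\|_\infty = \cO(k)$, $|g_k'(t)|\leq C$ uniformly, and $g_k'(t)\to 1$ pointwise as $k\to\infty$. Because $\tN$ and $\tA_m$ commute (they are second quantizations acting on the same Fock factor), $g_k(\tA_m)$ maps $\cD(\sqrt{\tN})$ into itself. By the Helffer-Sj\"ostrand formula combined with Lemma~\ref{ComputeFirstComm},
\[
\la\psi_n,\ri[\tL_\beta, g_k(\tA_m)]\psi_n\ra = -\frac{1}{\pi}\int_\CC \bar\partial\tilde g_k(z)\,\la\psi_n,(\tA_m-z)^{-1}\tL'_{\beta,m}(\tA_m-z)^{-1}\psi_n\ra\,\D z.
\]
Using Lemma~\ref{LprimeC1} to commute $\tL'_{\beta,m}$ past one resolvent, and integrating by parts in $z$, the right-hand side equals $-\la\psi_n,\tL'_{\beta,m}g_k'(\tA_m)\psi_n\ra$ modulo an $\cO(k^{-1})$ remainder controlled by $\|\sqrt{\tN}\psi_n\|^2$. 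On the other hand, the left-hand side is bounded via the approximate eigenvalue identity:
\[
\big|\la\psi_n,\ri[\tL_\beta, g_k(\tA_m)]\psi_n\ra\big| \leq 2\|(\tL_\beta-\lambda)\psi_n\|\,\|g_k(\tA_m)\psi_n\| = \cO(k/\sqrt{n}).
\]

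Coupling the parameters so that $k/\sqrt{n(k)}\to 0$ (for instance $n(k) = k^4$), both error terms vanish in the joint limit, so $\la\psi_{n(k)}, \tL'_{\beta,m}g_k'(\tA_m)\psi_{n(k)}\ra\to 0$. Since $g_k'(\tA_m)\to\one$ strongly on $\cD(\sqrt{\tN})$ by bounded pointwise convergence applied in the joint spectral representation of the commuting pair $(\tN,\tA_m)$, and $\psi_n\to\tpsi$ in $\cD(\sqrt{\tN})$, dominated convergence applied to the quadratic form yields $\la\tpsi,\tL'_{\beta,m}\tpsi\ra = 0$, as required. The hard part is the careful matching of the two error scales: the left-hand side estimate grows linearly in $k$ while the Helffer-Sj\"ostrand remainder decays as $1/k$, forcing the choice $n(k)\gg k^2$; the $C^1_\Mo(\tA_m)$-regularity of $\tL'_{\beta,m}$ supplied by Lemma~\ref{LprimeC1} is the crucial input that allows the leading-order $\tL'_{\beta,m}g_k'(\tA_m)$ term to be isolated in the Helffer-Sj\"ostrand calculation.
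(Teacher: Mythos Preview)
Your approach is essentially correct, but it is considerably more elaborate than necessary and recycles machinery from the proof of Theorem~\ref{NumberBoundL} that is no longer needed once that theorem is in hand. The paper's argument is much shorter: since Theorem~\ref{NumberBoundL} already gives $\tpsi\in\cD(\sqrt{\tN})\cap\cD(\tL_\beta)$, Lemma~\ref{ComputeFirstComm} applies \emph{directly} to $\tpsi$ with no $\tN$-regularization. One simply takes the single bounded family $B_n=\tA_m I_n(\tA_m)=\ri n\one+n^2(\tA_m+\ri n)^{-1}$; because $\tpsi$ is an eigenvector and $B_n$ is bounded, $\la\tpsi,\ri[\tL_\beta,B_n]\tpsi\ra=0$, and Lemma~\ref{ComputeFirstComm} identifies this with $\la\tpsi,I_n(\tA_m)\tL'_{\beta,m}I_n(\tA_m)\tpsi\ra$. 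Letting $n\to\infty$ (using that $I_n(\tA_m)\to\one$ strongly and commutes with $\tN$) finishes the proof. No Helffer--Sj\"ostrand calculus, no coupling of scales $k$ and $n$, and no appeal to Lemma~\ref{LprimeC1} are needed.

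Two further comments. First, your double regularization ($\psi_n=I_n(\tN)\tpsi$ plus $g_k(\tA_m)$) mirrors the proof of the number bound, where one does \emph{not} yet know $\tpsi\in\cD(\sqrt{\tN})$; here that information is already available, which is precisely what makes the virial theorem easy. Second, your invocation of Lemma~\ref{LprimeC1} carries the hypothesis $\inf m>0$, which is not part of the statement of Theorem~\ref{Thm-VirialL} (though it holds for the $m_{\udelta}$ used in applications). The paper's route via Lemma~\ref{ComputeFirstComm} alone avoids this restriction entirely.
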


\begin{proof} First of all, we note that the expectation value is meaningful due to Theorem~\ref{NumberBoundL}.
Secondly, it suffices to prove the theorem in the glued coordinates,
where $L_{\beta,m}'$ is replaced by $\tL_{\beta,m}'$ and $\psi$ is an
eigenstate for $\tL_\beta$.

Let $\psi$ be a bound state for $\tL_\beta$.
Using the notation $I_n(\tA_m) = \ri n (\tA_m + \ri n)^{-1}$, cf. Lemma~\ref{Lemma-In},
we write 
\[
B_n = \tA_m I_n(\tA_m) = \ri n\one_{\tcH^\romL} + n^2(\tA_m + \ri n)^{-1}.
\]
 Then $B_n$ is bounded for all $n$.
 We compute using Lemma~\ref{ComputeFirstComm}
as a form on $\cD(\tL_\beta)\cap \cD(\sqrt{\tN})$:
\[
0 = \big\la\psi,\ri [\tL_\beta,B_n]\psi\big\ra =  \big\la\psi,I_n(\tA_m)\tL'_{\beta,m}I_n(\tA_m)\psi\big\ra.
\]
Since $\tN$ commutes with $I_n(\tA_m)$, we can - keeping Theorem~\ref{NumberBoundL} in mind - take the limit
$n\to\infty$, using \eqref{Reg-ToId} and \eqref{Reg-ToA}, and conclude the theorem. 
\end{proof}

The theorem of course remains true if we pass to the \JakPil glued operator $\tL_\beta$.
While the proof given above is at least formally identical to a standard proof of the usual virial theorem,
cf. the proof of Theorem~\ref{Thm-Virial}, the reader should keep in mind that it relies on the non-trivial Lemma~\ref{ComputeFirstComm} and Theorem~\ref{NumberBoundL}.

The virial theorem's are the tools that will allow us to deduce statements about
non-existence, local finiteness and finite multiplicity for eigenvalues, given
a so-called positive commutator estimate. This is the subject of Sect.~\ref{Sec-CommEst}.

%*****************************************************************************************
\subsection{A Review of Existence and Non-existence Results}\label{Subsec-ExisNonexis}
%*****************************************************************************************

The first theorem we highlight is due to G{\'e}rard \cite[Thm.~1]{Gerard2000} and establishes existence of a
ground state for the Hamiltonian $H$ under an \HGCond{1} condition.
Subsequently some improvements appeared in \cite{BruneauDerezinski2004,Ohkubo2009}.

\begin{theorem}\label{Thm-Gerard} Assume \HGCond{1}. Then the bottom of the spectrum
  $\Sigma$ of $H$ is an eigenvalue. 
%If the smallest eigenvalue of $K$ is non-degenerate, then so is $\Sigma$.
\end{theorem}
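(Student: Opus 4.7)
The plan is to realize a ground state of $H$ as a weak limit of ground states of infrared-regularized Hamiltonians, following the classical Bach-Fröhlich-Sigal/Gérard scheme. For each $\sigma\in(0,1]$, set $\coup_\sigma(k) = \one[|k|\geq\sigma]\coup(k)$ and let $H_\sigma = H_0 + \phi(\coup_\sigma)$ with bottom of spectrum $\Sigma_\sigma$. Since $\coup_\sigma$ is supported where the dispersion is bounded below by $\sigma$, the split $\gothh = \gothh^{\geq\sigma}\oplus\gothh^{<\sigma}$ yields
\begin{equation*}
H_\sigma = \widetilde H_\sigma\otimes\one_{\cF^{<\sigma}} + \one\otimes\,\D\Gamma(|k|)\big|_{\gothh^{<\sigma}},
\end{equation*}
with $\widetilde H_\sigma$ acting on $\cK\otimes\cF^{\geq\sigma}$. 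A variant of the argument for Theorem~\ref{HVZH} -- exploiting that $|k|\geq\sigma$ on the reduced one-particle space -- shows $\sess(\widetilde H_\sigma)\subseteq[\inf\sigma(\widetilde H_\sigma)+\sigma,\infty)$. Hence $\widetilde H_\sigma$ has an isolated ground state $\varphi_\sigma$, and $\psi_\sigma := \varphi_\sigma\otimes\vacuum^{<\sigma}$ is a normalized ground state of $H_\sigma$. Since $\|\coup_\sigma-\coup\|_0\to 0$ as $\sigma\to 0$, the norm-resolvent convergence noted after \eqref{BottomOfSpec} implies $\Sigma_\sigma\to\Sigma$.

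The second step is to collect $\sigma$-uniform bounds on $\psi_\sigma$. Theorem~\ref{Thm-NumberBoundH} applied to $H_\sigma$ gives $\|\sqrt N\psi_\sigma\|\leq C_1$ with $C_1$ depending only on $\|\nabla\coup\|$ and $\|\coup/|k|\|$, hence uniform in $\sigma$. The eigenvalue equation combined with \eqref{PhiEnergyBound-Norm} yields $\|(\one_\cK\otimes H_\ph)\psi_\sigma\|\leq C_2$ uniformly. Most importantly, the pull-through formula Proposition~\ref{PullthroughH}, applied to $\psi_\sigma$ at energy $\Sigma_\sigma$ and using $H_\sigma-\Sigma_\sigma\geq 0$, gives the pointwise infrared estimate
\begin{equation*}
\|a(k)\psi_\sigma\|_{\cH} \leq \tfrac{1}{\sqrt 2}\frac{\|\coup(k)\|}{|k|},
\end{equation*}
whose right-hand side is in $L^2(\RR^3)$ by the improved infrared decay of $\coup$ under \HGCond{1}.

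Finally, extract a subsequence with $\psi_{\sigma_n}\rightharpoonup\psi$ weakly in $\cH$. Norm-resolvent convergence together with $\Sigma_{\sigma_n}\to\Sigma$ identifies $\psi$ as a weak solution of $H\psi = \Sigma\psi$; once $\psi\neq 0$ is known, it is a genuine ground state. The main obstacle -- indeed the heart of the proof -- is therefore to rule out $\psi=0$. The plan is to upgrade weak to strong convergence by showing that $\{\psi_\sigma\}_{\sigma\in(0,1]}$ is relatively compact in $\cH$. The three uniform bounds from the previous paragraph furnish the three required tightness properties: the $\sqrt N$-bound controls photon-number tails ($\|\one[N\geq n]\psi_\sigma\|\leq C_1/\sqrt n$); the $H_\ph$-bound combined with Rellich compactness sector-by-sector handles tightness in photon momenta on bounded sets; and the pointwise bound $\|a(k)\psi_\sigma\|\leq C|k|^{-1}\|\coup(k)\|\in L^2$ prevents mass from escaping into the soft-photon sector as $\sigma\to 0$ -- this is precisely the point at which \HGCond{1} is essential, since \HGCond{0} with a small $\mu$ would be too weak to render $|k|^{-1}\coup$ square-integrable at the infrared. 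A standard Fock-space compactness argument then yields $\psi_{\sigma_n}\to\psi$ in norm, so $\|\psi\|=1$ and $\psi$ is the desired ground state of $H$.
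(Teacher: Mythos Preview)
The paper does not actually supply a proof of this theorem: it is stated in Subsect.~\ref{Subsec-ExisNonexis} as a review result and attributed to G\'erard \cite[Thm.~1]{Gerard2000}. Your outline is precisely the G\'erard scheme the paper has in mind (infrared cutoff, ground states of the massive problems, pull-through infrared bound, non-vanishing of the weak limit), so at the level of strategy you are on target.

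Two technical points deserve attention. First, the sharp cutoff $\coup_\sigma=\one[|k|\geq\sigma]\coup$ destroys \HGCond{1}: the distributional derivative $\nabla\coup_\sigma$ acquires a surface measure on $\{|k|=\sigma\}$, so Theorem~\ref{Thm-NumberBoundH} is not directly applicable to $H_\sigma$ and your uniform $\sqrt{N}$-bound is not justified as written. The fix is standard---use a smooth infrared cutoff, or regularize the dispersion instead---but it should be said. Second, the phrase ``Rellich compactness sector-by-sector'' is misleading: there is no position variable for the photons, and bounded sets in a fixed $n$-particle sector over a bounded momentum region are not precompact. The genuine mechanism, which you invoke but do not isolate, is the pull-through bound $\|a(k)\psi_\sigma\|\leq 2^{-1/2}\|\coup(k)\|/|k|$: it gives $\int_{|k|<\tau}\|a(k)\psi_\sigma\|^2\,\D k\to 0$ as $\tau\to 0$ uniformly in $\sigma$, so that up to an arbitrarily small error $\psi_\sigma$ coincides with $\Gamma(\one[|k|\geq\tau])\psi_\sigma$, which for fixed $\tau$ is a ground state of a Hamiltonian with a positive mass gap. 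The non-vanishing of the weak limit (equivalently, strong convergence) is then extracted from this soft-photon localization together with the number bound and finite-dimensionality of $\cK$, not from any Sobolev embedding.
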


In a somewhat surprising recent development Hasler and Herbst  proved  
that the Spin-Boson model, cf.~Remark~\ref{Rem-SpinBoson},
admits a ground state if the coupling is sufficiently weak \cite{HaslerHerbst2011a}. They used
the renormalization group method of Bach, Fr{\"o}hlich and Sigal \cite{BachFroehlichSigal1999}, which also yielded analyticity of the 
ground state energy as a function of the (small) coupling constant.
Analyticity had previously been established by Griesemer and Hasler \cite{GriesemerHasler2009},  
under a stronger \HGCond{1} assumption. Abdesselam subsequently gave a new powerful proof of analyticity  of the ground state energy, using cluster expansion techniques \cite{Abdesselam2011}.  
See also Problems~\ref{Prob-HH1} and~\ref{Prob-HH2} in the following subsection.

The following beautiful theorem, due to Derezi{\'n}ski, Jak\v{s}i{\'c} and Pillet 
establishes the existence of a $\beta$-KMS vector, which is in
particular an eigenstate of $L_\beta$ with eigenvalue zero. See
\cite[Thm.~7.3]{DerezinskiJaksic2003} and \cite[Appendix~B]{DerezinskiJaksicPillet2003}.
This improves on an earlier result of Bach, Fr{\"o}hlich and Sigal
\cite[Thm.~IV.3]{BachFroehlichSigal2000}, who required more infrared regularity. 
For the particular case of the spin-boson model, the result goes back to \cite{FannesNachtergaleVerbeure1988}.

\begin{theorem}\label{Thm-betaKMS} Suppose \LGCond{0}. Then for any inverse temperature
  $0<\beta < \infty$, we have $\Omega^\PF_\beta\in\cD(\e^{-\beta(L_0+\phi^\PF_{\beta,\roml}(\coup))})$
  and 
  \[
    \Omega^\PF_{\beta,\coup} :=  \frac{\e^{-\beta(L_0+\phi^\PF_{\beta,\roml}(\coup))}\Omega^\PF_\beta}
    {\bigl\|\e^{-\beta(L_0+\phi^\PF_{\beta,\roml}(\coup))}\Omega^\PF_\beta\bigr\|} \in \cP^\PF_\beta\cap\ker(L_\beta),
  \]
  where $\cP^\PF_\beta$ is the standard cone \eqref{PFCone}. Furthermore, $\Omega^\PF_{\beta,\coup}$
  is a faithful $\beta$-KMS vector for the standard Pauli-Fierz Liouvillean $L_\beta$.
\end{theorem}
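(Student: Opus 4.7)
The plan is to follow Araki's perturbation-theoretic strategy extended to the unbounded affiliated operator $V:=\phi^\PF_{\beta,\roml}(\coup)$, following the approach of \cite{DerezinskiJaksicPillet2003}. First, I would approximate $V$ by bounded self-adjoint truncations $V_n\in\gothM_{\beta,\roml}$ (for instance $V_n=\int_{[-n,n]}\lambda\,dE_V(\lambda)$, using that $V$ is affiliated with the left algebra as discussed in Subsection~\ref{Subsect-IntPF}). For each $V_n$, the classical Araki theorem produces
\[
\Omega_n:=\frac{\e^{-\beta(L_0+V_n)/2}\Omega^\PF_\beta}{\bigl\|\e^{-\beta(L_0+V_n)/2}\Omega^\PF_\beta\bigr\|}\in\cP^\PF_\beta\cap\ker\!\bigl(L_0+V_n-JV_nJ\bigr),
\]
with $\Omega_n$ a faithful $\beta$-KMS vector for the standard Liouvillean $L_{V_n}:=L_0+V_n-JV_nJ$. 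The task is then to control these data uniformly in $n$ and pass to the limit.

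The central technical tool is the Dyson expansion
\[
\e^{-\beta(L_0+V_n)/2}\Omega^\PF_\beta=\sum_{m\geq 0}(-1)^m\!\!\int\limits_{0\leq s_1\leq\cdots\leq s_m\leq \beta/2}\!\!\Delta^{s_1/\beta}V_n\Delta^{(s_2-s_1)/\beta}V_n\cdots V_n\,\Omega^\PF_\beta\,ds,
\]
where $\Delta=\e^{-\beta L_0}$ is the modular operator attached to $\Omega^\PF_\beta$. Using the modular identity $\Delta^{1/2}X\Omega^\PF_\beta=JX^*\Omega^\PF_\beta$ (valid for any $X$ affiliated with $\gothM_{\beta,\roml}$) together with Stein interpolation across the strip $0<\rRe z<1/2$, each simplex integrand rewrites as a product of positive-temperature field operators acting on $\Omega^\PF_\beta$. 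These are then estimated via the $\sqrt{N^\romL}$-bound~\eqref{WbetaBoundNorm}, exploiting that $\Omega^\PF_\beta$ lies in the domain of every power of $N^\romL$ (being a Gaussian-type vector in the Fock variables). Under \LGCond{0} the constant $\|\coup\|+\beta^{-1/2}\||k|^{-1/2}\coup\|$ is finite, and combined with the $1/m!$ volume of the $m$-simplex this yields a factorial bound $C^m/m!$ on each term, giving absolute convergence uniformly in $n$ and the desired convergence $\Omega_n\to\tilde\Omega:=\e^{-\beta(L_0+V)/2}\Omega^\PF_\beta$ in norm.

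With the limit vector in hand, each conclusion descends. The cone $\cP^\PF_\beta$ is norm-closed, so $\Omega^\PF_{\beta,\coup}=\tilde\Omega/\|\tilde\Omega\|\in\cP^\PF_\beta$. Strong convergence $V_n\to V$ and $JV_nJ\to JVJ$ on $\cC^\romL$, combined with essential self-adjointness of $L_\beta$ on $\cC^\romL$ from Subsection~\ref{Subsec-Liovillean}, yields strong resolvent convergence $L_{V_n}\to L_\beta$, so that $L_{V_n}\Omega_n=0$ passes to $L_\beta\Omega^\PF_{\beta,\coup}=0$. The $\beta$-KMS property survives because the associated normal states on $\gothM_{\beta,\roml}$ converge in the $\sigma$-weak topology, and the KMS analyticity condition is preserved under such limits by a Vitali argument applied to $z\mapsto\omega_n(A\tau_\coup^z(B))$. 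Faithfulness then follows from self-duality of $\cP^\PF_\beta$: a cone vector is cyclic iff it is separating, and cyclicity is inherited from $\Omega^\PF_\beta$ since $\e^{-\beta(L_0+V)/2}$ has dense range as the strong limit of invertible operators $\e^{-\beta(L_0+V_n)/2}$.

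The substantive obstacle is the Dyson-series step: making sense of the simplex integrals of products $\Delta^{s_1/\beta}V\cdots V\Omega^\PF_\beta$ for \emph{unbounded} $V$ requires a delicate interplay of modular analyticity, Stein interpolation, and Fock-space $N^\romL$-estimates --- precisely where the Derezi{\'n}ski-Jak\v{s}i{\'c}-Pillet extension of Araki's argument does its work. Everything downstream --- cone membership, kernel property, KMS analyticity, and faithfulness --- is then a routine application of standard modular-theoretic and perturbation-theoretic machinery.
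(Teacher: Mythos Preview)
The paper does not supply its own proof of this theorem; it is quoted as a result of Derezi{\'n}ski, Jak\v{s}i{\'c} and Pillet, with references to \cite[Thm.~7.3]{DerezinskiJaksic2003} and \cite[Appendix~B]{DerezinskiJaksicPillet2003}. The only argument the paper itself contributes is the Remark immediately following the theorem, which verifies that $\phi^\PF_{\beta,\roml}(\coup)$ is affiliated with $\gothM_{\beta,\roml}$ --- a prerequisite for invoking the machinery of \cite{DerezinskiJaksicPillet2003}. Your sketch is precisely the Araki--DJP route (bounded truncation, Araki's theorem for each $V_n$, Dyson/modular control, passage to the limit), so there is no substantive divergence to compare.

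Two minor cautions. First, the theorem as stated in the paper has $\e^{-\beta(L_0+V)}$ rather than $\e^{-\beta(L_0+V)/2}$; make sure you match the paper's convention. Second, your faithfulness argument (``$\e^{-\beta(L_0+V)/2}$ has dense range as the strong limit of invertible operators'') is not sound as written --- strong limits of invertibles need not have dense range. In the Araki/DJP framework, faithfulness is obtained differently, via the modular structure (the perturbed vector remains separating because the perturbed and unperturbed modular automorphism groups share the same fixed algebra), not by a range argument.
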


It is worth noting that although the above theorem mirrors G{\'e}rard's result for the
Hamiltonian, it holds true for more singular interactions. In
particular, one can not rule out a situation where $H$ has no ground
state, but $L_\beta$ has a $\beta$-KMS vector in its kernel.
Indeed, this situation actually occurs in the $\nu=1$ case.
Here the Pauli-Fierz Hamiltonian is of the type considered by Derezi{\'n}ski in \cite{Derezinski2003},
where it is referred to as a van Hove Hamiltonian. Consider
\[
\coup(k) = |k|^{-\frac12}\hg(k),
\]
with $\hg\in C_0^\infty(\RR^3)$ real-valued playing the role of an ultraviolet cutoff.
We put $\hg(0) = 1$ such that the infrared behavior is captured by $|k|^{-1/2}$.
 It satisfies \LGCond{0} needed for Theorem~\ref{Thm-betaKMS}, 
but not \HGCond{1} needed for Theorem~\ref{Thm-Gerard}.

With this coupling the Hamiltonian becomes of infrared type II, again referring to the terminology of \cite{Derezinski2003},
and does not admit a ground state. The ground state should be the coherent state $\e^{\ri \phi(\ri|k|^{-3/2}\hg)}\vacuum$,
but this is not in the Fock-space, since $|k|^{-3/2}\hg\not\in L^2(\RR^3)$.
To see what happens with the standard \myquote{van Hove} Liouvillean, we observe that for $\nu=1$ (and real $\hg$) we have 
\[
\coupbl = \coupbr = \big(\sqrt{1+\rho_\beta} - \sqrt{\rho_\beta}\big) |k|^{-\frac12}\hg.
\]
Expanding $\rho_\beta$ around $k=0$, we see that 
$\sqrt{1+\rho_\beta} - \sqrt{\rho_\beta} \sim \sqrt{\beta|k|}/2$. Hence
\begin{equation}\label{vanHoveCoupAtZero}
\coup_{\beta,\roml/\romr} \sim \frac{\sqrt{\beta}}{2}
\end{equation}
at $k=0$. Hence, we can diagonalize the Liouvillean with a tensor product of Weyl operators as follows.
Put
\[
V = \e^{\ri\phi(\ri |k|^{-1}\coupbl)}\otimes  \e^{\ri\phi(\ri |k|^{-1}\coupbr)},
\]
which due to \eqref{vanHoveCoupAtZero} is a well-defined unitary operator. Then
$V^* L_\beta V = L_0$ and $V\dvacuum$ is the only eigenstate, and in particular
the $\beta$-KMS state. Note that the energy shift one gets for the Hamiltonian
does not occur here, since the shift from the left and right components cancel 
each other out. 

\begin{remark} In order to invoke the general results of
  \cite{DerezinskiJaksicPillet2003},
one must establish first that $\phi_{\beta,\roml}(G)$ is a
perturbation affiliated with the von Neumann algebra
$\gothW_{\beta,\roml}$. The key is to observe that by the Trotter
product formula \cite[Thm.~VIII.31]{ReedSimonI1980}:
\[
\slim_{n\to\infty}\bigl(\e^{\ri\phi_{\beta,\roml}(\coup_1)/n}\e^{\ri\phi_{\beta,\roml}(\coup_2)/n}\bigr)^n
= \e^{\ri\phi_{\beta,\roml}(G_1+G_2)},
\]
such that the set of $\coup$'s for which
$\e^{\ri\phi_{\beta,\roml}(\coup)}$ is affiliated with
$\gothW_{\beta,\roml}$ is a  vector space over $\CC$.

If $\coup$ is of the form $\coup(k) = \coup_0 g(k)$, with $g\in L^2(\RR^\nu)$
and $\coup_0\in \Mat_{\nu}(\CC)$ self-adjoint, then one can find a
basis $u_1,\dots,u_\nu$ for $\CC^\nu$, consisting of eigenvectors with
$\coup_0 u_\ell = \lambda_\ell u_\ell$ and $\lambda_\ell\in\RR$. 
Since, for such $\coup$, we have $\phi_{\beta,\roml}(\coup) =
\shuf^*(\coup_0\otimes\one_\cK\otimes \phi^\AW_{\beta,\roml}(g))\shuf$,
we get 
\[
\e^{\ri \phi_{\beta,\roml}(\coup)} = \sum_{\ell=1}^\nu \shuf^*\bigl( \vert u_\ell\ra \la
u_\ell\vert \otimes\one_\cK\otimes
W^\AW_{\beta,\roml}(\lambda_\ell g)
\bigr)\shuf \, \in\, \tgothM_{\beta,\roml}\subseteq \gothM_{\beta,\roml}.
\]
Any $\coup_0 \in\Mat_\nu(\CC)$ can be written as a linear combination
of self-adjoint matrices $\coup_0 = \coup_1 + \ri \coup_2$, and hence
-- by the Trotter argument --
$\phi_{\beta,\roml}(\coup) = \phi_{\beta,\roml}(G_1 g) +
\phi_{\beta,\roml}(G_2\ri g)$, with 
$\coup(k)= \coup_0 g(k)$, is also affiliated with $\gothW_{\beta,\roml}$.

In order to deal with the general case, we take $\coup\in
L^2(\RR^\nu;\Mat_\nu(\CC))$. Assume $\{\coup_n\}\subset
L^2(\RR^\nu;\Mat_\nu(\CC))$ is a sequence converging to $\coup$ in
norm and such that $\phi_{\beta,\roml}(\coup_n)$ is affiliated with
$\gothM_{\beta,\roml}$. Such a sequence exist, since one can choose
$\coup_n$ to be simple functions, i.e. a finite linear combination of
coupling functions of the form just handled above. Now one can show
that on the analytic vectors $\psi\in \cC^\romL$, cf.~\eqref{LCore},
we have $\lim_{n\to\infty} \e^{\ri \phi_{\beta,\roml}(\coup_n)}\psi = \e^{\ri
  \phi_{\beta,\roml}(\coup)}\psi$. Hence, by density of $\cC^\romL$ in $\cH^\romL$, we get
convergence in the strong operator topology, and this concludes the proof of the perturbation
$\phi_{\beta,\roml}(\coup)$ being affiliated with $\gothW_{\beta,\roml}$. 
\remarkQED\end{remark}

 Theorem~\ref{Thm-betaKMS} is also the missing ingredient in the 

\begin{proof}[Proof of Theorem~\ref{HVZL}] It suffices to prove the theorem for $\tL_\beta$
and show that $\sigma((\tL_\beta + \ri)^{-1}) = (\RR+\ri)^{-1}$.
Fix $\lambda \in \RR\backslash \{0\}$ and put $\psi_m = m(\tN+m)^{-1}\Omega^\PF_{\beta,\coup}$, for $m\in\NN$.

 Since $\tN$ is of class $C^1_\Mo(\tL_\beta)$, cf.~Corollary~\ref{Cor-BasicLReg}~\ref{Item-NC1L-t}, we find that
$\psi_m\in\cD(\tN)\cap\cD(\tL_\beta)$. Furthermore, 
\[
\tL_\beta \psi_m = m \tL_\beta(\tN+m)^{-1}\Omega^\PF_{\beta,\coup} = 
(\tN+m)^{-1}\tphi(\ri \tG_\beta)\psi_m.
\]
Since the right-hand side is an element of $\cD(\tN)$, we can conclude that $\tL_\beta \psi_m\in\cD(\tN)$
and
\[
(\tN+1)^\frac12\tL_\beta \psi_m  =
(\tN+1)^\frac12(\tN+m)^{-1}\tphi(\ri \tG_\beta)\psi_m.
\]
We thus get
\begin{align*}
\bigl\|(\tN+1)^\frac12\tL_\beta \psi_m \bigr\| & \leq C \bigl\|m^\frac12 (\tN+1)^\frac12 (\tN+m)^{-1}\Omega^\PF_{\beta,\coup}\bigr\|\\
&\leq C \bigl\|m(N+m)^{-1}\Omega^\PF_{\beta,\coup}\bigr\|^\frac12  \bigl\|(N+1)(N+m)^{-1}\Omega^\PF_{\beta,\coup}\bigr\|^\frac12,
\end{align*}
where the last factor on the right-hand side goes to zero in the large $m$ limit.
Hence, we can pick $m$ large enough such that with $\psi = \psi_m$ 
we have
\begin{equation}\label{StrongHVZBound}
\bigl\|(\tN+1)^\frac12\tL_\beta\psi\bigr\| \leq\epsilon.
\end{equation}

Choose $h_n\in C_0^\infty(\RR)$ real-valued, with $\|h\| = 1$ ($L^2$-norm)
and $\supp(h)\subseteq [-1,1]$. Put $h_n(\omega,\Theta) =h_n(\omega) =  n^{1/2}h(n(\omega-\lambda))$. 
Define $\psi_n = a^*(h_n)\psi$.
Using now the pullthrough formula from Proposition~\ref{PullthroughL}, cf.~the proof of 
Theorem~\ref{HVZH},
we get for $\varphi\in \cD(\tN)$
\begin{align*}
& \bigl\la\vphi,\bigl((\tL_\beta +\ri)^{-1}-(\lambda + \ri)^{-1}\bigr)\psi_n \bigr\ra\\
& = \int_{\RR\times S^2} h_n(\omega)\bigl\la \bigl((L_\beta+\omega - \ri)^{-1}-(\lambda - \ri)^{-1}\bigr) a(\omega,\Theta)\vphi,\psi\bigr\ra\,\D \omega\D\Theta\\
&\quad 
- \int_{\RR\times S^2} \frac{h_n(\omega)}{\sqrt{2}} \big\la \tcoup_\beta(\omega,\Theta)\otimes\one_{\tcF}\vphi, 
(\tL_\beta+\omega + \ri)^{-1}\psi\bigr\ra\, \D \omega\D\Theta.
\end{align*}
Inserting $\one= (\tN+2)^{1/2}(\tN+2)^{-1/2}$ to handle the annihilation operator,
we estimate the first integrand on the right-hand side
\begin{align*}
\bigl|h_n(\omega) & \bigl\la \bigl((\tL_\beta+\omega - \ri)^{-1}-(\lambda - \ri)^{-1}\bigr) a(k)\vphi,\psi\bigr\ra\big|\\
&\leq |h_n(\omega)| \bigl\| (\tN+2)^{-\frac12}a(k)\vphi\bigr\|\\
&\quad \times
\bigl\|(\tN+2)^{\frac12}(\tL_\beta+\omega + \ri)^{-1}\bigl(\tL_\beta + (\omega-\lambda)\bigr)\psi\bigr\|\\
&\leq C |h_n(\omega)|\bigl\|(\tN+1)^\frac12\bigl(\tL_\beta + (\omega-\lambda)\bigr)\psi\bigr\|\bigl\|a(k)(\tN+1)^{-\frac12}\vphi\bigr\|\\
& \leq  C |h_n(\omega)| \Bigl(\epsilon + \frac1{n}\Bigr) \bigl\|a(k)(\tN+1)^{-\frac12}\vphi\bigr\|.
\end{align*}
Here we used \eqref{StrongHVZBound} in the last step,
and in  the second step we employed  Proposition~\ref{Prop-DomInv}~\ref{Item-ResInv-MoPlus},
to move the root of the number operator through $(\tL_\beta + \omega+\ri)^{-1}$.
We now conclude the bound
\[
\bigl| \big\la \vphi,\bigl((\tL_\beta+\ri)^{-1}-(\lambda+\ri)^{-1}\bigr)\psi_n\bigr\ra\bigr|
\leq C(\epsilon+o(1))\|\vphi\|.
\]
As in the proof of Theorem~\ref{HVZH}, it remains to observe that $\|\psi_n\|$ is bounded from below uniformly in $n$.
\end{proof}

The final result we discuss in this subsection, is a consequence of
Theorem~\ref{Thm-betaKMS} and a theorem of Jadczyk \cite{Jadczyk1969}, which has as a
consequence that existence and simplicity of the $0$ eigenvalue for the standard
Liouvillean implies non-existence of non-zero eigenvalues! We refer
the reader to the short and very elegant paper \cite{JaksicPillet2001c}
for details, which are entirely operator algebraic in nature.

\begin{theorem}\label{Thm-Jadczyk} Suppose \LGCond{0}. Let $0<\beta<\infty$ and suppose
  that $0$ is a simple eigenvalue for $L_\beta$. Then $\sigma_\pp(L_\beta)=\{0\}$.
\end{theorem}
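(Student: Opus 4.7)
The approach is to combine Theorem~\ref{Thm-betaKMS}, which supplies a cyclic, separating, faithful $\beta$-KMS vector $\Omega := \Omega^\PF_{\beta,\coup}\in\ker(L_\beta)$, with the Tomita--Takesaki modular theory of the standard form $(\gothM_{\beta,\roml},\cH^\romL,\cP^\PF_\beta,J,\Omega)$; this is essentially the content of Jadczyk's theorem in the present setting. Writing $\omega(\cdot)=\la\Omega,\cdot\,\Omega\ra$, the $\beta$-KMS condition identifies the modular group with the dynamics via $\Delta^{\ri t}=\e^{-\ri\beta t L_\beta}$, so that spectral data of $L_\beta$ and of $\Delta$ are in bijection.

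Suppose, for contradiction, that $L_\beta\Psi=\lambda\Psi$ for some $\Psi\neq 0$ and $\lambda\neq 0$. The crucial first step, where Jadczyk's theorem enters, is to produce from $\Psi$ an operator $A$ affiliated with $\gothM_{\beta,\roml}$ satisfying $A\Omega=\Psi$ and such that $A$ is an eigenoperator of the interacting dynamics, $\tau_\coup^t(A)=\e^{\ri\lambda t}A$. Heuristically, by cyclicity one writes $\Psi=\lim B_n\Omega$ with $B_n\in\gothM_{\beta,\roml}$, spectrally smears each $B_n$ with functions whose Fourier transforms concentrate at $\{\lambda\}$ (the smeared operators remain in the algebra and still approximate $\Psi$ well), and extracts $A$ as an affiliated limit; the separateness of $\Omega$ makes the limit unambiguous.

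Granted such an $A$, the remainder is purely algebraic. Both $A^*A$ and $AA^*$ are positive and invariant under $\tau_\coup^t$, hence $(A^*A)\Omega$ and $(AA^*)\Omega$ lie in $\ker(L_\beta)=\CC\Omega$; by separateness of $\Omega$, $A^*A=c\,\one$ and $AA^*=c'\,\one$ for scalars $c,c'\geq 0$. From $\Psi\neq 0$ we have $c=\|\Psi\|^2>0$, so $A/\sqrt{c}$ is an isometry, and the fact that $AA^*$ is likewise a scalar multiple of the identity forces it to be unitary and $c=c'$. On the other hand, the $\beta$-KMS condition applied to the analytic eigenoperator $A$ yields $\omega(AA^*)=\e^{-\beta\lambda}\omega(A^*A)$, i.e. $c'=\e^{-\beta\lambda}c$. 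Combining these forces $\e^{-\beta\lambda}=1$, and since $\beta\in(0,\infty)$ this gives $\lambda=0$, contradicting our assumption.

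The main obstacle is the first step, namely extracting from the eigenvector $\Psi$ of $L_\beta$ a genuine (possibly unbounded) affiliated eigenoperator $A\in\gothM_{\beta,\roml}$ with $A\Omega=\Psi$; this is where the cyclic--separating structure of $\Omega$ and the modular identification $\Delta=\e^{-\beta L_\beta}$ are used in an essential way. Once $A$ is in hand, the conclusion follows almost formally from KMS analyticity, the simplicity of $\ker(L_\beta)$, and the separating property of $\Omega$.
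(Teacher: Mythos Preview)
Your argument is correct and is precisely the operator-algebraic route the paper invokes: the paper gives no self-contained proof here, instead citing Jak\v{s}i\'c--Pillet \cite{JaksicPillet2001c} (building on Jadczyk's theorem), whose argument is exactly the modular-theory/KMS computation you outline. The one refinement worth noting is that the spectrally smeared eigenoperator $A$ actually lands in $\gothM_{\beta,\roml}$ (bounded), so the step $(A^*A)\Omega\in\CC\Omega\Rightarrow A^*A\in\CC\one$ via the separating property goes through without any affiliated-operator technicalities.
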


\subsection{Open Problems III}

As the reader may have observed, the bottleneck for applying the 
virial theorem to the standard Liouvillean is the number bound
in Theorem~\ref{NumberBoundL}, 
where we -- compared with the Hamiltonian case in Theorem~\ref{Thm-NumberBoundH} --
need much stronger assumptions.
This is in particular unfortunate, since the positive commutator estimates we establish in the following
section hold under an \LGCond{1} assumption, not the \LGCond{2} assumption needed for the number bound.

\begin{problem} Can the number bound in Theorem~\ref{NumberBoundL} be established under an
\LGCond{1} condition, or some other condition truly weaker than \LGCond{2}.
\end{problem}

The author does not know one way or the other what the answer may be to this problem. We remark that, although
the number bound is a bottleneck viz a viz the structure of the point spectrum, the \LGCond{2} condition
is what one would expect for a limiting absorption principle to hold, given a positive commutator estimate.
Hence, from a broader perspective, the \LGCond{2} condition will appear anyway.

The proof of the number bound in Theorem~\ref{NumberBoundL} did not make
essential use of the small system being finite dimensional. Hence, we expect the theorem
to remain true also for confined small systems, like the standard Liouvillean for the confined Nelson model.

\begin{problem} 
 Extend Theorem~\ref{NumberBoundL} to the case where the small system $\cK$ is not necessarily finite dimensional.
\end{problem}

As mentioned in Subsect.~\ref{Subsec-ExisNonexis}, Hasler and Herbst established in \cite{HaslerHerbst2011a}
the existence of an interacting ground state for the spin-boson model
with physical infrared singularity $|k|^{-1/2}$, provided the coupling is
sufficiently weak. This result came as a complete surprise to the author,
since it is contrary to the solvable model with $\cK=\CC$ and the
confined Nelson model
\cite{BruneauDerezinski2004,Derezinski2003,Hirokawa2006,LorincziMinlosSpohn2002,Panati2009}. 
Furthermore, it goes beyond what was considered
the natural borderline established in \cite{Gerard2000}, cf. also
\cite{AraiHirokawaHiroshima2007,BruneauDerezinski2004,Ohkubo2009}. 
In fact, there has been speculation that gauge invariance
of the minimally coupled model was responsible for the existence
result of Griesemer-Lieb-Loss 
\cite{GriesemerLiebLoss2001,LiebLoss2003}, something that was however
debunked by Hasler-Herbst \cite{HaslerHerbst2011c,HaslerHerbst2012}, 
who proved that existence of a ground state, at weak coupling, remains true even after
dropping the quadratic term in the minimally coupled model thus breaking gauge invariance.

The $|k|^{-1/2}$ infrared behavior of $\coup$ is sometimes called the ``ohmic case'', a
terminology we use below.

\begin{problem}\label{Prob-HH1} Does there exists a critical coupling at which the
  ground state seize to exist for the spin-boson model considered by
  Hasler and Herbst? Or does a
  ground state exist for all couplings?
\end{problem}

\begin{problem}\label{Prob-HH2}
 Characterize the properties of ohmic $\coup$
  that ensures existence of a ground state for $H$ in the weak
  coupling regime. As a simpler problem, consider $\coup$'s of the form
  $\coup(k) = |k|^{-1/2}\hg(k) G_0$ as
  discussed in Remark~\ref{Rem-SpinBoson}.
\end{problem}

For the thermal standard Liouvillean, one has existence of a $\beta$-KMS vector in the kernel of $L_\beta$
at all values of $\beta$, cf. Theorem~\ref{Thm-betaKMS}, 
and furthermore the modular structure ensures that a simple $0$-eigenvalue implies
absence of non-zero eigenvalues, cf. Theorem~\ref{Thm-Jadczyk}. These results 
were derived from the underlying algebraic structure of standard Liouvilleans, 
and may not have natural operator theoretic proofs. It would be 
natural to ask if it is not possible to extract even more information
from the underlying algebraic framework.

\begin{problem}\label{MainProblem} Can one exploit the underlying algebraic structure to infer more
information on the point spectrum and pertaining eigenstates, than what is afforded by Theorem~\ref{Thm-Jadczyk}?
In particular, can one use algebraic arguments to conclude that zero
is in fact a simple eigenvalue of $L_\beta$?
\end{problem}

It is well known that establishing instability -- or outright absence -- of embedded eigenvalues away from
zero coupling, or some other explicitly solvable regime, is a daunting
task. It is for example not known if embedded
(necessarily negative) eigenvalues of $N$-body Schr{\"o}dinger
operators are unstable under perturbations of pair-potentials. One can
only show generic instability under perturbations by external potentials,
cf.~\cite{AgmonHerbstSasane2011,AgmonHerbstSkibsted1989}. In
\cite{FaupinMoellerSkibsted2011b} a Fermi Golden Rule was established
at arbitrary coupling for the Hamiltonian, but to conclude instability
one needs better control of eigenstates beyond the ground state (where
Perron-Frobenius theory applies). 
The case of perturbation around zero coupling is far better understood 
\cite{BachFroehlichSigal1999,BachFroehlichSigal2000,BachFroehlichSigalSoffer1999,
DerezinskiJaksic2001,DerezinskiJaksic2003,
FroehlichMerkli2004b,Golenia2009,Merkli2001}. 
Hence, whether or not the kernel of the standard Liouvillean is
generically one-dimensional beyond the weak-coupling regime, is not 
a question one is likely to answer using perturbation theory of embedded eigenvalues only.

There is perhaps an unexplored avenue available for investigating Problem~\ref{MainProblem},
which we now discuss. Let $T = \e^{-L_\beta^2/2}$. Then $T$ is a bounded self-adjoint operator, with
an eigenvalue sitting at the top of its spectrum. Appealing to the Fourier transform, we observe that
\[
T = (2\pi)^{-\frac12} \int_\RR \e^{-s^2/2}\e^{\ri s L_\beta}\,\D s.
\]
Since $L_\beta$ is a standard Liouvillean, $\e^{\ri s L_\beta}$ preserves the self-dual 
standard cone $\cP^\PF_\beta$, cf.~\eqref{PFCone}. Hence $T$ is positivity preserving, and it thus seems natural 
to apply Perron-Frobenius theory in an abstract form. There is however an obstacle to this approach, namely
the fact that the order on $\cH^\romL$ induced by the standard cone is not a lattice.
Hence, one can not map the problem into a function space, where the standard cone goes into
the positive functions. See \cite{Penney1976}. If that had been possible one could have made use of
ergodicity arguments as in \cite{HaslerHerbst2011a}.

We stress that we consider Problem~\ref{MainProblem} to be the most
important problem highlighted in these notes. 
The reason being that, due to Theorem~\ref{Thm-Jadczyk}, it reduces the question of
establishing mixing a.k.a. \myquote{return 
to equilibrium} beyond the weak coupling regime,
to positive commutator estimates and limiting absorption principles. 
Something we see no fundamental obstacle to obtaining, 
although the picture is not yet entirely clear beyond small temperatures. See Sect.~\ref{Sec-LAP}.

Finally, it would be natural, in the spirit of \cite{Derezinski2003}, to
investigate the types of ultraviolet and infrared behavior of the standard Liouvillean
when $\nu=1$, which is a solvable case. See also the discussion on ground 
states versus $\beta$-KMS states when $\nu=1$ in the previous subsection, which indicates that the infrared  
type  II property, cf. \cite{Derezinski2003}, 
characterizes existence of $\beta$-KMS states. 

\begin{problem}
Classify possible types of ultraviolet and infrared behavior of the ``van Hove Liouvillean'', i.e. when $\nu=1$.
\end{problem}

\newpage

\section{Commutator Estimates}\label{Sec-CommEst}

\subsection{The Weak Coupling Regime}

 The weak coupling regime is very well understood. To explore it, we replace
 $\coup$ by $\lambda\coup$, where the absolute value of $\lambda\in\RR$ is small.  In fact, obtaining
 positive commutator estimates in this regime is an easy exercise.
 Indeed, choosing $a$ to be generator of radial translation \eqref{aRadTrans},
 we get using \eqref{PhiNumberBound}
 \[
 H' = N - \lambda\phi(\ri a \coup)\geq  \frac12 N - \lambda^2\|a\coup\|^2.
 \]
 Here we abbreviated $H' = H'_1$, cf.~\eqref{Hmprime}.
 Choosing $\lambda$ such that  $\lambda^2\|a\coup\|^2\leq 1/4$ yields
 \begin{equation}\label{WeakCoupBoundH}
 H'\geq \frac14\one_\cH - \frac14 \one_\cK\otimes \vacuum \tvacuum.
 \end{equation}
  We can now prove
  
  \begin{corollary}\label{WeakCoupPPH} Suppose \HGCond{1} and let $\lambda_0 = \|a\coup\|^{-1}$. 
  For $\lambda\in (-\lambda_0,\lambda_0)$
  the pure point spectrum $\sigma_\pp(H)$ is finite and all eigenvalues have finite multiplicity.
  Indeed, $\dim\Ran(P) \leq \nu$, where $P=E(\sigma_\pp(H))$ is the projection onto the pure point subspace.
  (Here $H$ is defined with $\coup$ replaced by $\lambda\coup$.)
  \end{corollary}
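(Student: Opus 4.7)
The plan is to combine the positive commutator estimate \eqref{WeakCoupBoundH} with the virial theorem (Theorem~\ref{Thm-VirialH}) to force every normalized eigenvector of $H$ (with coupling replaced by $\lambda\coup$) into the vacuum sector $\cK\otimes \CC\vacuum$, which is $\nu$-dimensional.

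First I would fix $\lambda\in(-\lambda_0,\lambda_0)$, let $\psi\in\cH$ be a normalized eigenvector of $H$, and apply Theorem~\ref{Thm-VirialH} with $m\equiv 1$ (which requires $\psi\in\cD(\sqrt{N})$ provided by the zero-temperature number bound, Theorem~\ref{Thm-NumberBoundH}, under \HGCond{1}). This yields $\langle \psi, H'\psi\rangle = 0$, where $H' = N - \lambda\phi(\ri a\coup)$ in the form sense on $\cD(\sqrt{N})$.

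Next I would invoke the weak-coupling positive commutator bound \eqref{WeakCoupBoundH}, namely $H'\geq \frac14\one_\cH - \frac14\,\one_\cK\otimes \vert 0\ra\la 0\vert$, which is valid for the chosen range of $\lambda$. Combined with the virial identity this yields
\[
0 \,=\, \la\psi,H'\psi\ra \,\geq\, \frac14\|\psi\|^2 - \frac14 \bigl\|(\one_\cK\otimes\vert 0\ra\la 0\vert)\psi\bigr\|^2,
\]
so that $\|(\one_\cK\otimes\vert 0\ra\la 0\vert)\psi\|\geq \|\psi\| = 1$. Since the projection is a contraction, equality must hold, which forces $\psi\in \cK\otimes\CC\vacuum$.

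Thus every eigenvector lies in the $\nu$-dimensional subspace $\cK\otimes\CC\vacuum$, and so $\Ran(P)\subseteq \cK\otimes\CC\vacuum$, giving $\dim\Ran(P)\leq \nu$. In particular $\sigma_\pp(H)$ is finite (at most $\nu$ distinct eigenvalues) with total multiplicity bounded by $\nu$. The main conceptual obstacle is just verifying the virial theorem applies to these eigenvectors under \HGCond{1}; once that is granted, the bound is essentially immediate from the positive commutator estimate. No compactness-of-resolvent machinery is needed because the ``error'' in the commutator estimate is already finite-rank.
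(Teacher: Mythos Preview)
Your deduction from \eqref{WeakCoupBoundH} is formally valid, but it proves too much: you conclude that every eigenvector of $H$ lies in the vacuum sector $\cK\otimes\CC\vacuum$. For $\lambda\neq 0$ this is generically false—a state $v\otimes\vacuum$ is an eigenvector of $H$ only if $\coup(k)v=0$ for a.e.\ $k$, so for nondegenerate couplings $H$ would have \emph{no} eigenvectors at all, contradicting the existence of an interacting ground state (Theorem~\ref{Thm-Gerard}). The culprit is the precise constant in \eqref{WeakCoupBoundH}: from $H'\geq\tfrac12 N-\lambda^2\|a\coup\|^2$ one actually obtains $H'\geq\tfrac14\one_\cH-\tfrac12\,\one_\cK\otimes\vacuum\tvacuum$ (check the vacuum sector, where the right-hand side of \eqref{WeakCoupBoundH} vanishes but $\tfrac12 N - \lambda^2\|a\coup\|^2$ is strictly negative). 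With the corrected bound your equality argument collapses to $\|(\one_\cK\otimes\vacuum\tvacuum)\psi\|^2\geq\tfrac12$, from which no dimension bound follows.

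The paper instead sums the virial identity over an orthonormal family $\{\psi_j\}_{j=1}^n$ of eigenvectors and uses the rank estimate $\sum_{j}\langle\psi_j,\one_\cK\otimes\vacuum\tvacuum\,\psi_j\rangle\leq\Tr(\one_\cK\otimes\vacuum\tvacuum)=\nu$. This trace argument is robust: with any bound of the form $H'\geq a\one_\cH-b\,\one_\cK\otimes\vacuum\tvacuum$ one gets $0\geq an-b\nu$, hence $n\leq(b/a)\nu$, regardless of whether $a=b$. Your single-eigenvector approach hinges on the knife-edge case $a=b$, which does not actually occur here. The fix is simply to adopt the trace argument: sum over orthonormal eigenvectors and bound $\sum_j\|P_0\psi_j\|^2$ by the rank of $P_0$.
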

  
  \begin{proof} Let $\{\psi_j\}_{j=1}^n$ be an orthonormal set of eigenvectors for $H$.
  We can use the virial theorem, Theorem~\ref{Thm-VirialH}, together with \eqref{WeakCoupBoundH}
  to estimate
  \[
  0 = \sum_{j=1}^n \la \psi_j,H'\psi_j\ra \geq \frac{n}4 - \frac{\Tr(\one_\cK\otimes \vacuum\tvacuum)}4
  = \frac{n-\nu}4.
  \]
  This implies $n\leq \nu$ and concludes the proof.
  \end{proof}
 
 Similarly for the Liouvillean, where we can again choose
 $\ta = \ri \frac{\D}{\D\omega}\otimes\one_{L^2(S^2)}$ to be the generator of translations in the 
 glued variable. Then
 \[
 \tL_\beta' = \tN^\romL - \lambda \phi(\ri\ta\tcoup_\beta) \geq \frac12  \tN^\romL - \lambda^2\big\|\ta \tcoup_\beta\big\|^2,
 \] 
 where we again abbreviated $\tL_\beta'= \tL_{\beta,1}'$, cf.~\eqref{tLbetamprime}.
 Hence, we arrive at 
 
 \begin{corollary}\label{WeakCoupPPL} Suppose \LGCond{2} and let $\lambda_0=\|\ta\tcoup\|^{-1}$. 
  For $\lambda\in (-\lambda_0,\lambda_0)$
  the pure point spectrum $\sigma_\pp(L_\beta)$ is finite and all eigenvalues have finite multiplicity.
  Indeed, $\dim\Ran(P_\beta)\leq\nu^2$, where $P_\beta = E^\beta(\sigma_\pp(L_\beta))$ 
  is the projection onto the pure point subspace.
  (Here $L_\beta$ is defined with $\coup$ replaced by $\lambda\coup$.) 
  \end{corollary}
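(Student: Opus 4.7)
The plan is to mimic verbatim the argument of Corollary~\ref{WeakCoupPPH}, simply replacing the Hamiltonian virial theorem by its Liouvillean analogue and using the positive commutator estimate displayed just before the statement. The only conceptual novelty is that the relevant finite-dimensional sector is now $\ker(\tN^\romL) = \cK\otimes\cK\otimes\CC\vacuum$ (in the glued picture), which has dimension $\nu^2$ rather than $\nu$; this accounts for the bound in the conclusion.

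First I would pass to the \JakPil glued coordinates via $\cU$. Given an orthonormal family $\{\psi_j\}_{j=1}^n\subset\cH^\romL$ of eigenvectors of $L_\beta$, the family $\{\tpsi_j := \cU\psi_j\}_{j=1}^n\subset\tcH^\romL$ consists of orthonormal eigenvectors of $\tL_\beta$. Since \LGCond{2} is in force, Theorem~\ref{NumberBoundL} ensures that each $\tpsi_j\in\cD(\sqrt{\tN^\romL})$, so the expectation value $\la\tpsi_j,\tL_\beta'\tpsi_j\ra$ is meaningful, and Theorem~\ref{Thm-VirialL} (applied with $m\equiv 1$ and transported by $\cU$, cf.~\eqref{Lbetamprime}) yields
\[
\la\tpsi_j,\tL_\beta'\tpsi_j\ra = 0, \qquad j=1,\dotsc,n.
\]

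Next, I would invoke the positive commutator bound displayed immediately before the corollary: combining
\[
\tL_\beta' \geq \tfrac{1}{2}\tN^\romL - \lambda^2\|\ta\tcoup_\beta\|^2
\]
with $\tN^\romL \geq \one - \Pi_0$, where $\Pi_0 = \one_{\cK\otimes\cK}\otimes\vert\vacuum\ra\la\vacuum\vert$ denotes the orthogonal projection onto $\ker(\tN^\romL)\subset\tcH^\romL$, gives
\[
\tL_\beta' \geq \Bigl(\tfrac{1}{2} - \lambda^2\|\ta\tcoup_\beta\|^2\Bigr)\one_{\tcH^\romL} - \tfrac{1}{2}\,\Pi_0.
\]
Summing the virial identity over $j$ and using $\Tr(\Pi_0)=\nu^2$ produces
\[
0 = \sum_{j=1}^n \la\tpsi_j,\tL_\beta'\tpsi_j\ra \geq \Bigl(\tfrac{1}{2} - \lambda^2\|\ta\tcoup_\beta\|^2\Bigr)\,n - \tfrac{\nu^2}{2},
\]
which for $|\lambda|$ below the indicated threshold (possibly after absorbing numerical constants into the definition of $\lambda_0$) forces the integer $n$ to satisfy $n\leq\nu^2$. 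Since this bound is uniform over orthonormal families of eigenvectors, $\sigma_\pp(L_\beta)$ is finite and $\dim\Ran(P_\beta)\leq\nu^2$.

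The only real obstacle is making sure that every ingredient of the Hamiltonian argument has a faithful positive-temperature analogue: the virial theorem (Theorem~\ref{Thm-VirialL}) and, crucially, the number bound (Theorem~\ref{NumberBoundL}) both require \LGCond{2}, which is precisely the hypothesis stated. Once these are in hand, the remaining computation is an elementary reprise of Corollary~\ref{WeakCoupPPH}; the inflation from $\nu$ to $\nu^2$ in the multiplicity bound is the unavoidable consequence of the doubling $\cH\leadsto\cH^\romL = \cH\otimes\cH$ intrinsic to the standard representation.
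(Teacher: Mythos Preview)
Your proposal is correct and follows exactly the paper's approach: the paper's proof consists of a single sentence saying to repeat the argument of Corollary~\ref{WeakCoupPPH} with Theorem~\ref{Thm-VirialL} in place of Theorem~\ref{Thm-VirialH}, which is precisely what you spell out. Your caveat about numerical constants is warranted—the displayed commutator bound together with $\tN\geq\one-\Pi_0$ only yields $n\leq\nu^2$ for $|\lambda|$ somewhat below $\|\ta\tcoup_\beta\|^{-1}$, and the same bookkeeping issue is already present in the paper's derivation of \eqref{WeakCoupBoundH}; this does not affect the substance of the argument.
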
 
  
   \begin{proof} The proof is identical to the proof of
     Corollary~\ref{WeakCoupPPH}, except we make use of Theorem~\ref{Thm-VirialL}
   instead of Theorem~\ref{Thm-VirialH}. 
   \end{proof}
 
 This theorem improves on a result of Merkli \cite{Merkli2001}, due to the improvement in the number bound
 Theorem~\ref{NumberBoundL}. See also \cite{FroehlichMerkli2004b,FroehlichMerkli2004a,FroehlichMerkliSigal2004}.
 
\subsection{Conjugate Operators}

Let $\chi\in C_0^\infty(\RR)$ satisfy: $\chi(-\omega) = \chi(\omega)$,
$0\leq \chi\leq 1$, $\chi(\omega)=1$ for $|\omega|\leq 1/2$ and
$\chi(\omega)=0$ for $|\omega|>1$. We furthermore assume that $\chi'(\omega) \leq 0$ for $\omega\geq 0$.

Let $\mu>0$ be the constant used to define the class of couplings we
can treat, cf. \HGCond{n} and \LGCond{n}. We use it to construct an auxiliary function 
$d\colon (0,\infty)\to [1,\infty)$ as follows
\[
d(\omega) = \chi(\omega)\omega^{-\mu/4} + \chi(\omega/2)-\chi(\omega)
+ (1-\chi(\omega/2))\omega^{\mu/4}.
\]
We leave it to the reader to verify the following properties of $d$
\begin{enumerate}[label=\textup{\bf (d\arabic*)},ref=\textup{\bf (d\arabic*)}]
\item $(\omega-1)d'(\omega)\geq 0$.
\item $\lim_{\omega\to 0_+} d(\omega) = \lim_{\omega\to +\infty} d(\omega) = +\infty$.
\item $\exists C>0$ s.t. $|d'(\omega)|\leq C d(\omega)/\omega$ for all $\omega>0$.
\end{enumerate}
We extend $d$ to $\RR\backslash\{0\}$ by setting $d(\omega) =
d(-\omega)$ for $\omega<0$.

For a given 
\begin{equation}\label{DeltaZero}
\udelta = (\delta_0,\delta_\infty)\in  (0,1]\times [1,\infty) =: \Delta_0,
\end{equation}
 we define
a smooth positive function $m_\udelta\colon\RR\to [1,\infty)$ by
\begin{align*}
m_\udelta(\omega) & = d(\delta_0)\chi(\omega/\delta_0)\\
 & \quad +
d(\omega)\big(\chi(\omega/(2\delta_\infty))-
\chi(\omega/\delta_0)\big)\\
&\quad  +d(\delta_\infty)\big( 1- \chi(\omega/(2\delta_\infty))\big).
\end{align*}
Observe that $m_\udelta$ has compactly supported derivatives.

Our conjugate operator on the one-particle level, for the Hamiltonian
at zero temperature, is defined as the modified generator of radial translations
\[
a_\udelta = \frac{\ri}2\left\{m_\udelta(k)\frac{k}{|k|}\cdot \nabla_k +
  \nabla_k\cdot \frac{k}{|k|} m_\udelta(k)\right\}.
\]
Note that $a_\udelta$ a priori defined on $C_0^\infty(\RR^3\backslash\{0\})$ is closable,
and its closure is a maximally symmetric operator.
The conjugate operator is obtained through second quantization
\[
A_\udelta = \one_{\cK}\otimes\, \D\Gamma(a_\udelta),
\]
and is again a maximally symmetric operator, closable on $\cK\otimes \Gamma_\fin(C_0^\infty(\RR^3\backslash\{0\}))$.

To get a conjugate operator for the Liouvillean,  we do the construction
after gluing and define the modified generator of translations
\[
\ta_\udelta := \frac{\ri}2\left\{m_\udelta(\omega)\frac{\D}{\D\omega} 
+ \frac{\D}{\D\omega} m_\udelta(\omega)\right\}\otimes\one_{L^2(S^2)},
\]
which is essentially self-adjoint on $C_0^\infty(\RR)\otimes C^\infty(S^2)$.
We second quantize to obtain
\[
\tA_\udelta :=  \one_{\cK\otimes\cK}\otimes\, \D\Gamma(\ta_\udelta),
\]
which is essentially self-adjoint on $\tcC^\romL$.
Note that we have simplified the notation a bit, writing $\ta_\udelta$ and $\tA_\udelta$
 instead of
the more cumbersome $\ta_{m_\udelta}$ and $\tA_{m_\udelta}$, cf.~\eqref{ModTrans} and~\eqref{dGammaModTrans}. As for the \myquote{commutators}, we similarly write
$H'_\udelta$, $L'_{\beta,\udelta}$ and $\tL_{\beta,\udelta}'$ instead of $H'_{m_\udelta}$,
$L'_{\beta,m_\udelta}$ and $\tL'_{\beta,m_\udelta}$, cf.~\eqref{Hmprime},~\eqref{Lbetamprime} and~\eqref{tLbetamprime}.

As an identity on $C_0^\infty(\RR\backslash\{0\})\otimes\, C^\infty(S^2)$, we have
$T(a_\udelta\otimes\one_\gothh - \one_\gothh\otimes a_\udelta )T^* = \ta_\udelta$ and furthermore
\begin{equation}\label{TransOfComm}
\cU \tL_{\infty,\udelta}' \,\cU^* = L_{\infty,\udelta}' =  H_\udelta'\otimes\one_\cH + \one_\cH\otimes {H^\conj_\udelta}'.
\end{equation}
The latter being an operator identity on $\cD(N^\romL)$.

%*********************************************************************
\subsection{Estimates at Zero Temperature}\label{Subsec-EstZero}
%*********************************************************************

Throughout this section we will for $\udelta'\in\Delta_0$,
cf. \eqref{DeltaZero}, use the notation
\begin{equation}\label{Deltadelta}
\Delta(\udelta') := \bigset{\udelta\in\Delta_0}{\delta_0\leq
  \delta_0', \delta_\infty\geq \delta_\infty'}. 
\end{equation}

For $\udelta\in\Delta_0$, we write $N_\udelta$ for
$\D\Gamma(m_\udelta)$, the modified number operator
appearing in $H'_\udelta = N_\udelta - \phi(\ri a_\udelta \coup)$.
The reason for introducing the modified generator of radial
translation is that $N_\udelta$ is large in the infrared and
ultraviolet regimes, which allows us to handle very soft and very hard photons.

We shall make use of geometric localization, cf. Appendix~\ref{Subsect-GeomLoc} three times in this subsection. 
It will be used in two different forms, the first of which will appear in two different proofs, and for this reason we introduce the notation here. 
For $R>1$, we perform a partition of unity in momentum space as follows.
Let
\begin{equation}\label{MomentumIMS1}
F^R = \begin{pmatrix} \one[|k| < R] \\ \one[|k|\geq R]\end{pmatrix} \colon \gothh \to L^2(B(0,R))\oplus L^2(B(0,R)^c) 
=: \gothh_<^R\oplus\gothh_>^R
\end{equation}
and observe that $F^R$ is unitary. We lift to $\cF$, cf.~Appendix~\ref{Subsect-GeomLoc}, and get
the unitary
\begin{equation}\label{MomentumIMS2}
\cGamma(F^R) \colon \cF \to \Gamma(\gothh_<^R)\otimes\Gamma(\gothh_>^R) =: \cF_<^R\otimes\cF_>^R.
\end{equation}
Using \eqref{dGammacGammaIntertwine}, we conclude the intertwining relation
\begin{equation}\label{GeomLocMomentum}
\cGamma(F^R)\D\Gamma(m) 
= \bigl(\D\Gamma(m_{|B(0,R)})\otimes \one_{\cF^R_>}
+ \one_{\cF^R_<}\otimes \D\Gamma(m_{|\RR^3\backslash B(0,R)})\bigr) \cGamma(F^R),
\end{equation}
where $m$ is (multiplication by) an $L^2_\loc(\RR^3)$ function.
For simplicity, we write $\cGamma(F^R)$ also for $\one_\cK\otimes \cGamma(F^R)$ acting on $\cH$. 

We begin with a new high-energy estimate, which is particular to
the case of finite dimensional small systems. It will not hold,
e.g., for (confined) atomic small systems. 

\begin{theorem}\label{Thm-ME-H-LargeE} Suppose \HGCond{1}. 
Let $e>0$ be given. There exists $\delta'_\infty\geq 1$, $c>0$
and $E_0\in \RR$ such that for all
$\udelta\in \Delta((1,\delta_\infty'))$, we have
\[
H_\udelta' \geq e\one_\cH - c \one[H\leq E_0],
\]
in the sense of forms on $\cD(N)$.
\end{theorem}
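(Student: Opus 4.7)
The plan is to combine a decomposition of photons according to momentum with the bound $K\leq E_\nu\one_\cK$ to isolate a ``soft'' subspace on which $\langle \psi, H \psi \rangle$ is uniformly bounded, and to show that on its complement $H_\udelta'$ is bounded below by $e$. The decisive point will be the passage from the non-spectral projection onto the soft subspace to the spectral projection $\one[H\leq E_0]$ of $H$; this is where finite-dimensionality of $\cK$ enters essentially.

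First I would show, by a shell-by-shell computation using \HGCond{1}, that $\|a_\udelta \coup\|\leq C$ uniformly for $\udelta\in\Delta((1,\delta_\infty'))$, for any fixed $\delta_\infty'$: in the infrared region $|k|\sim \delta_0$, the factor $m_\udelta^2\sim \delta_0^{-\mu/2}$ combined with the shell volume $\sim \delta_0^3$ and the singularity $|k|^{-3+2\mu}$ of $|\nabla \coup|^2 + \coup^2/|k|^2$ produces an integrable contribution, while in the ultraviolet the integrand decays like $|k|^{-3-3\mu/2}$. Applying \eqref{PhiNumberBound} with $\sigma=1/2$ would give $H_\udelta' \geq N_\udelta - \tfrac12 N - C^2$. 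Next, I would pick $R\geq 1$, to be chosen later, and set $\delta_\infty' := R$. For $\udelta\in\Delta((1,\delta_\infty'))$ the monotonicity encoded in \textbf{(d1)} gives $m_\udelta(\omega)\geq d(R)$ on $\{\omega\geq R\}$, while $m_\udelta\geq 1$ everywhere. Introducing $N_<^R := \one_\cK \otimes \D\Gamma(\one[|k|<R])$ and $N_>^R := \one_\cK \otimes \D\Gamma(\one[|k|\geq R])$, which commute and satisfy $N_<^R + N_>^R = N$ and $N_\udelta \geq N_<^R + d(R)N_>^R$, functional calculus in the commuting pair then yields
\[
H_\udelta' \geq \tfrac12 N_<^R + \bigl(d(R) - \tfrac12\bigr) N_>^R - C^2 \geq c_0\,(\one - P) - C^2,
\]
with $P := \one_\cK \otimes \one[N_<^R\leq n_0]\,\one[N_>^R=0]$ and $c_0 := \min\bigl\{\tfrac{n_0+1}{2},\, d(R) - \tfrac12\bigr\}$, for any $n_0\in\NN$.

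The crux will be the replacement of $P$ by a spectral projection of $H$. On $\Ran(P)$ every state has at most $n_0$ photons of momentum $<R$, so $H_\ph \leq n_0 R$; combined with $K\leq E_\nu\one_\cK$ (this is the single use of $\dim\cK<\infty$) and the number bound \eqref{PhiNumberBound-Norm} for $\phi(\coup)$, one expects
\[
PHP \leq \tilde E\,P, \qquad \tilde E := E_\nu + n_0 R + \bigl(\sqrt{2n_0}+1/\sqrt 2\bigr)\|\coup\|.
\]
Since $H\geq \Sigma$, Chebyshev's inequality in the spectral measure of $H$ gives $\|\one[H>E]P\|^2 \leq (\tilde E-\Sigma)/(E-\Sigma)$ for $E>\tilde E$; decomposing $P\psi = P\one[H\leq E]\psi + P\one[H>E]\psi$ and squaring yields the form bound $P \leq 2\one[H\leq E] + 2\epsilon\,\one$ with $\epsilon := (\tilde E-\Sigma)/(E-\Sigma)$.

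To finish, I would choose $R\geq 1$ and $n_0$ large enough that $c_0 \geq 2(e+C^2)$ (possible since $d(R)\to\infty$), then $E_0$ large enough that $\epsilon\leq 1/4$, and set $c := 2c_0$. The combination of the two form inequalities then gives
\[
H_\udelta' \geq c_0(1-2\epsilon) - C^2 - 2c_0\,\one[H\leq E_0] \geq e\,\one_\cH - c\,\one[H\leq E_0],
\]
as desired. The hard part will be the transfer from the combinatorial projection $P$ to the spectral projection $\one[H\leq E_0]$; it rests on having $PHP$ bounded by a constant independent of $\udelta$, which in turn rests on $K\leq E_\nu\one_\cK$, and would fail for an atomic small system with an infinite discrete spectrum.
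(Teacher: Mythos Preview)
Your proof is correct and follows the same overall strategy as the paper: bound $H_\udelta'$ below by a modified number operator, split photons at momentum scale $R$ to isolate a ``soft'' subspace on which the energy is uniformly bounded, and then convert the projection onto that subspace into a spectral projection of $H$. The technical implementation differs in two places worth noting. First, you work directly with the commuting second-quantized multiplication operators $N_<^R,\,N_>^R$ on $\cF$, whereas the paper passes to the unitary factorization $\cGamma(F^R)\colon \cF \to \cF_<^R\otimes\cF_>^R$; your route avoids the tensor-product machinery but is otherwise equivalent. Second, the paper converts $\one[H_0\leq \tE]$ to $\one[H\leq E_0]$ via the resolvent comparison $(H_0+1)^{-1}\leq \tilde C\,(H-\Sigma+1)^{-1}$, which rests on the domain equality $\cD(H)=\cD(H_0)$; your Chebyshev argument uses only $PHP\leq \tilde E\,P$ and $H\geq\Sigma$, and is in that sense more elementary. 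In exchange, you have to estimate $\phi(\coup)$ on $\Ran(P)$ separately, whereas the paper sidesteps this by working with $H_0$ in the intermediate step. Your extra parameter $n_0$ plays the role of the paper's $\tE$ (roughly $n_0\sim(\tE-E_\nu)/R$), and your identification of $K\leq E_\nu\one_\cK$ as the one place where finite-dimensionality of $\cK$ enters is exactly right.
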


\begin{proof} The first step we take is to estimate from below
\begin{equation}\label{PF0-BoundInt}
H'_\udelta\geq \frac12 N_\udelta -C\one_\cH,
\end{equation}
exploiting the $N^{1/2}$-boundedness of $\phi(\ri a_\udelta \coup)$,
cf. \eqref{PhiNumberBound},  and the
inequality $N\leq N_\udelta$. Here $C$ is some positive number.

Put $H_0^\ext = K\otimes\one + \one_\cK\otimes {H_\ph}_{|\cF_<^R}\otimes\one + \one\otimes {H_\ph}_{|\cF_>^R}$ and
abbreviate $\lambda_\rommax = \max\sigma(K)$.
We now compute for $\tE>\lambda_\rommax$ using \eqref{GeomLocMomentum}
\begin{align*}
& N_\udelta  \geq N_\udelta \one[H_0 > \tE]\\
 & = \cGamma(F^R)^* \left\{\one_\cK\otimes {N_\udelta}_{|\cF_<^R}\otimes\one_{\cF_>^R} + \one_{\cK\otimes \cF_<^R}\otimes {N_\udelta}_{|\cF_>^R}\right\}\one[H_0^\ext > \tE]\cGamma(F^R)\\
& \geq \Gamma(\one[|k|<R])N_\udelta \one[H_0>\tE] + 
m_\delta(R)\cGamma(F^R)^* \one_{\cK\otimes \cF_<^R}\otimes \bP_\Omega \one[H_0^\ext>\tE]\cGamma(F^R)\\
& \geq \frac{\tE-\lambda_\rommax}{R} \Gamma(\one[|k|<R])\one[H_0>\tE]\\
& \qquad  + m_\udelta(R)\cGamma(F^R)^*\bigl( \one_{\cK\otimes \cF_<^R}\otimes \bP_\Omega\bigr) \one[H_0^\ext>\tE]\cGamma(F^R)\\
& \geq \min\left\{\frac{\tE-\lambda_\rommax}{R},m_\udelta(R)\right\}\one[H_0>\tE].
\end{align*}
We thus get
\begin{equation}\label{PF0-BoundInt2}
\frac12 N_\udelta - C\one_\cH\geq
\frac12\min\left\{\frac{\tE-\lambda_\rommax}{R},m_\udelta(R)\right\}\one[H_0
>\tE] - C\one_\cH.
\end{equation}

To pass from $H_0$ to $H$; we estimate, recalling that $\Sigma$ denotes the bottom of the spectrum of $H$  \eqref{BottomOfSpec},
\begin{align*}
\one[H_0\leq \tE]&\leq (\tE+1)(H_0+1)^{-1}\\
&= (\tE+1)(H-\Sigma+1)^{-\frac12}\\
 & \quad \times\left\{(H-\Sigma+1)^{\frac12}(H_0+1)^{-1}(H-\Sigma+1)^{\frac12} \right\}(H-\Sigma+1)^{-\frac12}\\
& \leq (\tE+1) \tC (H-\Sigma+1)^{-1}\\
& \leq (\tE+1) \tC\one[H\leq E] + \frac{(\tE+1) \tC}{E-\Sigma+1}.
\end{align*}
Combining with \eqref{PF0-BoundInt} and \eqref{PF0-BoundInt2},  we arrive at the bound
\[
H'_\udelta\geq
\frac12\min\left\{\frac{\tE-\lambda_\rommax}{R},m_\udelta(R)\right\}\left(1-\frac{\tC(\tE+1)}{E-\Sigma+1}\right)
- C - \frac{(\tE+1)\tC}{2}\one[H\leq E].
\]
We are now in a position to pick our constants. First choose $R$ large enough such that
$d(R)/2\geq e +C +1$. Then choose $\delta'_\infty\geq R$ such that for
$\udelta\in (0,1)\times(\delta'_\infty,\infty)$, we have $m_\udelta(R) = d(R)$.
Subsequently, we fix $\tE$ large enough such that
$(\tE-\lambda_\rommax)/(2R)\geq e+C+1$.
With these choices of $R,\delta'$ and $\tE$ (in that order), we get
\[
H_\udelta'\geq (e+C+1)\left(1-\frac{\tC(\tE+1)}{E-\Sigma+1}\right)
- C - \frac{(\tE+1)\tC}{2}\one[H\leq E].
\]
Finally, we can take $E_0$ large enough, such that with $E = E_0$, the right-hand side
is bounded from below by $e\one_\cH - \frac12 \tC(\tE+1)\one[H\leq E_0]$.
\end{proof}

%For the purpose of the following, we introduce the terminology that 
%$H$ satisfies a Mourre estimate at $E\in \RR$:

%\begin{definition} We say that $H$ satisfies a Mourre estimate at $E\in \RR$
%if there exists $\udelta'\in\Delta_0$ such that: 
%For all $\epsilon>0$, there exist $C>0$, $\kappa>0$ and a compact operator $K$, such that 
%\[
%H'_\udelta\geq (1-\epsilon)\one - C\one[|H-E|>\kappa] - K,
%\]
%for any $\udelta\in \Delta(\udelta')$ and in the sense of forms on $\cD(N)$.
%\end{definition}

Before we establish a positive commutator estimate at zero temperature, we pause to formulate and prove a technical lemma. This type of lemma is a standard tool used in inductive proofs
of Mourre estimates. See, e.g., \cite{GeorgescuGerardMoeller2004b,HunzikerSigal1995,MoellerSkibsted2004}. 
Recall from \eqref{DeltaZero} and \eqref{Deltadelta}, the definition of the sets $\Delta_0$ and $\Delta(\udelta')$.
%Theorem~\ref{Thm-ME-H-LargeE} implies that a Mourre estimate is satisfied at any $E > E_0$.
%Since $H_\udelta'$ is bounded from below, we also get a Mourre estimate automatically satisfied %at any $E<\Sigma$.
%Note that the Mourre estimate obviously remains true if we replace $\kappa$ by any smaller
%positive $\kappa'$.

\begin{lemma}\label{Lemma-FromPointMEToUnifME} Suppose \HGCond{1}. Let $J\subset \RR$ be a compact set and $\udelta'\in \Delta_0$. Suppose that for any $E\in J$, $\udelta\in\Delta(\udelta')$ and $\epsilon>0$, there exists $\kappa>0$, $C>0$ and a compact self-adjoint operator $K$, such that the \myquote{Mourre estimate}
\begin{equation}\label{LemBasicME}
H'_\udelta\geq (1-\epsilon)\one_\cH - C\one[|H-E|\geq \kappa] - K,
\end{equation}
is satisfied in the sense of forms on $\cD(N)$.
Then, for any $\udelta\in\Delta(\udelta')$ and $\epsilon>0$, there exist $\kappa'>0$ and $C'>0$, such that as a form on $\cD(N)$
\begin{equation}\label{LemDerivedME}
 H'_\udelta \geq -\epsilon \one_\cH - C'\one[|H-E|\geq \kappa'],
\end{equation}
for all $E\in J$.
\end{lemma}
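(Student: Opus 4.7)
Fix $\udelta \in \Delta(\udelta')$ and $\epsilon > 0$. My strategy is first to prove, at each $E \in J$ individually, a local version of \eqref{LemDerivedME} in which the compact remainder $K$ of the hypothesis has been absorbed into the spectral cutoff term, at the expense of weakening the leading coefficient from $1-\epsilon$ to $-\epsilon$; the uniform estimate over $J$ is then obtained by a standard compactness and covering argument.

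To set up the local estimate at a given $E \in J$, apply the hypothesis with $\epsilon_0 := \epsilon/3$ in place of $\epsilon$, obtaining $\kappa_E, C_E > 0$ and a compact self-adjoint $K_E$ with
\[
H'_\udelta \geq (1-\epsilon_0)\one - C_E\,\one[|H-E|\geq \kappa_E] - K_E.
\]
Using the spectral theorem for the compact self-adjoint $K_E$, decompose $K_E = K_E^{\flat}+K_E^{\sharp}$ with $\|K_E^{\flat}\| \leq \epsilon_0$ and $K_E^{\sharp}$ of finite rank, reducing the task to absorbing the finite-rank $K_E^\sharp$ (at the cost of replacing $(1-\epsilon_0)$ by $(1-2\epsilon_0)$). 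For a small parameter $\kappa^* \in (0, \kappa_E)$, put $P = \one[|H-E|<\kappa^*]$, $Q = \one - P$, expand $K_E^\sharp = PK_E^\sharp P + PK_E^\sharp Q + QK_E^\sharp P + QK_E^\sharp Q$, and apply operator Cauchy--Schwarz to the cross terms: for any $\eta > 0$,
\[
-K_E^\sharp \geq -(1+\eta)\,PK_E^\sharp P - (1+\eta^{-1})\|K_E^\sharp\|\,Q.
\]

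The analysis of $PK_E^\sharp P$ is where the main obstacle arises. Since $K_E^\sharp$ is finite rank (hence compact) and $P \to \one[H=E]$ strongly as $\kappa^* \to 0$, $PK_E^\sharp P \to \one[H=E]K_E^\sharp \one[H=E]$ in norm. For $E \notin \sigma_{\pp}(H)$ this limit vanishes and one can drive $\|PK_E^\sharp P\|$ below any threshold by taking $\kappa^*$ small. The delicate case, $E \in \sigma_{\pp}(H)$, invokes Theorem~\ref{Thm-VirialH}: the virial identity $\la\psi, H'_\udelta\psi\ra = 0$ on eigenstates of $H$ at $E$, together with the polarization identity applied to arbitrary combinations of such eigenstates, yields $Q_E^H H'_\udelta Q_E^H = 0$ with $Q_E^H := \one[H=E]$. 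Sandwiching the original Mourre estimate by $Q_E^H$ then forces $Q_E^H K_E Q_E^H \geq (1-\epsilon_0)Q_E^H$; that is, the Mourre lower bound is \emph{saturated} on the eigenspace, and hence the undesired contribution of $-K_E^\sharp$ on $\Ran(Q_E^H)$ is precisely balanced by the main term $(1-2\epsilon_0)\one$, leaving only an $O(\epsilon)$ deficit. Carrying out this cancellation carefully and choosing $\eta$ and $\kappa^* = \kappa_E^*$ small enough yields $C_E''$ with
\[
H'_\udelta \geq -\epsilon \one - C_E''\,\one[|H-E|\geq \kappa_E^*].
\]

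To pass from the pointwise estimates to the uniform one, use compactness of $J$ to cover it by finitely many open intervals $(E_j-r_j, E_j+r_j)$, $j=1,\ldots,N$, with $r_j < \kappa_{E_j}^*$. Set $\kappa' := \min_{1\leq j\leq N}(\kappa_{E_j}^*-r_j) > 0$ and $C' := \max_{1\leq j\leq N} C_{E_j}''$. For any $E \in J$, choose $j$ with $|E-E_j| < r_j$; the triangle inequality gives the inclusion $\{\lambda : |\lambda-E_j|\geq\kappa_{E_j}^*\}\subseteq\{\lambda : |\lambda-E|\geq \kappa'\}$, so $\one[|H-E_j|\geq\kappa_{E_j}^*]\leq \one[|H-E|\geq\kappa']$ as orthogonal projections. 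Combined with $C_{E_j}''\leq C'$, the local estimate at $E_j$ upgrades to $H'_\udelta \geq -\epsilon\one - C'\one[|H-E|\geq\kappa']$, which is \eqref{LemDerivedME}.
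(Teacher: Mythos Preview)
Your argument has a genuine gap in the eigenvalue case. From the virial identity $Q_E^H H'_\udelta Q_E^H = 0$ and the Mourre estimate you correctly extract $Q_E^H K_E Q_E^H \geq (1-\epsilon_0)Q_E^H$, but this is a \emph{lower} bound on $K_E$ restricted to the eigenspace. What your ``cancellation'' needs is an \emph{upper} bound on $PK_E^\sharp P$ (equivalently, for small $\kappa^*$, on $Q_E^H K_E^\sharp Q_E^H$), so that $(1-2\epsilon_0)P - PK_E^\sharp P$ is bounded below by $-O(\epsilon)$. No such upper bound is available: the hypothesis only asserts the existence of \emph{some} compact $K_E$ making \eqref{LemBasicME} hold, and one may freely replace $K_E$ by $K_E + M\,Q_E^H$ for any $M>0$ without spoiling \eqref{LemBasicME}. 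With this replacement your expression $(1-2\epsilon_0)P - PK_E^\sharp P$ is driven to roughly $-M$ on the eigenspace, and nothing in your scheme compensates. (Separately, the displayed operator Cauchy--Schwarz inequality $-K_E^\sharp \geq -(1+\eta)PK_E^\sharp P - (1+\eta^{-1})\|K_E^\sharp\|\,Q$ is false as written when $K_E^\sharp$ has negative part; take $K_E^\sharp = -|\phi\rangle\langle\phi|$ with $P\phi\neq 0$.)

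The paper avoids this by decomposing $H'_\udelta$ itself rather than $K$: writing $H'_\udelta = P_E H'_\udelta P_E + 2\re\{P_E H'_\udelta \bP_E\} + \bP_E H'_\udelta \bP_E$ with $P_E=\one[H=E]$, the virial theorem kills the first block outright, and the hypothesis is applied only to the $\bP_E$ block. The cross term $P_E H'_\udelta \bP_E$ is then controlled not abstractly but by exploiting the explicit form $H'_\udelta = N_\udelta - \phi(\ri a_\udelta\coup)$ together with the number bound $\Ran(P_E)\subset\cD(N^{1/2})$: one splits off $P_E N_\udelta\one[N_\udelta>r]\bP_E$ and makes it small at the price of a residual $\sigma N_\udelta$, which is finally reabsorbed via $(1+2\sigma)H'_\udelta \geq H'_\udelta + \sigma N_\udelta - 2\sigma\|a_\udelta\coup\|^2$. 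Your covering argument for uniformity in $E$ is fine and is equivalent to the paper's contradiction argument; the problem is solely the local estimate at eigenvalues.
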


%\begin{remark} If a Mourre estimate holds at all $E\in J$, with the same $\udelta'$,
%then this $\udelta'$ can also be used for the uniform bound. This will be evident from the proof %below.
%\remarkQED\end{remark}

\begin{proof} Let $J$ and $\udelta'$ be fixed as in the statement of the lemma.
 In the following, $\udelta\in\Delta(\udelta')$ is arbitrary. Note that $C,C'$, $\kappa,\kappa'$
 and $K$ may depend on $\udelta$. 
First note that by the Virial Theorem, Theorem~\ref{Thm-VirialH}, the point spectrum in an open neighborhood of $J$
is locally finite and eigenvalues in $J$ have finite multiplicity.

Let $\epsilon>0$.
We begin by verifying the estimate \eqref{LemDerivedME} for a fixed $E$, at which the Mourre estimate \eqref{LemBasicME} is satisfied.

If $E\not\in\sigma_\pp(H)$ we proceed as follows:  Let $\kappa>0$, $C>0$ be the constants and $K$ the compact operator from \eqref{LemBasicME}, associated with the given $\epsilon$. Write for $0<\kappa'<\kappa$ the compact error as
$K = K\one[|H-E|<\kappa'] + K\one[|H-E|\geq\kappa']$. Pick $\kappa'$ small enough
such that $\|K\one[|H-E|\leq\kappa']\|\leq 1/2$. Then
\[
H'_\udelta\geq -\epsilon\one - \bigl(C+\frac12\|K\|^2\bigr)\one[|H-E|\geq\kappa'],
\]
where we used that $\re\{K\one[|H-E| \geq \kappa']\} \geq - \frac12\one_\cH - \frac12 \|K\|^2 \one[|H-E|\geq \kappa']$.

If on the other hand $E\in\sigma_\pp(H)$, we proceed differently. Extract a Mourre estimate \eqref{LemBasicME} with $\epsilon$ replaced by $\epsilon/5$. Denote by $\kappa_1>0$, $C_1>0$ and $K$ the associated objects.
Write $P_E$ for the finite rank orthogonal projection
on the eigenspace associated with $E$. Abbreviate $\bP_E= \one-P_E$. Since $\Ran(P_E)\subset \cD(N^{1/2})$, we can compute and use \eqref{LemBasicME}
\begin{align}\label{HprimeAtEigenvalue}
\nonumber H'_\udelta & = P_E H'_\udelta P_E + 2\re\bigl\{P_E H'_\udelta \bP_E\bigr\} + \bP_E H'_\udelta \bP_E\\
\nonumber & = 2\re\{P_E H'_\udelta \bP_E\} + \bP_E H_\udelta' \bP_E\\
\nonumber & \geq \bigl(1-\frac{\epsilon}5\bigr)\bP_E - C_1\one[|H-E|\geq \kappa_1] -\bP_E K \bP_E + 2\re\bigl\{P_E H_\udelta' \bP_E\bigr\}\\
& \geq -\frac{\epsilon}5\one_\cH - C_2\one[|H-E|\geq \kappa_2]+ 2\re\bigl\{P_E H_\udelta' \bP_E\bigr\},
\end{align}
where we used Theorem~\ref{Thm-VirialH} in the second equality.
In the last step, we argued as above to get rid of the compact error $-\bP_E K \bP_E$ by passing to a smaller (positive) $\kappa_2<\kappa_1$.

As for the cross term in \eqref{HprimeAtEigenvalue},
we write $H_\udelta' = N_\udelta - \phi(\ri a_\udelta \coup)$ as a form sum on $D(N^{1/2})$.
Recalling Theorem~\ref{Thm-NumberBoundH}, we decompose for an $r>0$ to be fixed later
\begin{equation}\label{MissLabel0}
P_E H'_\udelta\bP_E = P_E N_\udelta \one[N_\udelta>r]\bP_E + \tK\bP_E,
\end{equation}
with $\tK= P_E N_\udelta\one[N_\udelta\leq r]-P_E \phi(\ri a_\udelta \coup)$ being compact.
Estimate first for $\sigma>0$
\begin{align*}
2\re\bigl\{P_E N_\udelta \one[N_\udelta>r]\bP_E\bigr\} &= 2\re\bigl\{P_E N_\udelta \one[N_\delta>r]\bigr\}-2P_E  N_\udelta \one_{N_\udelta>r}P_E\\
&\geq -\sigma N_\udelta  -(2+\sigma^{-1}) P_E  N_\udelta \one[N_\udelta>r]P_E.
\end{align*}
Fix $\sigma$ small enough such that
\begin{equation}\label{ChoiceOfSigma}
2\sigma \|a_\udelta \coup\|^2\leq \frac{\epsilon}5.
\end{equation}
Fix $r$ large enough such that $\|P_E  N_\udelta \one[N_\udelta>r]P_E\|\leq \epsilon/(5(2+\sigma^{-1}))$.
We then have
\begin{equation}\label{MissLabel1}
2\re\bigl\{P_E N_\udelta \one[N_\udelta>r]\bP_E\bigr\} \geq -\sigma N_\udelta -\frac{\epsilon}{5}\one_\cH.
\end{equation}
Secondly, we estimate for $\sigma'>0$
\begin{align}\label{MissLabel2}
\nonumber 2\re\bigl\{\tK \bP_E\bigr\} & =  2\re\bigl\{\tK \one[|H-E|<\kappa_3] \bP_E\bigr\} + 2\re\bigl\{\tK \one[|H-E|\geq\kappa_2]\bigr\}\\
& \geq -\bigl(\frac{\epsilon}5 +\sigma'\bigr)\one_\cH - \frac{\|\tK\|^2}{\sigma'}\one[|H-E|\geq\kappa_3].
\end{align}
Here we chose $\kappa_3>0$ small enough such that $\|\tK \one[|H-E|<\kappa_3] \bP_E\|\leq \epsilon/5$.
We may now pick $\sigma' = \epsilon/5$, $\kappa' = \min\{\kappa_2,\kappa_3\}$  and complete the estimate \eqref{HprimeAtEigenvalue}, using
\eqref{MissLabel0}, \eqref{MissLabel1} and \eqref{MissLabel2}, to arrive at
\[
H'_\udelta \geq -\frac{4\epsilon}5 \one_\cH - \sigma N_\udelta - C_3 \one[|H-E|>\kappa'].
\]
To get rid of the extra $\sigma N_\udelta$, we estimate using \eqref{PhiNumberBound} (with $\sigma = 1/2$) and \eqref{ChoiceOfSigma}
\begin{align*}
(1+2\sigma)H'_\udelta &\geq -\frac{4\epsilon}5 \one_\cH - C_3 \one[|H-E|>\kappa] + \sigma N_\udelta -2\sigma \phi(\ri a_\udelta \coup)\\
&\geq  -\epsilon \one_\cH - C_3 \one[|H-E|>\kappa].
\end{align*}

It remains to establish that for the given $\epsilon>0$ and $\udelta\in\Delta(\udelta')$, one can choose $\kappa$ and $C$ such that the desired bound \eqref{LemDerivedME} holds for all $E\in J$.
We proceed by assuming, aiming for a contradiction, that given 
$\kappa_n = 1/n$ and $C_n = n$, there exists an energy $E_n$  
such that \eqref{LemDerivedME} fails. By compactness of $J$, we may assume that $E_n$ converges to some $E_\infty\in J$.
Let $\kappa_\infty$ and $C_\infty$ be the constants just established to exist, such that 
the bound \eqref{LemDerivedME} holds true at $E_\infty$.
Picking $n$ large enough such that 
\[
|E_\infty-E_n| < \kappa_\infty/2,\quad \kappa_n<\kappa_\infty/2, \quad C_n\geq C_\infty, 
\]
sets us up with a contradiction, since we have
\[
H_{\udelta}' \geq -\epsilon\one - C_\infty\one[|H-E_\infty|\geq \kappa_\infty]  \geq -\epsilon\one - C_n\one[|H-E_n|\geq \kappa_n].
\]
\end{proof}

The following theorem, which appeared originally in \cite[Thm.~7.12]{GeorgescuGerardMoeller2004b}, 
states that a Mourre estimate holds at any fixed $E\in\RR$. It holds
also for confined small systems, not necessarily finite dimensional,
but the proof simplifies slightly here. Furthermore, since we do not need resolvents of $H$ to control
$\phi(a_\udelta\coup)$ but can do with resolvents of $N$, the version here in fact holds under slightly 
weaker infrared assumptions on $\coup$. 

Another special feature of finite dimensional small systems, is that we can choose $\delta_0'$ uniformly in energy.
Indeed, we pick $\delta_0'\in(0,1]$ such that
\begin{equation}\label{ChoiceOfdelta00}
 d(\delta_0')\geq \sup_{\udelta\in\Delta_0}\|a_{\udelta}\coup\|^2 +1.
\end{equation}
With this choice, we have for all $\udelta=(\delta_0,\delta_\infty)\in\Delta((\delta_0',1))$ and $|\omega|\leq \delta_0$ that
\begin{equation}\label{ChoiceOfdelta0}
m_\udelta(\omega)\geq m_\udelta(\delta_0) = d(\delta_0)\geq d(\delta_0')\geq \sup_{\udelta\in\Delta_0}\|a_{\udelta}\coup\|^2 +1.
\end{equation}

\begin{theorem}\label{Thm-GGM} Suppose \HGCond{1}. For any $\epsilon>0$, $\udelta\in\Delta((\delta_0',1))$ and $E\in\RR$, there exist $\kappa>0$, $C>0$ and $K$,
a compact and self-adjoint operator, such that the form estimate on $\cD(N)$
\[
H_\udelta' \geq (1-\epsilon) \one_\cH  - C \one[|H-E|\geq \kappa] - K,
\]
holds true. Here $\delta_0'$ is chosen such that \eqref{ChoiceOfdelta00} is satisfied.
\end{theorem}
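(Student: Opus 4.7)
The plan is to reduce to a pointwise Mourre estimate and then combine the high-energy bound Theorem~\ref{Thm-ME-H-LargeE} with an induction on energy using geometric localization in photon momenta. Concretely, by the standard ``compactness on a compact set'' argument (the proof of Lemma~\ref{Lemma-FromPointMEToUnifME} extends trivially to add compact error terms), it suffices to establish, at each \emph{fixed} $E\in\RR$, an estimate of the form
\[
H'_\udelta \geq (1-\epsilon)\one_\cH - C\one[|H-E|\geq\kappa] - K
\]
with $K$ compact self-adjoint. For $E<\Sigma$ the statement is trivial (the localization vanishes), and for $E>E_0$ with $E_0$ from Theorem~\ref{Thm-ME-H-LargeE} it follows directly after choosing $\kappa<E-E_0$, since then $\one[|H-E|<\kappa]\one[H\leq E_0]=0$ forces the compact error to be $0$. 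Thus the real work is confined to the compact interval $[\Sigma,E_0+1]$.

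On this interval I would argue by induction on $E$, with induction hypothesis that a Mourre estimate holds at all energies $E'\leq E-\delta$ for some fixed $\delta>0$ to be chosen. The induction is powered by a geometric partition of unity in momentum space at some scale $R\in(0,\delta_0]$, using $\cGamma(F^R)\colon\cF\to\cF^R_<\otimes\cF^R_>$ from \eqref{MomentumIMS2}. Writing $P_0=\one_{\cF^R_<}\otimes|0\rangle\langle 0|_{\cF^R_>}$ and $\bP_0=\one-P_0$, I would split
\[
H'_\udelta = \cGamma(F^R)^*P_0 H'_\udelta P_0 \cGamma(F^R) + \cGamma(F^R)^*\bP_0 H'_\udelta \bP_0\cGamma(F^R) + 2\re\{\cdots\} + \text{IMS error},
\]
where the IMS-type commutator error is compact (it is $N$-bounded and supported where a photon lives across the cutoff, giving compactness via the Rellich-type compactness of $(H+1)^{-1}\Gamma(\chi)$ for compactly-supported $\chi$).

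\textbf{Hard sector} ($\bP_0$): a state here carries at least one photon with $|k|\geq R$, so using \eqref{GeomLocMomentum} together with the factorization $H = H^{\mathrm{in}}+H_{\ph}^{\mathrm{ext}}+\text{coupling}$, the commutator decomposes additively into the ``inner'' commutator for the system restricted to $\cF^R_<$ plus $\D\Gamma(m_\udelta\one[|k|\geq R])\geq R\cdot \one$ on the outer factor (recall $m_\udelta\geq 1$, and $R$ can be chosen comparable to $\delta$). The coupling term $\phi(\ri a_\udelta G)$ passes through $\cGamma(F^R)$ up to a compact correction localized in momentum near $|k|=R$. On $\bP_0\cdot\one[|H-E|<\kappa]$ the inner energy is localized in $[\Sigma,E-R+\kappa]$, so the induction hypothesis applied to the inner system yields the desired $(1-\epsilon/2)$ bound there; the extra $R$-contribution from the hard photon gives the slack for absorbing coupling errors between inner and outer components.

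\textbf{Soft sector} ($P_0$): all photons have $|k|<R\leq\delta_0$, so by construction $m_\udelta\equiv d(\delta_0)$ on their momentum support. Hence $N_\udelta\geq d(\delta_0)N$ on this subspace, and by \eqref{PhiNumberBound} applied with $\sigma=1/2$ plus the crucial choice \eqref{ChoiceOfdelta00},
\[
H'_\udelta \geq \tfrac12 N_\udelta - \|a_\udelta G\|^2 \geq \tfrac12 d(\delta_0) N - (d(\delta_0)-1) \geq \tfrac12\one + \tfrac12(d(\delta_0)-1)(N-\one),
\]
so the soft sector actually contributes at least $\tfrac12$ modulo the vacuum, and the missing vacuum contribution sits in a fixed finite-dimensional subspace $\cK\otimes\vacuum$, giving a compact operator $K$. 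Combining the two sectors (and rescaling $\epsilon$) yields the desired pointwise estimate, closing the induction at energy $E$; the base case is immediate at energies below $\Sigma$.

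The main obstacle I anticipate is the bookkeeping for the IMS localization error and the cross terms $P_0 H'_\udelta \bP_0$: one needs to verify that these are either compact or absorbable by a small fraction of $N$, which requires exploiting that $a_\udelta G$ is square-integrable (from \HGCond{1}) and that the commutator $[\cGamma(F^R),\phi(\ri a_\udelta G)]$ acts as an annihilation/creation operator with form factor supported near $|k|=R$, hence relatively compact with respect to $H+\ri$. The use of finite-dimensionality of $\cK$ is what allows the choice $\delta_0'$ in \eqref{ChoiceOfdelta00} to be uniform in $E$, which is essential for uniformity on the compact energy interval, and is also why the high-energy estimate Theorem~\ref{Thm-ME-H-LargeE} closes off the induction at the top end without needing a compact error.
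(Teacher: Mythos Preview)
Your induction scheme has two genuine gaps, both stemming from using only momentum-space localization.

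First, the compactness claims are false. The operator $(H+\ri)^{-1}\Gamma(\chi)$ with $\chi$ a compactly supported function of \emph{momentum} is not compact: already for $\coup=0$ this is a nonzero multiplication operator in $k$ on each particle sector, and multiplication operators on $L^2$ are never compact unless identically zero. Rellich-type compactness requires \emph{position} localization. Relatedly, with the sharp unitary cutoff $F^R$ there is simply no IMS error---$\cGamma(F^R)$ intertwines exactly with $N_\udelta$, $H_\ph$ and $\phi(\ri a_\udelta\coup)$ since all of these are built from multiplication operators in $k$ and creation/annihilation operators---so the ``compact correction localized near $|k|=R$'' you invoke does not exist. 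The cross term $P_0(\cdots)\bP_0$ reduces to $\one_<\otimes P_{0,>}\phi_>(\ri a_\udelta\coup_>)\bP_{0,>}$, which is not compact (the $\cK\otimes\cF^R_<$ factor is the identity).

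Second, and more seriously, the induction does not close. After your momentum split, the ``inner'' object on the $\bP_0$ sector is $N_{\udelta,<}-\phi_<(\ri a_\udelta\coup_<)$, which is the commutator for the \emph{truncated} Hamiltonian $H_< = K + H_{\ph,<} + \phi_<(\coup_<)$ on $\cK\otimes\cF^R_<$, not for $H$. Your induction hypothesis is about $H$, so it says nothing about $H_<$. You would need a uniform-in-$R$ statement for the whole family $\{H_R\}$, which is a different (and stronger) theorem.

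The paper's proof repairs both issues by first applying a \emph{position-space} localization $\cGamma(j^R)$ with $j^R = (j_0(\ri\nabla_k/R), j_\infty(\ri\nabla_k/R))$. This intertwines $H$ with the extended operator $H^\ext = H\otimes\one_\cF + \one_\cH\otimes H_\ph$, where the first factor is the \emph{full} $H$ (so the induction hypothesis applies verbatim), up to errors that vanish as $R\to\infty$ because $\|j_\infty^R\coup\|\to 0$. The compact errors now come from $\one[|H-E|<1]\Gamma(j_0^R)$ with $j_0^R$ compactly supported in position, which together with the energy cutoff gives genuine compactness. Only \emph{after} this step is the momentum split $\cGamma(F^{\delta_0})$ applied, and only to the second (free-photon) factor, where the induction at energy $E-|k|$ with $|k|\geq\delta_0$ refers back to $H$ itself. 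The soft-photon contribution in that second factor is controlled by \eqref{ChoiceOfdelta0} exactly as you sketch. So your soft-sector analysis and the overall inductive architecture are on the right track; what is missing is the position-space step that makes both the compactness and the induction legitimate.
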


\begin{proof} Fix $\udelta\in\Delta((\delta_0',1))$. 
We only have something to prove if $E\geq \Sigma$. The proof goes by induction in energy, and we assume the theorem holds true
for all $\epsilon'>0$ and $E'\leq E_0$. We now fix $\epsilon>0$ and $E \in (E_0,E_0+\delta_0']$ and proceed to argue that one may choose $C,\kappa$ and $K$, such that the form estimate in the theorem is satisfied. 

In this proof we make use of geometric localization in two guises,
cf.~Appendix~\ref{Subsect-GeomLoc}. We shall in particular make use of momentum localization,
as in the proof of Theorem~\ref{Thm-ME-H-LargeE}. The other application of geometric localization is in position space, which we now introduce. Here $\gothh =\gothh_0=\gothh_\infty= L^2(\RR^3)$. Let $j_0,j_\infty\in C^\infty(\RR^3)$
with $0\leq j_0\leq 1$, $j_0(\lambda) = 1$ for $|\lambda|\leq 1$, 
$j_0(\lambda)=1$ for $|\lambda|\geq 2$, and finally; $j_0^2 + j_\infty^2 = 1$.
In the following, we write $j_i^R$ for $j_i(x/R)$, where $R>0$ and $x = \ri \nabla_k$.
Then $j^R=(j_0^R,j_\infty^R)\colon \gothh\to \gothh\oplus\gothh$ is an isometry (not unitary)
and the map $\cGamma(j^R)\colon \cF\to \cF\otimes\cF$
is also an isometry. For simplicity, we write
$\cGamma(j^R)$ for $\one_\cK\otimes \cGamma(j^R)$. 

We begin by recalling the notation $N_\udelta = \D\Gamma(m_\udelta)$ and
observing two estimates.
By pseudo differential calculus $[m_\udelta,j_i^R] = O(R^{-1})$, $i=0,\infty$. 
Abbreviate $q = (q_0,q_\infty) = ([m_\udelta,j_0^R],[m_\udelta,j_\infty^R])$
and observe  using \eqref{BasicNumberBound} that
\[
 \cGamma(j^R)^*\D\cGamma(j^R,q) = \D\Gamma(j_0^R q_0+ j_\infty^R q_\infty) \geq 
  - \frac{C}{R}\bigl(N+ \one_{\cF}\bigr)
\]
for some $C>0$. Hence, by \eqref{dGammacGammaIntertwine}, we find that
\begin{align}\label{GeomLocForm}
\nonumber   N_\udelta  & = \cGamma(j^R)^*\cGamma(j^R)N_\udelta\\ 
\nonumber  & =  \cGamma(j^R)^*\bigl(N_\udelta\otimes\one_{\cF} + \one_{\cF}\otimes N_\udelta\bigr)\cGamma(j^R) + \cGamma(j^R)^*\D\cGamma(j^R,q)\\
  & \geq \cGamma(j^R)^*\bigl(N_\udelta\otimes\one_{\cF} + \one_{\cF}\otimes N_\udelta\bigr)\cGamma(j^R) - \frac{C}{R}\bigl(N+ \one_{\cF}\bigr).
  \end{align}
The second estimate we need is
\begin{align}\label{GeomLocForm2}
\nonumber \phi(\ri a_\udelta\coup) & = \frac1{\sqrt{2}}\bigl(\cGamma(j^R)^*\cGamma(j^R) a^*(\ri a_\udelta\coup) + a(\ri a_\udelta\coup)\cGamma(j^R)^*\cGamma(j^R)  \bigr)\\
\nonumber & = \cGamma(j^R)^*\bigl(\phi(\ri j_0^R a_\udelta \coup)\otimes\one_\cF + \one_\cF\otimes \phi(\ri j_\infty^R a_\udelta\coup) \bigr)\cGamma(j^R)\\
&\geq \cGamma(j^R)^*\bigl(\phi(\ri a_\udelta \coup)\otimes\one_\cF\bigr)\cGamma(j^R) 
-o_R(1)(N+\one_\cH),
\end{align}  
where $\lim_{R\to \infty} o_R(1) = 0$. Here we used \eqref{Gamma-a-Inter}, \eqref{PhiEnergyBound} and the fact that 
$\|j_\infty^R a_\udelta \coup\|$ and $\|(1-j_0^R)a_\udelta \coup\|$ both converge to zero for $R\to\infty$.

Write $P = \vacuum\tvacuum$ and $P^\perp = \one_\cF-P$ as projection operators 
on $\cF$ or $\cH$ (read as e.g. $\one_\cK\otimes P$).
 In order to use geometric localization, we need
the extended Hilbert space $\cH^\ext = \cH\otimes\cF$  and the extended Hamiltonian
$H^\ext = H\otimes\one_\cF + \one_\cH\otimes H_\ph$. The extended commutator is
\[
H^{\ext'}_\udelta = H_\udelta'\otimes\one_\cF + \one_\cH \otimes N_\udelta,
\]
as a self-adjoint operator on $\cD(N^\ext)$, where $N^\ext = N\otimes\one_\cF + \one_\cH\otimes N$.

Observe that if $S\colon \cD(N^{1/2})\to \cH^\ext$ is bounded, then for any $\sigma>0$, we have
\begin{align}\label{ObsLemma}
\nonumber & \re\bigl\{\cGamma(j^R)^*(\one_\cH\otimes P) S\big\}  = \re\big\{\Gamma(j_0^R) S_0\bigr\}\\
\nonumber & \quad = \re\bigl\{\one[|H-E|\geq 1]\Gamma(j^R_0)S_0(N+\one_\cH)^{-\frac12} (N+\one_\cH)^{\frac12}\bigr\}  +  \re\bigl\{K_1 (N+\one_\cH)^\frac12 \bigr\}\\
& \quad \geq -\frac{\sigma}4(N+\one_\cH) - \frac{2}{\sigma}C_1\one[|H-E|\geq 1] - \frac2{\sigma} K_1 K_1^*,
\end{align}
where $S_0 = (\one_\cK\otimes P)S\colon\cD(N^{1/2})\to \cH\otimes\CC$ is the vacuum component of $S$ (in the second tensor factor),
$C_1 = \|S_0(N+\one_\cH)^{-1/2}\|^2$ and the compact operator  $K_1 = \one[|H-E| < 1]\Gamma(j_0^R)S_0(N+\one_\cH)^{-1/2}$.

 We fix the constant $\sigma>0$, such that it satisfies:
\begin{equation}\label{MainChoiceOfSigma}
\sigma \leq \frac{\epsilon}{5},\quad  2\sigma\|a_\udelta\coup\|^2\leq \frac{\epsilon}{5}\quad \textup{and}\quad
\frac{1-\frac{4\epsilon}{5}}{1+2\sigma} \geq 1-\epsilon.
\end{equation}
The observation above, together with geometric localization -- in the form of \eqref{GeomLocForm} and \eqref{GeomLocForm2} -- implies that we can pick $R_0 = R_0(\sigma)>0$ large enough, such that for $R\geq R_0$, we have as a form on $\cD(N)$
\begin{align}\label{TempEstimOfB}
\nonumber H_\udelta' & \geq  \cGamma(j^R)^* H^{\ext'}_\udelta \cGamma(j^R) -\frac{\sigma}4 (N+\one_\cH)\\
\nonumber & \geq  \cGamma(j^R)^* (\one_\cH\otimes P^\perp)H^{\ext'}_\udelta(\one_\cH\otimes P^\perp) \cGamma(j^R)\\
& \qquad  -\frac{\sigma}2 (N+\one_\cH)
-C_2\one[|H-E|>1] -K_2.
\end{align}
In the second inequality, we employed \eqref{ObsLemma} with the operator 
$S = -(\phi(\ri a_\udelta\coup)\otimes\one_\cF)\cGamma(j^R)$, 
for which  $S_0 = -\phi(\ri a_\udelta\coup)\Gamma(j_0^R)$. We may in particular take
 $K_2 = 2 K_1 K_1^*/\sigma$ and 
 $C_2 = 2 C_1/\sigma$. Note that both $C_2$ and $K_2$ depend on $R$, which will be fixed at the end of the proof.

We now employ again the geometric localization in momentum space, 
cf.~\eqref{MomentumIMS1} and  \eqref{MomentumIMS2}.
Let $F^{\delta_0} = (\one[|k|\geq\delta_0],\one[|k|< \delta_0])$ and
recall that the map
$\cGamma(F^{\delta_0})\colon \cF\to \cF_{>}^{\delta_0}\otimes\cF_<^{\delta_0}$ is unitary.
(For notational convenience below, we have switched the order of the interior and exterior regions.)
Abbreviate $\hcH^\ext = (\one_\cH\otimes\cGamma(F^{\delta_0}))\cH^\ext = \cH\otimes  \cF_{>}^{\delta_0}\otimes\cF_<^{\delta_0}$.
Compute using \eqref{GeomLocMomentum} the intertwining relations
\begin{align}
\label{Inter1}
& \cGamma(F^{\delta_0})P^\perp 
=\big(\one_{\cF_>^{\delta_0}}\otimes P^\perp_<  + P_>^\perp\otimes P_<\big)\cGamma(F^{\delta_0})\\
\label{Inter2}
& \big(\one_\cH\otimes \cGamma(F^{\delta_0})\big)H^{\ext\prime}_\udelta = \hH^{\ext\prime}_\udelta \big(\one_\cH\otimes \cGamma(F^{\delta_0})\big)\\
\label{Inter3}
& \big(\one_\cH\otimes \cGamma(F^{\delta_0})\big)H^\ext = \hH^\ext  \big(\one_\cH\otimes \cGamma(F^{\delta_0})\big).
\end{align}
Here $P_{>/<}$ denote the orthogonal projections onto the vacuum sectors inside $\cF^{\delta_0}_{>/<}$,
and 
\begin{align*}
& \hH^{\ext\prime}_\udelta = H^{\ext\prime}_{\udelta,>}\otimes \one_{\cF_<^{\delta_0}} 
+ \one_\cH\otimes\one_{\cF_>^{\delta_0}}\otimes N_{\udelta|\cF_<^{\delta_0}},\\
& H^{\ext\prime}_{\udelta,>} = H'_\udelta\otimes\one_{\cF_>^{\delta_0}}+ \one_\cH\otimes N_{\udelta|\cF_>^{\delta_0}},\\
& \hH^\ext = H\otimes \one_{\cF_>^{\delta_0}} \otimes \one_{\cF_<^{\delta_0}} + \one_\cH \otimes H_{\ph |\cF_>^{\delta_0}}\otimes
  \one_{\cF_<^{\delta_0}} + \one_\cH\otimes \one_{\cF_>^{\delta_0}}\otimes  H_{\ph |\cF_<^{\delta_0}}.
\end{align*}
Using that $H'_\udelta\geq -\|a_\udelta \coup\|^2\one_\cH$, cf. \eqref{PhiNumberBound}, we estimate
\begin{equation}\label{Estim1}
\bigl(\one_\cH\otimes \one_{\cF_>^{\delta_0}}\otimes P_<^\perp\bigr)
\hH^{\ext\prime}_\udelta
\geq \bigl(m(\delta_0) - \|a_\udelta \coup\|^2 \bigr)\one_\cH\otimes  \one_{\cF_>^{\delta_0}}\otimes P_<^\perp
\end{equation}
and observe the identity
\begin{equation}\label{Id1}
\bigl(\one_\cH\otimes P_>^\perp\otimes P_<\bigr)
\hH^{\ext\prime}_\udelta
 = \bigl( (\one_\cH\otimes P_>^\perp)H_{\udelta,>}^{\ext\prime}(\one_\cH\otimes P_>^\perp)\bigr)\otimes P_<.
\end{equation}

Using the intertwining relations \eqref{Inter1} and \eqref{Inter2}, together with \eqref{Estim1}, \eqref{Id1}  and the choice of $\delta_0'$, cf. \eqref{ChoiceOfdelta0}, we get
\begin{align}\label{GGMStep1}
& (\one_\cH\otimes P^\perp)H^{\ext\prime}_\udelta(\one_\cH\otimes P^\perp) \geq
\bigl(\one_\cH\otimes \cGamma(F^{\delta_0})^*\bigr)\\
\nonumber & \quad \times 
\Bigl\{\one_\cH\otimes  \one_{\cF_>^{\delta_0}} \otimes P_<^\perp
+\bigl( (\one_\cH\otimes P_>^\perp)H_{\udelta,>}^{\ext\prime}(\one_\cH\otimes P_>^\perp)\bigr)\otimes P_<\Bigr\}
\bigl(\one_\cH\otimes \cGamma(F^{\delta_0})\bigr).
\end{align}
To deal with the term in the brackets, we note that
\begin{equation}\label{GGMStep2}
(\one_\cH\otimes P_>^\perp)H_{\udelta,>}^{\ext\prime}(\one_\cH\otimes P_>^\perp)
 \geq \one_\cH\otimes P_>^\perp + H'_\udelta\otimes P_>^\perp
\end{equation}
and estimate using Lemma~\ref{Lemma-FromPointMEToUnifME} with $\epsilon$ replaced by $\epsilon/5$, and the induction assumption
\begin{align}\label{GGMStep3}
&\nonumber  H_\udelta'\otimes P_>^\perp  \otimes P_<  =
\Bigl\{\bigoplus_{\ell=1}^\infty \int^\oplus_{(\RR^3\backslash B(\delta_0))^\ell} H'_\udelta\, \D k_1 \cdots \D k_\ell\Bigr\}\otimes P_<\\
\nonumber & \geq  - \Bigl\{\bigoplus_{\ell=1}^\infty \int^\oplus_{(\RR^3\backslash B(\delta_0))^\ell} 
\left(\frac{\epsilon}5 \one_\cH + C\one[|H+\textstyle\sum_{j=1}^\ell |k_j|-E|\geq\kappa]\Bigr)  \D k_1 \cdots \D k_\ell\right\}\otimes P_<\\
\nonumber & = -\frac{\epsilon}{5}\one_\cH\otimes P_>^\perp\otimes P_< - C\one[|\hH^\ext-E|\geq\kappa]\one_\cH\otimes P_>^\perp\otimes P_<\\
&\geq -\frac{\epsilon}{5}\one_{\hcH^\ext}  - C\one[|\hH^\ext-E|\geq\kappa].
\end{align}
Here $\kappa$ and $C$ are coming from Lemma~\ref{Lemma-FromPointMEToUnifME}.
Combining \eqref{GGMStep1}--\eqref{GGMStep3}, cf.~also \eqref{Inter1} and \eqref{Inter3}, we find
\begin{align}\label{TempEstOfB2}
\nonumber& (\one_\cH\otimes P^\perp)H^{\ext\prime}_\udelta(\one_\cH\otimes P^\perp) \geq
\bigl(\one_\cH\otimes \cGamma(F^{\delta_0})^*\bigr)\\
\nonumber& \qquad \times 
\Bigl\{
\one_\cH\otimes  \one_{\cF_>^{\delta_0}} \otimes P_<^\perp + \one_\cH \otimes P_>^\perp \otimes P_<  
-\frac{\epsilon}{5}\one_{\hcH^\ext}  - C\one[|\hH^\ext-E|\geq\kappa]\Bigr\}\\
\nonumber &\qquad \times \bigl(\one_\cH\otimes \cGamma(F^{\delta_0})\bigr)\\
&\quad  = \one_\cH\otimes P^\perp - \frac{\epsilon}{5} \one_{\cH^\ext} - C\one[|H^\ext-E|\geq\kappa].
\end{align}

Pick a non-negative $f\in C_0^\infty(\RR)$ with $\supp(f)\subseteq [-\kappa,\kappa]$ and
$f=1$ on the interval $[-\kappa/2,\kappa/2]$.
Inserting \eqref{TempEstOfB2} into \eqref{TempEstimOfB}, we estimate for $R\geq R_0$
\begin{align}\label{TempEstOfB3}
\nonumber H'_\udelta & \geq \cGamma(j^R)^* \bigl\{\one_\cH\otimes P^\perp - \frac{\epsilon}{5} \one_{\cH^\ext} - C\one[|H^\ext-E|\geq\kappa]\bigr\} \cGamma(j^R)\\
\nonumber & \qquad  -\frac{\sigma}2 (N+\one_\cH)
-C_2\one[|H-E|\geq 1] -K_2\\
\nonumber &\geq  \cGamma(j^R)^* \Bigl\{\bigl(1 - \frac{\epsilon}{5}\bigr) \one_{\cH^\ext} - Cf(H^\ext-E)\Bigr\} \cGamma(j^R)\\
& \qquad  -\frac{\sigma}2 (N+\one_\cH)
-C_3\one[|H-E|\geq 1] -K_3.
\end{align}
 In the last inequality, we made us of the estimate
\begin{align*}
&-\cGamma(j^R)(\one_\cH\otimes P)\cGamma(j^R)  = -\Gamma(j_0^R)^2 \\
&\quad  \geq - \one[|H-E|\geq 1]
-\re\bigl\{(1+\one[|H-E|\geq 1])\Gamma(j_0^R)^2\one[|H-E| < 1]\bigr\},
\end{align*} 
the last term on the right-hand side being compact, such that we may take $C_3 = C_2+1$ and $K_3 = K_2 + \re\{(1+\one[|H-E| < 1])\Gamma(j_0^R)^2\one[|H-E| < 1]\}$.

We proceed to argue that for $R > R_0$ sufficiently large,  we have
\begin{equation}\label{TempEstOfB4}
\cGamma(j^R)f(H^\ext-E) \cGamma(j^R)\leq f(H) + \frac{\epsilon}{5}\one_\cH + \frac{\sigma}2 (N+\one_\cH),
\end{equation}
in the sense of forms on $\cD(N)$.
To see this, we estimate first for $\psi\in\tcC^\romL$ and $\varphi\in\cC$, using \eqref{BasicNumberBound} (with $\rho=1/2$), \eqref{dGammacGammaIntertwine} and \eqref{phicGammaIntertwine} (with $f_\infty = 0$)
\[
\bigl|\bigl\la \psi,\bigl(H^\ext\cGamma(j^R)-\cGamma(j^R)H\bigr)\varphi\bigr\ra\bigr|
\leq o_R(1)\bigl\|(N^\ext+\one_{\cH})^{\frac12}\psi\bigr\|\bigl\|(N+\one_\cH)^\frac12 \varphi\bigr\|.
\]
Here $N^\ext = \one_\cK\otimes N\otimes \one_\cF + \one_\cH\otimes N$, $\lim_{R\to\infty} o_R(1) = 0$ and we used that $[|k|,j_i^R]$ extends by continuity from a form
on $\cD(|k|)$ to a bounded form on $\gothh$, bounded by $c/R$ for some $c>0$. 
The estimate extends by continuity to $(\psi,\varphi)\in \cD(N^\ext)\cap\cD(H^\ext) \times \cD(N)\cap\cD(H)$. Secondly, using Proposition~\ref{Prop-DomInv}~\ref{Item-ResInv-MoPlus}, we find an $n\in\NN$ such that
\begin{align*}
& \bigl|\bigl\la \psi,\bigl(\cGamma(j^R)(H-z)^{-1}-(H^\ext-z)^{-1}\cGamma(j^R)\bigr)\varphi\bigr\ra\bigr|\\
& \qquad 
\leq o_R(1)\bigl(1+|\im z|^{-n}\bigr)\bigl\|(N^\ext+\one_{\cH})^{\frac12}\psi\bigr\|\bigl\|(N+\one_\cH)^\frac12\varphi\bigr\|
\end{align*}
for all $\psi\in\cD(N^\ext)$, $\varphi\in\cD(N)$ and $z\in\CC$ with $\im z\neq 0$.
The estimate \eqref{TempEstOfB4} now follows from an almost analytic extension argument for an $R>R_0$ sufficiently large, which we now fix. 
For almost analytic extensions, we refer the reader to \cite{Moeller2000}.

Inserting \eqref{TempEstOfB4} into \eqref{TempEstOfB3} with the $R$ fixed above yields
\begin{align*}
H_\udelta' & \geq \bigl(1 - \frac{2\epsilon}{5}\bigr) \one_{\cH} -\sigma (N+\one_\cH) - Cf(H-E)-C_3\one[|H-E|\geq 1] - K_3\\
&\geq \bigl(1-\frac{2\epsilon}{5}-\sigma\bigr)\one_\cH -\sigma N - C_4\one[|H-E|\geq\kappa/2]- K_3,
\end{align*}
where $C_4 = C+C_3$. 

The proof is now completed, as in the proof of Lemma~\ref{Lemma-FromPointMEToUnifME}, by the bound
\begin{align*}
(1+2\sigma)H'_\udelta & \geq  H'_\udelta + \sigma N - 2 \sigma \|a_\udelta\coup\|^2\one_\cH\\
&  \geq 
\bigl(1-\frac{4\epsilon}{5}\bigr)\one_\cH - C_4\one[|H-E|\geq\kappa/2]- K_3,
\end{align*}
where we used the choice of $\sigma$, cf.~\eqref{MainChoiceOfSigma}. This concludes the proof.
\end{proof}

We arrive at the following structure result for the pure point spectrum of $H$.

\begin{corollary}\label{Cor-Finite-pp-H} Suppose \HGCond{1}. The operator $H$ has a finite number of eigenvalues, all of finite multiplicity.
\end{corollary}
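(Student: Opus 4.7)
My plan would be to combine the high-energy positive commutator estimate (Theorem~\ref{Thm-ME-H-LargeE}) with the energy-local Mourre estimate (Theorem~\ref{Thm-GGM}), using the virial theorem (Theorem~\ref{Thm-VirialH}) to extract information on the point spectrum. The number bound in Theorem~\ref{Thm-NumberBoundH} guarantees that any eigenvector lies in $\cD(\sqrt{N}) = \cD(\sqrt{N_\udelta})$, so all forms $\la\psi,H_\udelta'\psi\ra$ appearing below are well-defined.

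First I would show $\sigma_\pp(H)$ is bounded above. Apply Theorem~\ref{Thm-ME-H-LargeE} with, say, $e=1$ to obtain $\delta_\infty'\geq 1$, a constant $c>0$ and an energy $E_0\in\RR$ such that for $\udelta\in\Delta((1,\delta_\infty'))$,
\[
H_\udelta'\geq \one_\cH - c\,\one[H\leq E_0]
\]
as forms on $\cD(N)$. If $\psi$ is a normalized eigenvector with eigenvalue $E>E_0$, then $\one[H\leq E_0]\psi=0$, so Theorem~\ref{Thm-VirialH} gives $0=\la\psi,H_\udelta'\psi\ra\geq 1$, a contradiction. Hence $\sigma_\pp(H)\subseteq[\Sigma,E_0]$.

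Next I would fix once and for all $\udelta=(\delta_0',\delta_\infty')$ with $\delta_0'$ chosen as in \eqref{ChoiceOfdelta00} and $\delta_\infty'$ as above, so that both Theorem~\ref{Thm-ME-H-LargeE} and Theorem~\ref{Thm-GGM} apply. For each $E\in[\Sigma,E_0]$, Theorem~\ref{Thm-GGM} with $\epsilon=1/2$ produces $\kappa_E>0$, $C_E>0$ and a compact self-adjoint operator $K_E$ with
\[
H_\udelta'\geq \tfrac12\one_\cH - C_E\,\one[|H-E|\geq \kappa_E] - K_E.
\]
I claim that only finitely many eigenvalues (counted with multiplicity) lie in $I_E:=(E-\kappa_E,E+\kappa_E)$. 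Indeed, suppose $\{\psi_n\}_{n\geq 1}$ is an orthonormal family of eigenvectors with eigenvalues in $I_E$. Then $\psi_n\rightharpoonup 0$ weakly in $\cH$ by Bessel's inequality, so $K_E\psi_n\to 0$ in norm by compactness. Taking expectations of the above inequality in $\psi_n$ and using Theorem~\ref{Thm-VirialH} together with $\one[|H-E|\geq\kappa_E]\psi_n=0$, we get
\[
0 = \la\psi_n,H_\udelta'\psi_n\ra \geq \tfrac12 - \la\psi_n,K_E\psi_n\ra \longrightarrow \tfrac12,
\]
a contradiction. Thus the total eigenspace dimension inside $I_E$ is finite.

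Finally, the compact set $[\Sigma,E_0]$ is covered by the open intervals $\{I_E\}_{E\in[\Sigma,E_0]}$; extract a finite subcover $I_{E_1},\dots,I_{E_p}$. Each $I_{E_j}$ contains only finitely many eigenvalues with finite multiplicities, so the union does as well, proving the corollary. The main technical point, already handled in Theorem~\ref{Thm-GGM}, is that the Mourre inequality produces a compact remainder $K_E$ rather than a resolvent localization in $H$; once that is available the finite-multiplicity argument is the standard one, and the only work particular to this setting is ensuring, via the high-energy bound Theorem~\ref{Thm-ME-H-LargeE}, that $\sigma_\pp(H)$ is confined to a compact interval so that the local finiteness can be globalized.
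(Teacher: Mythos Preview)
Your proof is correct and follows essentially the same strategy as the paper: use Theorem~\ref{Thm-ME-H-LargeE} with the virial theorem to confine $\sigma_\pp(H)$ to a compact interval, then use Theorem~\ref{Thm-GGM} with the virial theorem and compactness of $K_E$ to rule out infinitely many eigenvectors. The only cosmetic difference is that the paper argues by extracting a convergent subsequence of eigenvalues and applying Theorem~\ref{Thm-GGM} once at the accumulation point, whereas you apply it at every $E$ and pass to a finite subcover; these are interchangeable packagings of the same idea.
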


\begin{proof}
 Assume towards a contradiction that there exists an enumerable sequence $\psi_n$
 of mutually orthogonal normalized eigenstates. Let $\mu_j$ denote the 
 corresponding eigenvalues. Due to Theorems~\ref{Thm-VirialH} and~\ref{Thm-ME-H-LargeE} we know that $\{\mu_j\}_{j=1}^\infty$
 is a bounded sequence. Hence we can assume that it is convergent towards an energy $E$.
 
 Now Theorems~\ref{Thm-VirialH} and~\ref{Thm-GGM}, applied with $\epsilon=1/2$, yield the estimate
 \[
 0\geq \frac12 - \la \psi_j,K\psi_j\ra,
 \] 
 for $j\geq j_0$, where $j_0$ is such that $|\mu_j-E|<\kappa$ for $j\geq j_0$.
 Since $K$ is compact and $\wlim \psi_j = 0$, we conclude that $\lim_{j\to\infty}\la \psi_j,K\psi_j\ra=0$.
 This establishes the sought after contradiction.
\end{proof}

In the following, we denote by $P$ the finite rank projection that projects onto the subspace consisting of eigenstates for $H$,
and we write $\bP=\one_\cH-P$. 

\begin{corollary}\label{Cor-FromPointMEToUnifME} Suppose
  \HGCond{1}. There exists $\udelta'\in\Delta_0$, such that: For any $\epsilon>0$ and $\udelta\in\Delta(\udelta')$, 
there exist $\kappa>0$ and $C>0$, such that
the following two estimates hold for all $E\in\RR$
\begin{align}
\label{UnifHBound1} H'_\udelta &\geq -\epsilon \one_\cH - C\one[|H-E|\geq \kappa],\\
\label{UnifHBound2} H'_\udelta &\geq (1-\epsilon)\one_\cH - C(\one[|H-E|\geq \kappa] + P),
\end{align}
in the sense of forms on $\cD(N)$.
\end{corollary}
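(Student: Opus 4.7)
The plan is to cover the real line by three energy regimes and treat them separately. Choose $\delta_0'\in(0,1]$ as in \eqref{ChoiceOfdelta00} so that Theorem~\ref{Thm-GGM} applies on $\Delta((\delta_0',1))$. Apply Theorem~\ref{Thm-ME-H-LargeE} with $e=1$ to obtain constants $\delta_\infty'\geq 1$, $c>0$ and $E_0\in\RR$, and set $\udelta'=(\delta_0',\delta_\infty')$. Let $E_1>E_0+1$. For $E\geq E_1$ and any $\udelta\in\Delta(\udelta')$ the inclusion $\{H\leq E_0\}\subseteq\{|H-E|\geq 1\}$ gives $H'_\udelta\geq \one_\cH-c\one[|H-E|\geq 1]$, which implies both \eqref{UnifHBound1} and \eqref{UnifHBound2} with $\kappa=1$ and any $C\geq c$. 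For $E\leq \Sigma-1$ one has $\one[|H-E|\geq 1]=\one_\cH$, so the $N^{1/2}$-bound \eqref{PhiNumberBound} applied to $\phi(\ri a_\udelta\coup)$ yields $H'_\udelta\geq -\|a_\udelta\coup\|^2\one_\cH$, and this implies both estimates once $C$ is taken large enough. It remains to handle the compact interval $J:=[\Sigma-1,E_1]$.

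For \eqref{UnifHBound1} on $J$, the hypothesis of Lemma~\ref{Lemma-FromPointMEToUnifME} is supplied pointwise by Theorem~\ref{Thm-GGM}, so the lemma delivers a uniform $\kappa>0$ and $C>0$ valid for every $E\in J$. The real work is the pointwise version of \eqref{UnifHBound2} at a fixed $E\in J$. Apply Theorem~\ref{Thm-GGM} with $\epsilon$ replaced by $\epsilon/3$ to obtain a compact self-adjoint $K_E$ and constants $\kappa_E,C_E$ with
\[
H'_\udelta\geq (1-\tfrac{\epsilon}{3})\one_\cH - C_E\one[|H-E|\geq \kappa_E] - K_E.
\]
The goal is to absorb $K_E$ into $(\epsilon/3)\one_\cH$, a multiple of $P$, and a multiple of $\one[|H-E|\geq \kappa_E']$. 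Split $K_E=PK_EP+PK_E\bP+\bP K_EP+\bP K_E\bP$. The three $P$-terms are finite-rank; by Cauchy-Schwarz, using $\|P\psi\|^2=\la\psi,P\psi\ra$, they are bounded by $(\epsilon/6)\|\psi\|^2+c_1\la\psi,P\psi\ra$ for an appropriate $c_1$.

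The key point is to control $\bP K_E\bP$. Since $\bP$ commutes with $H$, the spectral theorem gives $\one[|H-E|<\kappa]\bP\to P_E\bP=0$ strongly as $\kappa\to 0$ (the spectral projection of $H$ at $\{E\}$ is either $0$ or contained in $\Ran(P)$). Combined with compactness of $K_E$, this yields $\|\bP K_E\bP\one[|H-E|<\kappa]\|\to 0$. Pick $\kappa_E'>0$ so small that this norm is at most $\epsilon/12$. Decomposing $\bP\psi=\one[|H-E|<\kappa_E']\bP\psi+\one[|H-E|\geq\kappa_E']\bP\psi$ and expanding the quadratic form gives
\[
|\la\psi,\bP K_E\bP\psi\ra|\leq \tfrac{\epsilon}{6}\|\psi\|^2+c_2\|\one[|H-E|\geq\kappa_E']\psi\|^2,
\]
with $c_2$ depending on $E$ and $\|K_E\|$. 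Combining the estimates produces a pointwise bound of the form \eqref{UnifHBound2} at $E\in J$ with constants $(\tkappa_E,\tC_E)$.

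To pass from pointwise to uniform on $J$, one copies the compactness step from the end of the proof of Lemma~\ref{Lemma-FromPointMEToUnifME}: assume for contradiction that no uniform $(\kappa,C)$ work, extract $E_n\in J$ with $E_n\to E_\infty\in J$ violating \eqref{UnifHBound2} for $\kappa_n=1/n$, $C_n=n$; then the pointwise estimate at $E_\infty$ with its own $(\tkappa_\infty,\tC_\infty)$ contradicts the failure at $E_n$ for large $n$ since $|E_n-E_\infty|<\tkappa_\infty/2$ eventually. This gives \eqref{UnifHBound2} uniformly on $J$, and the three regime estimates are then amalgamated by taking the worst of the three $(\kappa,C)$ pairs. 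The main obstacle is the absorption of the compact operator $K_E$ into $CP$ plus a small error; this is where one genuinely uses that $P$ is the full spectral projection onto $\sigma_\pp(H)$, via the strong-limit identity $\one[|H-E|<\kappa]\bP\to 0$.
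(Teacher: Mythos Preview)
Your proof is correct and follows essentially the same route as the paper's: split $\RR$ into high-energy, low-energy, and a compact interval $J$; invoke Theorem~\ref{Thm-ME-H-LargeE} and the trivial bound for the tails; use Lemma~\ref{Lemma-FromPointMEToUnifME} for \eqref{UnifHBound1} on $J$; and for \eqref{UnifHBound2} absorb the compact error $K_E$ from Theorem~\ref{Thm-GGM} using that $\one[|H-E|<\kappa]\bP\to 0$ strongly combined with compactness, then uniformize by the compactness-of-$J$ contradiction argument. The only cosmetic difference is that the paper packages the decomposition of $K_E$ as $K_E=PK_EP+\re\{(\one_\cH+P)K_E\bP\}$ and bounds $\|K_E\bP\one[|H-E|<\kappa]\|$ directly, whereas you split into four blocks and bound $\|\bP K_E\bP\one[|H-E|<\kappa]\|$; the underlying mechanism is identical.
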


\begin{proof} Put $\udelta' = (\delta_0',\delta_\infty')$, where $\delta_\infty'$ 
and $\delta_0'$ come from Theorems~\ref{Thm-ME-H-LargeE} and~\ref{Thm-GGM}, respectively.
The estimate \eqref{UnifHBound1} is now a direct consequence of
these two theorems together with Lemma~\ref{Lemma-FromPointMEToUnifME}.

We proceed to the second bound \eqref{UnifHBound2}. This bound is obviously true for $E>E_0+1$ (cf. Theorem~\ref{Thm-ME-H-LargeE})
and for $E<\Sigma-1$, so what remains is to prove
the estimate uniformly in $E \in [\Sigma-1,E_0+1] =: J$, which is a compact interval. Let $\udelta\in\Delta(\udelta')$ and $\epsilon>0$. We first argue that the estimate is correct
for fixed $E\in J$. Apply Theorem~\ref{Thm-GGM} with $\epsilon$ replaced by $\epsilon/4$. 
For the resulting compact operator $K$, write
\[
K = P K P + \re\{(\one_\cH+P) K\bP\}\geq -\|K\| P + \re\{(\one_\cH+P) K\bP\} .
\]
Decompose
\[
K\bP = K\bP \one[|H-E|\geq \kappa] + K\bP\one[|H-E| < \kappa],
\]
where one can choose $\kappa$ small enough, such that $\|K\bP\one[|H-E| < \kappa]\|\leq\epsilon/4$.
We estimate, for any $\sigma>0$, 
\begin{align*}
\re\{(\one_\cH+P) K\bP\} &\geq -\frac{2\epsilon}{4}\one_\cH  + \re\bigl\{(\one_\cH+P) K\bP\one[|H-E|\geq \kappa] \bigr\}\\
&  \geq  -\frac{2\epsilon}{4}\one_\cH -\sigma \one_\cH - \frac{\|K\|^2}{\sigma}\one[|H-E|\geq \kappa].
\end{align*}
Choosing $\sigma=\epsilon/4$, we get -- for some $C>0$ -- the estimate
\[
K \geq -\frac{3\epsilon}4  \one_\cH - C \bigl(\one[|H-E|\geq \kappa]+ P\bigr).
\]
This completes the argument that for a fixed $E$, one can find $\kappa$ and $C$, such that the commutator estimate \eqref{UnifHBound2} holds true.

 Suppose the estimate \eqref{UnifHBound2} is not correct uniformly in $E$.
That is, for any $\kappa>0$ and $C>0$, there exists $E\in J$ such that estimate fails to hold.

Put $\kappa_n = 1/n$ and $C_n = n$. This gives a sequence $E_n\in J$, for which the estimate \eqref{UnifHBound2} is false.
We may assume, due to compactness of $J$, that $E_n$ converges to an energy $E_\infty\in J$.
Recalling that we have just verified that \eqref{UnifHBound2} holds for a fixed $E\in J$,
we get a $\kappa_\infty>0$ and $C_\infty>0$, such that \eqref{UnifHBound2} holds true
at $E_\infty$. Pick $n$ large enough, such that $1/n<\kappa_\infty/2$, $C_n> C_\infty$ and $|E_\infty-E_n|<\kappa_\infty/2$.
Then
\begin{align*}
H'_\udelta & \geq (1-\epsilon)\one_\cH - C_\infty(\one[|H-E|\geq \kappa_\infty] + P) \\
& \geq (1-\epsilon)\one_\cH - C_n(\one[|H-E_n|\geq \kappa_n] + P),
\end{align*}
contradicting the choice of $E_n$.
\end{proof}

%***************************************************************************
\subsection{Estimates at Positive Temperature}\label{Subsec-CommBoundsL}
%***************************************************************************

In this subsection, we use the notation $\tN_\udelta$ for
$\D\Gamma(m_\udelta)$, which is the analogue of $N_\udelta$ from the
previous subsection. We can then write $\tL_{\beta,\udelta}' = \one_{\cK\otimes\cK}\otimes\tN_\udelta-
\phi(\ri \ta_\udelta\tcoup_\beta)$.

\begin{theorem}\label{Thm-ME-L-LargeE} Suppose \LGCond{1}. Let $e>0$
  be given. 
There exists $E_0>0$, $\delta_\infty'>0$  and $C>0$ such that the following
form bound holds on  $\cD(N^\romL)$ for all $E\geq E_0$ and $\udelta\in \Delta((1,\delta'_\infty))$
\[
L_{\beta,\udelta}' \geq e\one - C\one[|L_\beta|\leq E].
\]
\end{theorem}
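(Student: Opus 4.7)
The plan is to mirror the proof of Theorem~\ref{Thm-ME-H-LargeE}, carried out in the \JakPil glued coordinates, where---via the unitary $\cU$---the target inequality becomes $\tL_{\beta,\udelta}'\ge e\one-C\one[|\tL_\beta|\le E]$ as a form on $\cD(\tN)$. The argument would split into a uniform bound $\tL_{\beta,\udelta}'\gtrsim\tN_\udelta$, a geometric localization in $\omega$-space relating $\tN_\udelta$ to spectral projections of $\tL_0$, and a transfer from $\tL_0$ to $\tL_\beta$.

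First, as in Remark~\ref{Rem-DerIsL2}, \LGCond{1} together with the $\mu/4$ exponents built into $d(\delta_0)$ and $d(\delta_\infty)$ would guarantee that $\ta_\udelta\tcoup_\beta\in L^2(\RR\times S^2;\cB(\cK\otimes\cK))$ with an $L^2$ norm bounded \emph{uniformly} in $\udelta\in\Delta_0$. Feeding this into \eqref{PhiNumberBound}, applied with $\tN\le\tN_\udelta$ to absorb the field contribution, would yield
\[
\tL_{\beta,\udelta}'\ \ge\ \tfrac12\tN_\udelta\ -\ C_1\one
\]
on $\cD(\sqrt{\tN})$, with $C_1>0$ independent of $\udelta\in\Delta_0$. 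Next, for $R>2\delta_\infty$ I would introduce the momentum partition $F^R=(\one[|\omega|<R],\one[|\omega|\ge R])$ on $\tgothh$ and the lifted unitary $\cGamma(F^R)\colon\tcF\to\tcF^R_<\otimes\tcF^R_>$. From \eqref{GeomLocMomentum}, $\tN_{\udelta|<}\ge R^{-1}\D\Gamma(|\omega|_<)$ on the interior factor (using $m_\udelta\ge 1$ and $|\omega|<R$) and $\tN_{\udelta|>}\ge d(R)\bP_\Omega$ on the exterior factor (using $m_\udelta(\omega)=d(|\omega|)\ge d(R)$ for $|\omega|\ge R>2\delta_\infty$). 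Combined with $|\tL_0|\le\|L_\romp\|+\D\Gamma(|\omega|)$, the same chain of inequalities as after \eqref{PF0-BoundInt} in the Hamiltonian proof would deliver
\[
\tN_\udelta\ \ge\ \min\bigl\{\tfrac{\tE-\|L_\romp\|}{R},\,d(R)\bigr\}\,\one[|\tL_0|>\tE]
\]
on $\cD(\tN)$ for any $\tE>\|L_\romp\|$.

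The hard step will be transferring $|\tL_0|>\tE$ to $|\tL_\beta|>E$. Because $\sigma(\tL_0)=\sigma(\tL_\beta)=\RR$, there is no one-sided regulariser analogous to $(H-\Sigma+1)^{-1}$; instead I would compare the quadratic regularisers $\tL_0^2+\one$ and $\tL_\beta^2+\one$. Using $\tL_\beta=\tL_0+\phi(\tcoup_\beta)$ and the $\sqrt{\tN}$-boundedness of $\phi(\tcoup_\beta)$, a Cauchy--Schwarz computation should give, for any $\epsilon\in(0,1)$,
\[
\tL_0^2\ \ge\ (1-\epsilon)\tL_\beta^2\ -\ C_\epsilon(\tN+\one),
\]
as a form on $\cC^\romL$ extended by continuity via Corollary~\ref{Cor-BasicLReg}. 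I would then split by $\{\tN\le n\}$ and $\{\tN>n\}$: on $\{\tN>n\}$ the first step already gives $\tN_\udelta\ge n$, while on $\{\tN\le n\}$ the quadratic comparison forces $|\tL_\beta|>E$ to imply $|\tL_0|>\tE$ as soon as $E$ is chosen large enough in terms of $\tE$ and $n$, up to a commutator/number error reabsorbable by the first step.

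With these three ingredients in hand, the constants would be fixed in the following order: given $e>0$, first pick $R$ so that $d(R)/2\ge e+C_1+1$, then $\delta_\infty'\ge R/2$ so $m_\udelta(R)=d(R)$ on $\Delta((1,\delta_\infty'))$, then $\tE$ so $(\tE-\|L_\romp\|)/(2R)\ge e+C_1+1$, and finally $E_0$ large enough that the transfer step closes for every $E\ge E_0$. The principal obstacle is the transfer: the $\tN$-correction in the quadratic comparison is absent from the Hamiltonian proof, and keeping the final error in the clean form $C\one[|\tL_\beta|\le E]$ uniformly in $\udelta\in\Delta((1,\delta_\infty'))$ will be the delicate piece of bookkeeping.
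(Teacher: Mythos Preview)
Your first two steps are fine and essentially match the paper: the uniform lower bound $\tL_{\beta,\udelta}'\ge\tfrac12\tN_\udelta-C_1$ and the momentum--space localization both go through, and your interior estimate via $m_\udelta\ge|\omega|/R$ on $\{|\omega|<R\}$ is correct.

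The divergence is in the transfer step, and there the paper's route is both different and simpler. Rather than comparing $\tL_0^2$ with $\tL_\beta^2$, the paper builds $\tN$ into the spectral projection from the outset: it runs the localization argument with $\one[|\tL_0|+\tN>\tE]$ instead of $\one[|\tL_0|>\tE]$. The interior bound is unchanged up to replacing $R$ by $R+1$. The payoff is that the transfer becomes a one--line resolvent estimate:
\[
\one\bigl[|\tL_0|+\tN\le\tE\bigr]\ \le\ (\tE+1)\bigl(|\tL_0|+\tN+1\bigr)^{-1}
\ \le\ C(\tE+1)\bigl(|\tL_\beta|+1\bigr)^{-1},
\]
the last inequality by sandwiching with $(|\tL_\beta|+1)^{1/2}$ and using that $(|\tL_\beta|+1)^{1/2}(|\tL_0|+\tN+1)^{-1}(|\tL_\beta|+1)^{1/2}$ is bounded. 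That boundedness is exactly the content of Proposition~\ref{Prop-BasicLReg}\ref{Item-NLDomainInv} (equivalently Corollary~\ref{Cor-BasicLReg}\ref{Item-NLDomainInv-t}): $\cD(\tN)\cap\cD(\tL_\beta)=\cD(\tN)\cap\cD(\tL_0)$, so the closed graph theorem applies. From there one finishes as in the Hamiltonian case.

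Your quadratic comparison $\tL_0^2\ge(1-\epsilon)\tL_\beta^2-C_\epsilon(\tN+1)$ is provable as a form on $\cC^\romL$, but turning it into the operator inequality $\one[|\tL_0|\le\tE]\one[\tN\le n]\le C'\one[|\tL_\beta|\le E]$ is not just bookkeeping: the projections of $\tL_\beta$ do not commute with those of $(\tL_0,\tN)$, so a pointwise implication on spectra does not directly yield an operator bound. You would effectively be forced back to a resolvent argument on $|\tL_0|+\tN$ anyway (note $\one[|\tL_0|\le\tE]\one[\tN\le n]\le\one[|\tL_0|+\tN\le\tE+n]$), at which point you have reproduced the paper's step with extra work. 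The cleaner move is to insert $\tN$ into the projection from the start and invoke the domain identity.
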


\begin{proof} It suffices to prove the theorem with $L_\beta$ and $N^\romL$
  replaced by $\tL_\beta$ and $\tN$.
The proof is divided into two steps. First we consider the
uncoupled glued Liouvillean $\tL_0$.
The reader should not confuse the subscript $0$ with infinite temperature (zero inverse temperature).
We proceed as in the proof of Theorem~\ref{Thm-ME-H-LargeE}.

Observe that
\begin{equation}\label{FormOfL0prime}
\tL_{0,\udelta}' = \one_{\cK\otimes\cK}\otimes \tN_\udelta
\end{equation}
and the estimate
\begin{equation}\label{HE-CoupEasy}
\tL_{\beta,\udelta}' = \tL_{0,\udelta}' - \tphi(\ri \ta_\udelta \tcoup_\beta)\geq \frac12 \tL_{0,\udelta} - \bigl\| \ta_\udelta\tcoup_\beta\bigr\|^2\one_{\tcH}.
\end{equation}
For $R>1$ we again perform a partition of unity in momentum space as follows.
Let
\begin{align*}
& \tF^R  = \begin{pmatrix} \one[|\omega| < R] \\ \one[|\omega|\geq R]\end{pmatrix}\colon \tgothh \to \tgothh_<\oplus\tgothh_>\\
& \tgothh_<^R := L^2((-R,R))\otimes L^2(S^2),\quad  \tgothh_>^R:= L^2((-\infty,R]\cup[R,\infty))\otimes L^2(S^2).
\end{align*}
Compare with \eqref{MomentumIMS1} and  \eqref{MomentumIMS2}.
Put $\tcF_<^R = \Gamma(\tgothh_<^R)$, $\tcF_>^R = \Gamma(\tgothh_>^R)$ and
\[
\tL_0^\ext = L_\romp\otimes \one_{\tcF_<^R\otimes\tcF_>^R} 
 + \one_{\cK\otimes\cK}\otimes\,
 {\D\Gamma(\omega)}_{|\tcF_<^R}\otimes\one_{\tcF_>^R} + \one_{\cK\otimes\cK\otimes\tcF_<^R}\otimes
\, {\D\Gamma(\omega)}_{|\tcF_>^R}.
\]
Note that $\cGamma(\tF^R)$ is unitary and, as usual, we simply write $\cGamma(\tF^R)$ instead of 
$\one_{\cK\otimes\cK}\otimes\cGamma(\tF^R)$ acting on $\tcH$.
Abbreviate $\lambda_\rommax = \max\sigma(K)$ and $\lambda_\rommin = \min\sigma(K)$.
We estimate first for $\tE>2\lambda_\rommax-\lambda_\rommin$, using \eqref{FormOfL0prime} and the obvious analogue of \eqref{GeomLocMomentum}:
\begin{align*}
\tL_{0,\udelta}' & \geq \bigl(\one_{\cK\otimes\cK}\otimes\tN_\udelta\bigr) \one\bigl[\bigl|\tL_0\bigr|+\tN > \tE\bigr]\\
 & = \cGamma(\tF^R)^* \left\{\one_{\cK\otimes\cK}\otimes \tN_{\udelta_{|
       \tcF_<}}\otimes\one_{\tcF_>} + 
 \one_{\cK\otimes\cK\otimes \tcF_<}\otimes
 \tN_{\udelta_{|\tcF_>}}\right\}\\
 &\qquad \times 
 \one\bigl[\bigl|\tL_0^\ext\bigr|+\tN^\ext > \tE\bigr]\cGamma(\tF^R)\\
& \geq \Gamma(\one[|\omega|<R])\tN_\udelta \one\bigl[\bigl|\tL_0\bigr|+\tN>\tE\bigr] \\
&\quad  + m_\delta(R)\cGamma(\tF^R)^* \bigl(\one_{\cK\otimes \cK\otimes
   \tcF_<}\otimes \bP_\Omega\bigr) \one\bigl[\bigl|\tL_0^\ext\bigr|
  +\tN^\ext>\tE\bigr]\cGamma(\tF^R)\\
& \geq \frac{\tE-\lambda_\rommax+\lambda_\rommin}{R+1}
\Gamma(\one[|\omega|<R])\one\bigl[\bigl|\tL_0\bigr|
 +\tN>\tE\bigr]\\
 &\quad  + m_\udelta(R)\cGamma(\tF^R)^* \bigl(\one_{\cK\otimes \cK\otimes
    \tcF_<}\otimes \bP_\Omega\bigr) \one\bigl[\bigl|\tL_0^\ext\bigr|+\tN^\ext\geq\tE\bigr]\cGamma(\tF^R)\\
& \geq \min\Bigl\{\frac{\tE-\lambda_\rommax+\lambda_\rommin}{R+1},m_\udelta(R)\Bigr\}
\one\bigl[\bigl|\tL_0\bigr|+\tN>\tE\bigr].
\end{align*}
Secondly, we estimate
\begin{align*}
& \one\bigl[\bigl|\tL_0\bigr|+\tN \leq\tE\bigr] \leq  \bigl(\tE+1\bigr)\bigl(\bigl|\tL_0\bigr|+ \tN  +1\bigr)^{-1}\\
 &\qquad  = \bigl(\tE+1\bigr)\bigl(\bigl|\tL_\beta\bigr|+1\bigr)^{-\frac12}\\
 & \qquad \qquad \times
 \Bigl\{\bigl(\bigl|\tL_\beta\bigr|+1)^{\frac12}\bigl(\bigl|\tL_0\bigr|+\tN+1\bigr)^{-1}\bigl(\bigl|\tL_\beta\bigr|+1\bigr)^{\frac12}\Bigr\}
\bigl(\bigl|\tL_\beta\bigr|+1\bigr)^{-\frac12}\\
 &\qquad \leq C\bigl(\tE+1\bigr)\one\bigl[\bigl|\tL_\beta\bigr| \leq E\bigr] + C\frac{\tE+1}{E+1}\one_{\tcH}.
\end{align*}
Here we used Proposition~\ref{Prop-BasicLReg}~\ref{Item-NLDomainInv} to establish that
\[
C = \bigl\|\bigl(\bigl|\tL_\beta\bigr|+1\bigr)^{\frac12}
\bigl(\bigl|\tL_0\bigr|+\tN+1\bigr)^{-1}\bigl(\bigl|\tL_\beta\bigr|+1\bigr)^{\frac12}\bigr\|<\infty.
\]
Inserting these two estimates into \eqref{HE-CoupEasy}, we arrive at 
\begin{align*}
\tL_{\beta,\udelta} & \geq  \Bigl(\frac12\min\Bigl\{\frac{\tE-\lambda_\rommax+\lambda_\rommin}{R+1},m_\udelta(R)\Bigr\}
-  \frac12 C\frac{\tE+1}{E+1}- \bigl\| \ta_\udelta\tcoup_\beta\bigr\|^2\Bigr)\one_{\tcH}\\
&\qquad -\frac12 C\bigl(\tE+1\bigr)\one\bigl[\bigl|\tL_\beta\bigr| \leq E\bigr].
\end{align*}
We may now pick $R$, $\delta'_\infty$, $\tE$ and $E$, 
in that order, as in the proof of Theorem~\ref{Thm-ME-H-LargeE} to conclude the proof.
Here we used that $\sup_{\udelta\in\Delta_0}\|\ta_\udelta\tcoup_\beta\|<\infty$.
\end{proof}

It is now an immediate consequence of Theorem~\ref{Thm-VirialL} that

\begin{corollary} Suppose \LGCond{2}. The set of eigenvalues $\sigma_\pp(L_\beta)$ is bounded.
\end{corollary}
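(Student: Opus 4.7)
The plan is to combine the high-energy positive commutator estimate in Theorem~\ref{Thm-ME-L-LargeE} with the virial theorem for the standard Liouvillean, Theorem~\ref{Thm-VirialL}, in direct analogy with how the zero-temperature bound in Theorem~\ref{Thm-ME-H-LargeE} was designed to rule out eigenvalues of $H$ at high energy. Under \LGCond{2} both inputs are available: Theorem~\ref{Thm-VirialL} requires \LGCond{2} in order to have the number bound from Theorem~\ref{NumberBoundL} on eigenstates of $L_\beta$, while Theorem~\ref{Thm-ME-L-LargeE} only needs \LGCond{1}, so both hold under the hypothesis.

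First I would fix the cutoff $e = 1$ in Theorem~\ref{Thm-ME-L-LargeE} and obtain the associated constants $E_0 > 0$, $\delta_\infty' \geq 1$ and $C > 0$ and some $\udelta = (\delta_0, \delta_\infty) \in \Delta((1, \delta_\infty'))$ (any admissible choice will do, since $m_\udelta$ is smooth, real-valued, bounded, and bounded below by $1$, so $\tL'_{\beta,\udelta}$ and $L'_{\beta,\udelta}$ are well defined as in \eqref{tLbetamprime}, \eqref{Lbetamprime}). This gives the form inequality
\begin{equation*}
L'_{\beta,\udelta} \;\geq\; \one_{\cH^\romL} - C\,\one[\,|L_\beta| \leq E_0\,]
\end{equation*}
on $\cD(N^\romL)$.

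Now suppose, towards a contradiction, that there is an eigenvalue $\mu \in \sigma_\pp(L_\beta)$ with $|\mu| > E_0$, and let $\psi$ be a corresponding normalized eigenvector. By Theorem~\ref{NumberBoundL} we have $\psi \in \cD(\sqrt{N^\romL})$, so the expectation value of $L'_{\beta,\udelta}$ in the state $\psi$ is meaningful. The functional calculus for $L_\beta$ yields $\one[\,|L_\beta|\leq E_0\,]\psi = 0$, so pairing the displayed inequality with $\psi$ gives
\begin{equation*}
\big\la \psi,\, L'_{\beta,\udelta}\psi\big\ra \;\geq\; \|\psi\|^2 \;=\; 1.
\end{equation*}
On the other hand, the virial theorem, Theorem~\ref{Thm-VirialL}, applied to the bounded smooth positive function $m_\udelta$, forces $\la \psi, L'_{\beta,\udelta}\psi\ra = 0$, a contradiction. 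Therefore every eigenvalue satisfies $|\mu| \leq E_0$, and $\sigma_\pp(L_\beta) \subseteq [-E_0, E_0]$ is bounded.

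The only potential subtlety is verifying that the virial identity applies to the specific $m = m_\udelta$ used in Theorem~\ref{Thm-ME-L-LargeE}; this is, however, immediate from the construction of $m_\udelta$ (smooth, real-valued, bounded, with bounded derivative), which is precisely the class of functions handled in Theorem~\ref{Thm-VirialL}. There is no genuine obstacle here — all the hard work has been done upstream in Theorem~\ref{NumberBoundL} (the number bound), Theorem~\ref{Thm-VirialL} (the virial theorem), and Theorem~\ref{Thm-ME-L-LargeE} (the high-energy commutator estimate).
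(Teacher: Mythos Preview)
Your proof is correct and is precisely the argument the paper intends: the corollary is stated in the paper as ``an immediate consequence of Theorem~\ref{Thm-VirialL}'' (combined with the just-proved Theorem~\ref{Thm-ME-L-LargeE}), and you have simply spelled out that immediate consequence in full detail. The only minor point worth noting is that the commutator bound in Theorem~\ref{Thm-ME-L-LargeE} is stated as a form estimate on $\cD(N^\romL)$ while $\psi$ is only known to lie in $\cD(\sqrt{N^\romL})$; however, since $m_\udelta$ is bounded above and below, $L'_{\beta,\udelta}$ is a closed form with form domain $\cD(\sqrt{N^\romL})$, so the inequality extends to that domain by density, and your application is justified.
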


From now on we assume at least \HGCond{1} and  fix $\udelta'$ such that 
Corollary~\ref{Cor-FromPointMEToUnifME} holds true. Recall that
\LGCond{1} implies \HGCond{1}.

\begin{proposition}\label{LUnifBoundTempZero} Suppose \HGCond{1}. Let $\epsilon>0$ and  $\udelta\in\Delta(\udelta')$ be given. There exist $\kappa>0$ and $C>0$,  
such that for all $E\in \RR$:
\[
L_{\infty,\udelta}'\geq (1-\epsilon)\one_{\cH^\romL} - C\bigl(\one[|L_\infty-E|\geq \kappa] +P\otimes P^\conj\bigr),
\]
in the sense of forms on $\cD(N^\romL)$.
\end{proposition}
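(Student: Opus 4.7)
The strategy is to exploit the tensor-product structure $L_{\infty,\udelta}' = H_\udelta'\otimes\one_\cH + \one_\cH\otimes{H^\conj_\udelta}'$ (established in \eqref{TransOfComm}) together with the uniform Mourre bounds of Corollary~\ref{Cor-FromPointMEToUnifME}, applied to both $H$ and $H^\conj$. The latter is automatic: since $H^\conj = \Conj H \Conj$ with $\Conj$ commuting with $N$ and the conjugate operator, the entire apparatus of Subsections~2.1 and~3.1--3.2 transports verbatim and Corollary~\ref{Cor-FromPointMEToUnifME} holds for $H^\conj$ with the same constants and the finite-rank projection $P^\conj=\Conj P\Conj$ onto its bound states.

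Since $H\otimes\one_\cH$ and $\one_\cH\otimes H^\conj$ commute, I would apply Corollary~\ref{Cor-FromPointMEToUnifME} in a \emph{fiberwise} manner using the joint spectral theorem. For $\psi\in\cD(\sqrt{N^\romL})$, using the direct-integral decomposition with respect to $\one\otimes H^\conj$ yields $\psi=\int\psi_\mu\, d\nu(\mu)$ with $\psi_\mu\in\cD(\sqrt{N})$ for $\nu$-a.e.\ $\mu$ (the compatibility with the number-bound domain is ensured since $\one\otimes H^\conj$ commutes with $N\otimes \one$). Applying the strong bound \eqref{UnifHBound2} for $H_\udelta'$ at the energy $\lambda=E+\mu$ on each fiber, and integrating using the joint functional-calculus identity
\[
\int \one[|H-(E+\mu)|\geq \kappa]\, dE_{H^\conj}(\mu) \;=\; \one[|L_\infty-E|\geq \kappa],
\]
yields
\[
H_\udelta'\otimes\one_\cH \;\geq\; (1-\epsilon/4)\one - C\,\one[|L_\infty-E|\geq\kappa] - C\,(P\otimes\one_\cH).
\]
A symmetric argument (fibering over $H\otimes\one$ and applying \eqref{UnifHBound2} for ${H^\conj_\udelta}'$ at $\mu_0 = \lambda-E$) produces the analogous bound for $\one\otimes{H^\conj_\udelta}'$ with error $\one\otimes P^\conj$.

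The real work is to replace the full-rank errors $P\otimes\one_\cH$ and $\one_\cH\otimes P^\conj$ by the much smaller $P\otimes P^\conj$. To do this I would decompose $\one = \pi + Q_1 + Q_2 + Q_3$ into the four orthogonal pieces $\pi = P\otimes P^\conj$, $Q_1 = \bP\otimes P^\conj$, $Q_2 = P\otimes\bP^\conj$, $Q_3 = \bP\otimes\bP^\conj$. The generalized virial identity $P^\conj_{\mu_j}{H^\conj_\udelta}'P^\conj_{\mu_j}=0$ (and its $P$-analogue), which follows from $\langle\psi_k,[H,A]\psi_l\rangle=(E_k-E_l)\langle\psi_k,A\psi_l\rangle$ for eigenvectors in the same eigenspace, kills several contributions and reduces each diagonal block to a single term: $\pi L'\pi=0$, $Q_1 L' Q_1=\bP H_\udelta'\bP\otimes P^\conj$, $Q_2 L' Q_2=P\otimes\bP^\conj{H^\conj_\udelta}'\bP^\conj$, and $Q_3 L' Q_3=\bP H_\udelta'\bP\otimes\bP^\conj+\bP\otimes\bP^\conj{H^\conj_\udelta}'\bP^\conj$. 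On each block one applies the $\bP$-restricted Mourre $\bP H_\udelta'\bP\geq(1-\epsilon/4)\bP - C\bP\one[|H-\lambda|\geq\kappa]\bP$ (obtained by flanking \eqref{UnifHBound2} with $\bP$ and using $\bP P=0$) combined with the fiber argument above, obtaining $Q_i L' Q_i \geq (1-\epsilon/2)Q_i - CQ_i\one[|L_\infty-E|\geq\kappa]Q_i$ for $i=1,2,3$.

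The main obstacle will be controlling the off-diagonal cross-terms. A direct block computation shows that only four of the six possible crosses are nontrivial, namely $\pi\leftrightarrow Q_1$ and $Q_2\leftrightarrow Q_3$ (both arising from the bounded finite-rank operator $\bP H_\udelta' P$) and $\pi\leftrightarrow Q_2$ and $Q_1\leftrightarrow Q_3$ (arising from the bounded $P^\conj{H^\conj_\udelta}'\bP^\conj$), with norms $\|H_\udelta'P\|$ or $\|{H^\conj_\udelta}'P^\conj\|$, both finite since the eigenspaces of $H$ and $H^\conj$ are finite-dimensional and lie in $\cD(\sqrt{N})$ by Theorem~\ref{Thm-NumberBoundH}. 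The crosses involving $\pi$ are benign: Cauchy--Schwarz with a small $\eta$ eats at most $\eta$ from the $Q_1$ or $Q_2$ coefficient while the $\eta^{-1}$ overhead lands on $\|\pi\psi\|^2$, which is an allowed error. The hard ones are $Q_1\leftrightarrow Q_3$ and $Q_2\leftrightarrow Q_3$, since both sides lie in the "main" part. My plan for these is an iterative absorption: noting that the cross block $\bP\otimes(P^\conj{H^\conj_\udelta}'\bP^\conj)$ factors as $Q_1\cdot(\one\otimes{H^\conj_\udelta}')\cdot Q_3$, one can sandwich between the $(\one\otimes P^\conj)$-fiber bound (which already delivers the $P\otimes P^\conj$-clean estimate on $Q_1$) and a weak ${H^\conj_\udelta}'$-bound on $Q_3$, running a bootstrapping argument analogous to the end of the proof of Theorem~\ref{Thm-GGM} to close the constants. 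Finally, one combines everything by choosing the parameters of the Mourre estimates and of the Cauchy--Schwarz splits hierarchically (Corollary~\ref{Cor-FromPointMEToUnifME} is invoked with a suitably smaller $\epsilon'$ at the outset), arriving at the stated bound with a $\kappa>0$ and $C>0$ independent of $E\in\RR$.
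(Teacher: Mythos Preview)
Your overall strategy—tensor structure, Corollary~\ref{Cor-FromPointMEToUnifME}, fiberwise integration—is the paper's, but the execution has a genuine gap and misses the key simplification.

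The gap: you assert that $\bP H_\udelta' P$ is a bounded finite-rank operator with norm $\|H_\udelta' P\|$, and your cross-term analysis rests on this. Under \HGCond{1} alone, Theorem~\ref{Thm-NumberBoundH} only places $\Ran(P)\subset\cD(\sqrt{N})$, not $\cD(N)$. Since $H_\udelta' = N_\udelta - \phi(\ri a_\udelta\coup)$ with $\cD(N_\udelta)=\cD(N)$, the operator $N_\udelta P$ (hence $H_\udelta' P$) is \emph{not} bounded; only $N_\udelta^{1/2} P$ is. Your Cauchy--Schwarz absorptions for the crosses therefore cannot close without a surviving $\sqrt{N}$-weight, and the ``iterative absorption'' for $Q_1\leftrightarrow Q_3$ is not a plan but a hope. (A secondary issue: your claim $\pi L'\pi=0$ and the reduction of $Q_1 L' Q_1$ neglect the off-diagonal blocks $P_{E_i} H_\udelta' P_{E_j}$, $i\neq j$, which virial does not kill. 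These are bounded finite-rank corrections and could in principle be absorbed, but they are not zero.)

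The paper avoids the four-block decomposition entirely. It first works at the single-$H$ level and establishes, uniformly in $\lambda\in\RR$,
\[
H_\udelta' \;\geq\; \bigl(1-\tfrac{\epsilon}{9}\bigr)\bP - \tfrac{3\epsilon}{9}\one_\cH - \sigma N_\udelta - C\one[|H-\lambda-E|\geq\kappa],
\]
by applying \eqref{UnifHBound1} to $PH_\udelta'P$, \eqref{UnifHBound2} to $\bP H_\udelta'\bP$, and handling the cross term $PH_\udelta'\bP$ via the split $PN_\udelta\one[N_\udelta>r]\bP + K\bP$ with $K$ compact: the first piece is absorbed into $\sigma N_\udelta$ (this is precisely where the unboundedness of $N_\udelta P$ is tamed), the second is localized in energy by a covering argument. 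Only \emph{then} does the paper fiber over $H^\conj$, add the symmetric bound for ${H_\udelta^\conj}'$, and invoke the elementary inequality
\[
\bP\otimes\one_\cH + \one_\cH\otimes\bP^\conj \;\geq\; \one_{\cH^\romL} - 2\,P\otimes P^\conj
\]
to obtain the $P\otimes P^\conj$ error directly. The residual $\sigma N_\udelta^\romL$ is absorbed at the end by the usual $(1+2\sigma)L_{\infty,\udelta}'$ trick. No block-by-block cross-term analysis on $\cH^\romL$ is needed.
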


\begin{remark} Note that at zero temperature, we do not need Nelson's
  commutator theorem to build $L_\infty$, nor do we have any
  singularities from $\rho_\beta$ to absorb. Hence, we may work under an
  \HGCond{1} condition instead of an \LGCond{1} condition.
\remarkQED\end{remark}

\begin{proof}
The starting point is the identity
\[
L_{\infty,\udelta}' = H'_\udelta\otimes \one_\cH + \one_\cH\otimes {H_\udelta^\conj}'.
\]
Denote by $P\in \cB(\cH)$ the projection onto the span of all eigenstates of the operator $H$.
This is a finite range projection and hence compact. Put $P^\conj = \Conj P\Conj$
to be the eigen projection onto
the span of the eigenstates of $H^\conj$. We write $\bP = \one-P$ and
$\bP^\conj = \one-P^\conj$.
We deal with $H'\otimes \one$ only since bounds on $\one\otimes {H^\conj}'$ can be obtained
by conjugation with $\cE\Conj$, where $\cE$ is the exchange map that sends $\psi\otimes\varphi$ to
$\varphi\otimes\psi$. Here $\psi,\varphi\in\cH$.
%We decompose the identity in the second factor
%\begin{equation}\label{L-ME-Step1}
%H'\otimes\one = H'\otimes P^\conj + H'\otimes \bP^\conj
%\end{equation}
%and write
%\[
%H'\otimes P^\conj = \int_{\sigma(H)}^\oplus H' \one_{[\sigma_\pp(H)^\conj]}(\lambda) \D E^\conj(\lambda).
%\]
%Estimate for $E,\lambda\in\RR$ using \eqref{UnifHBound2} in Corollary~\ref{Cor-FromPointMEToUnifME}
%\[
%H' \geq (1-\epsilon)\one - C\big(\one[|H-\lambda-E|>\kappa] +P\big)
%\]
%and insert into \eqref{L-ME-Step1} to find
%\begin{align*}
%H'\otimes P^\conj & \geq \int_{\sigma(H)}^\oplus \left\{(1-\epsilon)\one -
%  C\big(\one[|H-\lambda-E|>\kappa] - P\big)\right\}
%\one_{[\sigma_\pp(H)^\conj]}(\lambda) \D E^\conj(\lambda)\\
%& = (1-\epsilon)\one\otimes P^\conj - C\big(\one[|L_\infty-E|>\kappa]\one\otimes P^\conj  + P\otimes P^\conj\big) %\\
%& \geq (1-\epsilon)\one\otimes P^\conj - C\big(\one[|L_\infty-E|>\kappa] + P\otimes P^\conj\big).
%\end{align*}

We write
\begin{equation}\label{PosTemp-Step1}
H'_\udelta = PH'_\udelta P + 2\re\bigl\{PH'_\udelta\bP\bigr\} + \bP H'_\udelta \bP,
\end{equation}
which makes sense as forms on $\cD(N^{1/2})$, since $P$ maps into $\cD(N^{1/2})$ by Theorem~\ref{Thm-NumberBoundH}.
We estimate each term differently. For the first and last term we use \eqref{UnifHBound1}
and \eqref{UnifHBound2} from Corollary~\ref{Cor-FromPointMEToUnifME} (applied with $\epsilon/9$ instead of $\epsilon$)
and find
\begin{equation}\label{PosTemp-Step2}
\begin{aligned}
P H'_\udelta P & \geq -\frac{\epsilon}{9} P -C\one[|H-\lambda - E|\geq\kappa]\geq -\frac{\epsilon}{9}\one_\cH -C \one[|H-\lambda - E|\geq \kappa]\\
\bP  H'_\udelta \bP & \geq \bigl(1-\frac{\epsilon}{9}\bigr) \bP - C\one[|H-\lambda - E|\geq \kappa].
\end{aligned}
\end{equation}
As for the cross term $P H'_\udelta \bP$, we proceed in a fashion similar to what was done 
in the proof of Lemma~\ref{Lemma-FromPointMEToUnifME}. Write for an $r>0$
\[
P H'_\udelta\bP  = P N_\udelta \one[N_\udelta>r]\bP  + K\bP,
\]
with $K= P N_\udelta\one[N_\udelta\leq r]-P \phi(\ri a_\udelta \coup)$ being compact.
We can now fix first $\sigma$ small enough, and subsequently $r$ large enough, such that
\[
2\re\bigl\{P N_\udelta \one[N_\udelta>r]\bP\bigr\}\geq -\sigma N_\udelta - \frac{\epsilon}{9} \one_\cH
\]
and
\begin{equation}\label{AlmostLastChoiceofSigma}
2\sigma \|a_\udelta\coup\|\leq \frac{\epsilon}{9}, \quad \frac{1-\frac{8\epsilon}{9}}{1+2\sigma}>1-\epsilon.
\end{equation}
To deal with the term $K\bP$ we note that we can choose $\kappa$ small enough such that
$2\|K \bP\one[|H-\lambda|<\kappa]\|\leq \epsilon/18$ uniformly in $\lambda$. Indeed,
there exists $\Lambda$ such that $2\|K \bP\one[|H|>\Lambda]\|\leq \epsilon/18$ and hence
by a covering argument there exists $\kappa>0$ such that $2\|K \bP\one[|H-\lambda|<\kappa]\|\leq \epsilon/18$
uniformly in $\lambda\in\RR$. We thus get for all $\lambda,E\in\RR$:
\begin{align*}
2\re\bigl\{K\bP\bigr\}  &= 2\re\bigl\{K\bP\one[|H-\lambda-E|\geq\kappa]\bigr\}+2\re\bigl\{K\bP\one[|H-\lambda-E|<\kappa]\bigr\}\\
&\geq -\frac{\epsilon}{18}\one_\cH - \frac{C}{\epsilon}\one[|H-\lambda-E|>\kappa] - \frac{\epsilon}{18}\one_\cH.
\end{align*}
Inserting this together with  \eqref{PosTemp-Step2} into \eqref{PosTemp-Step1}, we arrive at the bound
\begin{equation}\label{PosTemp-Step3}
H'_\udelta\geq \big(1-\frac{\epsilon}{9}\big)\bP -\frac{3\epsilon}{9}\one_\cH - \sigma N_\udelta - C\one[|H-\lambda-E|\geq\kappa].
\end{equation}

From the spectral theorem in multiplication operator form, we get a 
measure space $(\cM,\Sigma,\mu)$, a measurable real function $f$ on $\cM$ and a unitary map $U\colon \cH\to L^2(\cM)$ such that
$UH U^* = M_f$, multiplication by $f$. Put $U^\conj = U C$ such that $U^\conj H^\conj {U^\conj}^* = M_f$ as well.
Here ${U^\conj}^* = CU^*$.  The combined map $U^\romL = U\otimes U^\conj\colon \cH^\romL\to L^2(\cM\times\cM)$
(with product $\sigma$-algebra and measure) now sets up the correspondence
$U^\romL L_\infty {U^\romL}^* = M_{f_1 - f_2}$, where $f_j(q_1,q_2) = f(q_j)$.

Then, under the identification
$L^2(\cM\times\cM) = L^2(\cM;L^2(\cM))$, we get
\[
\one[|M_{f_1-f_2}-E|\geq \kappa] = \int_\cM^\oplus\one[|M_f-f(q)-E|\geq\kappa]\,\D\mu(q).
\]
Hence, we conclude from \eqref{PosTemp-Step3} the estimate
\begin{align*}
&U^\romL H'_\udelta\otimes\one_\cH \,{U^\romL}^* = \int^\oplus_\cM U H'_\udelta U^*\, \D\mu(q) \\
& \geq
\int^\oplus_\cM  \bigl(1-\frac{\epsilon}{9}\bigr)U\bP U^* -\frac{3\epsilon}{9}\one_{L^2(\cM)} - \sigma U N_\udelta U^* - C\one[|M_f-f(q)-E|\geq\kappa] \D\mu(q)\\
&  = U^\romL \Bigl(\bigl(1-\frac{\epsilon}{9}\bigr)\bP\otimes\one_\cH  -\frac{3\epsilon}{9}\one_{\cH^\romL} - \sigma  N_\udelta\otimes\one_\cH  - C\one[|L_\infty-E|\geq\kappa] \Bigr){U^\romL}^*
\end{align*}
in the sense of forms on $U^\romL\cD(N^\romL)$. 
Adding to the above a similar bound for $\one_\cH\otimes {H^\conj}'$ yields
\begin{align*}
L_\infty' &\geq \bigl(1-\frac{\epsilon}{9}\bigr)\bigl[\bP\otimes\one_\cH + \one_\cH\otimes \bP^\conj\bigr]-\frac{6\epsilon}{9}\one_{\cH^\romL}   -\sigma N_\udelta^\romL - 2C\one[|L_\infty-E|>\kappa]\\
&\geq \bigl(1-\frac{7\epsilon}{9}\bigr)\one_{\cH^\romL} - \sigma N_\udelta^\romL- 2(C+1)\bigl(\one[|L_\infty-E|>\kappa] +P\otimes P^\conj\bigr).
\end{align*}
Here we abbreviated $N^\romL_\udelta = N_\udelta\otimes\one_\cH+\one_\cH\otimes N_\udelta$, and used that
\[
\bP\otimes\one_\cH + \one_\cH\otimes \bP^\conj = 2\one_{\cH^\romL} - P\otimes \bP^\conj - \bP\otimes P^\conj
- 2P\otimes P^\conj\geq \one_{\cH^\romL}- 2P\otimes P^\conj.
\]

We now complete the proof, cf. \eqref{AlmostLastChoiceofSigma}, by estimating
\[
(1+2\sigma) L_{\infty,\udelta}' \geq
\bigl(1-\frac{8\epsilon}{9}\bigr)\one_{\cH^\romL} 
- \tC\bigl(\one[|L_\infty-E|\geq\kappa] +P\otimes P^\conj\bigr),
\]
as at the end of the proofs of Lemma~\ref{Lemma-FromPointMEToUnifME} and Theorem~\ref{Thm-GGM}.
\end{proof}

In order to perturb around zero temperature, we first need to control
the difference $\tcoup_\beta-\tcoup_\infty$. 

\begin{lemma}\label{Lemma-IntBoundL} Suppose \LGCond{n}, for some $n\geq 0$. For any
$\beta_0>0$ there exists $C>0$ such that for all $\beta\geq \beta_0$
we have
\begin{equation}\label{ContInbeta}
\bigl\|\tcoup_\beta-\tcoup_\infty\bigr\|\leq C \beta^{-\frac12}.
\end{equation}
If $n\geq 1$, we have furthermore that for all $\udelta\in\Delta_0$
\begin{equation}\label{DerContInbeta}
\bigl\|\ta_\udelta\bigl(\tcoup_\beta-\tcoup_\infty\bigr)\bigr\|\leq C\beta^{-\frac12}.
\end{equation}
\end{lemma}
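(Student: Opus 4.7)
The plan is to exploit the representation \eqref{tGbeta1},
\[
\tcoup_\beta-\tcoup_\infty = \bigl(\sqrt{1+\trho_\beta}-1\bigr)\tcoup_\infty + \sqrt{\trho_\beta}\,\tcoup_{\infty,\cR}^*,
\]
together with two elementary pointwise bounds valid for all $\omega\neq 0$ and $\beta>0$: from $e^x\geq 1+x$ we get $\trho_\beta(\omega)\leq 1/(\beta|\omega|)$, and from $(1+\sqrt{y})^2\geq 1+y$ for $y\geq 0$ we get $\sqrt{1+\trho_\beta}-1\leq \sqrt{\trho_\beta}$, so that both coefficients in the decomposition are bounded by $1/\sqrt{\beta|\omega|}$. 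For the derivative bound we will additionally use the direct computation
\[
\bigl|\partial_\omega\sqrt{\trho_\beta(\omega)}\bigr| + \bigl|\partial_\omega\sqrt{1+\trho_\beta(\omega)}\bigr|\leq \frac{C}{\sqrt{\beta}\,|\omega|^{3/2}}\quad\textup{for }\beta|\omega|\leq 1,
\]
with exponential decay in $\beta|\omega|$ when $\beta|\omega|\geq 1$.

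For \eqref{ContInbeta}, squaring the triangle inequality and inserting the pointwise bounds gives
\[
\|\tcoup_\beta-\tcoup_\infty\|^2\leq \frac{2}{\beta}\int_{\RR\times S^2}\frac{\|\tcoup_\infty(\omega,\Theta)\|^2 + \|\tcoup_\infty(-\omega,\Theta)\|^2}{|\omega|}\,d\omega\,d\Theta = \frac{8}{\beta}\int_{\RR^3}\frac{\|\coup(k)\|^2}{|k|}\,dk,
\]
where the last equality uses $\tcoup_\infty(\omega,\Theta) = \omega\coup_{\roml/\romr}(\omega\Theta)$ and a conversion to polar coordinates on $\RR^3$. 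Under \LGCond{0} the integrand is bounded by $C|k|^{-3+2\mu}$ for $|k|\leq 1$ and by $C|k|^{-4-2\mu}$ for $|k|\geq 1$, both integrable thanks to the margin $\mu>0$; this yields \eqref{ContInbeta}.

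For \eqref{DerContInbeta}, the plan is to split $\ta_\udelta = im_\udelta\partial_\omega + \tfrac{i}{2}m_\udelta'$ and treat the two contributions separately. The piece $\tfrac{i}{2}m_\udelta'(\tcoup_\beta-\tcoup_\infty)$ involves a factor supported in $[\delta_0/2,\delta_0]\cup[\delta_\infty,2\delta_\infty]$ satisfying $|m_\udelta'|\leq Cd(\omega)/|\omega|$; combined with the pointwise bound $\|\tcoup_\beta(\omega,\Theta)-\tcoup_\infty(\omega,\Theta)\|\leq C|\omega|^{1/2+\mu}/\sqrt{\beta}$ (valid for $|\omega|\leq 1$ under \LGCond{1}), a direct scaling computation shows that the low-frequency contribution has squared $L^2$-norm bounded by $C\delta_0^{3\mu/2}/\beta\leq C/\beta$ (uniformly in $\delta_0\leq 1$ since $\mu>0$), while the high-frequency contribution is exponentially small in $\beta\delta_\infty\geq\beta_0$. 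For $im_\udelta\partial_\omega(\tcoup_\beta-\tcoup_\infty)$, I would apply the Leibniz rule to produce four terms; each is bounded using the derivative estimates above together with the \LGCond{1} consequences $\|\tcoup_\infty\|\leq C|\omega|^{1+\mu}$ and $\|\partial_\omega\tcoup_\infty\|\leq C|\omega|^\mu$ for $|\omega|\leq 1$, yielding a pointwise estimate of the form $m_\udelta(\omega)\,|\omega|^{\mu-1/2}/\sqrt{\beta}$ on $\{\beta|\omega|\leq 1\}$ and exponentially small contributions elsewhere. The main technical obstacle is uniformity in $\udelta\in\Delta_0$; this is resolved by the observation that $m_\udelta(\omega)\leq d(\omega)$ everywhere (an easy consequence of the monotonicity of $d$ and the explicit structure of $m_\udelta$), which contributes at worst a factor $|\omega|^{-\mu/4}$ near zero that the margin absorbs, so every integral converges and yields the required $O(1/\beta)$ squared-$L^2$ bound uniformly.
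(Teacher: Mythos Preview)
Your proof is correct and follows essentially the same route as the paper: both start from the decomposition \eqref{tGbeta1}, use the elementary bound $\trho_\beta\leq 1/(\beta|\omega|)$ (and hence $\sqrt{1+\trho_\beta}-1\leq\sqrt{\trho_\beta}\leq 1/\sqrt{\beta|\omega|}$) for \eqref{ContInbeta}, and then apply a Leibniz-type splitting for \eqref{DerContInbeta}. Your organization differs slightly---you first split $\ta_\udelta=im_\udelta\partial_\omega+\tfrac{i}{2}m_\udelta'$ and treat the $m_\udelta'$ piece separately, whereas the paper writes $\ta_\udelta[(\sqrt{1+\trho_\beta}-1)\tcoup_\infty]=(\sqrt{1+\trho_\beta}-1)\ta_\udelta\tcoup_\infty + m_\udelta\tcoup_\infty\,\partial_\omega\sqrt{1+\trho_\beta}$ directly---but the substance is the same. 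One point where you are actually more careful than the paper: you make the uniformity in $\udelta\in\Delta_0$ explicit via the observation $m_\udelta(\omega)\leq d(\omega)$, which the paper leaves implicit.
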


\begin{proof} We begin with \eqref{ContInbeta}.
For simplicity we only consider the term
$(\sqrt{1+\trho_\beta}-1)\tcoup_\infty$ in the expression for $\tcoup_{\beta}$, cf.~\eqref{tGbeta1}. The other term
$\sqrt{\trho_\beta}\tcoup_{\infty,\cR}^*$ can be dealt with in a
similar fashion.

Suppose an \LGCond{n} condition, with $n\geq 0$.
We split into the infrared and ultraviolet regimes and estimate first
for $|\omega|\leq 1$:
\[
\bigl(\sqrt{1+\trho_\beta(\omega)}-1\bigr)^2\bigl|\tG_\infty(\omega,\Theta)\bigr|^2 \leq
  C \bigl(\sqrt{1+\trho_\beta(\omega)}-1\bigr)^2 |\omega|^{2n+2\mu}.
\]
Hence we can bound the $L^2$-norm squared of the contribution by a multiple of
\begin{equation*}
\int_{0}^1 \bigl(\sqrt{1+\trho_\beta(\omega)}-1\bigr)^2 \omega^{2n+2\mu}
  \,\D\omega  
 \leq \beta^{-1}\int_0^1 (1+\omega)\omega^{2n-1+2\mu}\,\D \omega,
\end{equation*}
where we simply discarded the $-1$ term coming from $\tcoup_\infty$.
The integral is finite for all $n\geq 0$. 
 In fact, the effect of subtracting $\tcoup_\infty$ sits in the ultraviolet part where
$|\omega|\geq 1$. Here we estimate the $L^2$-norm squared by 
\begin{equation*}
\int_1^\infty \bigl(\sqrt{1+\trho_\beta(\omega)}-1\bigr)^2 \omega^{-1-2\mu}\,
\D\omega\leq \frac{\bigl(\sqrt{1+\trho_\beta(1)}-1\bigr)^2}{2\mu}. 
\end{equation*}
Since $\sqrt{1+\trho_\beta(1)}-1 = \sqrt{1/(1-\e^{-\beta})}-1 \sim
\e^{-\beta/2}$ in the limit of large $\beta$, we get for a fixed
$\beta_0>0$
a constant $C=C(\beta_0)$ such that for all $\beta>\beta_0$ we have
\eqref{ContInbeta} satisfied.

To establish \eqref{DerContInbeta} we observe that
\[
\ta_\udelta \tcoup_\beta = \bigl(\sqrt{1+\trho_\beta}-1\bigr)\ta_\udelta\tcoup_\infty +
m_\udelta \tcoup_\infty\frac{\partial \sqrt{1+\trho_\beta}}{\partial\omega}.
\]
The first contribution can be estimate exactly as above, using that
$n \geq 1$, and yields an
$O(\beta^{-1/2})$ term. For the second term we compute
\[
\frac{\partial \sqrt{1+\trho_\beta}}{\partial\omega} = -\frac{\beta}2 \trho_\beta
\sqrt{1+\trho_\beta}.
\]
In the infrared regime this can be dealt with easily since  
$\beta \trho_\beta \leq 1/|\omega|$ and the extra inverse power of $\omega$
can be absorbed into $\tcoup_\infty$. Recall that we assume $n\geq 1$.
For the ultraviolet regime we get exponential decay in $\beta$ from
$\trho_\beta(1)$ and we are done.
\end{proof}

 We remark that a similar bound holds for
 $\ta_\udelta^2(\tcoup_\beta-\tcoup_\infty)$ under an $\LGCond{2}$
 condition but we do not need this. Recall that $\udelta'$ was chosen such that 
 Corollary~\ref{Cor-FromPointMEToUnifME} holds true.

\begin{theorem}\label{Thm-ME-LT} Suppose \LGCond{1}.
 Let $\epsilon>0$ and $\udelta\in\Delta(\udelta')$ be given. There exist $\beta_0>0$, $\kappa>0$ and $C>0$,
 such that for all $E\in \RR$ and $\beta\geq\beta_0$:
\[
L_{\beta,\udelta}'\geq (1-\epsilon)\one_{\cH^\romL} - C\bigl(\one[|L_\beta-E|\geq \kappa] + P\otimes P^\conj\bigr),
\]
in the sense of forms on $\cD(N^\romL)$.
\end{theorem}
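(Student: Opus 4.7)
The plan is to perturb the zero-temperature estimate from Proposition~\ref{LUnifBoundTempZero} by exploiting the fact that, by Lemma~\ref{Lemma-IntBoundL}, both $\tcoup_\beta-\tcoup_\infty$ and $\ta_\udelta(\tcoup_\beta-\tcoup_\infty)$ have $L^2$-norms of order $\beta^{-1/2}$. I will work throughout in the \JakPil glued picture, which is legitimate since the statement is unitarily invariant under $\cU$.

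First, I would apply Proposition~\ref{LUnifBoundTempZero} with $\epsilon/3$ in place of $\epsilon$ to obtain $\kappa_0>0$ and $C_0>0$ such that, for every $E\in\RR$,
\[
\tL_{\infty,\udelta}'\geq \bigl(1-\tfrac{\epsilon}{3}\bigr)\one_{\tcH^\romL}-C_0\bigl(\one[|\tL_\infty-E|\geq\kappa_0]+P\otimes P^\conj\bigr).
\]
Next I would estimate the form difference
\[
\tL_{\beta,\udelta}'-\tL_{\infty,\udelta}'=-\phi\bigl(\ri\,\ta_\udelta(\tcoup_\beta-\tcoup_\infty)\bigr),
\]
a form on $\cD(\sqrt{\tN})$. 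Applying Lemma~\ref{Lemma-Field-H-Bounds}~\ref{Item-field-H-N-Bounds} (the $\sqrt{N}$-bound applies in the abstract to any one-particle space, cf.\ Remark~\ref{Rem-DerIsL2}) together with \eqref{DerContInbeta} gives, for any $\sigma>0$,
\[
|\langle\psi,(\tL_{\beta,\udelta}'-\tL_{\infty,\udelta}')\psi\rangle|\leq \sigma\langle\psi,\tN\psi\rangle+C_1\sigma^{-1}\beta^{-1}\|\psi\|^2,
\]
with $C_1$ independent of $\beta\geq\beta_0$. The extra $\sigma\tN$ will be absorbed as in the final step of the proof of Theorem~\ref{Thm-GGM}: from the form expression $\tL_{\beta,\udelta}'=\tN_\udelta-\phi(\ri\ta_\udelta\tcoup_\beta)$ and \eqref{PhiNumberBound} applied with $\tcoup_\beta$ (uniformly bounded in $\beta\geq\beta_0$), one has $\tL_{\beta,\udelta}'\geq \tfrac12\tN-C_2$, so multiplying by $(1+2\sigma)$ and adding $\sigma$ times this inequality absorbs $\sigma\tN$ at the cost of a contribution $O(\sigma)$ to the main constant.

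The last step is to replace $\one[|\tL_\infty-E|\geq\kappa_0]$ by its $\tL_\beta$-analogue. Pick $f\in C_0^\infty(\RR)$ with $0\leq f\leq 1$, $f\equiv 1$ on $[-\kappa_0/4,\kappa_0/4]$ and $\supp f\subseteq[-\kappa_0/2,\kappa_0/2]$; then
\[
\one[|\tL_\infty-E|\geq\kappa_0]\leq \one-f(\tL_\infty-E),\qquad f(\tL_\beta-E)\leq \one[|\tL_\beta-E|\leq\kappa_0/2].
\]
The resolvent identity $(\tL_\beta-z)^{-1}-(\tL_\infty-z)^{-1}=-(\tL_\beta-z)^{-1}\phi(\tcoup_\beta-\tcoup_\infty)(\tL_\infty-z)^{-1}$, combined with \eqref{ContInbeta}, the $\sqrt{N^\romL}$-bound on $\phi(\tcoup_\beta-\tcoup_\infty)$, and the fact that resolvents of $\tL_\infty$ and $\tL_\beta$ preserve $\cD(\sqrt{\tN})$ with $|\im z|$-polynomial bounds (via Corollary~\ref{Cor-BasicLReg}~\ref{Item-NC1L-t} and Proposition~\ref{Prop-DomInv}~\ref{Item-ResInv-MoPlus}), gives a norm bound $O(\beta^{-1/2}|\im z|^{-n})$ for some $n\in\NN$. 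An almost analytic extension argument then yields $\|f(\tL_\infty-E)-f(\tL_\beta-E)\|\leq C_3\beta^{-1/2}$ uniformly in $E\in\RR$. Combining these inequalities,
\[
-C_0\one[|\tL_\infty-E|\geq\kappa_0]\geq -C_0\one[|\tL_\beta-E|\geq\kappa_0/2]-C_0C_3\beta^{-1/2}\one.
\]
Choosing first $\sigma$ small, then $\beta_0$ large enough so that $C_1\sigma^{-1}\beta^{-1}+C_0C_3\beta^{-1/2}\leq\epsilon/3$ for all $\beta\geq\beta_0$, and setting $\kappa=\kappa_0/2$ completes the proof.

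The main obstacle is the last step, namely obtaining the norm estimate $\|f(\tL_\infty-E)-f(\tL_\beta-E)\|=O(\beta^{-1/2})$ uniformly in $E$. The strong-resolvent convergence noted after Corollary~\ref{Cor-BasicLReg} is not sufficient by itself, so one needs the quantitative resolvent-difference bound above together with the polynomial-in-$|\im z|^{-1}$ control of $\sqrt{\tN}$ on the range of the resolvents; keeping track that all bounds are translation-invariant in $E$ ensures uniformity. Once this quantitative smoothing estimate is in hand, the rest of the argument is a routine adaptation of the absorbing tricks already used in the proofs of Theorem~\ref{Thm-GGM} and Proposition~\ref{LUnifBoundTempZero}.
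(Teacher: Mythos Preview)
Your overall strategy matches the paper's proof almost exactly: perturb Proposition~\ref{LUnifBoundTempZero} using Lemma~\ref{Lemma-IntBoundL}, swap the $\tL_\infty$-localization for an $\tL_\beta$-localization via an almost analytic extension, and absorb residual $\sigma\tN$ terms by the $(1+c\sigma)\tL_{\beta,\udelta}'$ trick. There is, however, one genuine error in your execution of the localization step.

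You claim an operator-norm bound $\|f(\tL_\infty-E)-f(\tL_\beta-E)\|\leq C_3\beta^{-1/2}$, deduced from an alleged norm bound $O(\beta^{-1/2}|\im z|^{-n})$ on the resolvent difference. This is not available: the field $\phi(\tcoup_\beta-\tcoup_\infty)$ is only $\sqrt{\tN}$-bounded, and resolvents of $\tL_\beta$, $\tL_\infty$ merely \emph{preserve} $\cD(\sqrt{\tN})$ --- they do not map $\cH$ into it. (The paper explicitly flags that norm-resolvent continuity in $\beta$ is expected to fail; see the paragraph following Corollary~\ref{Cor-BasicLReg}.) What the ingredients you list actually yield is the \emph{weighted} bound
\[
\bigl\|\bigl(f(\tL_\infty-E)-f(\tL_\beta-E)\bigr)(\tN+1)^{-\frac12}\bigr\|\leq C\beta^{-\frac12},
\]
and this is precisely what the paper establishes. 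From it one gets, as a form estimate on $\cD(\sqrt{\tN})$,
\[
\one[|\tL_\infty-E|\geq\kappa_0]\leq \one[|\tL_\beta-E|\geq\kappa_0/2]+\sigma\tN+\frac{C'}{\sigma\beta}\,\one,
\]
which produces a \emph{second} $\sigma\tN$ term (with a prefactor $C_0$) in addition to the one you already have from comparing $\tL_{\beta,\udelta}'$ with $\tL_{\infty,\udelta}'$. Both are then absorbed simultaneously by your final $(1+c\sigma)$ argument. With this correction your proof goes through and is essentially identical to the paper's.
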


\begin{proof} From \eqref{TransOfComm}, Proposition~\ref{LUnifBoundTempZero}, applied with $\epsilon/4$ instead of $\epsilon$, and Lemma~\ref{Lemma-IntBoundL} we get as a form bound on $\cD(\tN)$
\begin{align*}
\tL_{\beta,\udelta}' & = \tL_{\infty,\udelta}' -\phi\bigl(\ri \ta_\udelta(\tcoup_\beta-\tcoup_\infty)\bigr)\\
& \geq \bigl(1-\frac{\epsilon}4-\frac{C_1}{\sigma\beta} \bigr)\one_{\tcH^\romL}
 - \sigma \tN- C\bigl(\one[|\tL_\infty-E|\geq\kappa] + P_\infty\bigr),
\end{align*}
valid for all $\sigma>0$.
Here $P_\infty = \cU(P\otimes P^\conj)\cU^*$ is the projection onto the eigenstates of $\tL_\infty$.

Pick a non-negative $f\in C_0^\infty(\RR)$ with $\supp(f)\subseteq [-\kappa,\kappa]$ and
$f=1$ on the interval $[-\kappa/2,\kappa/2]$. Let $\tf$ be an almost analytic extension of $f$. Write
\[
f(\tL_\infty-E) - f(\tL_\beta-E) = \frac1{\pi}\int_\CC \bar{\partial}\tilde{f}(\eta)
\bigl((\tL_\infty-\eta)^{-1}-(\tL_\beta-\eta)^{-1}\bigr) \,\D\eta.
\]
Since $\tN$ is of class $C^1(\tL_\beta)$ with $[\tN,\tL_\beta]^\circ$
being $\sqrt{\tN}$-bounded, cf.~Corollary~\ref{Cor-BasicLReg},
we conclude from Proposition~\ref{Prop-DomInv}~\ref{Item-ResInv-MoPlus}  (and interpolation)
that $(\tL_\beta-\eta)^{-1}$ preserves $\cD(\sqrt{\tN})$ and 
that there exists $n$ and $C$ such that
\[
\bigl\|(\tN + 1)^{\frac12}(\tL_\beta -\eta)^{-1}(\tN+1)^{-\frac12}\bigr\|\leq 
C\bigl(1 + |\im\eta|^{-n}\bigr).
\]
It follows that
\begin{align*}
&\bigl\|\bigl(f(\tL_\infty-E) - f(\tL_\beta-E)\bigr)(\tN + 1)^{-\frac12}\bigr\|\\
&\quad\leq \frac1{\pi}\int_\CC | \bar{\partial}\tilde{f}(\eta)|\,|\im\eta|^{-1}\bigl\|\phi\bigl(\tcoup_\beta -\tcoup_\infty\bigr)(\tL_\beta -\eta)^{-1}(\tN+1)^{-\frac12}\bigr\|\, \D\eta\\
&\quad\leq C \bigl\|\phi\bigl(\tcoup_\beta -\tcoup_\infty\bigr)(\tN+1)^{-\frac12}\bigr\|.
\end{align*}
Appealing to Lemma~\ref{Lemma-IntBoundL}, we thus get
\begin{align*}
\one[|\tL_\infty-E|\geq\kappa] & \leq
\one_{\tcH^\romL}-f(\tL_\infty-E)\\
& \leq \one_{\tcH^\romL} - f(\tL_\beta-E) + \sigma \tN +
\frac{C_2}{\sigma\beta}\\
& \leq \one[|\tL_\beta-E|\geq\kappa/2] + \sigma \tN +
\frac{C_2}{\sigma\beta}.
\end{align*}
Choose first $\sigma>0$ small enough such that
\begin{equation}\label{LastChoiceofSigma}
3\sigma\sup_{\beta\geq 1,\udelta\in\Delta_0}\big\|\ta_\udelta \tcoup_\beta\big\| < \frac{\epsilon}{4} \quad \textup{and} \quad \frac{1-3\epsilon/4}{1+3\sigma}>1-\epsilon,
\end{equation}
 and subsequently $\beta_0\geq 1$ large enough such that
\[
\frac{C_1}{\sigma\beta_0 }+\frac{C_2}{\sigma\beta_0}<\frac{\epsilon}4.
\]
With these choices we arrive at the bound
\[
\tL_{\beta,\udelta}' \geq \bigl(1-\frac{\epsilon}{2}\bigr)\one_{\tcH^\romL} - 2\sigma \tN- C\bigl(\one[|\tL_\beta-E|\geq \kappa/2] + P_\infty\bigr).
\]
We conclude the proof by the usual argument, i.e. bounding $(1+3\sigma)\tL_{\beta,\udelta}'$ from below, cf.
 \eqref{LastChoiceofSigma} and the previous proof.
\end{proof}

We conclude, repeating the proof of Corollary~\ref{Cor-Finite-pp-H},

\begin{corollary} Suppose \LGCond{2}. There exists $\beta_0>0$ such that for all $\beta\geq \beta_0$,
the Liouvillean $L_\beta$ has finitely many eigenvalues, all of finite multiplicity.
\end{corollary}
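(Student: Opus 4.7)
The plan is to imitate directly the proof of Corollary~\ref{Cor-Finite-pp-H}, replacing each ingredient from the Hamiltonian side with its Liouvillean counterpart. Fix $\beta_0>0$ provided by Theorem~\ref{Thm-ME-LT}, pick some $\udelta\in\Delta(\udelta')$, and let $\beta\geq \beta_0$. Assume, aiming for a contradiction, that there is an enumerable sequence $\{\psi_n\}\subset\cH^\romL$ of mutually orthogonal normalized eigenstates of $L_\beta$ with eigenvalues $\mu_n\in\RR$. Since we are under \LGCond{2}, each $\psi_n$ lies in $\cD(\sqrt{N^\romL})$ by Theorem~\ref{NumberBoundL}, so the virial theorem (Theorem~\ref{Thm-VirialL}) applies and gives $\langle\psi_n,L'_{\beta,\udelta}\psi_n\rangle=0$ for every $n$.

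First I would show that $\{\mu_n\}$ is bounded. Choosing any fixed $e>0$ and applying Theorem~\ref{Thm-ME-L-LargeE} (perhaps after enlarging $\delta_\infty'$ so that its conclusion is compatible with the choice of $\udelta$), we have $L'_{\beta,\udelta}\geq e\one - C\one[|L_\beta|\leq E_0]$ for some $E_0>0$. Taking expectation in $\psi_n$, the virial identity forces $e\leq C\,\mathbf{1}_{|\mu_n|\leq E_0}$, hence $|\mu_n|\leq E_0$ for all $n$. Passing to a subsequence, we may assume $\mu_n\to E$ for some $E\in[-E_0,E_0]$.

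Now apply Theorem~\ref{Thm-ME-LT} with $\epsilon=1/2$ at this energy $E$: there exist $\kappa>0$ and $C>0$ with
\[
L'_{\beta,\udelta}\geq \tfrac12\one_{\cH^\romL} - C\bigl(\one[|L_\beta-E|\geq\kappa]+P\otimes P^\conj\bigr)
\]
as a form on $\cD(N^\romL)$. Picking $j_0$ so that $|\mu_j-E|<\kappa$ for all $j\geq j_0$, the indicator part annihilates $\psi_j$ and the virial identity yields
\[
0\geq \tfrac12 - C\bigl\langle\psi_j,(P\otimes P^\conj)\psi_j\bigr\rangle
\quad \text{for all } j\geq j_0.
\]
By Corollary~\ref{Cor-Finite-pp-H} the projection $P$ (hence also $P^\conj=\Conj P\Conj$) is finite rank, so $P\otimes P^\conj$ is a finite rank, hence compact, operator on $\cH^\romL$. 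The sequence $\{\psi_j\}$ is orthonormal, so $\psi_j\rightharpoonup 0$ weakly; compactness of $P\otimes P^\conj$ then gives $\langle\psi_j,(P\otimes P^\conj)\psi_j\rangle\to 0$, contradicting the displayed lower bound. This shows $\sigma_\pp(L_\beta)$ has only finitely many eigenvalues of finite multiplicity for $\beta\geq\beta_0$.

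No step is really an obstacle here: all the heavy lifting (the number bound Theorem~\ref{NumberBoundL}, the virial theorem Theorem~\ref{Thm-VirialL}, the high-energy estimate Theorem~\ref{Thm-ME-L-LargeE} and the uniform Mourre estimate Theorem~\ref{Thm-ME-LT}) has already been carried out. The only mild subtlety is making sure that the $\udelta$ used in the virial theorem is the same $\udelta\in\Delta(\udelta')$ for which the two Mourre estimates apply, and that the finite-rank compact error in Theorem~\ref{Thm-ME-LT} is genuinely compact—both of which hold automatically since $P$ is finite rank by the zero-temperature Corollary~\ref{Cor-Finite-pp-H}.
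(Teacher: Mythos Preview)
Your proof is correct and follows exactly the approach the paper intends: the paper simply says ``repeating the proof of Corollary~\ref{Cor-Finite-pp-H}'', and you have spelled out precisely that repetition, replacing Theorems~\ref{Thm-VirialH}, \ref{Thm-ME-H-LargeE}, \ref{Thm-GGM} by their Liouvillean counterparts Theorems~\ref{Thm-VirialL}, \ref{Thm-ME-L-LargeE}, \ref{Thm-ME-LT}, and using that the compact error $P\otimes P^\conj$ is finite rank by Corollary~\ref{Cor-Finite-pp-H}.
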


We remark that in a $(\beta,\coup)$-regime where a positive commutator estimate holds, we can under the \LGCond{2} condition 
conclude that eigenstates $\psi$ of the standard Liouvillean $L_\beta$ satisfy that 
$\psi\in\cD(N^\romL)$. This is a consequence of \cite{FaupinMoellerSkibsted2011a} and improves the basic number bound Theorem~\ref{NumberBoundL}, without imposing further conditions on $\coup$. 

%***********************************************************
\subsection{Open Problems IV}\label{Subsec-OpenIII}
%***********************************************************

The by far most central open question relevant for this section is whether or not one can establish a 
positive commutator estimate for the standard Liouvillean for arbitrary inverse temperature $\beta$
and coupling $\coup$. We have an unsubstantiated inkling that it should be possible to use $\tA_\udelta$.

\begin{problem}\label{Prob-AllTempME} Establish, for arbitrary $\beta$
and $\coup$, a positive commutator estimate for the \JakPil glued standard Liouvillean $\tL_\beta$, possibly making use of
the conjugate operator $\tA_\udelta$. It would be natural to work under the assumption \LGCond{2}, and indeed we expect that this assumption should suffice.
\end{problem}

We remark that we have not in this section made use of the modular conjugation $J$, cf. \eqref{ModConj}, which takes $L_\beta$ to $-L_\beta$.
This may be an extra ingredient to make use of.

When establishing positive commutator estimates in this section, either at weak coupling, high energy or low temperature, we did not attempt to determine a joint $(\beta,G,E)$-regime in which one can get a positive commutator. Without a positive answer to Problem~\ref{Prob-AllTempME}, investigating the interplay between the different approaches 
above would be natural. 

\begin{problem} Determine a joint $(\beta,\coup,E)$-regime where one can derive a positive commutator estimate.
\end{problem}

While one can establish positive commutator estimates for the Hamiltonian also
for infinite dimensional small systems, cf. \cite{GeorgescuGerardMoeller2004b}, the situation is fundamentally
different for standard Liouvilleans. To see this, consider as the small system a one-dimensional Harmonic Oscillator.
Here the uncoupled Liouvillean $L_0$ will have point spectrum (a multiple of) $\ZZ$, with each eigenvalue
having infinite multiplicity. Hence, one should not expect a positive commutator estimate with compact error terms,
barring some mechanism to lift the infinite degeneracy by other means. 
However, in the dipole approximation this model is explicitly solvable
\cite{Arai1981a,Arai1981c} and K{\"o}nenberg in his thesis managed to handle perturbations
of the Harmonic Oscillator potential \cite{Koenenberg2011a}. 
Note that one can construct a small system where the Hamiltonian $K$
has compact resolvent and $L_0$ has point spectrum which is dense in
$\RR$! See also \cite{FroehlichMerkli2004a,FroehlichMerkliSigal2004}, where an atomic small
system is considered, and positive commutator methods are applied in
the weak coupling regime.

\begin{problem}
What can be said about the general structure of the point spectrum of $L_\beta$, without
the assumption of small coupling or a finite dimensional small system. Are positive commutator estimates useful at all?
\end{problem}

We emphasize that all the proofs from
Subsect.~\ref{Subsec-CommBoundsL} make essential use of 
$\cK$ being finite dimensional.

\newpage

%********************************************************
\section{Absence of Singular Continuous Spectrum}\label{Sec-LAP}
%********************************************************

\newcommand{\SC}{\textup{\textbf{(SC)}}}

The aim of this section is to establish the following two theorems about 
absence of singular continuous spectrum of Pauli-Fierz Systems at zero and positive temperature. 

In this section we impose an \HGCond{2} assumption at zero temperature and an \LGCond{2} assumption at positive temperature.
We will need an extra ultraviolet assumption, which we found inconvenient to include in \HGCond{2} and \LGCond{2}.
It reads
\[
\SC \qquad (|k|+1)\partial_j \coup\in L^2(\RR^3;\Mat_\nu(\CC)),
\]
for $j=1,2,3$.

\begin{theorem}\label{Thm-SCH} Suppose \HGCond{2} and \SC. Then $\sigma_{\sico}(H) = \emptyset$.
\end{theorem}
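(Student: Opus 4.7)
The plan is to deduce Theorem~\ref{Thm-SCH} from a Limiting Absorption Principle (LAP) for $H$ on compact intervals disjoint from $\sigma_\pp(H)$. By Corollary~\ref{Cor-Finite-pp-H} the set $\sigma_\pp(H)$ is finite, so its complement is open and exhausted by such compact intervals; an LAP there gives local absolute continuity and hence $\sigma_\sico(H)=\emptyset$. The basic engine is Corollary~\ref{Cor-FromPointMEToUnifME}, which -- once the finite rank eigenprojection $P$ is controlled -- provides the \emph{strict} Mourre estimate $H'_\udelta\geq (1-\epsilon)\one_\cH - C\one[|H-E|\geq\kappa]$ for $E$ in any compact set that stays a distance $\kappa$ away from $\sigma_\pp(H)$, with $\udelta\in\Delta(\udelta')$ chosen as there.

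First, I would set up the framework of the singular Mourre theory of Georgescu--G{\'e}rard--M{\o}ller and M{\o}ller--Skibsted, since our conjugate observable $A_\udelta=\one_\cK\otimes\D\Gamma(a_\udelta)$ is only maximally symmetric and $H$ is not of class $C^1(A_\udelta)$. The idea is to work with the regularized family $A_{\udelta,n}$ obtained as in Subsect.~\ref{Subsec-NumberBoundH} by replacing $k/|k|$ with $k/\sqrt{|k|^2+n^{-1}}$ inside $a_\udelta$, which gives genuine self-adjoint conjugate operators and for which $H\in C^1(A_{\udelta,n})$ under \HGCond{1}. The form $H'_\udelta$ from \eqref{Hmprime} is the $n\to\infty$ limit (in the sense of forms on $\cD(\sqrt{N})$) of $\ri [H,A_{\udelta,n}]^\circ$ exactly as in the proof of Theorem~\ref{Thm-NumberBoundH}, and the strict Mourre estimate derived for $H'_\udelta$ transfers to the regularized commutators with uniform constants.

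The core technical step is the second-commutator regularity that makes the singular Mourre machinery yield an LAP. This is where both the \HGCond{2} hypothesis and the auxiliary condition \SC enter: by a direct computation one finds
\[
\ri[\ri[H,A_{\udelta,n}]^\circ,A_{\udelta,n}]^\circ = \D\Gamma\bigl(\ri [|k|,a_{\udelta,n}]\cdot m_\udelta'\text{-type terms}\bigr) - \phi\bigl((\ri a_{\udelta,n})^2\coup\bigr),
\]
and the relevant norms $\|(\ri a_{\udelta,n})^2 \coup\|$ and the derivative-of-dispersion terms are controlled uniformly in $n$ by the mixed $L^2$-norms $\|\partial^\alpha \coup\|$ with $|\alpha|\leq 2$ weighted by $|k|^{j}$, $j\in\{-1/2,1/2,1\}$. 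These are finite precisely under \HGCond{2}, with the exception of one ultraviolet term involving $(|k|+1)\partial_j \coup$, which is exactly what \SC supplies. Passing $n\to\infty$ as in the proof of Theorem~\ref{Thm-VirialH} -- using Lebesgue's dominated convergence together with the number bound Theorem~\ref{Thm-NumberBoundH} to handle quadratic forms on bound states and, more generally, on $\cD(N^{1/2})$ -- one obtains the quadratic form $H''_\udelta$ with an $N^{1/2}$-bound, so $H$ is of class $C^2_\Mo(A_{\udelta,n})$ with estimates uniform in $n$.

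With strict Mourre plus $C^{1,1}$ (or $C^2$) regularity in the singular sense, the standard machinery, applied on any compact interval $J\Subset\RR\setminus\sigma_\pp(H)$ with $\udelta\in\Delta(\udelta')$ chosen as above, yields the LAP
\[
\sup_{E\in J,\ 0<|\epsilon|\leq 1}\bigl\|\la A_\udelta\ra^{-s}(H-E-\ri\epsilon)^{-1}\la A_\udelta\ra^{-s}\bigr\|<\infty
\]
for $s>1/2$. By standard arguments this implies that $E_H(J)$ is purely absolutely continuous. Since $J$ is arbitrary in $\RR\setminus\sigma_\pp(H)$ and $\sigma_\pp(H)$ is finite, the spectrum of $H$ is purely absolutely continuous off $\sigma_\pp(H)$, and in particular $\sigma_\sico(H)=\emptyset$. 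The main obstacle in this plan is not the Mourre estimate (already in hand) but the careful verification of the singular second-order regularity: one must track how the various $n$-dependent errors from the regularization of $a_\udelta$ near $k=0$ and near $|k|=\infty$ remain controlled by \HGCond{2} together with \SC, and in particular justify that the double commutator extends as a form from $\cC$ to $\cD(\sqrt{N})$ with bounds independent of $n$.
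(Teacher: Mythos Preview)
Your overall strategy --- use Corollary~\ref{Cor-Finite-pp-H} to reduce to an LAP on compact intervals disjoint from the finite set $\sigma_\pp(H)$, then invoke singular Mourre theory --- is correct and matches the paper's. But two points are genuinely off.

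First, you misidentify where \SC{} enters. The double commutator $[[H,A_\udelta],A_\udelta]$ has field part $\phi(a_\udelta^2\coup)$, and since $m_\udelta$ is bounded this involves at most second $k$-derivatives of $\coup$ with bounded coefficients; \HGCond{2} alone handles that and no extra $|k|$-factor appears. The condition \SC{} is actually needed for the commutator $[H_\udelta',H]$ (equivalently, the time-derivative of the first commutator under the flow of $H$). Computing on $\cC$,
\[
[H_\udelta',H] = -[\phi(\ri a_\udelta\coup),K\otimes\one_\cF] + \ri\phi(|k|\,a_\udelta\coup) + \re\la a_\udelta\coup,\coup\ra + \ri\phi(\ri m_\udelta(|k|)\coup),
\]
and the term $\phi(|k|\,a_\udelta\coup)$ is $\sqrt N$-bounded precisely when $(|k|+1)\partial_j\coup\in L^2$, i.e.\ under \SC. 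In the paper's abstract setup this is condition \MCond{1} for the triple $(T,T',A)=(H,H_\udelta',A_\udelta)$: one needs $T'$ of class $C^1_\Mo(T)$ with $[T,T']^\circ\in\cB(\cM;\cH)$, $\cM=\cD(\sqrt{N})$.

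Second, the phrase ``$H$ is of class $C^2_\Mo(A_{\udelta,n})$'' is not what you want and is in fact false: already $[H,A_{\udelta,n}]^\circ=H'_{\udelta,n}$ contains $\D\Gamma(m_{\udelta,n})\sim N$, which is \emph{not} $H$-bounded, so $H\notin C^1_\Mo(A_{\udelta,n})$. This is exactly the singularity that forces the non-classical framework. The paper does not regularize $A_\udelta$ and pass to a limit; instead it works directly with the maximally symmetric $A_\udelta$ and verifies the four conditions \MCond{1}--\MCond{4} of Theorem~\ref{Thm-LAP}: \MCond{1} is the $[H_\udelta',H]$ bound above (where \SC{} enters), \MCond{2} is the Mourre estimate from Theorem~\ref{Thm-GGM} (compact error removable away from eigenvalues), \MCond{3} is that $H_\udelta'\in C^1_\Mo(A_\udelta)$ with field-type commutator (uses \HGCond{2}), and \MCond{4} is the commutator-of-resolvents identity established as in Lemma~\ref{ComputeFirstComm}. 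Your regularize-and-limit route may be salvageable, but as written it blurs the distinct roles of the three commutators $[H,A_\udelta]$, $[H_\udelta',A_\udelta]$, and $[H_\udelta',H]$, and places the \SC{} hypothesis in the wrong one.
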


The above theorem is due to \cite{GeorgescuGerardMoeller2004b}, 
but our reproduction here establishes the theorem under slightly weaker assumptions, 
a consequence of our choice of small system as finite dimensional. 

The following positive temperature analogue, however, is new 
and improves on a small coupling result going back to \cite{DerezinskiJaksic2001,Merkli2001}. 

\begin{theorem}\label{Thm-SCL} Suppose \LGCond{2} and \SC. Then the following holds
\begin{Enumerate}
\item There exists $\beta_0>0 $ such that $\sigma_{\mathrm{sc}}(L_\beta) = \emptyset$, for all $\beta >\beta_0$.
\item For any $\beta$ there exists $\Lambda>0$ such that $\sigma_{\mathrm{sc}}(L_\beta)\subset [-\Lambda, \Lambda]$. 
\end{Enumerate}
\end{theorem}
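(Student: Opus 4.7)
The plan is to derive both statements from a limiting absorption principle (LAP) for $\tL_\beta$ with respect to the conjugate operator $\tA_\udelta$ in the \JakPil glued picture; since $L_\beta$ and $\tL_\beta$ are unitarily equivalent via $\cU$, it suffices to work with $\tL_\beta$. The LAP will be obtained from a strict Mourre estimate plus a second-commutator regularity condition, using the singular Mourre framework of \cite{GeorgescuGerardMoeller2004a, MoellerSkibsted2004, Skibsted1998} which is designed to handle conjugate operators that are only maximally symmetric. Boundary values $(\tL_\beta - \lambda \mp \ri 0)^{-1}$ then exist as bounded operators between appropriate weighted spaces, away from an exceptional set, and absence of singular continuous spectrum on the corresponding interval follows by the standard Stieltjes argument.

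The required strict Mourre estimates are already at our disposal from Section~\ref{Sec-CommEst}. For part (ii) and fixed $\beta \in (0,\infty)$, Theorem~\ref{Thm-ME-L-LargeE} applied with $e=1$ supplies $\udelta = (1,\delta_\infty')$ and $E_0>0$ with $\tL_{\beta,\udelta}' \geq \one - C \one[|\tL_\beta| \leq E_0]$, a strict Mourre estimate with empty compact error at every $\lambda$ with $|\lambda|>E_0$. For part (i), Theorem~\ref{Thm-ME-LT} furnishes, uniformly in $\lambda \in \RR$ for $\beta \geq \beta_0$, a strict Mourre estimate with finite-rank error $P \otimes P^\conj$; combined with the virial theorem Theorem~\ref{Thm-VirialL} and the number bound Theorem~\ref{NumberBoundL}, this yields local finiteness of $\sigma_\pp(\tL_\beta)$, so the exceptional set for the LAP is discrete and absence of singular continuous spectrum follows on all of $\RR$.

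The hard part will be verifying enough regularity of $\tL_\beta$ with respect to $\tA_\udelta$ to feed into the singular Mourre LAP. The nontrivial input is control of the second commutator $[\ri[\tL_\beta,\tA_\udelta],\tA_\udelta]$; formally, as a form on $\tcC^\romL$, this equals $\one_{\cK\otimes\cK}\otimes \D\Gamma(\ta_\udelta^2 m_\udelta) - \phi(\ta_\udelta^2 \tcoup_\beta)$, and one needs to bound it as a form from $\cD(\sqrt{\tN})$ to its dual. The number-operator contribution is immediate; the field contribution reduces to square integrability of $\ta_\udelta^2 \tcoup_\beta$. The infrared regime is covered by \LGCond{2}, but two $\omega$-derivatives in the ultraviolet cannot be absorbed by the decay provided by \LGCond{2} alone, because $m_\udelta$ only grows polynomially. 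This is exactly the role of the auxiliary condition \SC: it enhances the ultraviolet decay of $\partial_j \coup$ by one whole power of $|k|$, precisely supplying what is missing.

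With the second-commutator estimate in hand, I would verify, following the regularisation scheme of Subsect.~\ref{Subsec-NumberBoundH} (replacing $\ta_\udelta$ by a self-adjoint family built from $m_\udelta(\omega)\omega/\sqrt{\omega^2+n^{-1}}$) together with Lemma~\ref{ComputeFirstComm} and Lemma~\ref{LprimeC1}, and passing to the limit $n \to \infty$ in the spirit of the proofs of Theorems~\ref{Thm-NumberBoundH} and~\ref{NumberBoundL}, that $\tL_\beta$ satisfies the $C^{1,1}(\tA_\udelta)$-type regularity required by the singular Mourre theory. The resulting LAP then provides boundary values of $(\tL_\beta-\lambda \mp \ri 0)^{-1}$ in a topology weighted by $\tA_\udelta$ and $\tN$; absence of singular continuous spectrum on $\RR \setminus [-E_0,E_0]$ follows for part (ii) with $\Lambda = E_0$, and on all of $\RR$ for part (i), completing the argument.
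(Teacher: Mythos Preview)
Your overall architecture (Mourre estimate from Section~\ref{Sec-CommEst} $+$ second-commutator regularity $\Rightarrow$ LAP $\Rightarrow$ no singular continuous spectrum) is the right one, and you correctly single out Theorems~\ref{Thm-ME-L-LargeE} and~\ref{Thm-ME-LT} as the sources of positivity. However, two points miss the mark.

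First, you misplace the role of \SC. The second commutator $\ri[\tL'_{\beta,\udelta},\tA_\udelta]^\circ$ is, by Lemma~\ref{LprimeC1}, essentially $\phi(\ta_\udelta^2\tcoup_\beta)$ plus a bounded $\D\Gamma$-term (your expression $\D\Gamma(\ta_\udelta^2 m_\udelta)$ is not well-formed; the one-particle symbol is $m_\udelta m_\udelta'$, which is bounded). Since $m_\udelta$ is bounded, \LGCond{2} alone makes $\ta_\udelta^2\tcoup_\beta$ square integrable. Where \SC\ is actually needed is in the \emph{mixed} commutator $[\tL'_{\beta,\udelta},\tL_\beta]^\circ$: computing on $\tcC^\romL$ produces a term $\phi(\omega\,\ta_\udelta\tcoup_\beta)$ from $[\phi(\ri\ta_\udelta\tcoup_\beta),\D\Gamma(\omega)]$, and the extra $\omega$-factor in the ultraviolet is precisely what \SC\ absorbs. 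This is condition \MCond{1} in the paper's abstract framework, not the $A$-double-commutator condition \MCond{3}.

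Second, and more seriously, the existing LAPs you cite do not cover the Liouvillean. The paper is explicit that the LAP of \cite{GeorgescuGerardMoeller2004a} suffices at zero temperature but fails at positive temperature: those results all rely on the resolvent of $T$ to control error terms, which works when $T=H$ (semibounded, fixed domain) but not when $T=\tL_\beta$ (not semibounded, $\beta$-dependent domain, resolvent does not control $\tN$). This is why Subsects.~\ref{Subsec-AprioriRes}--5.2 develop a \emph{new} abstract LAP, Theorem~\ref{Thm-LAP}, built around conditions \MCond{1}--\MCond{4}. The novelty is that \MCond{1} asks for $T'\in C^1_\Mo(T)$ (regularity of the commutator with respect to the Hamiltonian, not the conjugate operator) and all error bounds use only $T'$-weights, never $T$-resolvents. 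Your proposed route through $C^{1,1}(\tA_\udelta)$ and the older singular theory would stall at exactly this domain issue. Incidentally, your regularization of $\ta_\udelta$ is also unnecessary: in the glued picture $\ta_\udelta$ is already self-adjoint, so $\tA_\udelta$ is too.
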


Together with Theorem~\ref{Thm-Jadczyk}, this reduces return to equilibrium at 
low temperature and arbitrary coupling strength, to establishing that the zero eigenvalue of $L_\beta$ is simple.

As usual, the road we take to establish Theorems~\ref{Thm-SCH} and~\ref{Thm-SCL}. 
passes through a Limiting Absorption Principle (LAP).
While we can employ the LAP from \cite{GeorgescuGerardMoeller2004a} to deal with 
the zero-temperature case, there is no LAP available in the 
literature which can deal with the positive temperature Liouvillean, 
outside the weak coupling regime where \cite{DerezinskiJaksic2001,Merkli2001} apply. 
In the following two subsections, we establish a new LAP, which applies at both zero and positive temperature.

\subsection{A priori Resolvent Estimates}\label{Subsec-AprioriRes}

 In this subsection we work under the following assumptions on two self-adjoint operators $T$ and $T'$ acting on a Hilbert space $\cH$.
 
\begin{enumerate}
\item[\MCond{1}] $T'$ is of class $C^1_\Mo(T)$, with $[T,T']^\circ\in\cB(\cD(|T'|^{1/2});\cH)$.
\item[\MCond{2}] There exist $e,C_M>0$ and $J\subset \RR$, an open interval, 
such that $T'\geq e\one - C_M \one[T\in J]$. 
\end{enumerate}

The condition \MCond{2}, in particular, ensures that $T'$ is
semibounded. We fix in the following a real number $\eta$ such that
$T'+\eta\geq \one$.

Let $J'$ be a compact subinterval of $J$. Pick $\kappa>0$ such that $J'_\kappa = J'+[-\kappa,\kappa]\subset J$.  
From Urysohn's lemma we get an 
$f\in C_0^\infty(\RR;[0,1])$, with $\supp(f)\subset J$ 
and $f(t) = 1$ for $t\in J'_\kappa$. Then, by \MCond{2},
\begin{equation}\label{Op-M}
M:= T' + C_M f^\perp(T)\geq e \one,
\end{equation}
where $f^\perp = 1- f$.
 Note that $\cD(M) = \cD(T')$.
Put $\cM = \cD(M^{1/2}) = \cD((T'+\eta)^{1/2})$ 
and equip $\cM$ with the norm $\|u\|_\cM = \sqrt{\la u, M u\ra}$, with respect to which it is complete.
Note that with this notation, \MCond{1} implies $[T,T']^\circ\in\cB(\cM;\cH)$ and hence, by duality, we have $[T,T']^\circ\in \cB(\cH;\cM^*)$ as well.

%We abbreviate
%\[
%J^\pm = \set{z\in\CC}{\re(z)\in J',\, \pm\im(z) >0}.
%\]

\begin{remarks}\label{Remark-T-Tprime} \begin{enumerate}
\item 
Put $\cD = \cD(T)\cap \cD(T')$. For $\epsilon\in \RR$ the operator
$T_\epsilon = T - \ri \epsilon T'$ is a priori defined as an operator on $\cD$. 
By Proposition~\ref{Prop-Skibsted}, $T_\epsilon$ is in fact closed 
for $\epsilon\neq 0$ and  $T_\epsilon^* = T_{-\epsilon}$. 
A fact also exploited in the proof of Proposition~\ref{Prop-BasicLReg}. 
\item\label{Item-Rem-ImprMourre} An application of Proposition~\ref{Prop-DomInv}~\ref{Item-FPS-Invariance}, 
with $A=T$, $S = T'+\eta$, and $\rho=1/2$, shows that
for all $f\in C_0^\infty(\RR)$ we have $f(T)\colon \cD(T')\to\cD(T')$ continuously. 
By interpolation we also find that $f(T)\in\cB(\cM)$.
\end{enumerate}
\end{remarks}

This section is devoted to the study of the resolvent set of $T_\epsilon$,
and to establish bounds on the corresponding resolvents. 
It is precisely operators of the form $T_\epsilon$ that enter into Mourre's differential inequality technique
to establish Limiting Absorption Principles. Our particular 
construction appeared first in \cite{GeorgescuGerardMoeller2004a}, 
and is closely related to constructions from \cite{MoellerSkibsted2004,Skibsted1998}.

\begin{lemma}\label{LAP-Lemma1} There exists $C_1,C_2>0$  such that for any $\epsilon\in\RR$ and $z\in \CC$, with $\re(z)\in J'$, 
we have
\[
\forall u\in\cD:\quad \|\la T\ra f^\perp(T) u\| \leq C_1\|(T_\epsilon-z)u\|+|\epsilon| C_2\|u\|_\cM.
\]
\end{lemma}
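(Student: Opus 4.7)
The core of the proof will be exploiting the fact that $h(t) := \la t\ra f^\perp(t)$ vanishes on $J'_\kappa$, together with an identity that converts between $T - z$ and $T_\epsilon - z$. Concretely, I would first observe that for $\re z \in J'$, any $t$ in the support of $f^\perp$ satisfies $|t - z| \geq |t - \re z| \geq \kappa$, while $\la t\ra \leq \sup_{s \in J'}\la s\ra + |t - \re z|$. A pointwise computation then yields a uniform bound
\[
\sup_{t \in \RR} \Bigl|\frac{h(t)}{t - z}\Bigr| \leq C_0,
\]
with $C_0$ depending only on $J'$ and $\kappa$ (crucially, not on $\im z$). The bounded function $g_z(t) := h(t)/(t - z)$ then produces, via the Borel functional calculus, a bounded operator $g_z(T)$ with $\|g_z(T)\| \leq C_0$ and the identity $g_z(T)(T - z) u = h(T) u$ for all $u \in \cD(T)$.

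Next, since $T - z = (T_\epsilon - z) + i\epsilon T'$ on $\cD = \cD(T) \cap \cD(T')$, I would decompose
\[
h(T) u = g_z(T)(T_\epsilon - z) u + i\epsilon\, g_z(T) T' u,
\]
which immediately gives $\|h(T) u\| \leq C_0 \|(T_\epsilon - z) u\| + |\epsilon|\, \|g_z(T) T' u\|$. The remaining task is to establish the uniform bound $\|g_z(T) T' u\| \leq C \|u\|_\cM$. For this I would invoke the commutator identity $g_z(T) T' = T' g_z(T) + [g_z(T), T']$ and apply a Helffer--Sj\"ostrand representation to $g_z(T) - 1$ (the subtraction of the asymptotic value $g_z(\infty) = 1$ being essential to obtain an integrable kernel). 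Combined with assumption \MCond{1} in the form $[T, T']^\circ \in \cB(\cM; \cH)$, and the invariance of $\cM$ under resolvents of $T$ coming from Proposition~\ref{Prop-DomInv}~\ref{Item-ResInv-MoPlus}, one obtains $[g_z(T), T'] \in \cB(\cM; \cH)$ with norm uniformly bounded in $z$. A parallel argument treats $T' g_z(T) u$ by writing $T' = M - C_M f^\perp(T)$ (so $g_z(T) f^\perp(T)$ is trivially bounded on $\cH$) and exploiting the $\cM$-preservation properties of $g_z(T)$.

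The main obstacle I anticipate lies in securing these commutator and domain-preservation bounds uniformly in $z$ with $\re z \in J'$. Because $g_z$ is only bounded --- not decaying --- at infinity, the Helffer--Sj\"ostrand calculus cannot be applied to $g_z$ directly; one must first subtract off the asymptotic constant $g_z(\infty) = 1$ and then show that the resulting almost-analytic extension, together with the operator-valued integrand $(T - \zeta)^{-1}[T, T']^\circ (T - \zeta)^{-1}$, produces an $L^1$-contribution whose norm is controlled independently of $\im z$. Once this ultraviolet analysis is in place, assembling the pieces yields the claimed inequality with $C_1 = C_0$ and $C_2$ a constant depending only on the structural data of \MCond{1}--\MCond{2} and the geometric data $J', \kappa$ of the Urysohn function $f$.
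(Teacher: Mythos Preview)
Your decomposition $h(T)u = g_z(T)(T_\epsilon - z)u + i\epsilon\, g_z(T)T'u$ is correct, and the uniform bound on $g_z$ handles the first term. The gap is in step~4: you cannot obtain $\|g_z(T)T'u\| \leq C\|u\|_\cM$. The operator $g_z(T)$ is merely bounded on $\cH$ and provides no decay in the $T'$-direction, while $\|T'u\|$ scales like a \emph{full} power of $T'+\eta$ and $\|u\|_\cM = \|(T'+\eta)^{1/2}u\|$ only controls a half power. Commuting $g_z(T)$ past $T'$ does not help: writing $T'g_z(T) = Mg_z(T) - C_M f^\perp(T)g_z(T)$, the $\cM$-preservation of $g_z(T)$ yields at best $\|M^{1/2}g_z(T)u\| \leq C\|u\|_\cM$, not $\|Mg_z(T)u\| \leq C\|u\|_\cM$. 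The missing half power cannot be recovered from the available hypotheses.

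The paper sidesteps this by a quadratic identity
\[
\|(T_\epsilon - z)u\|^2 + \epsilon\langle u, i[T,T']^\circ u\rangle = \|(T - \re z)u\|^2 + \|(\epsilon T' + \im z)u\|^2,
\]
which replaces the full $T'$ by the commutator $[T,T']^\circ \in \cB(\cM;\cH)$, so that $|\langle u, i[T,T']^\circ u\rangle| \leq C\|u\|\|u\|_\cM$ costs only a half power. Substituting $u = f^\perp(T)v$ and using $|t - \re z| \geq C_\kappa\langle t\rangle$ on $\supp f^\perp$ then gives the left-hand side; the only remaining commutator is $[T', f(T)]$ with \emph{compactly supported} $f$, for which Helffer--Sj\"ostrand applies without any subtraction at infinity. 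This is both simpler and avoids the half-power obstruction.
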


\begin{proof}
Compute first for $u\in\cD$
\[
\|(T_\epsilon-z)u\|^2 + \epsilon\la u,\ri [T,T']^\circ u\ra = \|(T-\re(z))u\|^2
+ \|(\epsilon T'+\im(z))u\|^2. 
\] 
Discarding the last term and appealing to \MCond{1} yields, for any $\sigma>0$, the bound
\begin{align*}
\|(T-\re(z))u\|^2 & \leq  \|(T_\epsilon-z)u\|^2 + |\epsilon| C \|u\| \|u\|_\cM \\
& \leq \|(T_\epsilon-z)u\|^2 + \sigma \|u\|^2 + |\epsilon|^2 \frac{C^2}{4\sigma} \|u\|_\cM^2.
\end{align*}
Let $v\in\cD$.
Inserting $u = f^\perp(T) v\in\cD$, cf.~Remark~\ref{Remark-T-Tprime}~\ref{Item-Rem-ImprMourre},  gives the bound
\[
(C_\kappa^2-\sigma)\|\la T\ra f^\perp(T) v\|^2\leq  \|(T_\epsilon-z)f^\perp(T) v\|^2
+|\epsilon|^2 \frac{C'}{\sigma}\|v\|_\cM^2,
\]
where $C_\kappa = \inf_{\lambda\in\RR\backslash J'_\kappa, \mu \in J'}|\lambda-\mu|/\la\lambda\ra>0$.
Here we used that $M^{1/2}f(T)M^{-1/2}$ is bounded, again appealing to Remark~\ref{Remark-T-Tprime}~\ref{Item-Rem-ImprMourre}. From now on we fix $\sigma = C_\kappa^2/2$.
Consider the term
\[
(T_\epsilon-z)f^\perp(T) v = f^\perp(T)(T_\epsilon-z)v - \ri\epsilon [T',f(T)]v.
\]
The last term is bounded by $|\epsilon|\|v\|_\cM$, which follows from \MCond{1} 
by writing $f(T)$ as an integral over resolvents using an almost analytic extension of $f$. 
See Proposition~\ref{Prop-DomInv}.
Here it is important that $[T,T']^\circ\in\cB(\cM;\cH)$. 
This completes the proof by subadditivity of the square root.
\end{proof}

Before continuing, we observe that \MCond{2} implies the following elementary bound, 
which holds for all $\epsilon$ and $z$ with $\epsilon\im(z)\geq 0$:
\begin{equation}\label{LAP-apri1}
\forall u\in\cD:\quad |\epsilon|\|u\|_\cM^2 + |\im(z)|\|u\|^2 \leq |\im\la u,(T_\epsilon-z)u\ra|
+ |\epsilon| C_M \|u\| \|f^\perp(T)u\|.
\end{equation}

\begin{lemma}\label{LAP-LemmaInv} There exists $\epsilon_0>0$ and $C>0$ such that for $\epsilon\in \RR$ and $z\in \CC$,
with $0 < |\epsilon| \leq \epsilon_0$, $\re(z)\in J'$ and $\epsilon\im(z)>0$, we have $z\in\rho(T_\epsilon)$ 
and the corresponding resolvent $R_\epsilon(z) = (T_\epsilon-z)^{-1}$ satisfies
\[
\|R_\epsilon(z)\|\leq C|e \epsilon+\im(z)|^{-1} \qquad \textup{and}\qquad \|M^{\frac12}R_\epsilon(z)\|\leq C \epsilon^{-1}. 
\]
\end{lemma}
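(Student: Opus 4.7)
The plan is to combine the two inputs already set up in the subsection, namely the soft coercivity estimate \eqref{LAP-apri1} (coming directly from \MCond{2}) and the sharper energy-localized estimate of Lemma~\ref{LAP-Lemma1}, into an a priori lower bound on $\|(T_\epsilon-z)u\|$, and then close up invertibility by a standard adjoint argument using the duality $T_\epsilon^* = T_{-\epsilon}$ from Proposition~\ref{Prop-Skibsted}.

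First I would derive, on the intersection domain $\cD = \cD(T)\cap\cD(T')$, the two a priori estimates
\[
\bigl(e|\epsilon|+|\im(z)|\bigr)\|u\|^2 \;\leq\; C\,\|u\|\,\|(T_\epsilon-z)u\|
\quad\textup{and}\quad
|\epsilon|\,\|u\|_\cM^2 \;\leq\; C\,\|u\|\,\|(T_\epsilon-z)u\|,
\]
uniformly for $|\epsilon|\leq\epsilon_0$, $\re(z)\in J'$, $\epsilon\im(z)>0$. To get there, I would apply Lemma~\ref{LAP-Lemma1} (which, via $\langle T\rangle\geq 1$, controls $\|f^\perp(T)u\|$ by $\|(T_\epsilon-z)u\|$ plus a remainder $|\epsilon|\,\|u\|_\cM$) to dominate the second term on the right-hand side of \eqref{LAP-apri1}. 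The cross term $|\epsilon|^2\|u\|\,\|u\|_\cM$ produced by Cauchy--Schwarz is handled by AM--GM as $\tfrac{|\epsilon|}{2}\|u\|_\cM^2 + \tfrac{|\epsilon|^3}{2}(C_MC_2)^2\|u\|^2$; the first piece is absorbed back into the left-hand side of \eqref{LAP-apri1}, while, since $\|u\|_\cM^2\geq e\|u\|^2$, the residual $|\epsilon|^3\|u\|^2$ is dominated by $\tfrac{e|\epsilon|}{2}\|u\|^2$ once $\epsilon_0$ is chosen small. Dividing through by $\|u\|$ then yields the first inequality, and feeding it back into the leftover bound on $|\epsilon|\,\|u\|_\cM^2$ gives the second.

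With these a priori bounds in hand, the quantitative estimates of the lemma will read off directly upon setting $u = R_\epsilon(z)v$, once invertibility is known. For invertibility I will use the adjoint argument: by Proposition~\ref{Prop-Skibsted} one has $(T_\epsilon-z)^* = T_{-\epsilon}-\bar z$, and the pair $(-\epsilon,\bar z)$ still satisfies $\re(\bar z)\in J'$ and $(-\epsilon)\im(\bar z) = \epsilon\im(z)>0$. Hence the very same a priori lower bound applies to $T_\epsilon-z$ \emph{and} to its adjoint. Injectivity plus the bound yields closed range for $T_\epsilon-z$; injectivity of the adjoint yields trivial cokernel; so $T_\epsilon-z$ is bijective with $z\in\rho(T_\epsilon)$.

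The only delicate part is to ensure that every algebraic manipulation is performed on a subspace on which all the relevant forms make sense, and that $f(T)$ and $f^\perp(T)$ preserve both $\cD(T')$ and $\cM$ so that Lemma~\ref{LAP-Lemma1} can actually be invoked on $f^\perp(T)u$ for $u\in\cD$. This is precisely what Remark~\ref{Remark-T-Tprime}~\ref{Item-Rem-ImprMourre} provides via an almost analytic extension of $f$ and Proposition~\ref{Prop-DomInv}. Once this domain bookkeeping is verified, the rest is routine.
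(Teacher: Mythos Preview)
Your proposal is correct and follows essentially the same route as the paper: combine \eqref{LAP-apri1} with Lemma~\ref{LAP-Lemma1} to absorb the $|\epsilon|^2\|u\|\,\|u\|_\cM$ cross term for $|\epsilon|$ small, yielding the two a priori bounds on $\cD$. The algebra differs only cosmetically (the paper divides through by $\|u\|$ before absorbing, rather than using AM--GM and then feeding back), and you are in fact more explicit than the paper about closing the invertibility argument via $T_\epsilon^* = T_{-\epsilon}$ and the symmetry $(\epsilon,z)\leftrightarrow(-\epsilon,\bar z)$.
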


\begin{proof} 
We may consider only the case $\epsilon>0$, and hence we have also $\im(z)>0$. 
The other case is similar.
From \eqref{LAP-apri1} we get, for $u\in\cD$,
\[
\epsilon\|u\|_\cM^2 + \im(z)\|u\|^2 \leq 
|\la u, (T_\epsilon-z)u\ra |
+\epsilon C_M \|u\| \|f^\perp(T)u\|.
\]
We estimate one factor in the last term using Lemma~\ref{LAP-Lemma1}, which yields
\[
\|f^\perp(T)u\| \leq C_1\|(T_\epsilon-z)u\| + \epsilon C_2 \|u\|_\cM.
\]
Hence, we get
\[
\epsilon\|u\|_\cM^2 +\im(z)\|u\|^2 \leq (1+\epsilon C_M C_1)\|u\|\|(T_\epsilon-z)u\|
+ \epsilon^2 C_2 C_M \|u\|\|u\|_\cM.
\]
Estimate one factor of $\|u\|_\cM$ from below by $\sqrt{e}\|u\|$, cf.~\eqref{Op-M},  
on the left-hand side and divide through by $\|u\|$. This yields the bound
\[
\epsilon\sqrt{e}\|u\|_\cM +\im(z)\|u\| \leq (1+\epsilon C_M C_1)\|(T_\epsilon-z)u\|
+ \epsilon^2C_2 C_M \|u\|_\cM.
\]
Choosing now $\epsilon\leq \epsilon_0 := \sqrt{e}(2C_2C_M)^{-1}$, we conclude that
\[
\epsilon\sqrt{e}\|u\|_\cM + 2\im(z)\|u\| \leq (2+\sqrt{e}C_1 C_2^{-1})\|(T_\epsilon-z)u\|.
\]
Estimating $\epsilon\sqrt{e}\|u\|_\cM + 2\im(z)\|u\| \geq (e \epsilon  + \im(z))\|u\|$ gives the first bound.
The second bound follows from the estimate $\epsilon\sqrt{e}\|u\|_\cM + 2\im(z)\|u\|\geq \epsilon \sqrt{e} \|M^{1/2}u\|$.
\end{proof}

From now on, we keep the $\epsilon_0$ from Lemma~\ref{LAP-LemmaInv} fixed.

\begin{lemma}\label{LAP-Lemma4} There exists $C>0$ such that for all $\epsilon\in \RR$, $z\in \CC$ and $u\in\cH$,
with $0<|\epsilon|\leq \epsilon_0$, $\re(z)\in J'$ and $\epsilon\im(z)>0$, we have
\[
\|R_\epsilon(z)f^\perp(T)u\|_\cM \leq C|\epsilon|^{-\frac12} \|u\| \quad \textup{and} \quad 
\|f^\perp(T)R_\epsilon(z)u\| \leq C|\epsilon|^{-\frac12}\|u\|_{\cM^*}.
\]
\end{lemma}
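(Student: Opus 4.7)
The plan is to first establish the estimate $\|R_\epsilon(z)f^\perp(T)u\|_\cM \leq C|\epsilon|^{-1/2}\|u\|$ via an energy identity that refines the one used in the proof of Lemma~\ref{LAP-LemmaInv}, and then to deduce the second estimate by Hilbert-space adjunction.

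For the first estimate, set $v = R_\epsilon(z)f^\perp(T)u$, which lies in $\cD = \cD(T)\cap\cD(T')$ since $R_\epsilon(z)\colon \cH\to \cD$. Then $(T_\epsilon - z)v = f^\perp(T)u$. Taking the imaginary part of $\la v,(T_\epsilon-z)v\ra$, using $T' = M - C_M f^\perp(T)$, and assuming first $\epsilon, \im(z) > 0$ (the other sign being handled symmetrically), one obtains
\[
\epsilon\|v\|_\cM^2 + \im(z)\|v\|^2 \leq \epsilon C_M\|v\|\|f^\perp(T)v\| + |\la v, f^\perp(T)u\ra|.
\]
The crucial move is to rewrite $|\la v, f^\perp(T)u\ra| = |\la f^\perp(T)v, u\ra| \leq \|f^\perp(T)v\|\|u\|$ by self-adjointness of $f^\perp(T)$, so that both terms on the right carry a factor of $\|f^\perp(T)v\|$. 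Bounding $\|v\|\leq C|\epsilon|^{-1}\|u\|$ via Lemma~\ref{LAP-LemmaInv}, the first term is of the same order as the second, yielding
\[
\epsilon\|v\|_\cM^2 \leq C_3\|f^\perp(T)v\|\|u\|.
\]
Now Lemma~\ref{LAP-Lemma1} gives $\|f^\perp(T)v\|\leq C_1\|u\| + \epsilon C_2\|v\|_\cM$, since $\|(T_\epsilon - z)v\| = \|f^\perp(T)u\| \leq \|u\|$. Substituting and absorbing the resulting $\epsilon\|u\|\|v\|_\cM$ cross term via Young's inequality gives $\|v\|_\cM^2 \leq C\epsilon^{-1}\|u\|^2$, which is the claimed bound.

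For the second estimate, note that the first estimate, read with $(\epsilon,z)$ replaced by $(-\epsilon,\bar z)$ (which satisfies the same hypotheses, as $(-\epsilon)\im(\bar z) = \epsilon\im(z) > 0$), shows that $R_{-\epsilon}(\bar z)f^\perp(T)\colon \cH\to \cM$ is bounded with norm $\leq C|\epsilon|^{-1/2}$. Taking Hilbert-space adjoints with respect to the Gelfand triple $\cM\hookrightarrow \cH\hookrightarrow \cM^*$ and exploiting $R_{-\epsilon}(\bar z)^* = R_\epsilon(z)$ together with the self-adjointness of $f^\perp(T)$ produces $f^\perp(T)R_\epsilon(z)\colon \cM^*\to \cH$ with the same norm, which is exactly the second estimate. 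The only real subtlety lies in the improvement from the naive $|\epsilon|^{-1}$ rate (obtained by bounding $|\la v, f^\perp(T)u\ra|$ by $\|v\|\|u\|$ directly) to $|\epsilon|^{-1/2}$: this gain is precisely what one harvests by redistributing one factor of $f^\perp(T)$ onto $v$ and then invoking the $\la T\ra$-localization afforded by Lemma~\ref{LAP-Lemma1}.
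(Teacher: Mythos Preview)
Your proof is correct, but it takes a different route from the paper's. The paper proceeds by interpolation: it first extracts from Lemmata~\ref{LAP-Lemma1} and~\ref{LAP-LemmaInv} the uniform bound $\|R_\epsilon(z)f^\perp(T)\|\leq C_3$, then separately establishes the operator-norm bound $\|M R_\epsilon(z)f^\perp(T)\|\leq C_4|\epsilon|^{-1}$ by writing $T' = \epsilon^{-1}\bigl((T_\epsilon-z)-(T-z)\bigr)$, and finally interpolates via $\|M^{1/2}v\|\leq \|v\|^{1/2}\|Mv\|^{1/2}$ to get the $|\epsilon|^{-1/2}$ rate. Your argument bypasses the $\|Mv\|$ estimate entirely: you feed $v=R_\epsilon(z)f^\perp(T)u$ directly into the quadratic-form inequality \eqref{LAP-apri1}, move $f^\perp(T)$ across the inner product, and close with Lemma~\ref{LAP-Lemma1}. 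This is somewhat more economical, and it has the minor advantage that the estimates are manifestly uniform in $\im(z)$ at every step. The paper's approach, on the other hand, produces the intermediate bounds $\|R_\epsilon(z)f^\perp(T)\|\leq C$ and $\|MR_\epsilon(z)f^\perp(T)\|\leq C|\epsilon|^{-1}$ along the way, which are of independent interest. Both approaches reduce the second estimate to the first by the same duality argument, exploiting $R_\epsilon(z)^* = R_{-\epsilon}(\bar z)$.
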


\begin{proof} It suffices to establish the bounds with $\epsilon>0$ and hence $\im(z) > 0$.
Furthermore, the first bound implies the second by duality.

Note first that by Lemmata~\ref{LAP-Lemma1} and~\ref{LAP-LemmaInv}, we have for any $v\in\cH$
\begin{equation}\label{LAP4-AprioriUnif1}
\|\la T \ra f^\perp(T) R_\epsilon(z) v\|\leq C_1\|v\| + \epsilon C_2\|R_\epsilon(z) v\|_\cM\leq C' \|v\|.
\end{equation}
In particular, we get for some $C_3>0$, the uniform bounds
\begin{equation}\label{LAP4-AprioriUnif2}
\|f^\perp(T)R_\epsilon(z)\|\leq C_3 \qquad \textup{and}\qquad \|R_\epsilon(z)f^\perp(T)\|\leq C_3,
\end{equation}
with the second bound following from the first, by taking the adjoint.

Estimate for $u\in\cH$, with $\|u\|\leq 1$, using \eqref{Op-M}
\begin{equation}\label{LAP3-InitEq}
\|M R_\epsilon(z)f^\perp(T) u\| \leq \|T' R_\epsilon(z)f^\perp(T)u\|
+C_M \|f^\perp(T) R_\epsilon(z)f^\perp(T) u\|.
\end{equation}
As for the first term on the right-hand side of \eqref{LAP3-InitEq}, we get using Lemmata~\ref{LAP-Lemma1} and~\ref{LAP-LemmaInv}
together with \eqref{LAP4-AprioriUnif2}
\begin{align*}
 \|T'R_\epsilon(z)f^\perp(T)u\| & = \frac1{\epsilon} \|((T_\epsilon-z) - (T-z))R_\epsilon(z)f^\perp(T) u\|\\
& \leq \frac1{\epsilon}  + \frac1{\epsilon}\| (T-z) R_\epsilon(z) f^\perp(T) u\|\\
&\leq  \frac{C'}{\epsilon} + \frac{C''}{\epsilon}\|\la T\ra f^\perp(T) R_\epsilon(z) f^\perp(T) u\|
\end{align*}
Inserting back into \eqref{LAP3-InitEq}, we arrive at
\begin{align*}
\|M  R_\epsilon(z)f^\perp(T)u\| &\leq
 \frac{C'}{\epsilon} +\frac{C''+C_M}{\epsilon}\|\la T\ra f^\perp(T) R_\epsilon(z) f^\perp(T) u\|\leq \frac{C_4}{\epsilon}, 
\end{align*}
where we made use of \eqref{LAP4-AprioriUnif1}.
Hence, combining with \eqref{LAP4-AprioriUnif2},
\[
\|M^\frac12 R_\epsilon(z)f^\perp(T)u\| \leq  \|R_\epsilon(z) f^\perp(T) u\|^{\frac12}
\|M R_\epsilon(z) f^\perp(T) u\|^{\frac12} \leq  \epsilon^{-\frac12}\sqrt{C_3 C_4}.
\]
This concludes the proof
\end{proof}

\begin{lemma}\label{LAP-Lemma5}  There exists $C_1,C_2>0$ such that for all 
$\epsilon\in\RR$ and $z\in \CC$, 
with $0<|\epsilon|\leq \epsilon_0$, $\re(z)\in J'$ and $\epsilon\im(z)>0$, we have
\begin{Enumerate}
\item\label{LAP5-1} $\forall u\in\cH:\quad |\epsilon|^\frac12 \| R_\epsilon(z)u\|_\cM \leq 
2|\la u,R_\epsilon(z)u\ra|^\frac12 +  C_1 \|u\|_{\cM^*}$.
\item\label{LAP5-2} $R_\epsilon(z)$ extends by continuity from an operator on $\cH$
  to an element of $\cB(\cM^*;\cM)$ and the extension satisfies the bound
$\|R_\epsilon(z)\|_{\cB(\cM^*;\cM)}\leq  C_2|\epsilon|^{-1}$.
\end{Enumerate}
\end{lemma}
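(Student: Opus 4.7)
My plan is to derive \ref{LAP5-1} from the basic form inequality \eqref{LAP-apri1} and Lemma~\ref{LAP-Lemma4}, and then bootstrap \ref{LAP5-1} into the $\cM^*\to\cM$ boundedness claim \ref{LAP5-2} via Cauchy--Schwarz plus a density argument.

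For \ref{LAP5-1}, I would apply \eqref{LAP-apri1} with $u$ replaced by $R_\epsilon(z)v$, for $v\in\cH$. Since $R_\epsilon(z)v\in \cD(T_\epsilon)\subseteq \cD$ and $(T_\epsilon-z)R_\epsilon(z)v=v$, this yields
\[
|\epsilon|\,\|R_\epsilon(z)v\|_\cM^2 + |\im(z)|\,\|R_\epsilon(z)v\|^2
 \leq |\la R_\epsilon(z)v,v\ra| + |\epsilon| C_M\,\|R_\epsilon(z)v\|\,\|f^\perp(T)R_\epsilon(z)v\|.
\]
Discarding the $|\im(z)|$-term, bounding $\|R_\epsilon(z)v\|\leq e^{-1/2}\|R_\epsilon(z)v\|_\cM$ via \eqref{Op-M}, and using Lemma~\ref{LAP-Lemma4} to estimate $\|f^\perp(T)R_\epsilon(z)v\|\leq C |\epsilon|^{-1/2}\|v\|_{\cM^*}$, the right-hand side becomes
\[
|\la R_\epsilon(z)v,v\ra| + C'\,|\epsilon|^{1/2}\,\|R_\epsilon(z)v\|_\cM\,\|v\|_{\cM^*}.
\]
Setting $X=|\epsilon|^{1/2}\|R_\epsilon(z)v\|_\cM$, this reads $X^2 \leq |\la R_\epsilon(z)v,v\ra| + C'X\|v\|_{\cM^*}$. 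Solving this elementary quadratic inequality in $X$ and using $\sqrt{a+b}\leq \sqrt{a}+\sqrt{b}$ gives $X\leq 2|\la R_\epsilon(z)v,v\ra|^{1/2} + C_1\|v\|_{\cM^*}$, which is \ref{LAP5-1}.

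For \ref{LAP5-2}, I would feed Cauchy--Schwarz $|\la R_\epsilon(z)v,v\ra|\leq \|R_\epsilon(z)v\|_\cM\,\|v\|_{\cM^*}$ into \ref{LAP5-1}, getting
\[
|\epsilon|^{1/2}\|R_\epsilon(z)v\|_\cM \leq 2\bigl(\|R_\epsilon(z)v\|_\cM\,\|v\|_{\cM^*}\bigr)^{1/2} + C_1\|v\|_{\cM^*}.
\]
Applying $2\sqrt{ab}\leq \tfrac12 a + 2b$ to the first term on the right absorbs half of $|\epsilon|^{1/2}\|R_\epsilon(z)v\|_\cM$ into the left-hand side, leaving
\[
\tfrac12\,|\epsilon|^{1/2}\|R_\epsilon(z)v\|_\cM \leq 2|\epsilon|^{-1/2}\|v\|_{\cM^*} + C_1\|v\|_{\cM^*},
\]
and hence $\|R_\epsilon(z)v\|_\cM\leq C_2 |\epsilon|^{-1}\|v\|_{\cM^*}$ using $|\epsilon|\leq \epsilon_0$ to merge the two terms.

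The final step is to extend $R_\epsilon(z)$ from $\cH$ to $\cM^*$. Since $\cM$ embeds continuously and densely into $\cH$ (using $M\geq e$), its dual $\cH=\cH^*$ embeds continuously and densely into $\cM^*$. The bound just derived shows that $R_\epsilon(z)$, viewed as a map from the dense subspace $\cH\subset\cM^*$ into $\cM$, is continuous with norm at most $C_2|\epsilon|^{-1}$; therefore it admits a unique bounded extension to an element of $\cB(\cM^*;\cM)$ satisfying the required bound. The main conceptual obstacle is keeping track of which norm the resolvent is being bounded in and using the $C^1_{\Mo}(T)$ machinery (through Lemma~\ref{LAP-Lemma4}) to obtain the $|\epsilon|^{-1/2}$ control on $\|f^\perp(T)R_\epsilon(z)v\|$; once this is in hand, both parts follow by pure algebraic manipulation of the resulting quadratic inequalities.
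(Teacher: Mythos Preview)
Your proof is correct and follows essentially the same route as the paper's. Both start from \eqref{LAP-apri1} applied to $R_\epsilon(z)u$, invoke Lemma~\ref{LAP-Lemma4} for the $f^\perp(T)$ term, and reduce to a quadratic inequality in $|\epsilon|^{1/2}\|R_\epsilon(z)u\|_\cM$; the only difference is cosmetic --- you solve the quadratic directly while the paper splits the cross term via $ab\leq \tfrac{\epsilon e}{2}a^2 + \tfrac{1}{2\epsilon e}b^2$ and absorbs. Your density argument for the extension to $\cB(\cM^*;\cM)$ is stated more explicitly than in the paper but is exactly what is needed.
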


\begin{proof}
Since, for $u\in\cH$, we have $R_\epsilon(z)u\in\cD$, we get from \eqref{LAP-apri1} and Lemma~\ref{LAP-Lemma4} the estimate
\begin{align*}
\epsilon\|R_\epsilon(z)u\|_\cM^2 &\leq |\im\la u,R_\epsilon(z)u\ra| + \epsilon C_M\|R_\epsilon(z)u\| \|f^\perp(T) R_\epsilon(z)u\|\\
&\leq |\la u,R_\epsilon(z)u\ra| + C\epsilon^\frac12 \|R_\epsilon(z)u\| \|u\|_{\cM^*}\\
& \leq  |\la u,R_\epsilon(z)u\ra| +  \frac{\epsilon e}{2} \|R_\epsilon(z)u\|^2 + \frac{C^2}{2e}  \|u\|^2_{\cM^*}.
\end{align*}
Since $e\|R_\epsilon(z)u\|^2 \leq \|R_\epsilon(z)u\|^2_\cM$,
this implies \ref{LAP5-1}.

To see \ref{LAP5-2}, observe that 
\[
|\la u, R_\epsilon(z) u\ra| \leq \|R_\epsilon(z)u\|_\cM\|u\|_{\cM^*}
\leq \frac{\epsilon}2 \|R_\epsilon(z)u\|_\cM^2 + \frac1{2\epsilon}\|u\|_{\cM^*}^2.
\]
From this estimate and \ref{LAP5-1},  the statement \ref{LAP5-2} follows.
\end{proof}

\begin{proposition}\label{Prop-LAP-ToRes} Let $z\in\CC$, with $\re(z)\in J'$ and $\im(z)\neq 0$.
We have the strong limit on $\cH$
\[
\slim_{\epsilon\to 0, \epsilon\im(z)>0} R_\epsilon(z) = (T-z)^{-1}.
\]
\end{proposition}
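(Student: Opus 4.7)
The plan is to prove strong convergence first on the dense subspace $\cD(T')$ via a second-resolvent-style identity, and then extend to all of $\cH$ using the uniform resolvent bound from Lemma~\ref{LAP-LemmaInv}. The identity I aim for is
\[
R_\epsilon(z) u - (T-z)^{-1} u = \ri\epsilon\, R_\epsilon(z)\, T' (T-z)^{-1} u,
\]
valid for $u \in \cD(T')$, so the difference vanishes in norm as $\epsilon \to 0$ by virtue of the explicit $O(\epsilon)$ factor.

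First I would verify that the identity is meaningful. By \MCond{1}, $T'$ is of class $C^1_\Mo(T)$, so the same $C^1_\Mo$-invariance of resolvents used throughout the excerpt (cf.~Proposition~\ref{Prop-DomInv}~\ref{Item-ResInv-MoPlus}) yields $(T-z)^{-1}\cD(T')\subseteq \cD(T')$. Consequently, for $u \in \cD(T')$ the vector $w := (T-z)^{-1} u$ lies in $\cD(T) \cap \cD(T') = \cD$, so the computation $(T_\epsilon - z)\, w = (T-z)\, w - \ri\epsilon\, T' w = u - \ri\epsilon\, T' w$ is legitimate. Applying $R_\epsilon(z)$ on the left and rearranging produces the displayed identity.

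Next I would combine this identity with the uniform bound $\|R_\epsilon(z)\| \leq C|e\epsilon + \im(z)|^{-1} \leq C/|\im(z)|$ from Lemma~\ref{LAP-LemmaInv}, where the second inequality uses the assumption $\epsilon\, \im(z) > 0$: $e\epsilon$ and $\im(z)$ share the same sign, so $|e\epsilon + \im(z)| \geq |\im(z)|$. This gives
\[
\bigl\|R_\epsilon(z) u - (T-z)^{-1} u\bigr\| \leq \frac{C|\epsilon|}{|\im(z)|}\, \bigl\|T'(T-z)^{-1} u\bigr\|,
\]
which tends to zero as $\epsilon\to 0$ for each fixed $u \in \cD(T')$.

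Finally, $\cD(T')$ is dense in $\cH$ because $T'$ is self-adjoint, and both the family $\{R_\epsilon(z)\}$ (uniformly in the allowed range of $\epsilon$) and the operator $(T-z)^{-1}$ are bounded on $\cH$ by $C/|\im(z)|$. A standard three-$\epsilon$ approximation argument then upgrades the pointwise convergence on $\cD(T')$ to strong convergence on all of $\cH$. The only delicate step in the whole argument is the domain invariance $(T-z)^{-1}\cD(T')\subseteq \cD(T')$, which is precisely what the $C^1_\Mo(T)$ regularity of $T'$ in \MCond{1} is designed to deliver.
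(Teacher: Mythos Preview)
Your proof is correct and follows essentially the same route as the paper: establish the second-resolvent identity $R_\epsilon(z)u - (T-z)^{-1}u = \ri\epsilon\, R_\epsilon(z)\, T'(T-z)^{-1}u$ on $\cD(T')$ using the domain invariance $(T-z)^{-1}\cD(T')\subseteq\cD(T')$ coming from \MCond{1} and Proposition~\ref{Prop-DomInv}, then invoke the uniform bound on $R_\epsilon(z)$ from Lemma~\ref{LAP-LemmaInv} and extend by density. The paper merely writes the right-hand side in the factored form $\ri\epsilon\, R_\epsilon(z)(T'M^{-1})(M(T-z)^{-1}M^{-1})Mu$ to make the boundedness of each piece explicit, but the argument is the same.
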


\begin{proof} Let $u\in\cD(M)=\cD(T')$ and compute the difference
\begin{align*}
(R_\epsilon(z)- (T-z)^{-1})u & = \ri\epsilon R_\epsilon(z)T' (T-z)^{-1}u\\
&= \ri\epsilon R_\epsilon(z) (T' M^{-1}) (M(T-z)^{-1}M^{-1}) Mu.
\end{align*}
It follows from Proposition~\ref{Prop-DomInv}
and Lemma~\ref{LAP-LemmaInv}, that the right-hand side  goes to zero, when $\epsilon\to 0$ while keeping $\epsilon\im(z)>0$.

 This implies the result, since $\cD(M)$ is dense in $\cH$ and $R_\epsilon(z)$ is uniformly bounded in 
 small $\epsilon$ with $\epsilon\im(z)>0$. See Lemma~\ref{LAP-LemmaInv} again.
\end{proof}

%********************************************************
\subsection{Limiting Absorption Principle}
%********************************************************

 Before we state and prove the main result of this section, 
the Limiting Absorption Principle, we need to impose conditions on how
a conjugate operator $A$ fits together with $T$ and $T'$.
 
  The operator $A$ should be maximally symmetric, i.e. have at least one deficiency index equal to $0$.
To conform with the example of Pauli-Fierz Hamiltonians, we assume $n_+ = \dim(\ker(A^*-\ri)) =0$,
such that $\set{z\in\CC}{\im(z)<0}\subset \rho(A)$.  We write $W_t$ for the semigroup \myquote{$\e^{\ri t A}$} of isometries generated
by $A$, and $W_t^*$ for the contraction semigroup generated by $-A^*$. A sub Hilbert space $\cG$ of $\cH$
is said to be $b$-preserved by a semigroup $W_t$ if $W_t\cG\subseteq\cG$ and $\sup_{0\leq t\leq 1} \|W_t u\|_\cG <\infty$ for all $u\in\cG$.
 
We add to \MCond{1} and \MCond{2} the following assumptions 
\begin{enumerate}
  \item[\MCond{3}] $(T'+\eta)^{1/2}$ is of class $C^1_\Mo(A)$ and $W_t^*$ $b$-preserves $\cM$. 
  \item[\MCond{4}] There exists $\sigma>0$ such that for $\zeta\in \CC$, with $\im(\zeta)\geq \sigma$, we have
  \[
  \forall u\in\cD:\quad \bigl\la u,\ri[(A + \zeta)^{-1},T]u\bigr\ra = \bigl\la u, (A+\zeta)^{-1} T' (A+\zeta)^{-1}u\bigr\ra.
  \]
\end{enumerate}
  
\begin{remarks}\label{Remark-M3-4} Let us make some observations pertaining to \MCond{3}
  and \MCond{4}:
\begin{enumerate}
\item\label{Item-Remark-M3} The first part of assumption \MCond{3}, together with Lemma~\ref{Lemma1-Prop-Mourre-C1}, ensures the existence of a $\sigma>0$ 
such that $(A+\zeta)^{-1}\colon \cM\to\cM$, provided $\im(\zeta)\geq \sigma$. 
The second part of assumption \MCond{3} ensures that $(A^*+\bar{\zeta})^{-1}\colon \cM\to\cM$, provided $\im(\zeta)\geq \sigma$ with a possibly larger $\sigma$. By duality
$(A+\zeta)^{-1}\colon\cM^*\to\cM^*$ for $\im(\zeta)\geq\sigma$.
Hence both sides of 
the expression in \MCond{4} makes sense for $\im(\zeta)$ larger than
this $\sigma$.
\item It is a consequence of \MCond{3} that $T'$ is of class
  $C^1(A)$ and $T'':= [T',A]^\circ \in \cB(\cM;\cM^*)$. See
  Lemma~\ref{Lemma-Tprime-Is-C1} below for a proof. The converse
  is probably false although we do not have a counter example.
\item\label{Item-StrongerM3} One can verify \MCond{3}, by checking the stronger condition
  that $T'$ is of class $C^1_\Mo(A)$ and 
  $[T',A]^\circ \in \cB(\cD((T'+\eta)^{\rho});\cH)$, for some $\rho\in [0,1)$. See
  Proposition~\ref{Prop-DomInv}~\ref{Item-Salpha-C1Mo}.
\item In the following, $\sigma$ refers to the constant from \MCond{4}.
The particular choice of the condition \MCond{4} is inspired by \cite{FaupinMoellerSkibsted2011a}.
\end{enumerate}
\end{remarks}

We are now in a position to formulate the basic form of the Limiting
Absorption Principle.

%\begin{remark} One can replace \MCond{3} by a weaker and perhaps more natural assumption. Denote by
%$W_t = \e^{\ri t A}$ the $C_0$-semigroup generated by $A$. 
%\begin{enumerate}
%  \item[\MCond{3'}] $T'$ is of class $C^1(A)$ and $W_t$ $b$-preserves $\cM$, that is:
%  $\forall t\geq 0$ $W_t \cM\subset \cM$ and $\forall u \in\cM$ 
%  we have $\sup_{0\leq t\leq 1} \|W_t u\|_\cM <\infty$.
%\end{enumerate}
%\remarkQED\end{remark}

\begin{theorem}[Limiting Absorption Principle]\label{Thm-LAP} Assume $T,T'$ and $A$ 
satisfies the assumptions \MCond{1}--\MCond{4}. Then there exists $C>0$, such that for any 
$u\in\cD(A)$, we have
\[
\sup_{\stackrel{z\in\CC}{\re(z)\in J',\im(z)\neq 0}} \bigl|\big\la u,(T-z)^{-1}
u\big\ra\bigr|  
\leq C\bigl(\|u\|_{\cM^*}^2 + \|Au\|_{\cM^*}^2\bigr).
\]
\end{theorem}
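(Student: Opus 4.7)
The plan is to prove the LAP by Mourre's differential inequality technique applied to the regularized resolvents $R_\epsilon(z) = (T_\epsilon - z)^{-1}$ constructed in Subsect.~\ref{Subsec-AprioriRes}. By symmetry in $\im z$, it suffices to handle $\im z > 0$ with the parameter $\epsilon \in (0,\epsilon_0]$. In view of Proposition~\ref{Prop-LAP-ToRes}, which gives strong convergence $R_\epsilon(z) \to (T-z)^{-1}$ as $\epsilon \to 0^+$, the theorem reduces to establishing a uniform bound
\begin{equation*}
  \sup_{0 < \epsilon \leq \epsilon_0, \, \re z \in J'} \bigl|\la u, R_\epsilon(z)u\ra\bigr| \leq C\bigl(\|u\|_{\cM^*}^2 + \|Au\|_{\cM^*}^2\bigr)
\end{equation*}
for $u \in \cD(A)$. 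Fixing such $z$ and writing $F(\epsilon) = \la u, R_\epsilon(z)u\ra$, Lemma~\ref{LAP-LemmaInv} already gives the starting estimate $|F(\epsilon_0)| \leq C\|u\|^2$, so everything hinges on controlling how $F$ can grow as $\epsilon$ decreases from $\epsilon_0$ to $0$.

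First I would differentiate in $\epsilon$: a direct calculation on $\cD$ shows $F'(\epsilon) = i\la u, R_\epsilon T' R_\epsilon u\ra$. The heart of the matter is to re-express the unbounded form $T'$ using a bounded commutator. For this, I would introduce the regularizer $B_\epsilon = (A + \zeta_\epsilon)^{-1}$ with $\zeta_\epsilon = \ri s/\epsilon$ for $s > \sigma$ (so \MCond{3}, \MCond{4} apply). Using \MCond{4} in the form $\ri T' = B_\epsilon^{-1}[T, B_\epsilon] B_\epsilon^{-1}$, splitting $[T, B_\epsilon] = [T_\epsilon, B_\epsilon] + \ri\epsilon[T', B_\epsilon]$, and invoking the identity $R_\epsilon[T_\epsilon, B_\epsilon]R_\epsilon = [B_\epsilon, R_\epsilon]$ (a direct consequence of the resolvent formula), one rewrites $F'(\epsilon)$ as a sum of terms where every factor of $B_\epsilon^{-1}$ acts directly on the outer $u$—producing $Au$ modulo bounded errors of order $\|u\|$—while all factors of $B_\epsilon$ appear paired with resolvents $R_\epsilon$ whose $\cB(\cM^*;\cM)$-norm is $O(\epsilon^{-1})$ by Lemma~\ref{LAP-Lemma5}, and $B_\epsilon$ itself contributes $O(\epsilon)$. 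The commutator $[T', B_\epsilon]$, which by \MCond{3} (cf.~Remark~\ref{Remark-M3-4}(2)) equals a bounded operator in $\cB(\cM;\cM^*)$ sandwiched by factors of $B_\epsilon$, is therefore of strictly subleading order in $\epsilon$. Careful bookkeeping should yield a differential inequality of the Gronwall-type form
\begin{equation*}
  |F'(\epsilon)| \;\leq\; C\,\epsilon^{-1/2}\,|F(\epsilon)|^{1/2} \bigl(\|u\|_{\cM^*} + \|Au\|_{\cM^*}\bigr) + C'\bigl(\|u\|_{\cM^*}^2 + \|Au\|_{\cM^*}^2\bigr),
\end{equation*}
which integrates from $\epsilon$ to $\epsilon_0$ to give the required uniform bound on $|F(\epsilon)|$.

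The hard part will be precisely this bookkeeping: tracking the powers of $\epsilon$ generated by the multiple factors $B_\epsilon$, $B_\epsilon^{-1}$ and $R_\epsilon$ in the expanded expression for $F'(\epsilon)$, and verifying that every apparently singular contribution either cancels or is absorbed into the leading $\epsilon^{-1/2}$-term. The role of the four Mourre hypotheses becomes transparent here: \MCond{2} supplies the strict positivity that makes $R_\epsilon$ well-defined and produces the $\cM$-smoothing bound of Lemma~\ref{LAP-Lemma5}; \MCond{1} ensures that $[T,T']^\circ$ is controlled in $\cB(\cM;\cH)$, which was already used implicitly in Subsect.~\ref{Subsec-AprioriRes}; \MCond{3} provides $T'' \in \cB(\cM;\cM^*)$ and the $b$-preservation of $\cM$ by $W_t^*$, both needed to legitimize the manipulations of $B_\epsilon$ within the scale $\cM \hookrightarrow \cH \hookrightarrow \cM^*$; and \MCond{4} supplies the precise weak commutator identity that allows $T'$ to be traded for a bounded double-commutator $B_\epsilon^{-1}[T,B_\epsilon]B_\epsilon^{-1}$. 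No step is unusual in Mourre theory, but because $A$ is only maximally symmetric and $T'$ is only semi-bounded (not positive), one must take some care in justifying the functional calculi for $A$ and in verifying that the factors $B_\epsilon u$, $B_\epsilon^* u$ involved remain in $\cD(A)$, so that the derivation above makes rigorous sense.
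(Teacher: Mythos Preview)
Your overall strategy—differentiate $F(\epsilon)=\la u,R_\epsilon(z)u\ra$ in $\epsilon$, derive a Gronwall-type inequality, integrate from $\epsilon_0$ down to $0^+$, then invoke Proposition~\ref{Prop-LAP-ToRes}—is exactly the paper's, and the shape of the differential inequality you aim for is the right one.

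The difference is in how the derivative is rewritten. The paper does \emph{not} couple the regularizer to $\epsilon$: it takes a separate parameter $n$, sets $I_n(A)=\ri n(A+\ri n)^{-1}$ and $A_n=AI_n(A)$, uses \MCond{4} with $\zeta=\ri n$ to obtain a bounded-operator identity on $\cD$, substitutes $R_\epsilon(z)^*u$, $R_\epsilon(z)u$, and only then lets $n\to\infty$. This produces the clean formula (Lemma~\ref{LAP-derivative})
\[
\la u, R_\epsilon T' R_\epsilon u\ra = -\la u,\ri[R_\epsilon,A]u\ra + \ri\epsilon\la u,R_\epsilon T'' R_\epsilon u\ra,
\]
in which the unbounded $A$ acts only on the fixed outer vector $u\in\cD(A)$. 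From here Lemma~\ref{LAP-Lemma5}~\ref{LAP5-1}, applied to both $R_\epsilon(z)$ and $R_{-\epsilon}(\bz)$ via the symmetry $\overline{F_z}(\epsilon)=F_{\bz}(-\epsilon)$, gives the differential inequality directly—no further bookkeeping.

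Your route has a concrete gap. Writing ``$\ri T' = B_\epsilon^{-1}[T,B_\epsilon]B_\epsilon^{-1}$'' is only formal: \MCond{4} is a form identity on $\cD$, and to give $B_\epsilon^{-1}R_\epsilon u$ meaning you would need $R_\epsilon u\in\cD(A)$, which is not available. Moreover, your claim that $B_\epsilon^{-1}u$ equals $Au$ modulo bounded errors of order $\|u\|$ is false: $B_\epsilon^{-1}u = Au + (\ri s/\epsilon)u$, so the remainder is $O(\epsilon^{-1}\|u\|)$, and this extra $\epsilon^{-1}$ would have to be traced through every term of the expansion and shown to cancel against a $B_\epsilon=O(\epsilon)$ factor. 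These issues are reparable, but the natural repair—decouple the regularization parameter from $\epsilon$ and pass to the limit before estimating—is precisely Lemma~\ref{LAP-derivative}.
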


%\begin{remark}
%We remark that, just as in \cite{GeorgescuGerardMoeller2004b},
%it suffices to assume that $A$ generates a contraction semigroup.
%\remarkQED\end{remark}

\begin{lemma}\label{Lemma-Tprime-Is-C1} Suppose \MCond{3}. Then $T'$ is of class $C^1(A)$ and $[T',A]^\circ$ extends
from $\cD(T')$ to a bounded form on $\cM$, which we identify with a bounded operator
$T'' = [T',A]^\circ\colon\cM\to\cM^*$.
\end{lemma}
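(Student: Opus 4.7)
The plan is to unpack \MCond{3} into its two consequences, then run a Leibniz argument to transfer the $C^1_\Mo(A)$ property of $S:=(T'+\eta)^{1/2}$ to a $C^1(A)$ property of $T'=S^2-\eta$, and finally to read off the claimed mapping property of $[T',A]^\circ$ from the factorization $S:\cM\to\cH$ together with its dual $S:\cH\to\cM^*$.

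First I would collect what $C^1_\Mo(A)$ gives us. By Proposition~\ref{Prop-MourreEquiv} applied to $S$, the hypothesis $S\in C^1_\Mo(A)$ means (a) $\cD(S)=\cM$ is invariant under $(A+\zeta)^{-1}$ for $\im\zeta\geq \sigma$ with $\sigma$ large enough, and (b) the commutator form $[S,A]$, a priori defined on $\cD(A)\cap\cD(S)$, extends by continuity to a bounded operator $[S,A]^\circ\in\cB(\cH)$. Moreover, by definition of $\cM$, the operator $S$ is bounded as a map $\cM\to\cH$ (indeed $\|Su\|^2=\la u,(T'+\eta)u\ra\leq \|u\|_\cM^2$), and hence by duality extends to a bounded map $S:\cH\to\cM^*$.

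Next I would verify $T'\in C^1(A)$ via the resolvent characterization. For $\zeta$ with $\im\zeta\geq\sigma$, both $(A+\zeta)^{-1}$ and $(A^*+\bar\zeta)^{-1}$ preserve $\cM$ (using the $b$-preservation of $\cM$ by $W_t^*$, cf.~Remark~\ref{Remark-M3-4}~\ref{Item-Remark-M3}), so the form $[(A+\zeta)^{-1},T']=[(A+\zeta)^{-1},S^2]-0$ is well defined on $\cM$. On this domain I would apply the Leibniz identity
\begin{equation*}
[(A+\zeta)^{-1},S^2]=[(A+\zeta)^{-1},S]\,S+S\,[(A+\zeta)^{-1},S],
\end{equation*}
and rewrite $[(A+\zeta)^{-1},S]=-(A+\zeta)^{-1}[S,A]^\circ(A+\zeta)^{-1}\in\cB(\cH)$ using (b). Combined with $S\in\cB(\cM;\cH)$ and $S\in\cB(\cH;\cM^*)$, both summands extend by continuity to bounded operators on $\cH$, which by the standard resolvent criterion (cf.~the characterizations in Appendix~A that underlie Proposition~\ref{Prop-MourreEquiv}) is equivalent to $T'\in C^1(A)$.

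Finally, to obtain the stated mapping property $T''=[T',A]^\circ\colon\cM\to\cM^*$, I would compute the extended commutator on the core $\cD(A)\cap\cD(T')$ via the same Leibniz identity, $[S^2,A]^\circ=S[S,A]^\circ+[S,A]^\circ S$, and interpret each summand: $[S,A]^\circ S$ maps $\cM\to\cH\to\cH\hookrightarrow\cM^*$ boundedly, while $S[S,A]^\circ$ factors as $\cM\hookrightarrow\cH\to\cH\to\cM^*$, again bounded. The main obstacle is making the Leibniz identity rigorous when $A$ is only maximally symmetric and $S$ is unbounded; I would handle this by first proving the commutator identity in the resolvent-regularized form above, then passing to the limit $\im\zeta\to+\infty$ after multiplying both sides by $(A+\zeta)$ on the appropriate side, or equivalently by working throughout with the semigroup $W_t$ and differentiating at $t=0$ on the invariant subspace $\cM$.
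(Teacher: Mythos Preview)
Your overall strategy—a Leibniz argument for $S^2$ with $S=(T'+\eta)^{1/2}$—is exactly the paper's. But there is a concrete error in how you read \MCond{3}: the Mourre class gives only $[S,A]^\circ\in\cB(\cD(S);\cH)=\cB(\cM;\cH)$, \emph{not} $\cB(\cH)$; see Definition~\ref{Def-Mo-C1A} and Proposition~\ref{Prop-MourreEquiv}~\ref{Item-MourreC1}~\ref{Item-MourreCommBound}. This breaks the step where you claim $[(A+\zeta)^{-1},S]=-(A+\zeta)^{-1}[S,A]^\circ(A+\zeta)^{-1}\in\cB(\cH)$: resolvents of $A$ map $\cH\to\cH$ and $\cM\to\cM$, but they do not in general map $\cH$ into $\cM$, which is what is needed to feed $[S,A]^\circ$. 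Your Leibniz expansion of $[(A+\zeta)^{-1},S^2]$ therefore lands only in $\cB(\cM;\cM^*)$, and no ``standard resolvent criterion'' converts that into $T'\in C^1(A)$.

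The fix is to commute on the other side, as the paper does: factor $(T'+\eta+n^2)^{-1}=(S-\ri n)^{-1}(S+\ri n)^{-1}$, note that each factor preserves $\cD(A)$ since $S\in C^1(A)$, and expand $[(T'+\eta+n^2)^{-1},A]$ by Leibniz. Each resulting term contains an \emph{outer} resolvent $(S\pm\ri n)^{-1}$ that supplies the missing smoothing $\cH\to\cD(S)=\cM$ before $[S,A]^\circ$ is applied, so one gets a genuine element of $\cB(\cH)$ and hence $T'\in C^1(A)$ directly from Definition~\ref{Def-SA-C1A}. For the mapping property your identity $[T',A]^\circ=S[S,A]^\circ+[S,A]^\circ S$ is correct once made rigorous (the paper regularizes via $I_{n^2}(T')$ and passes to the limit), and the factorizations $S\colon\cM\to\cH$, $S\colon\cH\to\cM^*$ together with $[S,A]^\circ\in\cB(\cM;\cH)\cap\cB(\cH;\cM^*)$ (the latter by duality, using that $\ri[S,A]^\circ$ is symmetric) do give $T''\in\cB(\cM;\cM^*)$. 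So only the middle step needs to be rerouted through resolvents of $T'$ rather than of $A$.
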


\begin{proof} Recall that $\eta\in\RR$ was chosen such that $T'+\eta\geq \one$. For $n\geq 0$, we have 
\[
\bigl(T'+\eta + n^2\bigr)^{-1} = \bigl((T'+\eta)^{\frac12} - \ri n\bigr)^{-1}\bigl((T'+\eta)^{\frac12} + \ri n\bigr)^{-1},
\] 
which preserves $\cD(A)$ by assumption.

We can thus compute for $\lambda>0$, as forms on $\cD(A)$, 
\begin{align*}
\bigl[\bigl(T'+\eta+ n^2\bigr)^{-1},A\bigr] & =  \bigl[\bigl((T'+\eta)^\frac12 - \ri n\bigr)^{-1},A\bigr]
\bigl((T'+\eta)^{\frac12} + \ri n\bigr)^{-1}\\
 &\quad  +\bigl((T'+\eta)^{\frac12} - \ri n\bigr)^{-1} \bigl[\bigl((T'+\eta)^\frac12+\ri n\bigr)^{-1},A\bigr]  \\
& =-\bigl((T'+\eta)^{\frac12} - \ri n\bigr)^{-1}\bigl[(T'+\eta)^{\frac12},A]^\circ\bigl(T'+\eta+ n^2\bigr)^{-1}\\
& \quad -\bigl(T'+\eta+n^2\bigr)^{-1}\bigl[(T'+\eta)^{\frac12},A\bigr]^\circ\bigl((T'+\eta)^{\frac12} + \ri n\bigr)^{-1},
\end{align*}
which extends to a bounded operator by assumption. Hence, $T'$ is of class $C^1(A)$.

Abbreviate as in Lemma~\ref{Lemma-In}, $I_{n^2}(T') = n^2(T'+\eta+ n^2)^{-1}$
and 
\[
T'_{n^2} = (T'+\eta)I_n(T') = n^2\one_\cH - n^4\bigl(T'+\eta+n^2\bigr)^{-1}.
\] 
On $\cD(A)\cap\cD(T')$ we compute
\begin{align*}
 & \bigl[T'_{n^2},A\bigr]\\
 &\quad = n^2\bigl((T'+\eta)^{\frac12} - \ri n\bigr)^{-1}\bigl[(T'+\eta)^{\frac12},A\bigr]^\circ I_{n^2}(T')\\
&\qquad +  I_{n^2}(T')\bigl[(T'+\eta)^{\frac12},A\bigr]^\circ n^2\bigl((T'+\eta)^{\frac12} + \ri n\bigr)^{-1}\\
&\quad = I_{n^2}(T')\Bigl\{\bigl((T'+\eta)^\frac12 +\ri n\bigr)\bigl[(T'+\eta)^{\frac12},A\bigr]^\circ\\
&\qquad +\bigl[(T'+\eta)^{\frac12},A\bigr]^\circ\bigl((T'+\eta)^\frac12 -\ri n\bigr)
 \Bigr\}I_{n^2}(T')\\
 & \quad = I_{n^2}(T')\Bigl\{ (T'+\eta)^\frac12\bigl[(T'+\eta)^{\frac12},A\bigr]^\circ
+\bigl[(T'+\eta)^{\frac12},A\bigr]^\circ(T'+\eta)^\frac12
 \Bigr\}I_{n^2}(T').
\end{align*}
Taking the limit $n\to\infty$, cf. \eqref{Reg-ToId}, results in the identity
\[
\bigl[T',A\bigr] = (T'+\eta)^\frac12\bigl[(T'+\eta)^{\frac12},A\bigr]^\circ
+\bigl[(T'+\eta)^\frac12,A\bigr]^\circ(T'+\eta)^\frac12
\]
in the sense of forms on $\cD(A)\cap\cD(T')$. The result now follows since the right-hand side extends to a bounded form on $\cM$.
\end{proof}

As in the previous subsection, we denote by $\epsilon_0>0$ 
the constant coming from an application of Lemma~\ref{LAP-LemmaInv}.

\begin{lemma}\label{LAP-derivative} For all $\epsilon\in\RR$ and $z\in \CC$, with $0<|\epsilon|\leq \epsilon_0$,
$\re(z)\in J'$ and $\epsilon\im(z) >0$, we have 
\[
\forall u\in\cD(A): \quad \la u, R_\epsilon(z)T' R_\epsilon(z)u\ra =
-\la u, \ri[R_\epsilon(z),A]u\ra + \ri\epsilon\la u,R_\epsilon(z),T'' R_\epsilon(z)u\ra.
\]
\end{lemma}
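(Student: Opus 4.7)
The lemma is the rigorous version of the formal commutator calculation
\[
\ri[R_\epsilon(z),A] \;=\; -R_\epsilon(z)\,\ri[T_\epsilon,A]\,R_\epsilon(z) \;=\; -R_\epsilon(z)\bigl(T' + \epsilon T''\bigr)R_\epsilon(z),
\]
with signs fixed by the convention implicit in \MCond{4} that $\ri[T,A] = T'$ and the identification $T'' = [T',A]^\circ$ from Lemma~\ref{Lemma-Tprime-Is-C1}. The key obstacle is that $R_\epsilon(z)$ is not known to preserve $\cD(A)$, so $[R_\epsilon,A]$ exists only as a quadratic form on $\cD(A)$, interpreted as $\la u, R_\epsilon Au\ra - \la Au, R_\epsilon u\ra$ (using that $A$ is symmetric, so $\cD(A)\subset\cD(A^*)$ and $Au\in\cH$); moreover \MCond{4} supplies the commutator identity only for $(A+\zeta)^{-1}$, not for $A$ itself, so we must pass to a limit in $\zeta$.

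The plan proceeds in three steps. First, I would upgrade \MCond{4} to an $\epsilon$-level identity by combining it with the $C^1(A)$ regularity of $T'$ from Lemma~\ref{Lemma-Tprime-Is-C1}: for $v \in \cD$ and $\im\zeta$ sufficiently large,
\[
\la v, \ri[(A+\zeta)^{-1}, T_\epsilon]\,v\ra \;=\; \la v, (A+\zeta)^{-1}(T'+\epsilon T'')(A+\zeta)^{-1}\,v\ra.
\]
Second, I would exploit the purely algebraic identity
\[
R_\epsilon(z)\,[X,T_\epsilon]\,R_\epsilon(z) \;=\; R_\epsilon X - X R_\epsilon \;=\; -[X,R_\epsilon(z)],
\]
which holds for any bounded $X$ and follows at once from $T_\epsilon R_\epsilon = \one + z R_\epsilon = R_\epsilon T_\epsilon$ on the appropriate domains. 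Applied with $X = (A+\zeta)^{-1}$, and after sandwiching the identity from the first step by $R_\epsilon(z)$ on both sides (legitimate since $R_\epsilon \colon \cH \to \cD$ and $(A+\zeta)^{-1}$ preserves $\cD$ by \MCond{3} together with Remark~\ref{Remark-M3-4}~\ref{Item-Remark-M3}), this yields the bounded-operator identity
\[
\ri[R_\epsilon(z),(A+\zeta)^{-1}] \;=\; -R_\epsilon(z)(A+\zeta)^{-1}(T'+\epsilon T'')(A+\zeta)^{-1}R_\epsilon(z).
\]
Third, I would convert back to $[R_\epsilon,A]$ using the form identity
\[
\la u, [R_\epsilon,A]u\ra \;=\; -\la (A+\bar\zeta)u,\, [R_\epsilon,(A+\zeta)^{-1}]\,(A+\zeta)u\ra,
\]
which is valid for $u \in \cD(A)$ since then $(A\pm\zeta)u \in \cH$, set $\zeta = \ri\lambda$, and let $\lambda\to\infty$, using $\ri\lambda(A+\ri\lambda)^{-1}\to\one$ strongly on $\cH$.

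The main technical challenge will be controlling this last limit. Expanding the quadratic form in the third step produces four terms with prefactors $\zeta^0$, $\zeta$, $\bar\zeta$, $|\zeta|^2$: only the $|\zeta|^2 = \lambda^2$ piece survives and, through the strong convergence $\lambda(A+\ri\lambda)^{-1}\to -\ri\,\one$, reproduces $\la u, R_\epsilon(T'+\epsilon T'')R_\epsilon u\ra$; meanwhile the two cross terms involving one factor of $Au$ and one of $\zeta u$ decay like $\lambda^{-1}$, and the pure $\la Au, \cdots Au\ra$ term like $\lambda^{-2}$, because $(A+\ri\lambda)^{-1} = O(\lambda^{-1})$ in norm. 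Justifying the limit with the unbounded $T'$ in the middle requires that $(A+\zeta)^{-1}$ preserve $\cM = \cD((T'+\eta)^{1/2})$ (which follows from \MCond{3} and Remark~\ref{Remark-M3-4}~\ref{Item-Remark-M3}) and that $R_\epsilon$ extend boundedly from $\cM^*$ to $\cM$ (Lemma~\ref{LAP-Lemma5}~\ref{LAP5-2}), so that the sandwiched object $R_\epsilon(A+\ri\lambda)^{-1}(T'+\epsilon T'')(A+\ri\lambda)^{-1}R_\epsilon$ is uniformly bounded in $\lambda$ and dominated convergence delivers the claimed identity.
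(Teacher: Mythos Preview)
Your approach is essentially the same as the paper's: both regularize $A$ by $(A+\ri n)^{-1}$, combine \MCond{4} with the $C^1(A)$-regularity of $T'$ from Lemma~\ref{Lemma-Tprime-Is-C1} to obtain the regularized commutator identity, sandwich with $R_\epsilon$, and pass to the limit via \MCond{3}; the paper packages the regularization as $A_n = AI_n(A)$ and computes $\la u,\ri[R_\epsilon,A_n]u\ra$ directly, while you keep $(A+\zeta)^{-1}$ and $A$ separate via the form identity $\la u,[R_\epsilon,A]u\ra = -\la(A+\bar\zeta)u,[R_\epsilon,(A+\zeta)^{-1}](A+\zeta)u\ra$, but these are reorganizations of the same computation. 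One small caveat: your justification that ``$(A+\zeta)^{-1}$ preserves $\cD$'' is not supported by the references you cite (Remark~\ref{Remark-M3-4}~\ref{Item-Remark-M3} only yields preservation of $\cM$, not of $\cD(T)$ or $\cD(T')$), but this is harmless since the identity $R_\epsilon[X,T_\epsilon]R_\epsilon = -[X,R_\epsilon]$ only needs the form interpretation of $[X,T_\epsilon]$ on $\cD$, which the paper also uses.
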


\begin{proof}
For $n>\sigma$ we abbreviate
$I_n(A) = \ri n(A+\ri n)^{-1}$ and $A_n = A I_n(A)$ as in Lemma~\ref{Lemma-In}.
Observe the identities
\begin{equation}\label{Eq-For-A-l}
A_n = \ri n \one + n^2(A+\ri n)^{-1} \quad \textup{and} \quad I_n(A)^* = I_{-n}(A^*). 
\end{equation}

For $u,v\in\cD$ and $n>\sigma$ we compute using \eqref{Eq-For-A-l}, \MCond{4} and Remark~\ref{Remark-M3-4}~\ref{Item-Remark-M3}
\begin{align*}
 \la I_{-n}(A^*)u,T' I_{n}(A)v\ra
&= -\la u,n^2 (A+\ri n)^{-1} T' (A+\ri n)^{-1}v\ra \\
& = \la u, \ri[T,n^2(A+\ri n)^{-1}] v\ra \\
 & =  \la u, \ri[T_\epsilon,n^2(A+\ri n)^{-1}] v\ra  - \epsilon\la u, [T',n^2(A+\ri n)^{-1}]v\ra\\
 &= \la u, \ri[T_\epsilon,A_n] v\ra  + \epsilon\la u,n(A+\ri n)^{-1}[T',A] n(A+\ri n)^{-1}v\ra\\
 &=  \la u, \ri[T_\epsilon,A_n] v\ra +\ri \la u, I_n(A) T'' I_n(A) u\ra.
\end{align*}
Replacing $u$ by  $R_\epsilon(z)^*u$ and $v$ by $R_\epsilon(z)u$ 
we find the identity
\begin{align*}
& \la u, R_\epsilon(z) I_{n}(A) T' I_{n}(A) R_\epsilon(z)u\ra\\
& \quad  =  -\la u, \ri[R_\epsilon(z),A_n ] u\ra 
  +\ri \epsilon\la u, R_\epsilon(z) I_n(A)T'' I_n(A) R_\epsilon(z) u\ra.
\end{align*}
Taking the limit $n\to\infty$, using \eqref{Reg-ToA} and \eqref{Reg-ToS}, we arrive at the desired
identity. Here we used \MCond{3}, cf. Lemma~\ref{Lemma-Tprime-Is-C1}, to ensure applicability of \eqref{Reg-ToS}.
Note that  we have $\|(A+\ri n)^{-1}u\|_{\cM^*}\leq C/n$ for $n >\sigma$,
see the proof of Lemma~\ref{Lemma1-Prop-Mourre-C1}, which shows that
$R_\epsilon(z) I_n(A) (T'+\eta)^{1/2}$ is uniformly bounded in large $n$. Hence
\[
\slim_{n\to\infty} R_\epsilon(z) I_n(A) (T'+\eta)^{1/2} =  R_\epsilon(z) (T'+\eta)^{1/2}
\]
 and we are done.
\end{proof}

We are now ready to give the

\begin{proof}[Proof of Theorem~\ref{Thm-LAP}] Fix a $z\in\CC$, with $\re(z)\in J'$ and $\im(z)\neq 0$.
 For $u\in\cD(A)$ and $\epsilon\in\RR$, with $\epsilon\im(z)>0$,
 we define
 \[
  F_z(\epsilon) = \la u,R_\epsilon(z) u\ra.
 \]
 Using Lemma~\ref{LAP-derivative} we can compute the $\epsilon$-derivative of $F_z$:
 \[
  \frac{\D F_z}{\D\epsilon}(\epsilon) = \ri \la u, R_\epsilon(z)  T'R_\epsilon(z)u\ra 
  = \la u,[R_\epsilon(z),A] u\ra -\epsilon \la u, R_\epsilon(z) T'' R_\epsilon(z)u\ra.
 \] 
Using that $T''\in \cB(\cM;\cM^*)$, cf.~Lemma~\ref{Lemma-Tprime-Is-C1}, we arrive at the bound
\begin{align*}
 \Big|\frac{\D F_z}{\D\epsilon}(\epsilon)\Big| & \leq \|R_{-\epsilon}(\bz)u\|_\cM \|Au\|_{\cM^*} + \|R_\epsilon(z)u\|_\cM \|Au\|_{\cM^*}\\
 &\quad + \epsilon C \|R_{-\epsilon}(\bz)u\|_\cM\|R_\epsilon(z)u\|_\cM.
\end{align*}
Here we used that $A^*u = Au$.

Before continuing we observe that $\overline{F}_z(\epsilon) = \la R_\epsilon(z)u,u\ra = \la u, R_{-\epsilon}(\bz)u\ra = F_{\bz}(-\epsilon)$.
Hence, their norms are the same.
Appealing to Lemma~\ref{LAP-Lemma5}~\ref{LAP5-1}, we thus get the differential inequality
\[
 \Big|\frac{\D F_z}{\D\epsilon}(\epsilon)\Big| \leq 2\epsilon^{-\frac12}(
 2|F_z(\epsilon)|^\frac12 + C\|u\|_{\cM^*})\|Au\|_{\cM^*} + C |F_z(\epsilon)| + C\|u\|_{\cM^*}^2.
\]
 A few applications of the inequality $ab\leq (a^2 + b^2)/2$ yields
\[
 \Big|\frac{\D F_z}{\D\epsilon}(\epsilon)\Big| \leq \epsilon^{-\frac12}C_1
 |F_z(\epsilon)| + \epsilon^{-\frac12}C_2 (\|u\|_{\cM^*}^2+\|Au\|_{\cM^*}^2),
\]
for some positive constants $C_1$ and $C_2$.
From Gronwall's inequality, and a subsequent application of 
Lemma~\ref{LAP-Lemma5}~\ref{LAP5-2}, we arrive at the bound 
\begin{align*}
 |F_z(\epsilon)| &\leq C_3( |F_z(\epsilon_0)|+\|u\|_{\cM^*}^2+\|Au\|_{\cM^*}^2)\\
 &\leq C_4(\|u\|_{\cM^*}^2+\|Au\|_{\cM^*}^2),
\end{align*}
where the constant $C_4$ does not depend on $z$ and $u\in\cD(A)$. 

Proposition~\ref{Prop-LAP-ToRes} now implies the theorem.
\end{proof}

As a consequence of Theorem~\ref{Thm-LAP} and \cite[Thm.~XIII.19]{ReedSimonIII1979}, 
keeping in mind that the compact subinterval $J'\subset J$ was arbitrary, we finally arrive at:

\begin{corollary} Suppose the triple of operators $T,T'$ and $A$ satisfies
\MCond{1}--\MCond{4}. Then $\sigma_{\mathrm{sc}}(T)\cap J = \emptyset$.
\end{corollary}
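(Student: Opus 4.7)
The plan is to combine Theorem~\ref{Thm-LAP} with the classical criterion relating uniform boundedness of the Borel transform of a spectral measure to absolute continuity of that measure, and then pass from the dense subspace $\cD(A)$ to all of $\cH$ by closedness of the absolutely continuous subspace $\cH_{\mathrm{ac}}(T)$.

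First I would fix an arbitrary compact subinterval $J' \subset J$ and observe that Theorem~\ref{Thm-LAP} yields, for every $u \in \cD(A)$, uniform boundedness on $\{z : \re(z) \in J',\ \im(z) \neq 0\}$ of the Borel transform
\[
F_u(z) = \bigl\la u, (T-z)^{-1} u \bigr\ra
\]
of the spectral measure $\mu_u(\cdot) = \la u, E_T(\cdot) u\ra$. Next I would invoke the standard fact that uniform boundedness of $F_u$ on such a strip forces $\mu_u$ restricted to the interior of $J'$ to be absolutely continuous with bounded density; this is a direct consequence of the Poisson-type representation $\pi^{-1} \im F_u(\lambda + \ri \epsilon) \to \D\mu_u^{\mathrm{ac}}/\D\lambda$ at Lebesgue a.e.\ $\lambda$, together with divergence at $\mu_u^{\mathrm{sing}}$-a.e.\ $\lambda$ in the unbounded case. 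In particular, $\mu_u^{\mathrm{sing}}(J') = 0$, so $E_T(J') u \in \cH_{\mathrm{ac}}(T)$ for every $u \in \cD(A)$.

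Finally I would exploit density of $\cD(A)$ in $\cH$ (guaranteed by $A$ being densely defined, which is implicit in the maximal-symmetry hypothesis with $n_+ = 0$), together with boundedness of $E_T(J')$ and closedness of $\cH_{\mathrm{ac}}(T)$: approximating an arbitrary $v \in \cH$ by a sequence $u_n \in \cD(A)$ and passing to the norm limit yields $E_T(J') v \in \cH_{\mathrm{ac}}(T)$. Hence $E_T(J')\cH \subseteq \cH_{\mathrm{ac}}(T)$, so $E_T(J')\, \cH_{\mathrm{sc}}(T) = \{0\}$ and consequently $\sigma_{\mathrm{sc}}(T) \cap J' = \emptyset$; exhausting $J$ by such compact subintervals then gives the claim. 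No genuine obstacle is anticipated here: the substantive content is fully encapsulated in Theorem~\ref{Thm-LAP}, while the passage from the LAP to absence of singular continuous spectrum is the textbook argument indicated in the text by the reference \cite[Thm.~XIII.19]{ReedSimonIII1979}.
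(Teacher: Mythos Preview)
Your proposal is correct and follows exactly the route the paper indicates: invoke Theorem~\ref{Thm-LAP} on an arbitrary compact $J'\subset J$, apply the standard criterion \cite[Thm.~XIII.19]{ReedSimonIII1979} to conclude absolute continuity of the spectral measures $\mu_u$ on $J'$ for $u\in\cD(A)$, extend by density, and exhaust $J$. The paper states the corollary without proof beyond citing those two ingredients, and you have simply unpacked them.
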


\begin{proof}[Proof of Theorems~\ref{Thm-SCH} and~\ref{Thm-SCL}:]
Let $\udelta'\in\Delta_0$ be such that Theorems~\ref{Thm-GGM}, \ref{Thm-ME-L-LargeE} and~\ref{Thm-ME-LT} apply. Fix $\udelta\in \Delta(\udelta')$.
Note that $A_\udelta$ is maximally symmetric with deficiency index $n_+=0$. The operator $\tA_{\udelta}$ is self-adjoint, hence -- in particular -- maximally symmetric with $n_+=0$.

Under the conditions considered, we have already established that the point spectrum
of $H$ and $\tL_\beta$ are finite. Hence, it suffices to show that
$\sigma_{\sico}(H)\backslash\sigma_\pp(H) =\emptyset$ and $\sigma_{\sico}(\tL_\beta)\backslash\sigma_\pp(\tL_\beta) = \emptyset$.
Let $J$ be an open bounded interval with $\bJ\cap \sigma_\pp(H) = \emptyset$, for the Hamiltonian,
and $\bJ\cap \sigma_\pp(\tL_\beta) = \emptyset$, for the Liouvillean.

Theorems~\ref{Thm-SCH} and~\ref{Thm-SCL} follow from Theorem~\ref{Thm-LAP} 
and \cite[Thm.~XIII.19]{ReedSimonIII1979}, 
once we have observed that \HGCond{2} implies that the triple
$H,H'_\udelta$ and $A_\udelta$ satisfies \MCond{1}--\MCond{4}; 
and similarly that \LGCond{2} implies \MCond{1}--\MCond{4} 
for the triple $\tL_\beta,\tL_{\beta',\udelta}$ and $\tA_\udelta$.
Recall that $\tL_\beta$ and $L_\beta$ have the same singular continuous spectra.

We begin by verifying \MCond{1}. 
From Lemma~\ref{Lemma-H-is-C1N}, we know that $H$ is of class $C^1_\Mo(N)$.
Hence $U_t = \e^{\ri t H}$ $b$-preserves $\cD(N) = \cD(H')$. Indeed,
an easy commutation argument yields $\| (N+1)(H+z)^{-1}(N+1)^{-1}\|\leq C/|\im(z)|$. 
From this and an approximation argument using $U_t = \slim_{n\to\infty} (1- (\ri t H)/n)^{-n}$, we conclude
the claim. Since $\cD(H)=\cD(H_0)$ and $\cD(H'_\udelta)=\cD(N)$ we see that $\cC$ is dense in
$\cD(H'_\udelta)\cap\cD(H)$ with respect to the intersection topology. On $\cC$ 
the commutator is
\begin{align*}
 \bigl[H_\udelta',H\bigr] & = -\bigl[\phi(\ri a_\udelta\coup),H\bigr] + \bigl[\D\Gamma(m_\udelta(|k|),\phi(\coup)\bigr]\\
&  = -\bigl[\phi(\ri a_\udelta\coup),K\otimes\one_\cF\bigr] + \ri \phi(|k| a_\udelta\coup) 
 + \re\bigl\la a_\udelta \coup,\coup\bigr\ra +\ri \phi(\ri m_\udelta(|k|)\coup).
\end{align*}
The computation extends by continuity to $\cD(H_\udelta')\cap\cD(H)$ and
the right-hand side is $\sqrt{N}$-bounded. This is where we need the condition \SC.
We conclude from Proposition~\ref{Prop-MourreEquiv} that $H_\udelta'$ is of class $C^1(H)$,
and, together with the computation above, also \MCond{1}.

Corollary~\ref{Cor-BasicLReg} established that $\tN$ is of class $C^1(\tL_\beta)$.
Hence the group $U_t$ generated by $\tL_\beta$ $b$-preserves $\cD(\tN)=\cD(\tL_{\beta,\udelta}')$.
As a form identity on $\tcC^\romL$ we have
\begin{align*}
& \bigl[\tL_{\beta,\udelta}',\tL_\beta\bigr] = -\bigl[\phi(\ri \ta_\udelta \tcoup_\beta),\tL_\beta\bigr] +
\bigl[\D\Gamma(m_\udelta(\omega)),\phi(\tcoup_\beta)\bigr]\\
& \qquad = -\bigl[\phi(\ri \ta_\udelta \tcoup_\beta),L_\romp\otimes\one_{\tcF}\bigr] +\ri \phi(\omega \ta_\udelta\tcoup_\beta)
 + \re\bigl\la \ta_\udelta\tcoup_\beta,\tcoup_\beta\bigr\ra +\ri \phi(\ri m_\udelta(\omega)\tcoup\beta).
\end{align*}
The right-hand side is $\sqrt{\tN}$-bounded, where we as above invoke \SC.
Since $\tcC^\romL$ is dense in $\cD(\tL_{\beta,\udelta}')\cap\cD(\tL_\beta)$, cf. Corollary~\ref{Cor-BasicLReg},
we are in a position to conclude from Proposition~\ref{Prop-MourreEquiv} that $\tL_{\beta,\udelta}'$ is of class $C^1_\Mo(\tL_\beta)$. Together with the computation above, we get \MCond{1} for the Liouvillean.

The Mourre estimate \MCond{2} was establish at zero temperature by
Theorem~\ref{Thm-GGM} and at positive temperature by Theorems~\ref{Thm-ME-L-LargeE} (high energy) and~\ref{Thm-ME-LT} (low temperature). Note that being away from eigenvalues, by the choice of $J$, we can get rid of a compact error by passing to a smaller energy window.

That $\tL_{\beta,\udelta}'$ is of class $C^1(\tA_\udelta)$ was established in Lemma~\ref{LprimeC1}.
The same proof applies to show that $H_\udelta'$ is of class $C^1(A_\udelta)$. Since the commutator in both cases is a field operator, we conclude from Remark~\ref{Remark-M3-4}~\ref{Item-StrongerM3} that the first part of \MCond{3} is satisfied at both zero and positive temperature. The second part of \MCond{3} is automatic for
self-adjoint $A$, that is for the Liouvillean.
For both the Hamiltonian and the Liouvillean, $\cM$ is the domain of the square root of the number operator
and the adjoint of the conjugate operators, $A_\udelta^*$  and $\tA_\udelta^*$, commute
 with the relevant number operator. Hence, the $b$-preservation part of \MCond{3} is automatic.

We are left with \MCond{4}, which was established for the Liouvillean in Lemma~\ref{ComputeFirstComm}.
As for $H$, one can proceed as in the proof of Lemma~\ref{ComputeFirstComm}. Note that
$N$ is of class $C^1_\Mo(H)$, by an argument simpler than the one which established \MCond{1} above.
\end{proof}

We end with the following improvement of Theorem~\ref{Thm-LAP}.
Let $S\geq \one$ be an auxiliary operator satisfying
\begin{enumerate}
  \item[\MCond{5}] $ \cD(M^{1/2})\subset \cD(S)$. 
  \item[\MCond{6}] $S$ is of class $C^1_\Mo(A)$.
\end{enumerate}
Abbreviate $\cD_S(A) :=  S^{-1}\cD(A)\subset \cD(A)$.

\begin{corollary}\label{Cor-Impr-LAP} Suppose \MCond{1}--\MCond{6}.
Let $z\in\CC$, with $\re(z)\in J$ and  $\im(z)\neq 0$.
The form $S (H-z)^{-1}S$, extends by continuity from $\cD_S(A)$
to a bounded form on $\cD(A)$, which we denote by the same expression. Furthermore,
there exists $C>0$ such that
for all $u\in\cD(A)$, we have
\[
 \sup_{\stackrel{z\in\CC}{\re(z)\in
    J',\im(z)\neq 0}} \bigl|\big\la u, S (H-z)^{-1} S u\big\ra\bigr| \leq 
C\bigl(\|u\|^2 + \| Au\|^2\bigr).
\]
\end{corollary}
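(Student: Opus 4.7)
The strategy is to apply Theorem~\ref{Thm-LAP} directly to $v := Su$ when $u\in\cD_S(A)$, and then extend by density to all of $\cD(A)$. Since $S\geq \one$ has bounded inverse, the relation $u\in\cD_S(A)=S^{-1}\cD(A)$ means $u\in\cD(S)$ with $v=Su\in\cD(A)$. By self-adjointness of $S$,
\[
\big\la u,\,S(H-z)^{-1}Su\big\ra \;=\; \big\la v,\,(H-z)^{-1}v\big\ra,
\]
so Theorem~\ref{Thm-LAP} supplies, uniformly for $\re(z)\in J'$ and $\im(z)\ne 0$,
\[
\big|\big\la u,\,S(H-z)^{-1}Su\big\ra\big| \;\leq\; C\bigl(\|Su\|_{\cM^*}^2 + \|ASu\|_{\cM^*}^2\bigr).
\]
The entire task therefore reduces to bounding the right-hand side by $C'(\|u\|^2+\|Au\|^2)$ on $\cD_S(A)$, plus a density argument.

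The first term is controlled by \MCond{5}: the inclusion $\cM\subset\cD(S)$ and the closed graph theorem make $S\colon\cM\to\cH$ bounded; self-adjointness of $S$ then extends this by duality to a bounded map $S\colon\cH\to\cM^*$ (via $\la Su,w\ra_{\cM^*,\cM}=\la u,Sw\ra_\cH$ for $w\in\cM\subset\cD(S)$), yielding $\|Su\|_{\cM^*}\leq C\|u\|_\cH$. For the second term, decompose
\[
ASu \;=\; (SAu)_{\cM^*} + [A,S]u,
\]
where $(SAu)_{\cM^*}$ is interpreted in $\cM^*$ via the extension just constructed applied to $Au\in\cH$. The first summand is then bounded in $\cM^*$ by $C\|Au\|$ by the same argument. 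The commutator piece requires \MCond{6}: the class $C^1_\Mo(A)$ for $S$, combined with $\cM\subset\cD(S)$, is designed to ensure that $[A,S]^\circ$ extends to a bounded operator $\cH\to\cM^*$ with $\|[A,S]u\|_{\cM^*}\leq C\|u\|$. Together these give $\|ASu\|_{\cM^*}\leq C(\|u\|+\|Au\|)$, completing the estimate on $\cD_S(A)$.

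For the density step, use the regularizer $I_n(S)=n(S+n)^{-1}$, which preserves $\cD(A)$ by \MCond{6} (cf. Lemma~\ref{Lemma-Prop-SA-C1}\ref{Item-Cheap-DomInv} applied to $S\in C^1_\Mo(A)$). Given $u\in\cD(A)$, put $u_n:=I_n(S)u$; then $Su_n=n(u-u_n)\in\cD(A)$, so $u_n\in\cD_S(A)$. Standard $C^1_\Mo$-regularization arguments (via \eqref{Reg-ToId}, \eqref{Reg-ToS} in the appendix) give $u_n\to u$ and $Au_n\to Au$ in $\cH$. Applying the uniform bound established above to $u_n$ and passing to the limit yields the desired continuous extension of the form from $\cD_S(A)$ to $\cD(A)$, with the same constant.

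The main obstacle is the commutator bound $\|[A,S]u\|_{\cM^*}\leq C\|u\|$. The $C^1_\Mo$-property alone only yields a graph-weighted extension of $[A,S]^\circ$; to collapse this into $\cM^*$ one must combine it with \MCond{5} and likely the semigroup invariance of $\cM$ built into \MCond{3}. One concrete route is to work on a $\cM$-dense subspace $w\in\cM\cap\cD(A^*)$ (produced from the $b$-preservation of $\cM$ by $W_t^*$ in \MCond{3}) and to write
\[
\big\la [A,S]u,w\big\ra \;=\; \big\la Su,\,A^*w\big\ra - \big\la Au,\,Sw\big\ra,
\]
bounding each term by $\|u\|\,\|w\|_\cM$ using \MCond{5} on the $S$-factors and the resulting $A^*$-invariance of $\cM$. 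Alternatively, regularize $S$ itself by $S_m:=SI_m(S)$, carry out the bounded commutator computation $[A,S_m]=-n(S+m)^{-1}[A,S]\,m(S+m)^{-1}\cdot$ (with analogous algebra), show that the resulting forms are uniformly bounded $\cH\to\cM^*$, and pass $m\to\infty$. Once this commutator estimate is in hand, the remainder of the argument is routine.
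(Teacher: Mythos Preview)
Your approach coincides with the paper's: apply Theorem~\ref{Thm-LAP} to $v=Su$ for $u\in\cD_S(A)$, bound $\|Su\|_{\cM^*}$ and $\|ASu\|_{\cM^*}$ separately, then pass to $\cD(A)$ by density using $I_n(S)$. The density argument you give is exactly the one the paper invokes via Lemma~\ref{Lemma-In}.

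Your one-line justification of the commutator bound is already the right one, and the paper makes it explicit by the factorization
\[
M^{-\frac12} A S u = M^{-\frac12} S A u + (M^{-\frac12}S)(S^{-1}[A,S]^\circ)u.
\]
Here $M^{-1/2}S$ is bounded by \MCond{5} and duality, while $S^{-1}[A,S]^\circ$ is bounded because $[S,A]^\circ S^{-1}$ is bounded (this is the $C^1_\Mo$ property) and $\ri[S,A]^\circ$ is symmetric on $\cD(S)$ (Lemma~\ref{Lemma2-Prop-Mourre-C1}), so $S^{-1}[A,S]^\circ$ is, up to sign, the adjoint of $[S,A]^\circ S^{-1}$. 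Equivalently, in your language: $[S,A]^\circ\colon\cD(S)\to\cH$ dualizes to $[A,S]^\circ\colon\cH\to\cD(S)^*$, and $\cM\subset\cD(S)$ gives $\cD(S)^*\hookrightarrow\cM^*$.

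So the hedging in your last paragraph is unnecessary. In fact your ``Route~1'' as written does not work: from $\la Su,A^*w\ra$ you cannot get a bound by $\|u\|\,\|w\|_\cM$ without effectively assuming $A^*$ is bounded on $\cM$. Drop that paragraph and keep your original duality argument; then the proof is complete and matches the paper.
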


\begin{proof} From Theorem~\ref{Thm-LAP}, we get for $u\in\cD_S(A)$ the bound
\begin{align*}
\bigl|\bigl\la S u, (H-z)^{-1} S u\bigr\ra\bigr| & \leq 
C\bigl(\| S u\|_{\cM^*}^2 + \| A S u\|_{\cM^*}^2\bigr)\\
&\leq C\bigl(\|M^{-\frac12} S u\|^2 + \|M^{-\frac12} A S u\|^2\bigr).
\end{align*}
Now compute, using  \MCond{5} and \MCond{6},
\[
M^{-\frac12} A  S u = M^{-\frac12} S A u +  (M^{-\frac12}S)(S^{-1}[A,S]^\circ) u.
\]
We thus get the bound
\[
\bigl|\big\la S u, (H-z)^{-1} S u\big\ra\bigr| \leq 
C\bigl(\|u\|^2 + \| A u\|^2\bigr).
\]
Density of $\cD_S(A)$ in $\cD(A)$ follows from Lemma~\ref{Lemma-In}, and concludes the proof.
\end{proof}

In the context of Pauli-Fierz systems, the corollary can be applied with $S$ equal to $(N+\one_\cH)^{1/2}$, 
for the Hamiltonian, and with $S$ equal to $(\tN+\one_{\tcH^\romL})^{1/2}$, for the standard Liouvillean. This version of the LAP, appears to afford some control over the infrared problem, in that decay in the conjugate operator $A$ permits to absorb a power of the number operator into the limiting resolvent. The correspondence between limiting absorption and Kato smoothness, cf.~\cite{ReedSimonIII1979},  may now be used to obtain infrared nontrivial integral propagation estimates \cite[Cor.~2.7]{GeorgescuGerardMoeller2004b}.

\subsection{Open Problems V}

We have in this section established absence of singular continuous 
spectrum for Pauli-Fierz Hamiltonians at zero temperature, imposing only the fairly natural \HGCond{2} condition.  At positive temperatures however, we have only established this fact at large energies or sufficiently low temperature. This is of course due to us only having access to positive commutator estimate in these two regimes. We do expect the result to remain valid also for large temperatures
under the condition \LGCond{2}.

\begin{problem}
 Prove that $\sigma_{\mathrm{sc}}(L_\beta) = \emptyset$, for all
 $\beta$ and $\coup$ satisfying \LGCond{2}. 
 This of course reduces to resolving Problem~\ref{Prob-AllTempME}.
\end{problem}

At a first glance the improved LAP Corollary~\ref{Cor-Impr-LAP},
appears to be extraordinarily useful, in particular seen from the point of view of the infrared problem. Recall that $T'$ and the number operator are comparable objects.
However, the author have not yet met an application where the ability to absorb two half-powers of the number operator into resolvents of the Hamiltonian/Liouvillean
was of any importance. In case control of the number operator was needed, there were always other ways of getting it.

\begin{problem} Identify an application where the full power of Corollary~\ref{Cor-Impr-LAP} is essentially needed.
\end{problem}

 This problem, is not so much a problem as it is a search for an application of Corollary~\ref{Cor-Impr-LAP}, where Theorem~\ref{Thm-LAP} does not suffice.
In the initial phase of the work resulting in the papers \cite{FaupinMoellerSkibsted2011a,FaupinMoellerSkibsted2011b}, the authors thought that Corollary~\ref{Cor-Impr-LAP} would be useful in controlling
the Fermi-Golden rule operator. However, under the assumptions needed to construct the Fermi-Golden rule operator, which is precisely \HGCond{2} and presumably \LGCond{2}, the projection onto the unperturbed eigenspaces can be shown to absorb
a full power of the number operator, rendering Corollary~\ref{Cor-Impr-LAP} 
unnecessary. See the discussion at the end of 
Subsects.~\ref{Subsec-NumberBoundH} and~\ref{Subsec-NumberBoundL}.
As a last comment in this direction, the integral propagation estimates one may derive
from Corollary~\ref{Cor-Impr-LAP} using Kato smoothness type arguments has so far not found any applications in the scattering theory for Pauli-Fierz models.

Finally we would like to mention a problem, not directly related to
Pauli-Fierz Hamiltonians, but more of a mathematical topic. 
The literature is abound with Limiting Absorption Principles, but the one we proved here is new, 
in the sense that it does not follow from an existing theorem.  
It is part of a family of LAP's proved under an umbrella called ``singular Mourre theory'' in \cite{FaupinMoellerSkibsted2011a}, 
characterized by the commutator $T'$ not being controlled in any way by the Hamiltonian $T$. 
This type of LAP goes back to Skibsted \cite{Skibsted1998}, 
with two different extensions in \cite{GeorgescuGerardMoeller2004b,MoellerSkibsted2004}. 
The original LAP of Skibsted is a special case of \cite{GeorgescuGerardMoeller2004b}.
The LAP established here together with those of \cite{GeorgescuGerardMoeller2004b,MoellerSkibsted2004}
form a bouquet of three LAP's none of which implies another.
The distinction pertains to how one deals with the double commutators $[T,T']^\circ$ and $T''$. 
Here, due to the fact that resolvents of the Liouvillean are of no help in bounding errors, 
we are forced to only make use of $T'$ when controlling double commutators. 
When studying e.g. the confined Nelson model \cite{FaupinMoellerSkibsted2011b,GeorgescuGerardMoeller2004b}
or AC-Stark systems with Coulomb pair-potentials \cite{MoellerSkibsted2004}, 
one is on the other hand forced to also make use of the Hamiltonian, 
at least partly, when controlling double commutators. This discussion serves to prepare the ground for the last problem:

\begin{problem} Establish a MOALAP, ``Mother Of All Limiting Absorption Principles", 
which includes the three known singular LAP's as well as the standard LAP
from regular Mourre theory, see e.g. \cite{Gerard2008,Sahbani1997}. 
\end{problem}

We remark that a new LAP was recently established for pairs of operators $H$ and $A$ 
in a Krein space setting \cite{GeorgescuGerardHaefner2013}, as opposed to the usual Hilbert space setting considered here.

\newpage

\begin{appendix}

\section{Second Quantization and Geometric Localization}\label{App-GeomLoc}

 In this appendix we briefly recall second quantization, using the notation of Segal. Most proofs can be located in \cite[Sect.~VIII.10]{ReedSimonI1980} and \cite[Sect.~X.7]{ReedSimonII1975}.
 We furthermore introduce a partition of unity in Fock space due to Derezi\'nski and G\'erard \cite{DerezinskiGerard1999}.

\subsection{Second Quantization}\label{SecondQuant}

To any Hilbert space $\gothh$, we associate a symmetric Fock space $\Gamma(\gothh) = \oplus_{n=0}^\infty \gothh^{\otimes_\roms n}$, where $\gothh^{\otimes_\roms 0} = \CC$
and $\otimes_\roms n$ denotes $n$-fold symmetric tensor product, when $n\geq 1$. With the obvious inner product, $\Gamma(\gothh)$ is naturally a Hilbert space itself. The special vector 
 $\vacuum = (1,0,0,\dotsc)\in\Gamma(\gothh)$ is called the vacuum vector.

We write $\Gamma_\fin(\gothh)$ for the dense subspace of $\Gamma(\gothh)$, consisting of states
$\psi$ of the form $\psi = (\psi_0,\psi_1,\dotsc,\psi_n,0,0,\dotsc)$ for some $n\in\NN$,
 where $\psi_j\in\gothh^{\otimes_\roms n}$, $j=0,\dotsc,n$.  For a  
 subspace $V\subset\gothh$, we write $\Gamma_\fin(V)$ for the subspace of $\Gamma_\fin(\gothh)$,
 where the $\psi_j$'s are elements  of the algebraic symmetric $j$-fold
  tensor power of $V$. If $V$ is dense in $\gothh$, then $\Gamma_\fin(V)$ is dense in $\Gamma(\gothh)$.
 
 Let $\gothh_1,\gothh_2$ be two Hilbert spaces (same scalars) and
 $b\colon \gothh_1\to\gothh_2$ a contraction, i.e., with $\|b\|_{\cB(\gothh_1,\gothh_2)}\leq 1$.
 Then we may lift $b$ to a contraction $\Gamma(b)\colon\Gamma(\gothh_1)\to\Gamma(\gothh_2)$ by setting
 \[
 \Gamma(b) = \bigoplus_{n=0}^\infty \overbrace{b\otimes b\otimes\cdots \otimes b}^{n \ \textup{factors}}.
 \]
 Here $b^{\otimes_\roms 0} = 1$, the identity operator on $\CC$. For a closed or closable operator $h$ with domain $\cD(h)\subset \gothh$, we associate a closable operator
 by setting
 \begin{equation}\label{DGamma}
 \D\Gamma(h) = \bigoplus_{n=1}^\infty \sum_{j=1}^n \one_\gothh\otimes \cdots\otimes \one_\gothh\otimes h\otimes\one_\gothh\otimes \cdots\otimes\one_\gothh,
 \end{equation}
 a priori with domain $\Gamma_\fin(\cD(h))$. We use the same notation $\D\Gamma(h)$ for the closure. In the above formula, $h$ is sitting in the $j$'th slot (out of $n$ total). If $h$ is essentially self-adjoint, then $\D\Gamma(h)$ is essentially self-adjoint on $\Gamma_\fin(\cD(h))$. 
 
 Finally, we may use the same formula to lift a sesquilinear form $q$ on $\gothh\times\gothh$ with 
 domain $\cQ_\roml(q)\times\cQ_\romr(q)$,
 to a sesquilinear form $\D\Gamma(q)$ on $\Gamma(\gothh)\times\Gamma(\gothh)$ with domain $\Gamma_\fin(\cQ_\roml(q))\times\Gamma_\fin(\cQ_\romr(q))$. If $q$ is a 
 semi-bounded quadratic form, so is $\D\Gamma(q)$.

 Suppose $g,h$ are densely defined operators on $\gothh$ with domains $\cD(g)$ and $\cD(h)$, respectively.
 Write $\cD(g^*)$ and $\cD(h^*)$ for the domains of their adjoints. Then, we may read
 the commutator $q=[g,h]$ as a sesquilinear form with $\cQ_\roml(q) = \cD(g^*)\cap\cD(h^*)$
 and $\cQ_\romr(q) = \cD(g)\cap\cD(h)$. With this interpretation of the notation, we have
 \begin{equation}\label{Gamma-Gamma}
  [\D\Gamma(g),\D\Gamma(h)] =  \D\Gamma([g,h])
 \end{equation}
 as an identity between sesquilinear forms. In applications, $g$ and $h$ will typically be such that the commutators may be identified with operators and the above computation becomes an operator identity.
 
 If $b\colon\gothh_1\to\gothh_2$ is a contraction, $h$ is an operator on $\gothh_1$ with domain $\cD(h)$ and $g$ is a densely defined operator on $\gothh_2$ with $\cD(g^*)$ the domain of its adjoint, then we may compute
 \begin{equation}\label{DGamma-Gamma}
 \D\Gamma(g) \Gamma(b) - \Gamma(b)\D\Gamma(h) = \D\Gamma(b,gb-bh)
 \end{equation} 
 as an identity between sesquilinear forms on $\Gamma_\fin(\cD(g^*))\times\Gamma_\fin(\cD(h))$.
 Here
 \begin{equation}\label{DGamma2}
  \D\Gamma(b,q) = \bigoplus_{n=1}^\infty \sum_{j=1}^n b\otimes \cdots\otimes b\otimes  q\otimes b\otimes \cdots\otimes b,
 \end{equation}
 where $q$ sits in the $j$'th slot. Here $q$ may either be an operator from $\gothh_1$ to $\gothh_2$ or a sesquilinear form on $\gothh_2\times\gothh_1$. Note that $\D\Gamma(\one_\gothh,q) = \D\Gamma(q)$.
 
  Of particular interest is the \emph{Number Operator}
  \[
  N = \D\Gamma(\one_\gothh),
  \]
  which may be used to control commutators of the form \eqref{Gamma-Gamma}
  and \eqref{DGamma-Gamma}, provided $[g,h]$ and $gb-bh$, respectively, are bounded forms.
  Indeed, for $0\leq \rho\leq 1$, $\psi\in\cD(N_2)$ and $\varphi\in\cD(N_1)$, 
  \begin{equation}\label{BasicNumberBound}
  |\la \psi,\D\Gamma(b,c)\varphi\ra| \leq \|c\|\|(N_2+\one_{\Gamma(\gothh_2)})^\rho\psi\|
  \|(N_1+\one_{\Gamma(\gothh_1)})^{1-\rho}\varphi\|.
  \end{equation}
  Here $N_i$ is the number operator on $\Gamma(\gothh_i)$, $b\colon\gothh_1\to\gothh_2$ is a contraction and $c$ is a bounded operator from $\gothh_1$ to $\gothh_2$ (or a bounded form on $\gothh_2\times\gothh_1$.)
   
 \subsection{Segal Field Operators}\label{SegalFields}
 
 Let $\gothh$ be a Hilbert space and $f\in\gothh$. We write $a(f)$ for the operator on $\Gamma(\gothh)$ annihilating a state $f$, and $a^*(f)$ for its adjoint, creating a state $f$.
 The annihilation and creation operators $a(f)$ and  $a^*(f)$ are 
 defined a priori on $\Gamma_\fin(\gothh)$ by the prescriptions
 \[
 \begin{aligned}
 a(f) \psi_n & = \sqrt{n} ( \la f \vert\otimes \one_{\gothh^{\otimes_\roms n-1}})\psi_n\in\gothh^{\otimes_\roms n-1}\\
 a^*(f) \psi_n & = \sqrt{n+1} S_n f\otimes \psi_n \in\gothh^{\otimes_\roms n+1}
 \end{aligned}
 \]
 and extension by linearity, where $\psi_n\in \gothh^{\otimes_\roms n}$ and $S_n\colon \gothh^{\otimes n} \to \gothh^{\otimes_\roms n}$ is the orthogonal projection onto the symmetric tensors. If $n=0$, the first line should be read as $a(f)\psi_0=0$, i.e., $a(f)$ annihilates the vacuum sector $\CC\vacuum$.
 
 The annihilation and creation operators are closable and we use the same notation for the closures. We remark that $\cD(\sqrt{N})\subset \cD(a(f))\cap \cD(a^*(f))$. 
 
 For $f\in\gothh$, we may now define Segal field operators by setting
 \[
 \phi(f) = \frac1{\sqrt{2}}\bigl(a(f) + a^*(f)\bigr),
 \]
 a priori as an operator on $\cD(a(f))\cap \cD(a^*(f))$.
 That $\phi(f)$ is essentially self-adjoint on $\Gamma_\fin(\gothh)$ follows from Nelson's Analytic Vector Theorem. We use the same notation for its closure.  Note that $\cD(\sqrt{N})\subset \cD(\phi(f))$.
 
  We have for $f,f'\in\gothh$ and $h$ a densely defined operator on $\gothh$ with $f\in\cD(h)$, the commutation relations
 \begin{equation}\label{phiphi-phidGamma-comm}
 \ri[\phi(f),\phi(f')] = \im\la f,f'\ra \quad \textup{and}\quad  \ri [\D\Gamma(h),\phi(f)] = -\phi(\ri h f).
 \end{equation}
  These identities are a priori form identities, although they make
   sense as operator identities on $\cD(N)$ and $\Gamma_\fin(\cD(h))$, respectively, as well.
 For $b\colon \gothh_1\to\gothh_2$ a contraction, $f_1\in\gothh_1$ and $f_2\in\gothh_2$, we furthermore have the intertwining relations
\begin{equation}\label{Gamma-a-Inter}
 \Gamma(b) a^*(f_1) = a^*(b f_1)\Gamma(b)\quad \textup{and} \quad a(f_2)\Gamma(b) = \Gamma(b)a(b^*f_2),
 \end{equation}
read as either form or operator identities on $\cD(\sqrt{N})$.

 In this paper, $\gothh$ is always a function space. In this context
 it is convenient to make use of annihilation and creation \myquote{operators} 
 $a(k)$ and $a^*(k)$, formally corresponding to $a^*(\delta(\cdot-k))$ and $a(\delta(\cdot-k))$. 
 For the sake of concreteness, we take $\gothh=L^2(\RR^3)$ here.
  
 The annihilation operator $a(k)$ is densely defined  
 with domain $\Gamma_\fin(V)$, where $V = L^2(\RR^3)\cap C(\RR^3)$.
 It is given by $(a(k)\psi_n)(k_1,\cdots,k_n) =\sqrt{n}\psi_n(k,k_1,\cdots,k_{n-1})$
 and $a(k)\vacuum=0$.
 It is however not a closable operator, but we may still define $a^*(k)$ as a form
 on $\Gamma_\fin(V)$. (The domain of $a(k)^*$ equals $\{0\}$.) Normal ordered expressions like
 \[
 a^*(k_1)\cdots a^*(k_n)a(k'_1)\cdots a(k'_m)
 \]
 are therefore meaningful as forms on $\Gamma_\fin(V)$.
 
 If $h$ is an operator of multiplication by a (locally square integrable) Borel function $k\to  h(k)$, 
 then we may write $\int_{\RR^3} h(k)a^*(k)a(k)\, \D k$, a priori defined as a form, and observe that it coincides with the form induced by the operator $\D\Gamma(h)$.
 Note that we may have to shrink $V$ to ensure $h V\subset L^2(\RR^3)$.
 Similarly, we may write for $f\in L^2(\RR^3)$
 \[
 \phi(f) = \frac1{\sqrt{2}}\int_{\RR^3} \bigl(f(k)a^*(k) + \overline{f(k)}a(k) \bigr)\, \D k.
 \]
 
\subsection{Abstract Geometric Partition of Unity}\label{Subsect-GeomLoc}

In this subsection, we introduce the geometric localization due to Derezi\'nski and G\'erard \cite{DerezinskiGerard1999}.
Let $\gothh,\gothh_0,\gothh_\infty$ be Hilbert spaces (same scalars) and let $b_i\colon \gothh \to \gothh_i$, $i=0,\infty$,
be two contractions.
We form a new contraction $b\colon \gothh\to \gothh_0\oplus\gothh_\infty$ by setting 
$bf = (b_0 f,b_\infty f)$. Using Segal's second quantization functor, we lift $b$ to a contraction
\[
\Gamma(b)\colon \Gamma(\gothh)\to \Gamma(\gothh_0\oplus\gothh_\infty).
\]
We have a unitary identification operator $I\colon \Gamma(\gothh_0\oplus\gothh_\infty)\to \Gamma(\gothh_0)\otimes\Gamma(\gothh_\infty)$ defined uniquely by the requirements:
\begin{equation}\label{CanonId}
I a^*((f,g)) = \bigl(a^*(f)\otimes\one_{\Gamma(\gothh_\infty)} + \one_{\Gamma(\gothh_0)}\otimes a^*(g)\bigr)I \quad \textup{and} \quad I\vacuum = \dvacuum,
\end{equation}
where $\dvacuum = \vacuum\otimes\vacuum$. Let $g_0$ and $g_\infty$ be densely defined operators on $\gothh_0$ and $\gothh_\infty$, respectively. We then have the intertwining relation
\begin{equation}\label{CanonIdProp}
 \bigl(\D\Gamma(g_0)\otimes\one_{\Gamma(\gothh_0)} + \one_{\Gamma(\gothh_\infty)}\otimes \D\Gamma(g_\infty)\bigr) I=I \D\Gamma\left( \begin{pmatrix} g_0 & 0 \\ 0 & g_\infty \end{pmatrix}\right),
\end{equation}
as an operator identity on $\Gamma_\fin(\cD(g_0))\oplus \Gamma_\fin(\cD(g_\infty))$.

 We may now define the contraction
\[
\cGamma(b) = I \Gamma(b) \colon \Gamma(\gothh)\to \Gamma(\gothh_0)\otimes\Gamma(\gothh_\infty),
\]
and observe that
\[
\cGamma(b)^* \cGamma(b) =  \Gamma(b)^* I^* I \Gamma(b) = \Gamma(b^*)\Gamma(b) = \Gamma(b^*b). 
\]
Hence, if $b$ is an isometry, so is $\cGamma(b)$. For $b$ to be an isometry, we must require
that 
\begin{equation}\label{b-isometry}
b_0^* b_0 +b_\infty^* b_\infty = \one_\gothh.
\end{equation}
  Similarly,
$\cGamma(b)\cGamma(b^*) = I \Gamma(b b^*) I^*$, such that $\cGamma(b)$ is unitary if $b$ is unitary. For $b$ to be unitary, we must apart from \eqref{b-isometry} require that
\begin{equation}\label{b-unitary}
b_0 b_0^* = \one_{\gothh_0}, \quad  b_\infty b_\infty^* = \one_{\gothh_\infty}\quad \textup{and} \quad b_0 b_\infty^* =0.
\end{equation}

To make use of the isometry property, we need to compute intertwiners of the form
$G\cGamma(b)- \cGamma(b) H$, for suitable operators $G$ and $H$.
We need two cases, both of which can be found in \cite[Lemma 2.16]{DerezinskiGerard1999}.

The first case is when $H = \D\Gamma(h)$ and 
$G = \D\Gamma(g_0) \otimes\one_{\Gamma(\gothh_\infty)} + \one_{\Gamma(\gothh_0)}\otimes\D\Gamma(g_\infty)$.
Here $h$ has domain $\cD(h)$ and $g_0,g_\infty$ are densely defined.
% with domains $\cD(g_0),\cD(g_\infty)$, respectively. 
Write $\cD(g_0^*)$ and $\cD(g_\infty^*)$ for the domains of the two adjoints. Then, as a form identity on $\Gamma_\fin(\cD(g_0^*))\otimes\Gamma_\fin(\cD(g_\infty^*))\times\Gamma_\fin(\cD(h))$, we may compute using \eqref{DGamma-Gamma} and~\eqref{CanonIdProp}
\begin{equation}\label{dGammacGammaIntertwine}
\bigl(\D\Gamma(g_0) \otimes\one_{\Gamma(\gothh_\infty)} + \one_{\Gamma(\gothh_0)}\otimes\D\Gamma(g_\infty)\bigr)\cGamma(b) - \cGamma(b) \D\Gamma(h)
= \D\cGamma(b,q),
\end{equation}
where $\D\cGamma(b,q) = I \D\Gamma(b,q)$ and $q = (g_0 b_0 - b_0 h, g_\infty b_\infty - b_\infty h)$, read as a sesquilinear form on $(\cD(g_0^*)\oplus \cD(g_\infty^*))\times \cD(h)$.

The second intertwining relation we need to consider is the case where
$H = \phi(f)$ and $G = \phi(f_0)\otimes \one_{\Gamma(\gothh_\infty)} + \one_{\Gamma(\gothh_0)}\otimes \phi(f_\infty)$. Here $f\in\gothh$ and $f_i\in\gothh_i$, $i=0,\infty$. It follows from \eqref{Gamma-a-Inter} and~\eqref{CanonId} that
\begin{align}\label{phicGammaIntertwine}
\nonumber &\bigl(\phi(f_0)\otimes \one_{\Gamma(\gothh_\infty)} + \one_{\Gamma(\gothh_0)}\otimes \phi(f_\infty)\bigr)\cGamma(b)-
\cGamma(b) \phi(f)\\
& \qquad  = \frac1{\sqrt{2}}\bigl(a^*(f_0-b_0f)\otimes \one_{\Gamma(\gothh_\infty)} + \one_{\Gamma(\gothh_0)}\otimes a^*(f_\infty - b_\infty f)\bigr) \cGamma(b)\\
\nonumber & \qquad \quad 
+ \frac1{\sqrt{2}} \cGamma(b) a(b_0 f_0 + b_\infty f_\infty -f),
\end{align}
which may be read as an operator (or form) identity on $\cD(\sqrt{N})$.

\section{Commutator Calculus}\label{App-CommCalc}

In this appendix we recall the notion of $C^1(A)$ regularity from \cite{GeorgescuGerardMoeller2004a},
cf. also \cite{AmreinMonvelGeorgescu1996}, and develop the theory to the extend that it is needed in the notes. Basic well-known facts are supplied without proof,
which may most conveniently be found in \cite{AmreinMonvelGeorgescu1996}, 
whereas detailed arguments are given for claims that are not commonly used.

\subsection{Bounded Operators of Class $C^1(A)$}

The basic definition is the following. 

\begin{definition}[The $C^1(A)$ class of
 bounded operators]\label{Def-C1A} Let $A$ be a densely defined closed operator on $\cH$, with domain $\cD(A)$, and $B\in\cB(\cH)$ a bounded operator. We say that
  $B\in C^1(A)$ if the commutator form $[B,A]$ defined on $\cD(A)\cap\cD(A^*)$
  extends by continuity to a bounded form on $\cH$. We write $[B,A]^\circ\in\cB(\cH)$ for the bounded operator 
  representing the form.
\end{definition}

We will use the $C^1(A)$ calculus  for maximally symmetric $A$ only, where $\cD(A)\cap\cD(A^*) = \cD(A)$.
Hence, in the following $A$ will always be assumed to be (at least) maximally symmetric, which simplifies some results.
We refer the reader to \cite{GeorgescuGerardMoeller2004a} for the general case.

To conform with the example of Pauli-Fierz Hamiltonians, we will always 
assume $n_+ = \dim(\ker(A^*-\ri)) =0$,
such that $\set{z\in\CC}{\im(z) < 0}\subset \rho(A)$. 
With this choice $A$ generates a $C_0$-semigroup of isometries $W_t$:
\[
\forall\psi\in\cD(A):\quad \frac{\D}{\D t} W_t \psi = \ri A W_t\psi.
\]
Recall that a $C_0$-semigroup is a weakly  -- hence strongly -- continuous
semigroup of bounded operators.

The first lemma establish equivalent criteria for being of class $C^1(A)$.

\begin{lemma}\label{lemma:C1equiv}  Let $B\in\cB(\cH)$. The following are equivalent.
  \begin{Enumerate}
  \item\label{item:C1chardef} $B\in C^1(A)$.
  \item \label{item:C1charextend} $B$ maps $\cD(A)$ into itself and $AB-BA\colon\cD(A)\to\cH$
    extends by continuity to a bounded operator on $\cH$.
  \item \label{item:C1charlim} There exists $C>0$ such that 
    $\|B W_t - W_t B\|\leq Ct$, for $0\leq t\leq 1$.
  \end{Enumerate}
\end{lemma}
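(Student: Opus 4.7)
The plan is to close the cycle (ii) $\Rightarrow$ (iii) $\Rightarrow$ (i) $\Rightarrow$ (ii); the converse (ii) $\Rightarrow$ (i) is immediate by identifying the forms on $\cD(A)^2$. The substantive ingredients are a Duhamel-style representation of $W_tB - BW_t$ and the recovery of the commutator form as a difference-quotient limit.

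For (ii) $\Rightarrow$ (iii), I fix $\psi \in \cD(A)$ and study the $\cH$-valued map $s \mapsto W_{t-s}BW_s\psi$ on $[0,t]$. Since $W_s$ preserves $\cD(A)$ and commutes with $A$ there, and (ii) gives $BW_s\psi \in \cD(A)$, a product-rule computation shows this map is strongly differentiable with derivative $\ri W_{t-s}[B,A]^\circ W_s\psi$, where $[B,A]^\circ = (BA-AB)^\circ \in \cB(\cH)$ is the bounded extension supplied by (ii). Integrating,
\[
W_tB\psi - BW_t\psi \;=\; -\ri\int_0^t W_{t-s}[B,A]^\circ W_s\psi\,\D s,
\]
and the isometry of the $W_r$ yields $\|(W_tB - BW_t)\psi\| \leq t\|[B,A]^\circ\|\|\psi\|$; density of $\cD(A)$ then delivers (iii) with $C=\|[B,A]^\circ\|$.

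For (iii) $\Rightarrow$ (i), I decompose for $\varphi, \psi \in \cD(A)$
\[
\tfrac{1}{t}\la \varphi, (W_tB - BW_t)\psi\ra \;=\; \la (W_t^*\varphi - \varphi)/t, B\psi\ra - \la \varphi, B(W_t\psi - \psi)/t\ra,
\]
and let $t \to 0^+$. The adjoint $C_0$-semigroup $W_t^*$ has generator $-\ri A^*$, and since $\cD(A) \subseteq \cD(A^*)$ with $A^*\varphi = A\varphi$ on $\cD(A)$, the right-hand side tends to $\ri\la A\varphi, B\psi\ra - \ri\la \varphi, BA\psi\ra = -\ri\la \varphi, [B,A]\psi\ra$. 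The uniform bound $C\|\varphi\|\|\psi\|$ from (iii) then promotes $[B,A]$ to a bounded form on $\cH$, giving (i).

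For (i) $\Rightarrow$ (ii), the form identity $\la A\varphi, B\psi\ra = \la \varphi, (BA - [B,A]^\circ)\psi\ra$ on $\cD(A)$ shows the functional $\varphi \mapsto \la A\varphi, B\psi\ra$ is $\cH$-bounded, whence Riesz gives $B\psi \in \cD(A^*)$. The main obstacle I anticipate is upgrading this to $B\psi \in \cD(A)$ when $A$ is only maximally symmetric (with $\cD(A)$ in general a proper subset of $\cD(A^*)$); I would close this gap by a separate argument exploiting that $W_t$ is an isometric $C_0$-semigroup with $n_+=0$, whereupon $AB - BA$ on $\cD(A)$ coincides with $-[B,A]^\circ$ and extends boundedly to $\cH$.
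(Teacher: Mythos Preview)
The paper does not actually prove this lemma; it is stated as a basic known fact with proofs deferred to \cite{AmreinMonvelGeorgescu1996} for the self-adjoint case and to \cite{GeorgescuGerardMoeller2004a} for the maximally symmetric extension. So there is no ``paper's own proof'' to compare against.

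Your implications \ref{item:C1charextend}$\Rightarrow$\ref{item:C1charlim} (Duhamel) and \ref{item:C1charlim}$\Rightarrow$\ref{item:C1chardef} (difference-quotient limit, using that $W_t^*$ is a $C_0$-semigroup with generator $-\ri A^*$ and that $A^*=A$ on $\cD(A)$) are correct and are the standard route.

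The step \ref{item:C1chardef}$\Rightarrow$\ref{item:C1charextend} is where the genuine content lies, and you have correctly located the obstacle without resolving it. Your Riesz argument only yields $B\cD(A)\subset\cD(A^*)$, and for maximally symmetric (non-self-adjoint) $A$ this inclusion is strict in general. The phrase ``a separate argument exploiting that $W_t$ is an isometric $C_0$-semigroup with $n_+=0$'' is not a proof; it is precisely this upgrade from $\cD(A^*)$ to $\cD(A)$ that requires the machinery in \cite{GeorgescuGerardMoeller2004a}. A natural attempt --- commuting $B$ through the resolvent regularization $I_n(A)=\ri n(A+\ri n)^{-1}$ and expressing $[B,(A+\ri n)^{-1}]$ in terms of $[B,A]^\circ$ --- runs into the same issue: one ends up needing the form identity for the first argument in $\cD(A^*)$ rather than $\cD(A)$, which is not what \ref{item:C1chardef} gives. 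So your proposal is a correct outline for the self-adjoint case but leaves open exactly the part that distinguishes the maximally symmetric setting; for that you should consult the cited reference rather than expect a one-line fix.
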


The following lemma establishes the rules of the calculus

\begin{lemma}\label{Lemma-Prop-C1A} Let $B,C\in\cB(\cH)$ such that $B,C\in C^1(A)$. 
The following holds 
\begin{Enumerate}
\item $BC\in C^1(A)$ and $[A,BC]^\circ = [A,B]^\circ C + B[A,C]^\circ$.
\item If $B$ is invertible, then $B^{-1}\in C^1(A)$ and $[B^{-1},A]^\circ = -B^{-1}[A,B]^\circ B^{-1}$.
\item If $B$ is self-adjoint, then $\ri[B,A]^\circ$ is self-adjoint.
\item\label{Item-Comm-From-Group} $\slim_{t\to 0_+} t^{-1}(BW_t-W_tB) = \ri[B,A]^\circ$. 
\item The linear operator $\ad_A\colon C^1(A)\to \cB(\cH)$ is closed, when both $C^1(A)$ and $\cB(\cH)$ are given the
weak operator topology. 
\end{Enumerate}
\end{lemma}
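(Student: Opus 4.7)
The plan is to establish each of the five items in turn, exploiting the equivalent characterizations of Lemma~\ref{lemma:C1equiv} and working on the common invariant domain $\cD(A)$. For \ref{item:C1chardef}, since $B,C\in C^1(A)$, both preserve $\cD(A)$ by Lemma~\ref{lemma:C1equiv}\ref{item:C1charextend}, so $BC$ does as well. On this invariant domain the purely algebraic identity $[A,BC] = [A,B]C + B[A,C]$ is valid, and since both terms on the right extend to bounded operators on $\cH$, so does the left-hand side; the formula then extends to all of $\cH$ by density. For the inverse rule, I would route through Lemma~\ref{lemma:C1equiv}\ref{item:C1charlim}: writing $B^{-1}W_t - W_t B^{-1} = B^{-1}(W_t B - BW_t) B^{-1}$ gives $\|B^{-1}W_t - W_t B^{-1}\| \le \|B^{-1}\|^2 C t$ for $0\le t\le 1$, hence $B^{-1}\in C^1(A)$. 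The commutator formula then falls out of applying the product rule just established to $BB^{-1} = \one_\cH$, which gives $0 = [A,B]^\circ B^{-1} + B[A,B^{-1}]^\circ$ and hence the stated identity for $[B^{-1},A]^\circ = -[A,B^{-1}]^\circ$ up to the sign convention in use.

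For the self-adjointness statement, the key is the symmetry calculation for $\phi,\psi\in\cD(A)$: using $B = B^*$ together with $B\phi\in\cD(A)$ (again by Lemma~\ref{lemma:C1equiv}\ref{item:C1charextend}) and the fact that $A\subseteq A^*$ as $A$ is symmetric, one verifies $\langle\phi,\ri[B,A]\psi\rangle = \ri\langle B\phi,A\psi\rangle - \ri\langle A\phi,B\psi\rangle = \langle\ri[B,A]\phi,\psi\rangle$, and boundedness of $\ri[B,A]^\circ$ on the dense domain $\cD(A)$ promotes symmetry to self-adjointness. For the strong limit identity, I would first work on $\cD(A)$, where both $\psi$ and $B\psi$ lie in $\cD(A)$ and are preserved by the semigroup $W_t$. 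One may then write $t^{-1}(BW_t - W_t B)\psi = B\cdot t^{-1}(W_t - \one_\cH)\psi - t^{-1}(W_t - \one_\cH)B\psi$ and take the limit using the generator identity $t^{-1}(W_t - \one_\cH)\varphi \to \ri A\varphi$ on $\cD(A)$; the result is $\ri BA\psi - \ri AB\psi = \ri[B,A]^\circ\psi$. The upgrade to strong convergence on all of $\cH$ then combines the pointwise convergence on the dense subspace $\cD(A)$ with the uniform bound $\|t^{-1}(BW_t - W_t B)\|\le C$ supplied by Lemma~\ref{lemma:C1equiv}\ref{item:C1charlim}, via the standard three-epsilon argument.

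Finally, for the closedness statement, suppose $\{B_n\}\subset C^1(A)$ with $B_n\to B$ and $[B_n,A]^\circ \to C$ in the weak operator topology of $\cB(\cH)$. Testing on $\phi,\psi\in\cD(A)$, one has $\langle\phi,[B_n,A]^\circ\psi\rangle = \langle B_n^*\phi,A\psi\rangle - \langle A\phi,B_n\psi\rangle$, which converges to $\langle B^*\phi,A\psi\rangle - \langle A\phi,B\psi\rangle = \langle\phi,[B,A]\psi\rangle$, and which simultaneously converges to $\langle\phi,C\psi\rangle$. Hence the commutator form $[B,A]$ agrees on $\cD(A)$ with the bounded operator $C$, so $B\in C^1(A)$ with $[B,A]^\circ = C$. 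The only real point requiring care is the domain bookkeeping in item~\ref{Item-Comm-From-Group}: one must confirm that $B\psi\in\cD(A)$ (which is precisely where Lemma~\ref{lemma:C1equiv}\ref{item:C1charextend} is indispensable) and remember that under the standing assumption $n_+=0$ one only has a one-sided semigroup, so the limit $t\to 0_+$ is genuinely one-sided and the ``derivative at zero" must be understood accordingly rather than as a two-sided group derivative.
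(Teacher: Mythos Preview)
Your proof is correct. The paper itself does not supply a proof of this lemma, treating it as a basic well-known fact and referring the reader to \cite{AmreinMonvelGeorgescu1996}, so there is no approach to compare against; your arguments via the domain-preservation characterization in Lemma~\ref{lemma:C1equiv}\ref{item:C1charextend} and the semigroup bound in Lemma~\ref{lemma:C1equiv}\ref{item:C1charlim} are the standard ones and all five items are handled cleanly.
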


\subsection{Self-adjoint Operators of Class $C^1(A)$}

\begin{definition}[The $C^1(A)$ class of
 self-adjoint operators]\label{Def-SA-C1A} Let $A$ be a maximally symmetric operator on $\cH$, with domain $\cD(A)$, and $S$ a self-adjoint operator on $\cH$. We say that
  $S$ is of class $C^1(A)$ if there exists $z\in\rho(S)$, the resolvent set of $S$, such that
  $(S-z)^{-1}\in C^1(A)$.
\end{definition}

 We will be somewhat pedantic and say that $S$ is of class $C^1(A)$, instead of using the notation
$S\in C^1(A)$. We prefer to think of $C^1(A)$ as a subset of $\cB(\cH)$.
We remark that if $S$ happens to be bounded, the two definitions coincide.

\begin{lemma}\label{Lemma-Prop-SA-C1}  Let $S$ be self-adjoint and of class $C^1(A)$. The following holds
\begin{Enumerate}
\item For all $z\in\rho(S)$, we have $(S-z)^{-1}\in C^1(A)$. 
\item\label{Item-Cheap-DomInv}  For all $z\in\rho(S)$, we have $(S-z)^{-1}\colon\cD(A)\to\cD(A)$.
\item\label{Item-DenseInDS} $\cD(S)\cap\cD(A)$ is dense in $\cD(S)$.
\item The commutator form
  $[S,A]$, a priori defined on $\cD(A)\cap\cD(S)$, extends by continuity to a bounded form on
  $\cD(S)$.
\end{Enumerate}
\end{lemma}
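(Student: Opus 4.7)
The plan is to treat the four items in order, each building on the preceding ones via the resolvent identity, the calculus rules of Lemma~\ref{Lemma-Prop-C1A}, and an approximation step.

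For (i), I would use the Neumann series. If $R_{z_0}:=(S-z_0)^{-1}\in C^1(A)$, then for $|z-z_0|<\|R_{z_0}\|^{-1}$ the expansion $R_z=\sum_{n\geq 0}(z-z_0)^n R_{z_0}^{n+1}$ converges in norm. By the product rule in Lemma~\ref{Lemma-Prop-C1A}, each $R_{z_0}^{n+1}\in C^1(A)$ with $\|[R_{z_0}^{n+1},A]^\circ\|\leq (n+1)\|R_{z_0}\|^n\|[R_{z_0},A]^\circ\|$, so the commutator series converges in norm as well. Closedness of $\ad_A$ in the weak operator topology (last item of Lemma~\ref{Lemma-Prop-C1A}) then yields $R_z\in C^1(A)$. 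The set $\{z\in\rho(S):R_z\in C^1(A)\}$ is therefore open, and since every connected component of $\rho(S)$ is path-connected a chain of overlapping Neumann-series disks reaches any other point of the component containing $z_0$. To cross to the other half-plane component (needed only if $\sigma(S)=\RR$) I would use $R_{\bar z}=R_z^*$ together with the observation that $R^*\in C^1(A)$ whenever $R\in C^1(A)$, read off the form identity $\la\psi,[R^*,A]\varphi\ra=-\overline{\la\varphi,[R,A]^\circ\psi\ra}$ on $\cD(A)\times\cD(A)$, valid thanks to $A\subset A^*$ (which holds since $n_+=0$).

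Item (ii) is immediate from (i) and Lemma~\ref{lemma:C1equiv}~\ref{item:C1charextend}. For (iii), given $\psi\in\cD(S)$ and $z\in\rho(S)$, I would set $\varphi:=(S-z)\psi\in\cH$, choose $\varphi_n\in\cD(A)$ with $\varphi_n\to\varphi$, and put $\psi_n:=R_z\varphi_n$. By (ii), $\psi_n\in\cD(A)\cap\cD(S)$; moreover $\psi_n\to\psi$ and $S\psi_n=z\psi_n+\varphi_n\to S\psi$, so convergence takes place in the graph norm of $S$.

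The substantive item is (iv). I would introduce the candidate bounded sesquilinear form on $\cD(S)\times\cD(S)$, graph norm,
\[
Q(\psi,\varphi):=-\bigl\la(S-\bar z)\psi,\,[R_z,A]^\circ(S-z)\varphi\bigr\ra,
\]
which satisfies $|Q(\psi,\varphi)|\leq \|[R_z,A]^\circ\|(1+|z|)^2\|\psi\|_{\cD(S)}\|\varphi\|_{\cD(S)}$, and then verify that on $\cD(A)\cap\cD(S)$ it agrees with $[S,A]$. Expanding $S=(S-z)+z$ on the right and $S=(S-\bar z)+\bar z$ on the left gives
\[
\la\psi,[S,A]\varphi\ra=\la(S-\bar z)\psi,A\varphi\ra-\la A\psi,(S-z)\varphi\ra,
\]
so the task reduces to identifying this with $Q(\psi,\varphi)$. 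For this I would invoke Lemma~\ref{Lemma-Prop-C1A}~\ref{Item-Comm-From-Group} in the form $[R_z,A]^\circ=-\ri\,\slim_{t\to 0_+}t^{-1}(R_zW_t-W_tR_z)$ and compute
\[
\la(S-\bar z)\psi,(R_zW_t-W_tR_z)(S-z)\varphi\ra=\la\psi,W_t(S-z)\varphi\ra-\la(S-\bar z)\psi,W_t\varphi\ra,
\]
using $R_z^*(S-\bar z)\psi=\psi$ and $R_z(S-z)\varphi=\varphi$. The diverging $t^{-1}$ pieces cancel thanks to $\la\psi,(S-z)\varphi\ra=\la(S-\bar z)\psi,\varphi\ra$, and the surviving limit is extracted from $\varphi\in\cD(A)$, giving $t^{-1}(W_t\varphi-\varphi)\to \ri A\varphi$ in norm, together with $\psi\in\cD(A)\subset\cD(A^*)$, giving $t^{-1}(W_t^*\psi-\psi)\to -\ri A\psi$ in norm by Phillips' theorem applied to the adjoint semigroup. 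Reassembling the terms produces $\la A\psi,(S-z)\varphi\ra-\la(S-\bar z)\psi,A\varphi\ra=-\la\psi,[S,A]\varphi\ra$, hence $Q=[S,A]$ on $\cD(A)\cap\cD(S)$; combined with (iii) this identifies $Q$ as the unique bounded extension of $[S,A]$ to $\cD(S)$. The main obstacle is precisely this semigroup computation, where the merely maximally symmetric nature of $A$ forces us outside $A$ to $A^*$ via the Phillips adjoint semigroup; the remaining steps are routine manipulations with the resolvent identity and the algebra rules.
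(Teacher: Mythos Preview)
The paper does not supply a proof of this lemma; in the preamble to Appendix~\ref{App-CommCalc} it says that ``basic well-known facts are supplied without proof, which may most conveniently be found in \cite{AmreinMonvelGeorgescu1996}'', and Lemma~\ref{Lemma-Prop-SA-C1} is one of these. There is therefore nothing to compare against in the paper itself.

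Your argument is correct and is essentially the standard one. The Neumann-series step for (i), combined with closedness of $\ad_A$ and the adjoint trick $R_{\bar z}=R_z^*$ to reach the second half-plane when $\sigma(S)=\RR$, is exactly how this is done in \cite{AmreinMonvelGeorgescu1996,GeorgescuGerardMoeller2004a}. Items (ii) and (iii) are immediate consequences as you say. For (iv), your identification of $Q(\psi,\varphi)=-\la(S-\bar z)\psi,[R_z,A]^\circ(S-z)\varphi\ra$ with the commutator form via the semigroup limit is clean; the delicate point you handle correctly is that $(S-z)\varphi$ need not lie in $\cD(A)$, so one must pass to the adjoint semigroup $W_t^*$ acting on $\psi\in\cD(A)\subset\cD(A^*)$. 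On a Hilbert space $W_t^*$ is again a $C_0$-semigroup with generator $-\ri A^*$, and $A^*\psi=A\psi$ on $\cD(A)$. The cross terms in your expansion $\la(S-\bar z)\psi,A\varphi\ra-\la A\psi,(S-z)\varphi\ra$ cancel precisely because $\la A\psi,\varphi\ra=\la\psi,A\varphi\ra$ on $\cD(A)$, which is where the maximal symmetry of $A$ enters.
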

 
  We write $[S,A]^\circ$ both for the form on $\cD(S)$ and for the bounded operator in
 $\cB(\cD(S);\cD(S)^*)$ representing the form. With this notation we have the important formula
 \begin{equation}\label{ResolventComm}
 \forall z\in\rho(S):\quad [(S-z)^{-1},A]^\circ = - (S-z)^{-1}[S,A]^\circ(S-z)^{-1}.
 \end{equation}

 The following theorem from \cite{GeorgescuGerardMoeller2004a} is crucial for the study of point spectrum
 and eigenstates using commutator methods. 
 If $A$ is self-adjoint the theorem goes back to \cite[Prop.~7.2.10]{AmreinMonvelGeorgescu1996}.
  
 \begin{theorem}[Virial Theorem]\label{Thm-Virial} Let $A$ be maximally symmetric and $S$ self-adjoint and of class $C^1(A)$.
 For any eigenstate $\psi$ of $S$ we have $\la\psi,[S,A]^\circ\psi\ra = 0$.
 \end{theorem}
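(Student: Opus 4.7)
The plan is to prove the identity $\la\psi,[S,A]^\circ\psi\ra = 0$ by evaluating the expectation value $\la\psi,[(S-z)^{-1},A]^\circ\psi\ra$ in two different ways, for an arbitrary $z\in\rho(S)$. Let $\psi$ be an eigenstate of $S$ with $S\psi = \lambda\psi$, so $\lambda\in\RR$. Since $S$ is of class $C^1(A)$, by Lemma~\ref{Lemma-Prop-SA-C1} the resolvent $(S-z)^{-1}$ lies in $C^1(A)$ and the commutator identity \eqref{ResolventComm},
\[
[(S-z)^{-1},A]^\circ = -(S-z)^{-1}[S,A]^\circ(S-z)^{-1},
\]
holds (as a bounded form on $\cH$, with the right-hand side interpreted via the duality extension of $(S-z)^{-1}$ to $\cD(S)^*$).

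First I would compute the right-hand side on $\psi$. Using $(S-z)^{-1}\psi = (\lambda-z)^{-1}\psi$ and $(S-\bar z)^{-1}\psi = (\lambda-\bar z)^{-1}\psi$, a direct manipulation moving the resolvents onto the eigenvector on either side gives
\[
\la\psi,(S-z)^{-1}[S,A]^\circ (S-z)^{-1}\psi\ra = (\lambda-z)^{-2}\la\psi,[S,A]^\circ\psi\ra,
\]
so that $\la\psi,[(S-z)^{-1},A]^\circ\psi\ra = -(\lambda-z)^{-2}\la\psi,[S,A]^\circ\psi\ra$.

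Next I would show the same expectation value vanishes by using the strong-limit representation of the commutator from Lemma~\ref{Lemma-Prop-C1A}~\ref{Item-Comm-From-Group}, which applies because $(S-z)^{-1}$ is bounded and of class $C^1(A)$:
\[
\ri[(S-z)^{-1},A]^\circ = \slim_{t\to 0_+} t^{-1}\bigl((S-z)^{-1}W_t - W_t(S-z)^{-1}\bigr).
\]
Expanding the expectation $\la\psi,((S-z)^{-1}W_t-W_t(S-z)^{-1})\psi\ra$ by transferring $(S-z)^{-1}$ to the left entry as its adjoint $(S-\bar z)^{-1}$, both terms reduce to $(\lambda-z)^{-1}\la\psi,W_t\psi\ra$, and the difference vanishes identically in $t>0$. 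Hence $\la\psi,[(S-z)^{-1},A]^\circ\psi\ra = 0$, and combining with the previous step (and $(\lambda-z)^{-2}\neq 0$) yields $\la\psi,[S,A]^\circ\psi\ra = 0$.

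The only potential obstacle is bookkeeping of domains, since $\psi$ need not lie in $\cD(A)$: the form $[S,A]^\circ$ only makes sense on $\cD(S)$, and the product $(S-z)^{-1}[S,A]^\circ(S-z)^{-1}$ relies on the dual extension of $(S-z)^{-1}$. Routing the vanishing argument through the bounded-operator limit identity of Lemma~\ref{Lemma-Prop-C1A}~\ref{Item-Comm-From-Group} sidesteps these issues entirely, since the semigroup $W_t$ and the resolvents act freely on $\psi\in\cH$, and the spectral identities $(S-z)^{-1}\psi = (\lambda-z)^{-1}\psi$ and $(S-\bar z)^{-1}\psi = (\lambda-\bar z)^{-1}\psi$ are the only facts used about $\psi$. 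No approximation of $\psi$ by vectors in $\cD(A)$ is needed.
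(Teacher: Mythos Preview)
Your proof is correct and follows essentially the same approach as the paper's own proof: evaluate $\la\psi,[(S-z)^{-1},A]^\circ\psi\ra$ in two ways, once via \eqref{ResolventComm} to extract $(\lambda-z)^{-2}\la\psi,[S,A]^\circ\psi\ra$, and once via the strong-limit formula from Lemma~\ref{Lemma-Prop-C1A}~\ref{Item-Comm-From-Group} to see it vanishes. The paper simply fixes $z=-\ri$ rather than keeping $z\in\rho(S)$ arbitrary, but the argument is otherwise identical.
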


\begin{proof} Let $\lambda$ be the eigenvalue associated with $\psi$, i.e. $S\psi=\lambda\psi$.
We compute using \eqref{ResolventComm}
\[
\la\psi,[(S+\ri)^{-1},A]^\circ\psi\ra = -\la(S-\ri)^{-1}\psi,[S,A]^\circ(S+\ri)^{-1}\psi\ra
= -(\lambda+\ri)^{-2}\la\psi,[S,A]^\circ\psi\ra.
\]
The left hand side can be computed using Lemma~\ref{Lemma-Prop-C1A}~\ref{Item-Comm-From-Group}
\[
\la\psi,[(S+\ri)^{-1},A]^\circ\psi\ra = \lim_{t\to 0_+} (\ri t)^{-1}\la\psi,((S+\ri)^{-1}W_t - W_t (S+\ri)^{-1})\psi\ra = 0.
\]
This concludes the proof.
\end{proof}

\subsection{The Mourre Class}\label{App-MourreClass}

 Of particular interest to us is the following class of operators.
 
 \begin{definition}[The $C^1_\Mo(A)$ class of
 self-adjoint operators]\label{Def-Mo-C1A} Let $A$ be a maximally symmetric operator on $\cH$ with domain $\cD(A)$, and $S$ a self-adjoint operator on $\cH$. We say that
  $S$ is of class $C^1_\Mo(A)$ if $S$ is of class $C^1(A)$ and $[S,A]^\circ\in\cB(\cD(S);\cH)$.
\end{definition}

Mourre  \cite{Mourre1981} used a different but equivalent definition of the $C^1_\Mo(A)$ class, 
cf. Proposition~\ref{Prop-MourreEquiv} below.
The following key lemma goes back to Mourre in the self-adjoint case.

\begin{lemma}\label{Lemma1-Prop-Mourre-C1} 
Let $S$ be a self-adjoint operator of class $C^1_\Mo(A)$. 
 There exists $\sigma>0$ such that for $z\in\CC$ with
$\im(z)\geq \sigma$, we have 
\[
(A+z)^{-1}\colon \cD(S)\to\cD(S)\quad\textup{and} \quad \bigl\|S(A+z)^{-1}(S+\ri)^{-1}\bigr\|\leq \frac{C}{\im(z)}.
\]
\end{lemma}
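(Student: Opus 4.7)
The plan is to work entirely at the resolvent level, bypassing the semigroup generated by $A$. Abbreviate $B := (S+\ri)^{-1}$ and $R_z := (A+z)^{-1}$. Since $A$ is maximally symmetric with $n_+=0$, the lower half-plane lies in $\rho(A)$, so for every $z$ with $\im(z)>0$ we have $-z\in\rho(A)$; hence $R_z$ is bounded, $R_z\cH = \cD(A)$, and $\|R_z\|\leq 1/\im(z)$. Because $S$ is of class $C^1(A)$, Lemma~\ref{Lemma-Prop-SA-C1} supplies $B \in C^1(A)$ with $B\colon\cD(A)\to\cD(A)$ (Lemma~\ref{lemma:C1equiv}) and
\[
[B,A]^\circ = -B\,[S,A]^\circ B.
\]
The $C^1_\Mo(A)$ hypothesis is used here for the first time: it gives $[S,A]^\circ \in \cB(\cD(S);\cH)$, so that $K := [S,A]^\circ B$ belongs to $\cB(\cH)$; put $C_0 := \|K\|$.

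The key commutator identity is obtained by a direct computation. For $\psi\in\cH$ the vector $R_z\psi$ lies in $\cD(A)$, hence $BR_z\psi$ lies in $\cD(A)$ as well, and
\[
(A+z)\,BR_z\psi = B(A+z)R_z\psi + [A,B]^\circ R_z\psi = B\psi + [A,B]^\circ R_z\psi.
\]
Applying $R_z$ and invoking $[A,B]^\circ = B[S,A]^\circ B = BK$, this yields the $\cB(\cH)$-identity
\[
BR_z - R_z B = R_z B K R_z, \qquad\textup{equivalently}\qquad R_z B\,(\one_\cH + KR_z) = BR_z.
\]

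Finally, choose $\sigma := 2C_0$. For $\im(z)\geq\sigma$ one has $\|KR_z\|\leq C_0/\im(z)\leq 1/2$, so $\one_\cH + KR_z$ is invertible on $\cH$ with $\|(\one_\cH+KR_z)^{-1}\|\leq 2$. The displayed identity then reads
\[
R_z B = B R_z\,(\one_\cH + KR_z)^{-1},
\]
whose range lies inside $B\cH = \cD(S)$. This immediately proves $R_z\cD(S)\subseteq \cD(S)$: for $\varphi\in\cD(S)$, set $\psi:=(S+\ri)\varphi$; then $R_z\varphi = R_zB\psi\in\cD(S)$. Applying $(S+\ri)=B^{-1}$ from the left of the factorization gives $(S+\ri)R_zB = R_z(\one_\cH+KR_z)^{-1}$, of norm at most $2/\im(z)$, and consequently
\[
\bigl\|S R_z B\bigr\| \leq \bigl\|(S+\ri)R_z B\bigr\| + \bigl\|R_z B\bigr\| \leq \frac{3}{\im(z)},
\]
which is the stated bound with $C = 3$. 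The conceptual heart of the argument—and the only place the full strength of $C^1_\Mo(A)$ beyond $C^1(A)$ enters—is the global boundedness of $K = [S,A]^\circ B$ on all of $\cH$; this is precisely what makes the Neumann inversion of $\one_\cH + KR_z$ possible and leaves no unbounded remainder when $(S+\ri)$ is pushed through. Under mere $C^1(A)$ one would only have $[S,A]^\circ \in \cB(\cD(S);\cD(S)^*)$, and the factor $K$ would fail to be a bounded operator on $\cH$, obstructing this clean inversion.
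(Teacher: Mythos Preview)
Your proof is correct and follows essentially the same route as the paper: both derive the identity $BR_z = R_zB\bigl(\one_\cH + KR_z\bigr)$ with $K=[S,A]^\circ(S+\ri)^{-1}$ and then invert the bracket by a Neumann series for $\im(z)$ large. Your presentation is slightly more explicit---you justify the commutator identity via $B\colon\cD(A)\to\cD(A)$ rather than a formal form computation, and you track the constants ($\sigma=2\|K\|$, $C=3$)---but the argument is the same.
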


\begin{proof}
Compute in the sense of form on $\cH$ for $z\in\CC$ with $\im(z)>0$ such that $A+z$ is invertible:
\begin{align*}
(S+\ri)^{-1}(A+z)^{-1} &= (A+z)^{-1}(S+\ri)^{-1}  + [(S+\ri)^{-1},(A+z)^{-1}]\\
&=  (A+z)^{-1}(S+\ri)^{-1}+(A+z)^{-1} [A,(S+\ri)^{-1}](A+z)^{-1}\\
& = (A+z)^{-1}(S+\ri)^{-1}\bigl(\one_\cH  + [S,A]^\circ(S+\ri)^{-1}(A+z)^{-1}\bigr).
\end{align*}
Since  $[S,A]^\circ(S+\ri)^{-1}$ is bounded, the operator
$B(z) = \one_\cH  + [S,A]^\circ(S+\ri)^{-1}(A+z)^{-1}$ is invertible, provided $\im(z)>0$ is sufficiently large,
with $\|B(z)^{-1}\|$ bounded uniformly in large $\im(z)$.
For such $z$ we thus get
\[
(A+z)^{-1}(S+\ri)^{-1} = (S+\ri)^{-1}(A+z)^{-1} B(z)^{-1},
\]
which completes the proof of the lemma.
\end{proof}

The following lemma is used to approximate unbounded operators by bounded ones, in order  to facilitate computations and extract rigorous arguments from formal ones.

\begin{lemma}\label{Lemma-In} Let $A$ be a maximally symmetric operator on $\cH$. 
Define for $n\in\NN$ bounded operators:
\[
I_n(A) = \ri n (A + \ri n)^{-1} \quad \textup{and} \quad A_n = AI_n(A).
\]
The following holds
\begin{align}
\label{Reg-ToId}& \forall u\in \cH: & & \lim_{n\to+\infty} I_n(A)u = \lim_{n\to+\infty} I_{-n}(A^*)u = u,\\
\label{Reg-ToA} & \forall u\in \cD(A): & & \lim_{n\to+\infty} A_n u = A u.
\end{align}
Suppose $S$ is self-adjoint and of class $C^1_\Mo(A)$. Then
\begin{align}
\label{Reg-ToA2}  & \forall u\in \cD(A):  & & \lim_{n\to+\infty} AI_n(S)u = Au,\\
\label{Reg-ToS} & \forall u\in \cD(S):  & & \lim_{n\to+\infty} SI_n(A)u = Su.
\end{align}
If $S\geq \one$, \eqref{Reg-ToA2} also holds with $I_n(S) = n(S+n)^{-1}$.
\end{lemma}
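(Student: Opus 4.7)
The plan is to treat all five assertions via a single Yosida-style strategy, with the last two leveraging the $C^1_\Mo(A)$ calculus to handle commutators introduced by mixing $A$ and $S$.

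First, for \eqref{Reg-ToId} and \eqref{Reg-ToA}, I would rely on the algebraic identity $I_n(A)u - u = -A(A+\ri n)^{-1}u$. When $u\in\cD(A)$, applying $(A+\ri n)^{-1}$ to $(A+\ri n)u = Au + \ri n u$ gives $(A+\ri n)^{-1}Au = u - I_n(A)u$, so $\|I_n(A)u-u\|\leq \|Au\|/n\to 0$; density of $\cD(A)$ combined with the uniform bound $\|I_n(A)\|\leq 1$ extends the limit to all of $\cH$. The companion statement $I_{-n}(A^*)u\to u$ then follows since $I_{-n}(A^*) = I_n(A)^*$ and $\cD(A^*)$ is dense. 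For \eqref{Reg-ToA}, verify on $\cD(A)$ that $A$ commutes with its own resolvent, so $A_n u = AI_n(A)u = I_n(A)Au$, and apply \eqref{Reg-ToId} to $Au$.

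For \eqref{Reg-ToA2} (and the $S\geq\one$ variant), Lemma~\ref{Lemma-Prop-SA-C1}~\ref{Item-Cheap-DomInv} ensures $(S+\ri n)^{-1}$ preserves $\cD(A)$, so for $u\in\cD(A)$ I would split
\[
AI_n(S)u = I_n(S)Au + [A,I_n(S)]u,
\]
where \eqref{ResolventComm} rewrites the commutator as $[A,I_n(S)] = I_n(S)[S,A]^\circ(S+\ri n)^{-1}$. The key observation is that $(S+\ri n)^{-1}u$ tends to zero in the graph norm of $S$: in $\cH$-norm because $\|(S+\ri n)^{-1}u\|\leq \|u\|/n$, and in $S$-norm because $\|S(S+\ri n)^{-1}u\| = \|u - I_n(S)u\|\to 0$ by \eqref{Reg-ToId} applied to the self-adjoint $S$. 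Since the $C^1_\Mo(A)$ hypothesis gives $[S,A]^\circ\in\cB(\cD(S);\cH)$, the commutator term vanishes in norm, while $I_n(S)Au\to Au$ by \eqref{Reg-ToId}. The case $I_n(S) = n(S+n)^{-1}$ under $S\geq\one$ is identical, using $\|(S+n)^{-1}\|\leq 1/n$ in place of $\|(S+\ri n)^{-1}\|\leq 1/n$.

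The hard part will be \eqref{Reg-ToS}, since the unboundedness of $S$ forbids the naive decomposition used above. My plan is to exploit that $(S+\ri)^{-1}$, by definition, is of class $C^1(A)$. For $u\in\cD(S)$, set $w=(S+\ri)u$ and write $u=(S+\ri)^{-1}w$; a short manipulation using \eqref{ResolventComm} applied to $(S+\ri)^{-1}$, together with the resolvent formula for $[(S+\ri)^{-1},(A+\ri n)^{-1}]$, should produce the clean identity
\[
SI_n(A)u - I_n(A)Su = -(S+\ri)I_n(A)(S+\ri)^{-1}\,[S,A]^\circ(S+\ri)^{-1}(A+\ri n)^{-1}w.
\]
The decisive input is then Lemma~\ref{Lemma1-Prop-Mourre-C1}, which yields $\|(S+\ri)(A+\ri n)^{-1}(S+\ri)^{-1}\|\leq C/n$ and hence the uniform bound $\|(S+\ri)I_n(A)(S+\ri)^{-1}\|=O(1)$. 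Since $\|[S,A]^\circ(S+\ri)^{-1}\|<\infty$ and $\|(A+\ri n)^{-1}w\|\leq \|w\|/n$, the whole remainder is $O(1/n)$, reducing the problem to $I_n(A)Su\to Su$, which is \eqref{Reg-ToId} applied to $Su\in\cH$.
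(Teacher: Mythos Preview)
Your argument is correct and, for \eqref{Reg-ToId}, \eqref{Reg-ToA} and \eqref{Reg-ToA2}, it is essentially identical to the paper's proof: the same Yosida identity $I_n(A)u-u=-(A+\ri n)^{-1}Au$, the same density/uniform-boundedness extension, and the same splitting $AI_n(S)u=I_n(S)Au+I_n(S)[S,A]^\circ(S+\ri n)^{-1}u$ with the commutator term controlled via $[S,A]^\circ\in\cB(\cD(S);\cH)$.

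For \eqref{Reg-ToS} the two arguments diverge slightly. The paper writes down the direct identity
\[
SI_n(A)u \;=\; I_n(A)Su \;-\; I_n(A)\,[S,A]^\circ (A+\ri n)^{-1}u
\]
(valid for $n\geq n_0$ so that $(A+\ri n)^{-1}u\in\cD(S)$), and then kills the remainder using Lemma~\ref{Lemma1-Prop-Mourre-C1}: $\|(S+\ri)(A+\ri n)^{-1}u\|\leq C\|\,(S+\ri)u\|/n$. You instead pass through the $C^1(A)$ property of $(S+\ri)^{-1}$ and \eqref{ResolventComm} to obtain a conjugated version of the same remainder,
\[
SI_n(A)u-I_n(A)Su \;=\; -(S+\ri)I_n(A)(S+\ri)^{-1}\,[S,A]^\circ(S+\ri)^{-1}(A+\ri n)^{-1}(S+\ri)u,
\]
and then invoke Lemma~\ref{Lemma1-Prop-Mourre-C1} for the uniform bound on $(S+\ri)I_n(A)(S+\ri)^{-1}$. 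Both routes use exactly the same ingredients (the Mourre-class bound on $[S,A]^\circ$ and the $O(1/n)$ resolvent estimate from Lemma~\ref{Lemma1-Prop-Mourre-C1}); the paper's identity is shorter, while yours makes the passage through the bounded commutator $[(S+\ri)^{-1},A]^\circ$ explicit and thereby sidesteps any formal manipulation of $[S,(A+\ri n)^{-1}]$.
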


\begin{proof} 
 The statement in \eqref{Reg-ToId} that  $\slim_{n\to +\infty} I_n(A) = \one$ is obvious since $I_n(A)$ is uniformly bounded
and the identity holds for $u\in\cD(A)$ by the computation $I_n(A)u = u - (A+\ri n)^{-1} Au$. The same argument applies to $I_{-n}(A^*)$.
 
As for \eqref{Reg-ToA}, it follows from the similar computation
\[
A_n u = \ri n A (A+\ri n)^{-1}u = -A (A+\ri n)^{-1}Au + Au.
\]
That the first term converges to zero again follows from $A (A+\ri n)^{-1}$ being uniformly bounded
in $n$, and converging to zero strongly on $\cD(A)$.

As for \eqref{Reg-ToA2} and \eqref{Reg-ToS}, 
we first remark that $I_n(S)\colon \cD(A)\to\cD(A)$, for all $n\geq 1$, and
there exists $n_0\in\NN$ such that $I_n(A)\colon\cD(S)\to\cD(S)$, for all $n\geq n_0$. 
These properties follow from Lemma~\ref{Lemma-Prop-SA-C1}~\ref{Item-Cheap-DomInv} 
and Lemma~\ref{Lemma1-Prop-Mourre-C1}.

To establish \eqref{Reg-ToA2} we compute for $u\in\cD(A)$ and $n\geq 1$:
\[
AI_n(S)u = \ri n A(S+\ri n)^{-1} u = I_n(S)Au + I_n(S)[S,A]^\circ(S+\ri n)^{-1}u.
\]
The first term converges to $Au$ by \eqref{Reg-ToId}, and the second converges to zero.
Here we again used that $\slim_{n\to\infty}S(S+\ri n)^{-1} = 0$.

Finally, for \eqref{Reg-ToS} we compute for $u\in\cD(S)$ and $n\geq n_0$:
\[
SI_n(A)u = \ri n S(A+\ri n)^{-1} u = I_n(A)Su - I_n(A)[S,A]^\circ(A+\ri n)^{-1}u.
\]
The result now follows from \eqref{Reg-ToId} and the bound in Lemma~\ref{Lemma1-Prop-Mourre-C1}.
\end{proof}

\begin{lemma}\label{Lemma2-Prop-Mourre-C1} 
Let $S$ be a self-adjoint operator of class $C^1_\Mo(A)$. The following holds
\begin{Enumerate}
\item $\cD(S)\cap\cD(A)$ is dense in both $\cD(S)$ and $\cD(A)$.
\item $\ri [S,A]^\circ$ is a symmetric operator on $\cD(S)$.
\end{Enumerate}
\end{lemma}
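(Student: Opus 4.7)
The plan is to handle the two statements in sequence, using the regularizers from Lemma~\ref{Lemma-In} for part (i) and then leveraging part (i) to extend an antisymmetry identity by continuity for part (ii).

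For the density claim in (i), I would approach the two inclusions by a symmetric regularization argument. To approximate $u\in\cD(S)$ by elements of $\cD(S)\cap\cD(A)$, set $u_n = I_n(A) u = \ri n (A+\ri n)^{-1}u$. By construction $u_n\in \Ran((A+\ri n)^{-1}) \subseteq \cD(A)$, and Lemma~\ref{Lemma1-Prop-Mourre-C1} yields $u_n\in \cD(S)$ for all sufficiently large $n$ together with a uniform bound on $S(A+\ri n)^{-1}(S+\ri)^{-1}$; hence $u_n\in\cD(S)\cap\cD(A)$. Convergence $u_n\to u$ in $\cH$ is \eqref{Reg-ToId}, and convergence $Su_n\to Su$ is \eqref{Reg-ToS}, so $u_n\to u$ in the graph norm of $S$. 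To approximate $u\in\cD(A)$, I reverse the roles: set $u_n = \ri n(S+\ri n)^{-1}u$, which lies in $\cD(S)$ because $S$ is self-adjoint, and in $\cD(A)$ by Lemma~\ref{Lemma-Prop-SA-C1}~\ref{Item-Cheap-DomInv}. Then \eqref{Reg-ToId} and \eqref{Reg-ToA2} give $u_n\to u$ in the graph norm of $A$.

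For (ii), the key observation is that on the smaller domain $\cD(A)\cap\cD(S)$ one can compute the commutator form directly and see antisymmetry from scratch. For $u,v\in\cD(A)\cap\cD(S)$ we have $Sv,Su\in\cH$ and $Av,Au\in\cH$; using that $A$ is maximally symmetric with $\cD(A)\subseteq\cD(A^*)$ and $A^*u=Au$ for $u\in\cD(A)$, one obtains
\[
\la u,[S,A] v\ra = \la Su, Av\ra - \la Au, Sv\ra = -\overline{\la v,[S,A] u\ra},
\]
so $\ri [S,A]$ is a symmetric sesquilinear form on $\cD(A)\cap\cD(S)$.

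Now I invoke part (i): since $\cD(A)\cap\cD(S)$ is dense in $\cD(S)$ in the graph norm, and since by hypothesis the form $[S,A]$ extends by continuity to a bounded form $[S,A]^\circ$ on $\cD(S)$, the antisymmetry identity persists on $\cD(S)\times\cD(S)$. Using $C^1_\Mo(A)$-regularity to identify this form with the bounded operator $[S,A]^\circ\colon \cD(S)\to \cH$, the form-antisymmetry reads
\[
\la u,\ri [S,A]^\circ v\ra = \overline{\la v,\ri [S,A]^\circ u\ra} = \la \ri [S,A]^\circ u, v\ra
\]
for all $u,v\in\cD(S)$, which is exactly the statement that $\ri [S,A]^\circ$ is a symmetric operator on $\cD(S)$. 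I do not anticipate any serious obstacle here: the only subtle point is checking that $(A+\ri n)^{-1}$ preserves $\cD(S)$ for large $n$, which is exactly the content of Lemma~\ref{Lemma1-Prop-Mourre-C1}, and this is where the stronger $C^1_\Mo(A)$ hypothesis (as opposed to mere $C^1(A)$) is essential.
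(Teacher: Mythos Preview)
Your proof is correct and follows essentially the same approach as the paper: the same regularizers $I_n(S)$ and $I_n(A)$ together with \eqref{Reg-ToA2} and \eqref{Reg-ToS} for part (i), and extension of the symmetry of the form $\ri[S,A]$ from $\cD(S)\cap\cD(A)$ to $\cD(S)$ via density and $S$-boundedness for part (ii). The paper additionally remarks that density in $\cD(S)$ already holds under the weaker $C^1(A)$ hypothesis (Lemma~\ref{Lemma-Prop-SA-C1}~\ref{Item-DenseInDS}), but your direct argument via $I_n(A)$ and \eqref{Reg-ToS} is explicitly mentioned there as an alternative.
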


\begin{proof} To prove the first statement, we let $\psi\in\cD(A)$ and put
$\psi_n = I_n(S) \psi$. From 
Lemma~\ref{Lemma-Prop-SA-C1}~\ref{Item-Cheap-DomInv} and \eqref{Reg-ToA2} it now follows that $\psi_n\in\cD$ and  $\psi_n\to\psi$
in $\cD(A)$. Density in $\cD(S)$ holds true in larger generality, 
cf.~Lemma~\ref{Lemma-Prop-SA-C1}~\ref{Item-DenseInDS}. 
Alternatively one may use \eqref{Reg-ToS}.

The second claim follows from the first, since the form $\ri [S,A]$ is symmetric on 
$\cD(S)\cap\cD(A)$ and $\ri [S,A]^\circ$ is $S$-bounded by assumption.
\end{proof}

To make the connection with the assumptions used by Mourre in \cite{Mourre1981}, we have the following proposition.

\begin{proposition}\label{Prop-MourreEquiv} Let $S$ be self-adjoint. The following are equivalent
\begin{Enumerate}
\item\label{Item-OurC1} $S$ is of class $C^1_\Mo(A)$.
\item\label{Item-MourreC1} The following holds:
 \begin{Enumerate}
  \item\label{Item-bstability} $W_t\colon\cD(S)\to \cD(S)$ and $\sup_{0\leq t\leq 1} \|S W_t \psi\| < \infty$ for all $\psi\in \cD(S)$.\footnote{The property  \ref{Item-bstability} is called $b$-stability in \cite{GeorgescuGerardMoeller2004a}.}
  \item\label{Item-MourreCommBound} $\forall\psi,\varphi\in\cD(S)\cap\cD(A)$: $|\la\psi,[S,A]\varphi\ra|\leq C\|\psi\|\|S\varphi\|$.
 \end{Enumerate}
\end{Enumerate}
\end{proposition}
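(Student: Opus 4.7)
Fix $z\in\rho(S)$ throughout and consider the operator $T := [S,A]^\circ (S-z)^{-1}$, which under either hypothesis will turn out to be bounded on $\cH$. Since $T \in \cB(\cH)$ is a bounded perturbation of the $C_0$-semigroup generator $A$, the operator $A + T$ generates a $C_0$-semigroup $V_t$ on $\cH$ satisfying $\|V_t\| \leq \e^{\|T\|t}$. The plan in both directions is to establish the intertwining $W_t (S-z)^{-1} = (S-z)^{-1} V_t$, from which all assertions drop out.

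\textbf{(i) $\Rightarrow$ (ii).} Claim (b) is the content of $[S,A]^\circ \in \cB(\cD(S); \cH)$ built into the definition of $C^1_\Mo(A)$. For (a), Lemma~\ref{Lemma-Prop-SA-C1} gives $(S-z)^{-1} \cD(A) \subseteq \cD(A)$ and $[(S-z)^{-1}, A]^\circ = -(S-z)^{-1}T$, hence on $\cD(A)$
\[
A (S-z)^{-1} = (S-z)^{-1}(A + T).
\]
Thus $T$ is bounded, $V_t$ is well-defined, and for $\phi \in \cD(A)$ both $W_t(S-z)^{-1}\phi$ and $(S-z)^{-1}V_t\phi$ are $\cH$-valued solutions of $\dot u = \ri A u$ with initial condition $(S-z)^{-1}\phi \in \cD(A)$ (using $V_t \cD(A)\subseteq\cD(A)$ for the latter). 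Uniqueness of the Cauchy problem gives the intertwining, which extends to $\cH$ by density. In particular $W_t \cD(S) \subseteq \cD(S)$, and $\|SW_t\psi\| \leq (1 + |z|\|(S-z)^{-1}\|)\|V_t\|(\|S\psi\|+\|\psi\|)$ is uniformly bounded on $[0,1]$.

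\textbf{(ii) $\Rightarrow$ (i).} By (b), the form $[S,A]$ extends to a bounded operator $[S,A]^\circ \in \cB(\cD(S); \cH)$; here density of $\cD(S) \cap \cD(A)$ in $\cD(S)$ uses (a) via $I_n(A) \cD(S) \subseteq \cD(S)$ and Lemma~\ref{Lemma-In}. In particular $T \in \cB(\cH)$ and the perturbed semigroup $V_t$ is defined. The uniform boundedness principle, applied through (a) to the family $\{S W_t\}_{0 \leq t \leq 1}$ on $\cD(S)$, provides a uniform bound on $M(t) := (S-z) W_t (S-z)^{-1}$; the semigroup property $M(t+s) = M(t) M(s)$ follows from $W_s \cD(S) \subseteq \cD(S)$. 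The decisive step is to identify $M(\cdot)$ as a $C_0$-semigroup with generator $\ri(A+T)$ on $\cD(A)$; Hille--Yosida uniqueness then forces $M(t) = V_t$, equivalently $W_t (S-z)^{-1} = (S-z)^{-1} V_t$. The Duhamel identity $V_t - W_t = \ri \int_0^t V_{t-s} T W_s\, \D s$ then yields
\[
\bigl\|W_t (S-z)^{-1} - (S-z)^{-1} W_t\bigr\| = \bigl\|(S-z)^{-1}(V_t - W_t)\bigr\| \leq C t \quad \text{on } [0,1],
\]
so Lemma~\ref{lemma:C1equiv}~\ref{item:C1charlim} gives $(S-z)^{-1} \in C^1(A)$. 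Together with (b) this is $S\in C^1_\Mo(A)$.

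\textbf{Main obstacle.} The delicate point is the generator identification for $M(t)$: one cannot differentiate $W_t (S-z)^{-1} \phi$ at $t = 0$ without knowing that $(S-z)^{-1}$ preserves $\cD(A)$, which is exactly what we are trying to establish. I plan to bypass this circularity by passing to the regularized core of vectors $I_n(A)(S-z)^{-1}\phi$, which lie in $\cD(A)$ for every $n$, converge in the graph norm of $S$ to $(S-z)^{-1}\phi$ by (a) and Lemma~\ref{Lemma-In}, and on which the commutator $[S, I_n(A)]$ can be represented as a bounded operator via the resolvent expansion and (b), with uniform control as $n \to \infty$. Passing to the limit gives the strong derivative $\frac{\D}{\D t} M(t)\phi\big|_{t=0} = \ri(A + T)\phi$ on a dense subspace of $\cD(A)$, which is what is needed to invoke Hille--Yosida. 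Once this identification is in hand, the rest is classical bounded-perturbation semigroup theory.
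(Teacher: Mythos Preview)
Your direction (i)$\Rightarrow$(ii) via the intertwining $W_t(S-z)^{-1}=(S-z)^{-1}V_t$ is correct and is a genuinely different, and rather elegant, alternative to the paper's route (which regularizes $S$ by $S_n=SI_n(S)$, runs Duhamel's formula for $S_nW_t\psi$, and closes with Gronwall plus Fatou). Your argument trades an integral inequality for a uniqueness-of-Cauchy-problem step; both are clean.

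The direction (ii)$\Rightarrow$(i), however, has a real gap precisely where you flag it. To define $T=[S,A]^\circ(S-z)^{-1}\in\cB(\cH)$ and to set up the regularization $I_n(A)(S-z)^{-1}\phi$, you need $\cD(S)\cap\cD(A)$ dense in $\cD(S)$ in the graph norm. You justify this by ``$I_n(A)\cD(S)\subseteq\cD(S)$ and Lemma~\ref{Lemma-In}'', but the relevant statement there is \eqref{Reg-ToS}, $SI_n(A)u\to Su$, which is proved \emph{assuming} $S\in C^1_\Mo(A)$ --- exactly what you are trying to establish. From (a) alone you only get that $(S-z)I_n(A)(S-z)^{-1}$ is uniformly bounded (via the Laplace representation of $(A+\ri n)^{-1}$), not that it converges strongly to the identity; strong convergence on a dense set would again need $(S-z)^{-1}$ to preserve $\cD(A)$. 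The same circularity reappears in your generator identification: you cannot differentiate $M(t)\phi$ without knowing $(S-z)^{-1}\phi\in\cD(A)$, and your proposed limit argument ``$[S,I_n(A)]$ via the resolvent expansion and (b), with uniform control'' presupposes the operator meaning of $[S,A]^\circ$ on $\cD(S)$ that you have not yet secured.

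The paper sidesteps this entirely by staying at the resolvent level. From (a) and the integral formula $(A+z)^{-1}=\int_0^\infty \e^{\ri t z}W_t\,\D t$ one gets $(A+z)^{-1}\colon\cD(S)\to\cD(S)$ with operator norm on $\cD(S)$ tending to zero as $\im z\to\infty$. One then expands
\[
(A+z)^{-1}(S+\ri\mu)^{-1}=(S+\ri\mu)^{-1}(A+z)^{-1}\bigl(\one-[S,A](A+z)^{-1}(S+\ri\mu)^{-1}\bigr),
\]
where $[S,A]$ acts as an operator only on $\cD(S)\cap\cD(A)$ (density in $\cH$ suffices here, and that does follow from (a)). For $\im z$ large the bracket is invertible by Neumann series, whence $(S+\ri\mu)^{-1}\colon\cD(A)\to\cD(A)$ directly. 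This is the missing idea in your argument: invert a Neumann series in the resolvent picture rather than trying to identify a semigroup generator through a density you do not yet have.
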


\begin{proof} To get from \ref{Item-OurC1} to \ref{Item-MourreC1}, 
we first show that \ref{Item-OurC1} implies \ref{Item-bstability}. To see this, we follow an argument from the proof of  
\cite[Prop.~2.34]{GeorgescuGerardMoeller2004a}. 
Using the notation from Lemma~\ref{Lemma-In}, we compute for $n\in\NN$ and $\psi\in\cD(S)$ using Duhamel's formula
\[
S_n W_t \psi = W_t S_n \psi + \int_0^t W_{t-s} \ri [S_n,A]W_s \psi\, \D s.
\]
Using the identity $[S_n,A]\psi = [\ri n \one - (\ri n)^2(S+\ri n)^{-1},A]\psi
= I_n(S)[S,A]^\circ I_n(S)\psi$, which follows from $S$ being of class $C^1(A)$, we can estimate
\[
\bigl\|S_n W_t \psi\bigr\| \leq \bigl\|S \psi\bigr\|
 + \bigl\|[S,A]^\circ (S+\ri)^{-1}\bigr\| \int_0^t \bigl\|(S+\ri) I_n(S) W_s\psi \bigr\|\, \D s.
\]
The factor in front of the integral is finite due to the assumption that $S$ is of class $C^1_\Mo(A)$. Writing $(S+\ri)I_n(S)W_s\psi = S_n W_s\psi + \ri I_n(S)W_s\psi$, the property \ref{Item-bstability} follows from Gronwall's Lemma, followed by an application of the Spectral Theorem and Fatou's Lemma.

The bound \ref{Item-MourreCommBound} on the commutator form $[S,A]$
is a direct consequence of the assumption on $[S,A]^\circ$ coming from the Mourre class $C^1_\Mo(A)$.

As for \ref{Item-MourreC1} implies \ref{Item-OurC1}, we note the representation formula
\[
\forall z\in\CC,\im(z)>0:\quad (A+z)^{-1} =\int_0^\infty \e^{\ri t z} W_t\, \D t.
\] 
The property  \ref{Item-bstability}, together with the uniform boundedness principle, implies 
the existence of a constant $C\geq 1$ such that
$\|(S+\ri) W_t (S+\ri)^{-1}\|\leq C^t$ for all $t\geq 0$. This, together with the representation formula,
implies that for $\im(z) > \sigma = \ln(C)$ we have
\[
(A+z)^{-1}\colon \cD(S)\to\cD(S).
\]
Compute now for $\mu\neq 0$ and $z\in\CC$ with $\im(z)>\sigma$:
\begin{align*}
& (A+z)^{-1}(S+\ri\mu)^{-1}  = (S+\ri\mu)^{-1}(A+z)^{-1} + [(A+z)^{-1},(S+\ri\mu)^{-1}]\\
& \qquad =  (S+\ri\mu)^{-1}(A+z)^{-1}+(S+\ri\mu)^{-1} [S,(A+z)^{-1}](S+\ri\mu)^{-1}\\
& \qquad = (S+\ri\mu)^{-1}(A+z)^{-1}\bigl(\one_\cH  - [S,A](A+z)^{-1}(S+\ri\mu)^{-1}\bigr).
\end{align*}
It now follows, as in the proof of Lemma~\ref{Lemma1-Prop-Mourre-C1},  that for $\mu$ sufficiently large,
we have $(S+\ri\mu)^{-1}\colon \cD(A)\to\cD(A)$ and 
\[
[(S+\ri\mu)^{-1},A] = -(S+\ri\mu)^{-1}[S,A](S+\ri\mu)^{-1},
\]
a priori as a form on $\cD(A)$. Hence, the left-hand side extends by continuity to a bounded form on $\cH$, proving that
$S$ is of class $C^1(A)$. That $[S,A]^\circ\in\cB(\cD(S);\cH)$
follows
directly from \ref{Item-MourreCommBound}. This shows that $S$ is of class $C^1_\Mo(A)$ 
\end{proof}

 We need the following special case of \cite[Thm.~2.25]{GeorgescuGerardMoeller2004a},
which is an extension of a result going back to \cite{Skibsted1998}.

\begin{proposition}\label{Prop-Skibsted} Let $A$  
and $S$ be self-adjoint operators on $\cH$ with $S$ of class $C^1_\Mo(A)$. Then the operators
$T_\pm = S\pm \ri A$ defined on $\cD(A)\cap \cD(S)$ are closed and $T_\pm^* = T_\mp$.
\end{proposition}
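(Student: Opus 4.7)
The plan is to prove the adjoint identity $T_\pm^* = T_\mp$, from which closedness follows for free since $T_\pm = T_\pm^{**} = T_\mp^*$ is automatically closed. First, I note that $\cD := \cD(S)\cap\cD(A)$ is dense in $\cH$ by Lemma~\ref{Lemma2-Prop-Mourre-C1}, so $T_\pm$ are densely defined. The inclusion $T_\mp\subseteq T_\pm^*$ is immediate from self-adjointness of $S$ and $A$: for $u,v\in\cD$ one has $\la T_\pm u,v\ra=\la u,Sv\ra\mp\ri\la u,Av\ra=\la u,T_\mp v\ra$, using $\la\pm\ri Au,v\ra=\mp\ri\la u,Av\ra$. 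In particular, $T_\pm$ are closable. The essential work is then $\cD(T_\pm^*)\subseteq\cD$.

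As a preparatory step I would establish closedness of $T_\pm$ directly. On $\cD$, expanding $\la(S\pm\ri A)u,(S\pm\ri A)u\ra$ using symmetry of $S$ and $A$ yields the identity
\[
\|T_\pm u\|^2 = \|Su\|^2 + \|Au\|^2 \mp \la u,\ri[S,A]^\circ u\ra.
\]
The $C^1_\Mo(A)$-hypothesis gives $[S,A]^\circ\in\cB(\cD(S);\cH)$, hence $|\la u,\ri[S,A]^\circ u\ra|\leq C\|u\|\|Su\|\leq\tfrac12\|Su\|^2+\tfrac{C^2}{2}\|u\|^2$, which delivers the coercive estimate $\|T_\pm u\|^2\geq\tfrac12\|Su\|^2+\|Au\|^2-\tfrac{C^2}{2}\|u\|^2$. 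If $\cD\ni u_n\to u$ with $T_\pm u_n\to v$ in $\cH$, this bound forces $\|Su_n\|$ and $\|Au_n\|$ to stay bounded; extracting weakly convergent subsequences and invoking the weak closedness of the graphs of the self-adjoint operators $S$ and $A$ places $u\in\cD$ with $T_\pm u=v$, proving $T_\pm$ closed.

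For the adjoint inclusion I would fix $\varphi\in\cD(T_+^*)$ with $\psi:=T_+^*\varphi$ and approximate by $\varphi_n:=R_n\varphi$, where $R_n:=n^2(A^2+n^2)^{-1}$ is a bounded self-adjoint function of $A$ converging strongly to $\one$. Factoring $R_n = I_n(A)\,I_n(A)^*$ with $I_n(A)=\ri n(A+\ri n)^{-1}$ and invoking Lemma~\ref{Lemma1-Prop-Mourre-C1} for both signs of the imaginary part, one sees that $R_n$ preserves $\cD(S)$ for $n$ large, while $\Ran R_n\subseteq\cD(A^2)$ is automatic. The identity on $\cD$
\[
\la T_+ u, R_n\varphi\ra + \la [S,R_n]u,\varphi\ra = \la R_n u,\psi\ra,
\]
obtained from $T_+R_n u = R_n T_+ u + [S,R_n]u$ (using $[A,R_n]=0$) combined with $\varphi\in\cD(T_+^*)$, together with $R_n\varphi\in\cD(A)$, $A^*=A$, and density of $\cD$ in $\cD(S)$ (Lemma~\ref{Lemma2-Prop-Mourre-C1}), lets me deduce that the functional $u\mapsto\la Su,R_n\varphi\ra$ is bounded on $\cD(S)$, so $R_n\varphi\in\cD(S^*)=\cD(S)$ and hence $R_n\varphi\in\cD$ with $T_-(R_n\varphi) = R_n\psi - [S,R_n]^*\varphi$. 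Sending $n\to\infty$, $R_n\varphi\to\varphi$ and $R_n\psi\to\psi$ strongly, while the Mourre calculus applied to the partial-fraction decomposition $R_n=\tfrac{n}{2\ri}[(A-\ri n)^{-1}-(A+\ri n)^{-1}]$ together with the quantitative resolvent bound $\|S(A\pm\ri n)^{-1}(S+\ri)^{-1}\|=O(1/n)$ of Lemma~\ref{Lemma1-Prop-Mourre-C1} drives $[S,R_n]^*\varphi\to 0$ strongly in $\cH$. Closedness of $T_-$ then yields $\varphi\in\cD(T_-)=\cD$ with $T_-\varphi=\psi$, establishing $T_+^*\subseteq T_-$; the case $T_-^*\subseteq T_+$ is symmetric. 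The hard part is precisely the final commutator convergence: since $[S,R_n]$ is only bounded from $\cD(S)$ to $\cH$ (and not on all of $\cH$), the object $[S,R_n]^*\varphi$ is a priori merely in $\cD(S)^*$; its $\cH$-valuedness and strong decay must be extracted by reading $[S,R_n]^*\varphi = R_n\psi - T_-(R_n\varphi)$ in conjunction with the fine resolvent estimates of the $C^1_\Mo$ calculus rather than from any naive operator-norm bound.
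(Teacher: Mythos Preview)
Your coercive identity $\|T_\pm u\|^2 = \|Su\|^2+\|Au\|^2 \pm \la u,\ri[S,A]^\circ u\ra$ and the ensuing closedness argument are fine (modulo an inconsequential sign), and the reduction of the problem to $\cD(T_+^*)\subseteq\cD$ is correct. The difficulty is entirely in your regularization step, where there is a genuine gap.

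You claim that the identity $\la T_+u,R_n\varphi\ra+\la[S,R_n]u,\varphi\ra=\la R_nu,\psi\ra$ implies that $u\mapsto\la Su,R_n\varphi\ra$ is bounded on $\cD(S)$. But unpacking, this would require $|\la[S,R_n]u,\varphi\ra|\leq C_n\|u\|$. Writing $[S,(A\pm\ri n)^{-1}]=-(A\pm\ri n)^{-1}[S,A]^\circ(A\pm\ri n)^{-1}$ on $\cD(S)$ and using Lemma~\ref{Lemma1-Prop-Mourre-C1}, one only obtains $|\la[S,R_n]u,\varphi\ra|\leq \tfrac{C}{n}\|u\|_{\cD(S)}\|\varphi\|$, which carries an $\|Su\|$ term that cannot be removed. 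In fact, if you expand your identity fully, the terms $\la Su,R_n\varphi\ra$ coming from $T_+u$ and from $R_nSu$ in the commutator cancel, and what survives is only the bound $|\la SR_nu,\varphi\ra|\leq C_n\|u\|$, i.e.\ $\varphi\in\cD((SR_n)^*)$. This is strictly weaker than $R_n\varphi\in\cD(S)$, because $(SR_n)^*$ is a genuine extension of $R_nS$ when $SR_n$ is unbounded on $\cH$. Your closing paragraph effectively concedes this: the formula $[S,R_n]^*\varphi=R_n\psi-T_-(R_n\varphi)$ already presupposes $R_n\varphi\in\cD(T_-)=\cD$, which is precisely the unproved claim. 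The argument is circular at this point.

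The paper's proof avoids the trap by regularizing with resolvents of $S$ rather than of $A$. Resolvents of $S$ land in $\cD(S)$ for free; the nontrivial content is then that $(S-z)^{-1}\varphi\in\cD(A)$ for $\varphi\in\cD(T_+^*)$, and this \emph{does} go through because the relevant commutator $[(S-z)^{-1},A]^\circ$ is bounded on all of $\cH$ by the $C^1(A)$ property. With this in hand one runs an energy estimate on $\|S_n\psi\|^2$ (with $S_n=SI_n(S)$) to place $\psi$ in $\cD(S)$, after which $\psi\in\cD(A)$ follows easily. The asymmetry between the two regularizations is exactly that $[(S-z)^{-1},A]^\circ\in\cB(\cH)$ while $[S,(A-z)^{-1}]$ is only in $\cB(\cD(S);\cH)$.
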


\begin{proof} 
Let $T_\pm = S \pm \ri A$ with domain $\cD = \cD(S)\cap\cD(A)$.
We aim to show that $\cD(T_+^*)= \cD$, which implies the result since
$T_-\subset T_+^*$ and the pair $S$ and  $-A$ also satisfies the assumptions of the theorem.

\noindent\emph{Step I:} To get started, we first show that for any $z\in\CC$ with $\im z \neq 0$,
we have $(S-z)^{-1}\cD(T_+^*)\subseteq \cD(A)$. Let $\psi\in\cD(A)$, $\varphi\in\cD(T_+^*)$ and compute
\begin{align*}
\la A\psi, (S-z)^{-1}\varphi\ra & = \la  (S-\bz)^{-1} A\psi,\varphi\ra\\
& =  \la  A (S-\bz)^{-1} \psi,\varphi\ra +  \la (S-\bz)^{-1} [S, A]^\circ (S-\bz)^{-1}\psi,\varphi\ra,
\end{align*}
where we used that $(S-\bz)^{-1}\cD(A)\subseteq \cD(A)$ and
 \eqref{ResolventComm} to compute $[(S-\bz)^{-1},A]$. 
Since  $(S-\bz)^{-1} \psi\in\cD$, we may write $A = \ri(S-T_+)$ to arrive at the
estimate
\begin{align*}
|\la A\psi, (S-z)^{-1}\varphi\ra | & = |\la (S-\bz)^{-1} \psi,T_+^*\varphi\ra|
 + |\la S(S-\bz)^{-1} \psi,\varphi\ra|\\
& \qquad  + | \la (S-\bz)^{-1} [S, A]^\circ (S-\bz)^{-1}\psi,\varphi\ra|.
\end{align*}
That $(S-z)^{-1}\varphi\in\cD(A)$ now follows from Cauchy-Schwarz, since $A$ is self-adjoint.

\noindent\emph{Step II:} We proceed to argue that $\cD(T_+^*)\subseteq \cD(S)$.
Let $\psi\in\cD(T_+^*)$ and compute
\[
\|S_n \psi\|^2 = \la S I_{-n}(S)S_n \psi,\psi\ra.
\]
Since $I_{-n}(S) S_n = -\ri n S(S-\ri n)^{-1} I_n(S) =
-\ri n I_n(S) + \ri n I_{-n}(S) I_n(S)$, we conclude from Step I that
$ I_{-n}(S)S_n\psi\in\cD$. We can therefore write $S = T_+ - \ri A$
and obtain
\[
\|S_n \psi\|^2 = \re \la I_{-n}(S)S_n\psi,T_+^*\psi\ra
- \re \la \ri A  I_{-n}(S)S_n \psi,\psi\ra.
\]
We may insert real parts, since the left-hand side is real.
As for the last term, we compute for $\varphi\in\cD$
\begin{align*}
& \re \la \ri A  I_{-n}(S)S_n \varphi,\varphi\ra  = \frac12 \la \varphi, \ri [A,-\ri n I_n(S) + \ri n I_{-n}(S)I_n(S)]\varphi\ra \\
& \qquad  = -\frac12 \la \varphi,  I_{n}(S)\ri [S,A]^\circ I_{n}(S)\varphi\ra
- \frac12 \la\varphi,  I_{-n}(S) \ri [S,A]^\circ I_{-n}(S)I_n(S)\varphi\ra\\
& \qquad \quad 
+ \frac12 \la\varphi,  I_{-n}(S) I_n(S)\ri [S,A]^\circ I_{n}(S)\varphi\ra.
\end{align*}
Applying Cauchy-Schwarz and the inequality $2ab\leq \sigma a^2+\sigma^{-1}b^2$,
valid for all real $a,b$ and $\sigma>0$, yields the estimate
\[
 |\re \la \ri A  I_{-n}(S)S_n \varphi,\varphi\ra |
 \leq \frac13 \|S_n\varphi\|^2 + C \|\varphi\|^2, 
\]
for some $C>0$. By Step I and \eqref{Reg-ToA2}, this estimate is also valid with $\varphi\in\cD$ replaced by $\psi\in\cD(T_+^*)$. We arrive at the bound
\[
\|S_n \psi\|^2 \leq \frac23 \|S_n\psi\|^2 + \tC \|\psi\|^2,
\]
for some new constant $\tC>0$, which does not depend on $n$. This estimate implies that $\psi\in\cD(S)$ and completes Step II.

\noindent\emph{Step III:} We complete the proof by showing that $\cD(T_+^*)\subseteq \cD(A)$ as well. This follows easily from Step II and the computation
$\la A \varphi, \psi\ra = \la  -\ri T_+\varphi ,\psi\ra + \la \ri S \varphi,\psi\ra$,
valid for all $\varphi\in\cD$ and $\psi\in\cD(T_+^*)$. Recall from  Lemma~\ref{Lemma2-Prop-Mourre-C1} that
$\cD$ is dense in $\cD(A)$. 
\end{proof}

\subsection{Roots of Positive Operators of Class $C^1_\Mo(A)$.}
%****************************************************************

\begin{lemma}\label{Lemma3-Prop-Mourre-C1} 
Let $S\geq \one_\cH$ be a self-adjoint operator of class $C^1_\Mo(A)$.
 The following holds for
 $0<\alpha<1$:
\begin{Enumerate}
\item\label{Item-Root-C1} The operator $S^\alpha$ is of class $C^1(A)$.
\item \label{Item-InterDomDense}
$\cD(S)\cap \cD(A)$ is dense in $\cD(S^\alpha)\cap\cD(A)$ with respect to the intersection topology.
\end{Enumerate}
\end{lemma}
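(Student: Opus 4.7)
My plan is to establish (i) via the Balakrishnan integral representation of $S^{-\alpha}$, and (ii) by regularizing with $I_n(S)=n(S+n)^{-1}$ from Lemma~\ref{Lemma-In}.

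For (i), since $S\geq\one$ we have $S^\alpha\geq\one$, so $0\in\rho(S^\alpha)$ and, by Definition~\ref{Def-SA-C1A}, it suffices to check that the bounded operator $(S^\alpha)^{-1}=S^{-\alpha}$ lies in $C^1(A)$. The key tool is the representation
\[
S^{-\alpha}=\frac{\sin(\pi\alpha)}{\pi}\int_0^\infty\lambda^{-\alpha}(S+\lambda)^{-1}\,\D\lambda,
\]
which converges as a Bochner integral in $\cB(\cH)$ since $\|(S+\lambda)^{-1}\|\leq(1+\lambda)^{-1}$. For each $\lambda>0$, Lemma~\ref{Lemma-Prop-SA-C1} together with~\eqref{ResolventComm} gives $(S+\lambda)^{-1}\in C^1(A)$ with $[(S+\lambda)^{-1},A]^\circ=-(S+\lambda)^{-1}[S,A]^\circ(S+\lambda)^{-1}$. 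The Mourre hypothesis $[S,A]^\circ\in\cB(\cD(S);\cH)$, combined with the elementary spectral bound $\|S(S+\lambda)^{-1}\|\leq 1$, will yield a uniform estimate $\|[S,A]^\circ(S+\lambda)^{-1}\|\leq C$ independent of $\lambda$. Therefore $\|[(S+\lambda)^{-1},A]^\circ\|\leq C'(1+\lambda)^{-1}$, and this decay matches the weight $\lambda^{-\alpha}$ precisely so that $\int_0^\infty\lambda^{-\alpha}(1+\lambda)^{-1}\,\D\lambda$ converges for every $0<\alpha<1$. Integrating the identity for $[(S+\lambda)^{-1},A]^\circ$ against $\lambda^{-\alpha}$ and interchanging integration with the form pairing---justified by the fact that $S^{-\alpha}\varphi$ is a Bochner integral and continuous functionals commute with such integrals---then shows that the form $[S^{-\alpha},A]$ extends by continuity from $\cD(A)$ to a bounded form on $\cH$, so $S^{-\alpha}\in C^1(A)$.

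For (ii), let $u\in\cD(S^\alpha)\cap\cD(A)$ and set $u_n:=I_n(S)u=n(S+n)^{-1}u$. Then $u_n\in\cD(S)$ trivially, and $u_n\in\cD(A)$ by Lemma~\ref{Lemma-Prop-SA-C1}~\ref{Item-Cheap-DomInv} (applied to $S$, which is already of class $C^1(A)$). Since $I_n(S)$ commutes with $S^\alpha$, the spectral theorem gives $u_n\to u$ and $S^\alpha u_n=I_n(S)S^\alpha u\to S^\alpha u$ in $\cH$ at once. The only nontrivial step is $Au_n\to Au$, which I would treat by using \eqref{ResolventComm} on $\cD(A)$ to write
\[
Au_n = I_n(S)Au + n(S+n)^{-1}[S,A]^\circ(S+n)^{-1}u.
\]
The first term converges to $Au$ by the standard property \eqref{Reg-ToId}. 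For the second, $\|n(S+n)^{-1}\|\leq 1$, and the factorization $[S,A]^\circ(S+n)^{-1}=\bigl([S,A]^\circ(S+1)^{-1}\bigr)\cdot(S+1)(S+n)^{-1}$ has bounded first factor (by $C^1_\Mo$), while $(S+1)(S+n)^{-1}u\to 0$ in $\cH$ by dominated convergence in the spectral measure of $S$ (the multiplier $(s+1)/(s+n)$ tends to $0$ pointwise and is dominated by $1$). This gives $u_n\to u$ in the intersection topology.

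The only delicate point is part (i): the argument rests crucially on the uniform bound $\|[S,A]^\circ(S+\lambda)^{-1}\|=O(1)$ as $\lambda\to\infty$, which is precisely what the $C^1_\Mo$ hypothesis provides over mere $C^1$-regularity, and which is what makes the Balakrishnan integral tractable. Everything else---in particular (ii)---is a direct consequence of the calculus already developed in the appendix.
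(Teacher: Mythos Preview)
Your proof is correct and follows essentially the same route as the paper. For \ref{Item-Root-C1} the paper simply states that the claim ``follows from the norm convergent integral representation formula'' \eqref{RootOfS}, and you have filled in exactly the details one would expect: the $C^1_\Mo$ hypothesis gives the uniform bound $\|[S,A]^\circ(S+\lambda)^{-1}\|\leq C$, hence $\|[(S+\lambda)^{-1},A]^\circ\|\leq C(1+\lambda)^{-1}$, and the integral converges. For \ref{Item-InterDomDense} the paper invokes \eqref{Reg-ToA2} directly, while you reprove that convergence by the explicit commutator identity; the arguments coincide.
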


\begin{proof} The claim \ref{Item-Root-C1} follows from the  
norm convergent integral representation formula
\begin{equation}\label{RootOfS}
S^{-\alpha} = c_\alpha \int_0^\infty t^{-\alpha}(S+t)^{-1}\,\D t,
\end{equation}
where $c_\alpha=\sin(\alpha\pi)/\pi$.

As for \ref{Item-InterDomDense}, let $\psi\in \cD(S^\alpha)\cap\cD(A)$ and put
$\psi_n = I_n(S)\psi\in\cD(S)\cap\cD(A)$ with $I_n(S) = n(S+n)^{-1}$ as in Lemma~\ref{Lemma-In}.
Clearly $S^\alpha \psi_n = I_n(S)S^\alpha\psi \to S^\alpha\psi$ by \eqref{Reg-ToId}.
That $A\psi_n\to A\psi$ follows from \eqref{Reg-ToA2}. This concludes the proof.
\end{proof}

\begin{proposition}\label{Prop-DomInv} Let $S\geq \one_\cH$ be a self-adjoint operator on $\cH$ of class $C^1_\Mo(A)$, and suppose there
exists $0\leq\rho <1$ such that $[S,A]^\circ\in \cB(\cD(S^\rho);\cH)$. The following holds
\begin{Enumerate}
\item\label{Item-Salpha-C1Mo} For $0<\alpha<1$, we have $S^\alpha\in C^1_\Mo(A)$ and for 
$\trho\in [0,1]$ with $\trho > 1-(1-\rho)/\alpha$, we have $[S^\alpha,A]^\circ\in \cB(\cD(S^{\trho\alpha}),\cH)$.
If $\alpha<1-\rho$, the operator $[S^\alpha,A]^\circ$ is in particular bounded.
\item\label{Item-ResInv-MoPlus} Let $n\geq 1/(1-\rho)$ be an integer. There exists $C>0$ such that: 
for any $z\in\CC$, with $\im(z)>0$, we have $(A+z)^{-1}\colon\cD(S)\to\cD(S)$ and 
$\|S(A+z)^{-1}S^{-1}\|\leq C(1+\im(z)^{-n-2})$.   
\item\label{Item-FPS-Invariance} Suppose $A$ is furthermore self-adjoint. Then, 
for any function $f\in C^\infty(\RR)$ satisfying that $\sup_{x\in\RR} |x|^n|\D^n f/\D x^n(x)|<\infty$ for all $n\in \NN_0$, we have
$f(A)\colon\cD(S)\to \cD(S)$ continuously. 
\end{Enumerate}
\end{proposition}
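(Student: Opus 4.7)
The plan is to handle the three parts in order, using part (i) to prepare the commutator machinery for (ii), and then feeding (ii) into a Helffer--Sj\"ostrand calculus for (iii).

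For (i), I will start from the norm-convergent representation
\[
 S^\alpha = \frac{\sin(\alpha\pi)}{\pi}\int_0^\infty \lambda^{\alpha-1}\, S(S+\lambda)^{-1}\,\D\lambda, \qquad 0<\alpha<1,
\]
valid since $S\geq \one$. Using \eqref{ResolventComm} together with $S(S+\lambda)^{-1}=\one-\lambda(S+\lambda)^{-1}$ one obtains $[S(S+\lambda)^{-1},A]^\circ = \lambda(S+\lambda)^{-1}[S,A]^\circ(S+\lambda)^{-1}$, and then formal differentiation under the integral (justified by first replacing $S$ with the bounded $SI_n(S)$ and passing $n\to\infty$ via Lemma~\ref{Lemma-In}) gives
\[
 [S^\alpha,A]^\circ = \frac{\sin(\alpha\pi)}{\pi}\int_0^\infty \lambda^{\alpha}(S+\lambda)^{-1}[S,A]^\circ (S+\lambda)^{-1}\,\D\lambda.
\]
Combining the hypothesis $[S,A]^\circ\in \cB(\cD(S^\rho);\cH)$ with the spectral bound $\|S^\sigma(S+\lambda)^{-1}\|\leq C(1+\lambda)^{\sigma-1}$ (valid for $0\leq \sigma\leq 1$, with the trivial $(1+\lambda)^{-1}$ for $\sigma\leq 0$), the integrand measured in $\cB(\cD(S^{\trho\alpha});\cH)$ is bounded by $C\lambda^\alpha(1+\lambda)^{\rho-\trho\alpha-2}$; the integral converges precisely when $\alpha+\rho-\trho\alpha<1$, i.e. $\trho>1-(1-\rho)/\alpha$. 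The choice $\trho=0$ is admissible when $\alpha<1-\rho$, producing the $\cB(\cH)$ statement.

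For (ii), since $A$ is maximally symmetric with $n_+=0$, for $\im z>0$ I use the Laplace representation $(A+z)^{-1}=-\ri\int_0^\infty \e^{\ri zt}W_t\,\D t$ and reduce to showing $\|SW_tS^{-1}\|\leq C(1+t)^n$, from which $\|S(A+z)^{-1}S^{-1}\|\leq \int_0^\infty \e^{-\im z\cdot t}C(1+t)^n\,\D t$ yields the stated polynomial bound. The growth estimate is built in two stages. First, for any $T\geq \one$ of class $C^1_\Mo(A)$ with $[T,A]^\circ$ bounded, Duhamel's identity applied to the bounded regularization $TI_n(T)=n-n^2(T+n)^{-1}$, whose commutator $n^2(T+n)^{-1}[T,A]^\circ(T+n)^{-1}$ is uniformly bounded by $\|[T,A]^\circ\|$, gives $\|TI_n(T)W_t\psi\|\leq \|TI_n(T)\psi\|+t\|[T,A]^\circ\|\|\psi\|$, and the limit $n\to\infty$ using the spectral-theorem identity $\sup_n\|TI_n(T)\xi\|=\|T\xi\|$ on $\cD(T)$ produces $W_t\cD(T)\subseteq\cD(T)$ with linear growth. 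Secondly, I iterate up the fractional scale by choosing $0=\alpha_{-1}<\alpha_0<\alpha_1<\cdots<\alpha_n=1$ with every step $\alpha_j-\alpha_{j-1}$ strictly less than $1-\rho$ (possible precisely because $n\geq 1/(1-\rho)$). Part (i) applied with $\trho=\alpha_{j-1}/\alpha_j$ — whose admissibility condition $\trho>1-(1-\rho)/\alpha_j$ is equivalent to $\alpha_j-\alpha_{j-1}<1-\rho$ — yields $[S^{\alpha_j},A]^\circ\in\cB(\cD(S^{\alpha_{j-1}});\cH)$; feeding the inductive bound $\|S^{\alpha_{j-1}}W_s\psi\|\leq C(1+s)^{j}\|S^{\alpha_{j-1}}\psi\|$ into the Duhamel formula for $S^{\alpha_j}I_n(S^{\alpha_j})$ and passing $n\to\infty$ upgrades the growth by one power of $1+t$. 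After the $n$ stages one reaches the desired estimate on $\cD(S)$.

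For (iii), $A$ being self-adjoint, the resolvent bound from (ii) extends from $\im z>0$ to $\im z<0$ by the duality $\|S(A+z)^{-1}S^{-1}\|=\|S(A+\bar z)^{-1}S^{-1}\|$. The hypothesis on $f$ is that $f$ is a classical symbol of order $0$, so there is an almost analytic extension $\tf\in C^\infty(\CC)$ of $f$ satisfying $|\bar\partial \tf(z)|\leq C_N|\im z|^N\la z\ra^{-N-1}$ for every $N\in\NN_0$. From the Helffer--Sj\"ostrand representation
\[
 f(A) = -\frac{1}{\pi}\int_\CC \bar\partial\tf(z)\,(A-z)^{-1}\,\D L(z),
\]
where $\D L$ denotes two-dimensional Lebesgue measure, one applies $S$ under the integral; the $\cD(S)$-norm of the integrand is majorized by $C_N|\im z|^N\la z\ra^{-N-1}\cdot C(1+|\im z|^{-n-2})\|S\psi\|$, which is absolutely integrable as soon as $N\geq n+2$. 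This yields $f(A)\colon\cD(S)\to\cD(S)$ with continuous dependence.

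The main obstacle is the iteration in part (ii). At stage $j$ one must arrange that part (i) supplies a commutator bound on the correct domain, which forces the step size to stay strictly below $1-\rho$ and is the whole reason that $n\geq 1/(1-\rho)$ iterations are required. The other delicate point is the uniformity in $n$ when taking the limit of the Duhamel identity for the regularizations $S^{\alpha_j}I_n(S^{\alpha_j})$: one needs the limiting bound at stage $j$ to be strong enough (in particular, in the graph norm of the correct $S^{\alpha_{j-1}}$) to feed the next Duhamel estimate, and bookkeeping the compounding constants across the $n$ steps is what produces the final polynomial exponent in $|\im z|^{-1}$.
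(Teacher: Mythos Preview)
Your argument is correct. Parts~(i) and~(iii) track the paper's proof closely: the same integral representation for $S^\alpha$ (in an equivalent form), the same convergence condition $\trho>1-(1-\rho)/\alpha$, and the same Helffer--Sj\"ostrand argument for~(iii).

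Part~(ii), however, takes a genuinely different route from the paper. The paper works entirely on the resolvent side: it fixes a decreasing ladder $\rho_0=1>\rho_1>\cdots>\rho_n$ with $B_j:=[S^{\rho_j},A]^\circ S^{-\rho_{j+1}}$ bounded, and then expands
\[
S(A+z)^{-1}-(A+z)^{-1}S = -(A+z)^{-1}B_0 S^{\rho_1}(A+z)^{-1}
\]
iteratively into a finite sum of products of $(A+z)^{-1}$'s and bounded $B_j$'s. This yields the $(1+\im(z)^{-n-2})$ bound directly, by reading off powers of resolvents, and extends to all $\im z>0$ by analytic continuation of an identity first established for $\im z>\sigma$ (where Lemma~\ref{Lemma1-Prop-Mourre-C1} applies). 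You instead pass through the semigroup: the Laplace formula $(A+z)^{-1}=-\ri\int_0^\infty e^{\ri zt}W_t\,\D t$ reduces the question to a polynomial bound $\|SW_tS^{-1}\|\leq C(1+t)^{n+1}$, which you build by an increasing ladder $\alpha_{-1}=0<\alpha_0<\cdots<\alpha_n=1$ and a Duhamel/Gronwall iteration. Both ladders exploit exactly the same input from~(i), namely that steps shorter than $1-\rho$ keep the commutator controllable by the previous rung.

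What each buys: the paper's resolvent expansion is algebraically explicit and avoids any discussion of $W_t$-invariance of domains along the way. Your semigroup route is closer in spirit to the proof of Proposition~\ref{Prop-MourreEquiv} (which already uses Duhamel with the regularizations $S_n$), and it makes the $b$-preservation of $\cD(S)$ under $W_t$ visible as an intermediate statement, at the cost of having to pass the density/limit step at every rung.
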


\begin{proof} We begin with \ref{Item-Salpha-C1Mo}.
 First of all, recall that by 
Lemma~\ref{Lemma3-Prop-Mourre-C1}~\ref{Item-Root-C1},
$S^\alpha$ is of class $C^1(A)$ for all $0\leq \alpha\leq 1$.
Observe the representation formula
\[
S^{\alpha} = c_\alpha \int_0^\infty t^\alpha(t^{-1}-(S+t)^{-1})\,\D t,
\]
where the integral converges strongly on $\cD(S)$. Here $c_\alpha=\sin(\alpha\pi)/\pi$ is as in \eqref{RootOfS}.
We can now compute as a form on $\cD(S)\cap\cD(A)$:
\begin{align*}
[A,S^\alpha] & = c_\alpha \int_0^\infty t^\alpha (S+t)^{-1}[S,A]^\circ(S+t)^{-1}\, \D t\\
& = c_\alpha \int_0^\infty t^\alpha (S+t)^{-1}B S^\rho(S+t)^{-1}\, \D t,
\end{align*}
where $B = [S,A]^\circ S^{-\rho}$ is bounded by assumption. Hence,
for $\varphi,\psi\in\cD(S)\cap\cD(A)$  and $\trho\in (0,1)$ with $\trho\alpha\leq \rho$, we have:
\begin{align*}
|\la\varphi,[A,S^\alpha]^\circ\psi\ra| & \leq c_\alpha\int_0^\infty 
t^\alpha |\la (S+t)^{-1}\varphi, B (S+t)^{-1} S^\rho\psi\ra|\,\D t\\
& \leq  c_\alpha\|B\|\int_0^\infty 
t^{\alpha-1} \|(S+t)^{-1} S^{\rho-\trho\alpha}\|\,\D t\,\|\varphi\|\|S^{\trho\alpha}\psi\|\\
&\leq  c_\alpha\|B\|\int_0^\infty 
t^{\alpha-2 + \rho-\trho\alpha}\,\D t\,\|\varphi\|\|S^{\trho\alpha}\psi\|.
\end{align*}
To get something finite, $\trho$ has to be chosen such that
$\alpha(1-\trho) +\rho -2 < - 1$.
Hence, we must take $\trho > 1- (1-\rho)/\alpha$. For such $\trho$ the estimate above 
now extends by continuity first to $\varphi,\psi\in\cD(S)$, 
cf.~Lemma~\ref{Lemma-Prop-SA-C1}~\ref{Item-DenseInDS}, and subsequently to
$\varphi\in\cH$ and $\psi\in\cD(S^{\trho\alpha})$.
This completes the proof of \ref{Item-Salpha-C1Mo}.

We proceed to establish \ref{Item-ResInv-MoPlus}.
Let $n\in \NN$ be the smallest integer with $n \geq 1/(1-\rho)$. Put $\kappa = (1+1/(1-\rho))^{-1}$. Note that $0\leq 1-\kappa n\leq \kappa$.
Put $\rho_j = 1 - \kappa j$, for $j=0,1,\dotsc,n$. Since $\kappa < 1-\rho$, we find that
$\rho_{j+1} > \rho_j(1-(1-\rho)/\rho_j) = \rho_j  - 1 + \rho$, for $j=0,1,\dotsc,n-1$. Hence, appealing to \ref{Item-Salpha-C1Mo}, we find that
\[
\forall 0\leq j\leq n-1:\ [S^{\rho_j},A]^\circ S^{-\rho_{j+1}} \ \textup{and} \ [S^{\rho_n},A]^\circ \
\textup{are bounded.} 
\]

%Using \ref{Item-Salpha-C1Mo} we can find a finite decreasing sequence
%$1 = \rho_0> \rho_1>\cdots >\rho_n>0$, such that
%\[
%\forall 0\leq j\leq n-1:\ [S^{\rho_j},A]^\circ S^{-\rho_{j+1}} \ \textup{and} \ %[S^{\rho_n},A]^\circ \
%\textup{are bounded.} 
%\]
Let $\sigma =\max_{j=0,\dotsc,n}\sigma_j$, where $\sigma_j$ comes from 
Lemma~\ref{Lemma1-Prop-Mourre-C1} applied with
the self-adjoint operator $S^{\rho_{j}}$.
Put $B_j = [S^{\rho_j},A]^\circ S^{-\rho_{j+1}}$, for $j = 0,1,\dotsc, n-1$ and set 
$B_n = [S^{\rho_n},A]^\circ$.

We can now compute for $z\in\CC$ with $\im(z)>\sigma$ as an operator identity on 
$\cD(S)$
\begin{align*}
S(A+z)^{-1} & - (A+z)^{-1} S  = -(A+z)^{-1}B_0 S^{\rho_1}(A+z)^{-1} \\
 & =  \sum_{j=0}^{n-1} (-1)^{j+1} \bigl((A+z)^{-1}B_0\bigr)\cdots \bigl((A+z)^{-1}B_j\bigr)(A+z)^{-1} S^{\rho_{j+1}}
\\
&\quad  + (-1)^{n+1} \bigl((A+z)^{-1}B_0\bigr)\cdots \bigl((A+z)^{-1}B_n\bigr)(A+z)^{-1} .
\end{align*}
The right-hand side extends analytically to $z\in\CC$ with $\im(z)>0$ and hence,
for all such $z$ we have proved that $(A+z)^{-1}\colon \cD(S)\to\cD(S)$ and 
\begin{equation}\label{Eq-ResBound-n}
\|S(A+z)^{-1}S^{-1}\|\leq C(1+\im(z)^{-n-2}),
\end{equation}
where $C>0$ does not depend on $z$. This proves \ref{Item-ResInv-MoPlus}.

 If $A$ is self-adjoint, the conclusion extends to $z$ with
$\im(z)<0$ provided an absolute value is inserted on the right-hand side of \eqref{Eq-ResBound-n}.
The last claim \ref{Item-FPS-Invariance} now follows by an almost analytic extension argument,
just as in the last step of the proof of \cite[Lemma~3.3]{FaupinMoellerSkibsted2011a}. We skip the details here
and refer the reader to \cite{Moeller2000} for almost analytic extension.
\end{proof}

\end{appendix}

\newpage

\providecommand{\bysame}{\leavevmode\hbox to3em{\hrulefill}\thinspace}
\providecommand{\MR}{\relax\ifhmode\unskip\space\fi MR }
% \MRhref is called by the amsart/book/proc definition of \MR.
\providecommand{\MRhref}[2]{%
  \href{http://www.ams.org/mathscinet-getitem?mr=#1}{#2}
}
\providecommand{\href}[2]{#2}

\end{document}